\theoremstyle{definition} 
\newtheorem{defin}{Definition}[section]
\newtheorem{thm}[defin]{Theorem}
\newtheorem{rem}[defin]{Remark}
\newtheorem{ex}[defin]{Example}
\newtheorem{example}[defin]{Example}
\newtheorem{cor}[defin]{Corollary}
\newtheorem{lemma}[defin]{Lemma}
\newtheorem{prop}[defin]{Proposition}
\def\vfi{\varphi}
\def\hil{{\mathcal H}}
\def\kil{{\mathcal K}}
\def\A{{\mathcal A}}
\def\B{{\mathcal B}}
\def\C{{\mathcal C}}
\def\X{{\mathcal X}}
\def\Y{{\mathcal Y}}
\def\S{{\mathcal S}}
\def\M{\mathcal{M}}
\def\P{\mathcal{P}}
\def\half{\frac{1}{2}}
\def\iff{\Longleftrightarrow}
\def\imp{\Longrightarrow}
\def\ep{\varepsilon}
\def\N{\mathbb{N}}
\def\bN{\mathbb{N}}
\def\iC{\mathbb{C}}
\def\R{\mathbb{R}}
\def\bz{\left(}
\def\jz{\right)}
\def\inv{^{-1}}
\def\kii{\emph}
\def\kiii{}
\def\egy{\mathbf 1}
\def\E{\mathcal{E}}
\def\map{\Phi}
\def\mapp{\Psi}
\def\bR{\mathbb{R}}
\def\bC{\mathbb{C}}
\def\of{\omega(f)}
\def\trans{^{\mathrm{T}}}
\newcommand{\ki}{\emph}
\newcommand{\s}{\mbox{ }}
\newcommand{\ds}{\mbox{ }\mbox{ }}
\newcommand{\norm}[1]{\left\| #1\right\|}
\newcommand{\hnorm}[1]{\left\| #1\right\|_{\mathrm{HS}}}
\newcommand{\snorm}[1]{\norm{#1}_{S}}
\newcommand{\inner}[2]{\langle #1 , #2\rangle}
\newcommand{\hinner}[2]{\langle #1 , #2\rangle_{\mathrm{HS}}}
\newcommand{\diad}[2]{|#1\rangle\langle #2|}
\newcommand{\pr}[1]{\diad{#1}{#1}}
\newcommand{\sr}[2]{S( #1\| #2)}
\newcommand{\rsr}[3]{S_{#3}( #1\| #2)}
\newcommand{\hdist}[3]{H_{#3}( #1\| #2)}
\newcommand{\chdist}[2]{C( #1\| #2)}
\newcommand{\derright}[1]{\partial^{+} #1}
\newcommand{\modop}[2]{L_{#1}R_{#2\inv}}
\newcommand{\modopp}[2]{\Delta\bz #1/#2\jz}
\newcommand{\fix}[1]{\ker\bz\id-#1\jz}
\newcommand{\tdist}[3]{T_{#3}\bz #1\,||\,#2\jz}
\newcommand{\dist}[2]{D\bz #1\,\|\,#2\jz}
\newcommand{\ns}[3]{\bz #1:#2\jz_{#3}}
\newcommand{\psif}[3]{\psi\bz #3|#1\|#2\jz}
\newcommand{\psift}[3]{\tilde\psi\bz #3|#1\|#2\jz}
\DeclareMathOperator{\id}{id}
\DeclareMathOperator{\Tr}{Tr}
\DeclareMathOperator{\supp}{supp}
\DeclareMathOperator{\spect}{spec}
\DeclareMathOperator{\sgn}{sgn}
\DeclareMathOperator{\ran}{ran}
\DeclareMathOperator{\co}{\overline{co}}
\def\<{\langle}
\def\>{\rangle}
\begin{document}

\centerline{\LARGE {\bf Quantum $f$-divergences and error correction}}
\bigskip

\bigskip
\bigskip
\centerline{\Large
Fumio Hiai$^{1,}$\footnote{E-mail: hiai@math.is.tohoku.ac.jp},
Mil\'an Mosonyi$^{2,3,}$\footnote{E-mail: milan.mosonyi@gmail.com},
D\'enes Petz$^{3,}$\footnote{E-mail: petz@math.bme.hu} and
C\'edric B\'eny$^{2,}$\footnote{E-mail: cedric.beny@gmail.com}
}
	
\medskip
\begin{center}
$^{1}$\,Graduate School of Information Sciences, Tohoku University \\
Aoba-ku, Sendai 980-8579, Japan
\end{center}
\begin{center}
$^2$\,Centre for Quantum Technologies,
National University of Singapore \\
3 Science Drive 2, 117543 Singapore
\end{center}
\begin{center}
$^3$\,Department of Analysis, Budapest University of Technology and Economics \\
Egry J\'ozsef u.~1., Budapest, 1111 Hungary
\end{center}
\bigskip

\begin{abstract}
Quantum $f$-divergences are a quantum generalization of the classical notion of $f$-divergences, 
and are a special case of Petz' quasi-entropies. Many well-known distinguishability measures 
of quantum states are given by, or derived from, $f$-divergences; special examples include the 
quantum relative entropy, the R\'enyi relative entropies, and the Chernoff and Hoeffding 
measures. Here we 
show that the quantum $f$-divergences are monotonic under substochastic maps whenever 
the defining function is operator convex. This extends and unifies all previously known 
monotonicity results for this class of distinguishability measures.
We also analyze the case where the monotonicity inequality holds with equality, and 
extend Petz' reversibility theorem for a large class of $f$-divergences and other 
distinguishability measures. We apply our findings 
to the problem of quantum error correction, and show that if a stochastic map 
preserves the pairwise distinguishability on a set of states, as measured by a suitable 
$f$-divergence, then its action can be reversed on that set by another stochastic map that 
can be constructed from the original one in a canonical way. 
We also provide an integral 
representation for operator convex functions on the positive half-line, which is the main 
ingredient in extending previously known results on the monotonicity inequality and the case 
of equality. We also consider some special cases where the convexity of $f$ is 
sufficient for the monotonicity, and obtain the inverse H\"older inequality for operators as
an application. The presentation is completely self-contained and requires only standard 
knowledge of matrix analysis.
\end{abstract}

\noindent{\it Keywords}: relative entropy, quasi-entropy, $f$-divergences, R\'enyi relative entropies,
Schwarz maps, stochastic maps, substochastic maps, operator convex functions,
Chernoff distance, Hoeffding distances
\bigskip

\noindent Mathematics Subject Classification 2010: 81P16, 81P50, 94A17, 62F03
\bigskip

\section{Introduction}

In the stochastic modeling of systems, the probabilities of the different outcomes of possible measurements performed on the system are given by a \ki{state}, which is a probability distribution in the case of classical systems and a density operator on the Hilbert space of the system in the quantum case. In applications, it is important to have a 
measure of how different two states are from each other and, as it turns out, such measures arise naturally in statistical problems like state discrimination. Probably the most important statistically motivated distance measure is the \ki{relative entropy}, given as
\begin{equation*}
\sr{\rho}{\sigma}:=\begin{cases}
\Tr\rho(\log\rho-\log\sigma),& \supp\rho\le\supp\sigma,\\
+\infty,& \text{otherwise},
\end{cases}
\end{equation*}
for two density operators $\rho,\sigma$ on a finite-dimensional Hilbert space. Its operational interpretation is given as the optimal exponential decay rate of an error probability in the state discrimination problem of Stein's lemma \cite{CT,HP,ON,Petzbook}, and 
it is the mother quantity for many other relevant notions in information theory, like the entropy, the conditional entropy, the mutual information and the channel capacity \cite{CT,Petzbook}. 

Undisputably the most relevant mathematical property of the relative entropy is its \ki{monotonicity under stochastic maps}, i.e., 
\begin{equation}\label{relentr mon}
\sr{\map(\rho)}{\map(\sigma)}\le \sr{\rho}{\sigma}
\end{equation}
for any two states $\rho,\sigma$ and quantum stochastic map $\map$ 
\cite{Petzbook}. 
Heuristically, \eqref{relentr mon} means that the distinguishability of two states cannot 
increase under further randomization. The monotonicity inequality yields immediately that
if the action of $\map$ can be reversed on the set $\{\rho,\sigma\}$, i.e., there 
exists another stochastic map $\mapp$ such that $\mapp(\map(\rho))=\rho$ and 
$\mapp(\map(\sigma))=\sigma$, then $\map$ preserves the relative entropy of $\rho$ and 
$\sigma$, i.e., inequality \eqref{relentr mon} holds with equality. A highly non-trivial 
observation, made by Petz in \cite{Petz2,Petz3}, is that the converse is also true:
If $\map$ preserves the relative entropy of $\rho$ and $\sigma$ then it is 
reversible on $\{\rho,\sigma\}$ and, moreover, the reverse map can be given in terms of $\map$ and $\sigma$ in a 
canonical way. This fact has found applications in the theory of quantum error correction 
\cite{JP,JP2,Ogawa}, the characterization of quantum Markov chains \cite{HJPW} and the description 
of states with zero quantum discord \cite{Datta,Hayashibook}, among many others.

Relative entropy has various generalizations, most notably R\'enyi's $\alpha$-relative 
entropies \cite{Renyi} that share similar monotonicity and convexity properties with the 
relative entropy and are also related to error exponents in binary state discrimination 
problems \cite{Csiszar2,MH}. 
A general approach to quantum relative entropies was developed by Petz in 1985 \cite{PD26}, who introduced the concept of quasi-entropies
(see also \cite{Petz} and Chapter 7 in \cite{OP}).
Let $\A:=\B(\iC^n)$ denote the algebra of 
linear operators on the finite-dimensional Hilbert space $\iC^n$
(which is essentially the algebra of $n\times n$ matrices with complex entries, and hence we also use the term matrix algebra).
For a positive $A\in\A$ and a strictly positive $B\in \A$, a general $K \in \A$ and 
a real-valued continuous function $f$ on $[0,+\infty)$, the \ki{quasi-entropy} is defined as
\begin{equation*}
S^K_f (A\|B):= \hinner{K B^{1/2}}{f(\modopp{A}{B})
(KB^{1/2})}
=\Tr  B^{1/2} K^*f(\modopp{A}{B})(KB^{1/2}),
\end{equation*}
where $\hinner{X}{Y}:=\Tr X^*Y,\,X,Y\in\A$, is the \ki{Hilbert-Schmidt inner product},
and
$\modopp{A}{B}:\A \to \A$ is the so-called \ki{relative modular operator} acting on $\A$ as 
$\modopp{A}{B}X:= AXB^{-1},\,X\in\A$.
The relative entropy can be obtained as a special case, corresponding to the function 
$f(x):=x\log x$ 
and $K:=I$, and R\'enyi's $\alpha$-relative entropies are related to the quasi-entropies
corresponding to $f(x):=x^\alpha$.

The two most important properties of the quasi-entropy are its monotonicity and joint convexity. Let $\map:\,\A_1 \to \A_2$ be a linear map between two matrix algebras $\A_1$ and $\A_2$, and
let $\map^*:\,\A_2 \to \A_1$ denote its dual with respect to the Hilbert-Schmidt inner products. A trace-preserving map $\map:\,\A_1\to\A_2$
is called a stochastic map if $\map^*$ satisfies the \ki{Schwarz inequality} $\Phi^*(Y^*)\Phi^*(Y)\le\Phi^*(Y^*Y),\,Y\in\A_2$.
The following monotonicity property of the quasi-entropies was shown in
\cite{PD26,Petz}:
Assume that $f$ is an operator monotone decreasing function on $[0,+\infty)$
with $f(0)\le 0$ and $\map:\,\A_1 \to \A_2$ is a stochastic map. Then
\begin{equation}\label{E:quasimonB}
S^K_f (\map(A)\|\map(B)) \le S^{\Phi^*(K)}_f (A\|B)
\end{equation}
holds for any $K \in \A_2$ and invertible positive operators $A,B\in\A_1$.
If $f$ is an operator convex function on $[0,+\infty)$, then $S^K_f (A,B)$
is jointly convex in the variables $A$ and $B$ \cite{OP,PD26,Petz}, i.e.,
\begin{equation*}
S^K_f \big(\sum\nolimits_ip_iA_i\big\|\sum\nolimits_ip_iB_i\big)\le
\sum\nolimits_ip_i S^K_f (A_i\|B_i)
\end{equation*}
for any finite set of positive invertible operators $A_i,B_i\in\A$ and probability weights $\{p_i\}$.

Quasi-entropy is a quantum generalization of the $f$-divergence of classical probability 
distributions, introduced independently by Csisz\'ar \cite{Csi}
and Ali and Silvey \cite{AS},
which is a widely used concept in classical information theory and statistics \cite{LV2,LV}.
This motivates the terminology ``quantum $f$-divergence'',
which we will use in this paper for the quasi-entropies with $K=I$. 
Actually, our notion of $f$-divergence is also a slight generalization of the quasi-entropy 
in the sense that we extend it to cases where the second operator is not invertible.
This extension is the same as in the classical setting, and was already considered in the 
quantum setting, e.g., in \cite{TCR}. We give the precise definition of the quantum 
$f$-divergences in Section 
\ref{sec:basic properties}, where we also give some of their basic properties, and prove 
that they are continuous in their second variable; the latter seems to be a new result. In Section 
\ref{sec:positive maps} we collect various technical statements on positive maps, which are necessary 
for the succeeding sections. In particular, we introduce a generalized notion of Schwarz 
maps, and investigate the properties of this class of positive maps. 

The monotonicity $\rsr{\map(A)}{\map(B)}{f}\le\rsr{A}{B}{f}$ of the $f$-divergences was proved  in \cite{Petz} for the case where $f$ is operator monotone decreasing and $\map$ is 
a stochastic map, and where $f$ is operator convex and $\map$ is the restriction 
onto a subalgebra; in both cases $B$ was assumed to be invertible. This was extended in
\cite{LR} to the case where $f$ is operator convex, $\map$ is stochastic and both $A$ 
and $B$ are invertible, using an integral representation of operator convex functions on $(0,+\infty)$, and in 
\cite{TCR} to the case where $f$ is operator convex and $\map$ is a completely positive 
trace-preserving map, without assuming the invertibility of $A$ or $B$, using the 
monotonicity under restriction onto a subalgebra and Lindblad's representation of completely 
positive maps. 
In Section \ref{sec:monotonicity}
we give a common generalization of these results by proving the monotonicity relation 
for the case where $f$ is operator convex, $\map$ is a substochastic map which preserves the trace of $B$, and both $A$ 
and $B$ are arbitrary positive semidefinite operators. This is based on the continuity result proved in Section \ref{sec:basic properties} and an integral representation of operator convex 
functions on $[0,+\infty)$ that we provide in Section \ref{sec:integral representation}.
To the best of our knowledge, this representation is new, and might be interesting in itself.

It has been known \cite{JP,JP2,Petz2} for the relative entropy and some R\'enyi relative 
entropies that the 
monotonicity inequality for two operators and a $2$-positive trace-preserving map holds with equality 
if and only if the action of the map can be reversed on the given operators.
We extend this result to a large class of $f$-divergences
in Section \ref{sec:equality}, where we show that if a stochastic map $\map$
preserves the $f$-divergence of two operators $A$ and $B$ corresponding to
an operator convex function which is not a polynomial then 
it preserves a certain set of ``primitive'' $f$-divergences, corresponding to the functions
$\vfi_t(x):=-x/(x+t)$ for a set $T$ of $t$'s. Moreover, if this set has large enough 
cardinality (depending on $A,B$ and $\map$) and $\Phi$ is $2$-positive then there exists 
another stochastic map
$\mapp$ reversing the action of $\map$ on $\{A,B\}$, i.e., such that $\mapp(\map(A))=A$
and $\mapp(\map(B))=B$.
In Section \ref{sec:Chernoff}, we formulate 
equivalent conditions for reversibility in terms of the preservation of measures relevant 
to state discrimination, namely the Chernoff distance and the Hoeffding distances, and we 
also show that these measures cannot be represented as $f$-divergences.
In Section \ref{sec:error correction} we apply the above results on reversibility to 
the problem of quantum error 
correction, and give equivalent conditions for the reversibility of a quantum operation on a 
set of states in terms of the preservation of pairwise 
$f$-divergences, Chernoff and Hoeffding distances, and many-copy trace-norm distances.
Related to the latter, we also analyze the connection with the recent results of 
\cite{BNPV}, where reversibility was obtained from the preservation of single-copy 
trace-norm distances under some extra technical conditions, and show that the approach of 
\cite{BNPV} is unlikely to be recovered from our analysis of the preservation of $f$-divergences,
as the quantum trace-norm distances cannot be represented as $f$-divergences. This is in contrast with the classical case, and is another manifestation of the significantly more complicated structure of quantum states and their distinguishability measures, as compared to their classical counterparts.

In our analysis of the monotonicity inequality 
$\rsr{\map(A)}{\map(B)}{f}\le\rsr{A}{B}{f}$  
and the case of the equality, it is essential that $f$ is operator convex; it is an open 
question though whether this is actually necessary. In Appendix \ref{sec:classical} we 
consider some situations where convexity of $f$ is sufficient; this includes the case of 
commuting 
operators, which is essentially a reformulation of the classical case, and the monotonicity 
under the pinching operation defined by the reference operator $B$, which was first proved 
in \cite{Hayashibook} for the R\'enyi relative entropies. Although both of these cases are 
very special and their proofs are considerably simpler than the general case, they are 
important for applications. As an illustration, we derive from these results the 
exponential version of the operator 
H\"older inequality and the inverse H\"older inequality, and analyse the case when they hold 
with equality.

\section{Quantum $f$-divergences: definition and basic properties}\label{sec:basic properties}
\setcounter{equation}{0}

Let $\A$ be a finite-dimensional $C^*$-algebra. {Unless otherwise stated,} we will always assume that $\A$ is a $C^*$-subalgebra of $\B(\hil)$ for some finite-dimensional Hilbert space
$\hil$, i.e., 
$\A$ is a subalgebra of $\B(\hil)$ that is closed under taking the adjoint of operators.
For simplicity, we also assume that the unit of $\A$ coincides with identity operator $I$ on $\hil$; if this is not the case, we can simply consider a smaller Hilbert space.
The Hilbert-Schmidt inner product on $\A$ is defined as
\begin{equation*}
\hinner{A}{B}:=\Tr A^*B,\ds\ds\ds A,B\in\A,
\end{equation*}
with induced norm $\hnorm{A}:=\sqrt{\Tr A^*A},\,A\in\A$. 

We will follow the convention that powers of a positive semidefinite operator are only taken on its support; in particular, if $0\le X\in\A$ then $X\inv$ denotes the generalized inverse of $X$ and $X^0$ is the projection onto the support of $X$. For a real $t\in\R$, 
$X^{it}$ is a unitary on $\supp X$ but not on the whole Hilbert space unless $X^0=I$.
We denote by $\log^*$ the extension of $\log$ to the domain $[0,+\infty)$, defined to be $0$ at $0$. With these conventions, we have $\frac{d}{dz}X^{z}\big\vert_{z=0}=\log^* X$.
We also set
\begin{equation*}
0\cdot \pm\infty:=0,\ds\ds\ds \log 0:=-\infty,\ds\ds\text{and}\ds\ds \log+\infty:=+\infty.
\end{equation*}

For a linear operator $A\in\A$, let $L_A,\,R_A\in\B(\A)$ denote the left and the right multiplications by $A$, respectively, defined as
\begin{equation*}
L_A:\,X\mapsto AX,\ds\ds\ds
R_A:\,X\mapsto XA,\ds\ds\ds X\in\A.
\end{equation*}
Left and right multiplications commute with each other, i.e., $L_AR_B=R_BL_A,\,A,B\in\A$.
If $A,B$ are positive elements in $\A$ with spectral decompositions
$A=\sum_{a\in\spect(A)} aP_a$ and $B=\sum_{b\in\spect(B)} bQ_b$ 
(where $\spect(X)$ denotes the spectrum of $X\in\A$) then the spectral decomposition of 
$L_AR_{B\inv}$ is given by $L_AR_{B\inv}=\sum_{a\in\spect(A)}\sum_{b\in\spect(B)} ab\inv L_{P_a}R_{Q_b}$, and for any function $f$ on $\{ab\inv\,:\,a\in\spect(A),\,b\in\spect(B)\}$,
we have
\begin{equation}\label{L_A R_B functional calculus}
f(L_AR_{B\inv})=\sum_{a\in\spect(A)}\sum_{b\in\spect(B)} f(ab\inv)L_{P_a}R_{Q_b}.
\end{equation}
(Note that we have $0\inv=0$ in the above formulas due to our convention.)

\begin{defin}\label{def:f-div}
Let $A$ and $B$ be positive semidefinite operators on $\hil$
and let $f:\,[0,+\infty)\to$ $\bR$ be a real-valued function on 
$[0,+\infty)$ such that $f$ is continuous on $(0,+\infty)$ and the limit 
\begin{equation*}
\of:=\lim_{x\to+\infty}\frac{f(x)}{x}
\end{equation*}
exists in $[-\infty,+\infty]$. The \ki{$f$-divergence} of $A$ with respect to $B$ is defined as
\begin{equation*}
\rsr{A}{B}{f}:=
\hinner{B^{1/2}}{f\bz \modop{A}{B}\jz B^{1/2}}
\end{equation*}
when $\supp A\le\supp B$. In the general case, we define
\begin{equation}\label{f-div def2}
\rsr{A}{B}{f}:=\lim_{\ep\searrow 0}\rsr{A}{B+\ep I}{f}.
\end{equation}
\end{defin}

\begin{prop}\label{prop:consistency}
The limit in \eqref{f-div def2} exists, and 
\begin{equation*}
\lim_{\ep\searrow 0}\rsr{A}{B+\ep I}{f}=
\hinner{B^{1/2}}{f\bz \modop{A}{B}\jz B^{1/2}}
+\of\Tr A(I-B^0).
\end{equation*}
In particular, Definition \ref{def:f-div} is consistent in the sense that if $\supp A\le\supp B$ then 
\begin{equation*}
\lim_{\ep\searrow 0}\rsr{A}{B+\ep I}{f}=\hinner{B^{1/2}}{f\bz \modop{A}{B}\jz B^{1/2}}.
\end{equation*}
\end{prop}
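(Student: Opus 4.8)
The plan is to reduce everything to a finite sum over pairs of eigenvalues and then pass to the limit term by term. First I would observe that for every $\ep>0$ the operator $B+\ep I$ is strictly positive, so $\supp A\le\supp(B+\ep I)=\hil$ holds automatically, and hence the first clause of Definition \ref{def:f-div} applies to $\rsr{A}{B+\ep I}{f}$ with no support restriction. Writing the spectral decompositions $A=\sum_{a\in\spect(A)}aP_a$ and $B=\sum_{b\in\spect(B)}bQ_b$, the operators $B+\ep I$ and $B$ share the projections $Q_b$, and the eigenvalues of $B+\ep I$ are $b+\ep$. Applying the functional calculus formula \eqref{L_A R_B functional calculus} to $L_AR_{(B+\ep I)^{-1}}$ and computing the Hilbert--Schmidt inner product using $L_{P_a}R_{Q_b}[X]=P_aXQ_b$ and the orthogonality $Q_bQ_{b'}=\delta_{b,b'}Q_b$, I expect to obtain the closed form
\begin{equation*}
\rsr{A}{B+\ep I}{f}=\sum_{a\in\spect(A)}\sum_{b\in\spect(B)}(b+\ep)\,f\!\left(\frac{a}{b+\ep}\right)\Tr(P_aQ_b).
\end{equation*}

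Since this is a \emph{finite} sum, it suffices to pass to the limit $\ep\searrow 0$ in each summand separately, and I would organize the terms according to whether $b>0$ or $b=0$. For $b>0$ the factor $b+\ep\to b$ and $a/(b+\ep)\to a/b$, so continuity of $f$ on $(0,+\infty)$ gives the limit $b\,f(a/b)\Tr(P_aQ_b)$ (for $a=0$ the argument is identically $0$, so no continuity at $0$ is needed, and $f(0)$ is used directly). Collecting these contributions reproduces, by exactly the same inner-product computation applied to $\modop{A}{B}$ with the convention $0\inv=0$, the quantity $\hinner{B^{1/2}}{f(\modop{A}{B})B^{1/2}}$; the $b=0$ terms drop out there because of the prefactor $b$. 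For $b=0$ the summand is $\ep\,f(a/\ep)\Tr(P_aQ_0)$ with $Q_0=I-B^0$: when $a=0$ this equals $\ep f(0)\to 0$, while when $a>0$ I rewrite it as $a\cdot\frac{f(a/\ep)}{a/\ep}$ and let $a/\ep\to+\infty$, so by the definition of $\of$ it tends to $a\,\of\,\Tr(P_aQ_0)$. Summing over $a$ yields $\of\sum_a a\Tr(P_aQ_0)=\of\,\Tr\!\big(A(I-B^0)\big)$, which is the second term in the claimed formula.

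The only point requiring genuine care is the case $\of=\pm\infty$, where I must verify that the termwise passage to the limit is legitimate and that the limit exists in $[-\infty,+\infty]$. This causes no real difficulty: the $b>0$ group always has a finite limit, the $b=0,a=0$ terms vanish, and the $b=0,a>0$ terms all carry the factor $a>0$ and hence tend to $\pm\infty$ with the common sign of $\of$, so no indeterminate $(+\infty)+(-\infty)$ arises and the sum of the termwise limits equals the limit of the finite sum. The stated conventions $0\cdot(\pm\infty):=0$ then make the formula valid even when $\of$ is infinite. Finally, consistency is immediate: if $\supp A\le\supp B$ then $A^0\le B^0$, whence $A(I-B^0)=0$ and the correction term $\of\,\Tr\!\big(A(I-B^0)\big)$ vanishes (it is $\of\cdot 0=0$ by convention regardless of whether $\of$ is finite), leaving exactly $\hinner{B^{1/2}}{f(\modop{A}{B})B^{1/2}}$.
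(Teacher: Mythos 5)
Your proof is correct and follows essentially the same route as the paper: expand $\rsr{A}{B+\ep I}{f}$ via the functional calculus formula \eqref{L_A R_B functional calculus} into the finite sum $\sum_{a,b}(b+\ep)f(a/(b+\ep))\Tr P_aQ_b$, then pass to the limit term by term, which is exactly the paper's use of \eqref{limit of mean}. Your explicit handling of the $a=0$ terms (avoiding any appeal to continuity of $f$ at $0$) and of the sign coherence when $\of=\pm\infty$ just spells out what the paper compresses into ``a straightforward computation.''
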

\begin{proof}
By \eqref{L_A R_B functional calculus}, we have 
$\rsr{A}{B+\ep I}{f}=\sum_{a\in\spect(A)}\sum_{b\in\spect(B)} (b+\ep)f(a/(b+\ep))\Tr P_a Q_b$, and
the assertion follows by a straightforward computation using that for any $a,b\ge 0$,
\begin{equation}\label{limit of mean}
\lim_{0<\tilde b\to b}\tilde bf(a/\tilde b)=
\begin{cases}
bf(a/b),& b>0,\\
a\of,& b=0.
\end{cases}\qedhere
\end{equation}
\end{proof}

\begin{cor}\label{cor:f-div explicit expression}
For $A,B$ and $f$ as in Definition \ref{def:f-div},
\begin{align}
\rsr{A}{B}{f}&=
\hinner{B^{1/2}}{f\bz \modop{A}{B}\jz B^{1/2}}
+\of\Tr A(I-B^0)\label{f-div explicit expression}\\
&=
f(0)\Tr B+
\hinner{B^{1/2}}{(f-f(0))\bz \modop{A}{B}\jz B^{1/2}}
+\of\Tr A(I-B^0)\label{f-div explicit expression2}\\
&=
\sum_{a\in\spect(A)}\bigg(\sum_{b\in\spect(B)\setminus\{0\}} bf(a/b)\Tr P_a Q_b +a\of\Tr P_aQ_0\bigg),\label{f-div explicit expression3}
\end{align}
and 
$\rsr{A}{B}{f}=\hinner{B^{1/2}}{f\bz \modop{A}{B}\jz B^{1/2}}$ if and only if 
$\supp A\le\supp B$ or $\lim_{x\to+\infty}\frac{f(x)}{x}=0$.
\end{cor}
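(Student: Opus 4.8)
The plan is to deduce Corollary \ref{cor:f-div explicit expression} almost entirely from Proposition \ref{prop:consistency}, which already gives the first formula \eqref{f-div explicit expression} by definition of $\rsr{A}{B}{f}$ as the limit in \eqref{f-div def2}. So the work is purely algebraic bookkeeping of the same quantity written in three ways, plus tracking the convention $0\cdot\pm\infty:=0$ so that the formulas make sense even when $\of=\pm\infty$.

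First I would establish \eqref{f-div explicit expression3} by expanding the inner product term $\hinner{B^{1/2}}{f(\modop{A}{B})B^{1/2}}$ using the functional calculus \eqref{L_A R_B functional calculus}. Applying $f(\modop{A}{B})$ to $B^{1/2}$ and taking the Hilbert--Schmidt inner product with $B^{1/2}$ produces $\sum_{a}\sum_{b} b\,f(a/b)\,\Tr P_aQ_b$, where the factor $b$ comes from $\hinner{B^{1/2}}{L_{P_a}R_{Q_b}B^{1/2}}=\Tr(P_aB^{1/2}Q_bB^{1/2})$ and the spectral form of $B$; here the terms with $b=0$ contribute $0\cdot f(0)$ (namely $0$, by convention) so the inner-product term only sees $b\in\spect(B)\setminus\{0\}$. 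Then the extra summand $\of\,\Tr A(I-B^0)$ from Proposition \ref{prop:consistency} is exactly $\of\sum_a a\,\Tr P_aQ_0$, since $I-B^0=Q_0$ is the projection onto $\ker B$ and $\Tr A(I-B^0)=\sum_a a\,\Tr P_aQ_0$. Combining the two pieces gives \eqref{f-div explicit expression3}.

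Next I would get \eqref{f-div explicit expression2} by splitting $f$ as $f=f(0)+(f-f(0))$ inside the inner-product term of \eqref{f-div explicit expression}. The constant part contributes $f(0)\,\hinner{B^{1/2}}{B^{1/2}}=f(0)\Tr B$, and the remaining part is $\hinner{B^{1/2}}{(f-f(0))(\modop{A}{B})B^{1/2}}$; the final summand $\of\Tr A(I-B^0)$ is untouched (note $\omega(f-f(0))=\of$ since subtracting a constant does not change the limit of $f(x)/x$). The one point needing care is that this splitting is legitimate even when $\of=\pm\infty$: the decomposition $f=f(0)+(f-f(0))$ is a genuine pointwise identity of finite-valued functions on $(0,+\infty)$, so the functional-calculus manipulation is valid regardless of the behaviour of $f$ at infinity, and the $\of$ term is carried along formally under the $0\cdot\pm\infty:=0$ convention.

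Finally, the equivalence at the end follows by comparing \eqref{f-div explicit expression} with the bare inner-product expression $\hinner{B^{1/2}}{f(\modop{A}{B})B^{1/2}}$: these agree exactly when the extra term $\of\Tr A(I-B^0)$ vanishes. The product $\of\Tr A(I-B^0)$ is zero precisely when either factor is zero, i.e.\ when $\of=0$, or when $\Tr A(I-B^0)=0$. Since $A\ge0$ and $I-B^0\ge0$, the latter trace is $\Tr(A^{1/2}(I-B^0)A^{1/2})=0$ iff $A^{1/2}(I-B^0)=0$, i.e.\ $\ran A\subseteq\ran B^0$, which is exactly $\supp A\le\supp B$. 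I expect the main (though still minor) obstacle to be presenting the $\of=\pm\infty$ case cleanly, so that ``either factor zero'' correctly reads $\of=0$ \emph{or} $\supp A\le\supp B$ under the convention $0\cdot\pm\infty:=0$; this is the only place where one must be explicit that an infinite $\of$ still yields a zero contribution whenever $\supp A\le\supp B$.
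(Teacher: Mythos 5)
Your proposal is correct and follows exactly the route the paper intends: the paper states this corollary without proof as an immediate consequence of Proposition \ref{prop:consistency} together with the functional calculus \eqref{L_A R_B functional calculus}, and your argument is precisely that elaboration (spectral expansion of the inner product, the split $f=f(0)+(f-f(0))$, and the observation that the correction term $\of\Tr A(I-B^0)$ vanishes iff $\of=0$ or $\supp A\le\supp B$). No gaps.
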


\begin{rem}
Note that $\modop{A}{B}=\modopp{A}{B}$, given in the Introduction, and hence the  $f$-divergence is a special case of the quasi-entropy (with $K=I$) when $\supp A\le\supp B$ or 
$\lim_{x\to+\infty}f(x)/x=0$
\end{rem}

\begin{cor}\label{cor:scaling}
Let $A,A_1,A_2,B,B_1,B_2$ and $f$ be as in Definition \ref{def:f-div}. We have the following:
\begin{enumerate}
\item
For every $\lambda\in[0,+\infty)$, 
\begin{equation*}
\rsr{\lambda A}{\lambda B}{f}=\lambda\rsr{A}{B}{f}.
\end{equation*}
\item
If $A_1^0\vee B_1^0\perp A_2^0\vee B_2^0$ then 
\begin{equation*}
\rsr{A_1+A_2}{B_1+B_2}{f}=\rsr{A_1}{B_1}{f}+\rsr{A_2}{B_2}{f}.
\end{equation*}
\item\label{item:isometry invariance}
If $V:\,\hil\to\kil$ is a linear or anti-linear isometry then 
\begin{equation*}
\rsr{VAV^*}{VBV^*}{f}=\rsr{A}{B}{f}.
\end{equation*}
\item
If $x$ is a unit vector in some Hilbert space $\kil$ then 
\begin{equation*}
\rsr{A\otimes \pr{x}}{B\otimes \pr{x}}{f}=\rsr{A}{B}{f}.
\end{equation*}
\end{enumerate}
\end{cor}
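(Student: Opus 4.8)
The plan is to deduce all four identities from the explicit spectral formula \eqref{f-div explicit expression3}, which turns each claim into an elementary statement about spectral projections and the numbers $\Tr P_a Q_b$.

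For (i) with $\lambda>0$, the operator $\lambda A$ has spectrum $\{\lambda a:a\in\spect(A)\}$ with the same spectral projections $P_a$ (now belonging to the eigenvalue $\lambda a$), and similarly for $\lambda B$. Since $(\lambda a)/(\lambda b)=a/b$ and $\of$ depends only on $f$, every summand $bf(a/b)\Tr P_aQ_b$ in \eqref{f-div explicit expression3} becomes $\lambda\,bf(a/b)\Tr P_aQ_b$ and every $a\of\Tr P_aQ_0$ becomes $\lambda\,a\of\Tr P_aQ_0$; factoring out $\lambda$ gives the claim. The case $\lambda=0$ is immediate, both sides vanishing by the convention $0\cdot(\pm\infty)=0$.

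For (ii), the hypothesis $A_1^0\vee B_1^0\perp A_2^0\vee B_2^0$ says that $A=A_1+A_2$ and $B=B_1+B_2$ are block diagonal with respect to $\hil=\hil_1\oplus\hil_2\oplus\hil_0$, where $\hil_i:=\ran(A_i^0\vee B_i^0)$ and both $A,B$ vanish on $\hil_0$. For a nonzero eigenvalue $a$ of $A$ the spectral projection splits as $P_a=P_a^{(1)}+P_a^{(2)}$, with $P_a^{(i)}$ the corresponding spectral projection of $A_i$ (one summand possibly zero), and likewise for the nonzero spectral projections of $B$. Since projections supported in $\hil_1$ are orthogonal to those supported in $\hil_2$, all cross terms $\Tr P_a^{(1)}Q_b^{(2)}$ vanish in \eqref{f-div explicit expression3}, and the $\hil_0$ block contributes nothing because it carries no support of $A$ or $B$; the only point to check is that the kernel projection $Q_0$ of $B$ restricts on $\hil_i$ to the kernel projection of $B_i$, so the term $a\of\Tr P_aQ_0$ likewise splits. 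Collecting the two diagonal blocks yields $\rsr{A_1}{B_1}{f}+\rsr{A_2}{B_2}{f}$.

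The real work is (iii). For a linear isometry $V$, from $A=\sum_a aP_a$ we get $VAV^*=\sum_a a\,VP_aV^*$, and the $VP_aV^*$ are mutually orthogonal projections, giving the spectral decomposition of $VAV^*$ on its nonzero eigenvalues; the same holds for $VBV^*$. The ratios $a/b$ are unchanged and $\Tr(VP_aV^*)(VQ_bV^*)=\Tr P_aQ_b$ because $V^*V=I$. The one subtlety is that $\ker(VBV^*)=V(\ker B)\oplus(\ran V)^\perp$, whose projection is $VQ_0V^*+(I-VV^*)$; but $\Tr(VP_aV^*)(I-VV^*)=\Tr P_a-\Tr P_a=0$, so the extra kernel coming from $(\ran V)^\perp$ does not affect the $a\of$ term, and \eqref{f-div explicit expression3} is preserved. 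For an anti-linear isometry the same bookkeeping goes through once one verifies that the $VP_aV^*$ remain orthogonal projections and that the relevant traces are unchanged, the conjugations introduced by anti-linearity cancelling because $f$ is real-valued and the traces are of positive operators; I expect this anti-linear case, together with the $(\ran V)^\perp$ argument, to be the only genuine obstacle. Finally, (iv) is an immediate instance of (iii): the map $V:\hil\to\hil\otimes\kil$ defined by $V\xi:=\xi\otimes x$ is a linear isometry satisfying $VAV^*=A\otimes\pr{x}$ and $VBV^*=B\otimes\pr{x}$.
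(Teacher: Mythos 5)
Your proposal is correct and takes essentially the same route as the paper: the paper's entire proof of this corollary reads ``Immediate from \eqref{f-div explicit expression3}'', and your argument is exactly that derivation with the bookkeeping written out (vanishing of cross terms, splitting of the kernel projections, invariance of $\Tr P_aQ_b$ under $V$, and the cancellation of the $(\ran V)^\perp$ contribution). Your treatment of the two genuine subtleties --- the enlarged kernel when $V$ is not surjective, and the anti-linear case, which works because the eigenvalues and the traces $\Tr P_aQ_b$ are real --- is sound, so nothing is missing.
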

\begin{proof}
Immediate from \eqref{f-div explicit expression3}.
\end{proof}

\begin{rem}\label{rem:transposition}
Note that if $V$ is an anti-linear isometry then there exists a linear isometry $\tilde V$ and a basis $\B$ such that $VAV^*=\tilde V A\trans\tilde V^*,\,A\in\A_+$, where the transposition is in the basis $\B$. Hence, \ref{item:isometry invariance} of Corollary \ref{cor:scaling} is equivalent to the $f$-divergences being invariant under conjugation by an isometry and transposition in an arbitrary basis.
\end{rem}

\begin{ex}\label{ex: f-div Renyi}
Let $f_\alpha(x):=x^{\alpha}$ for $\alpha>0,\,x\ge 0$. For $\alpha=0$, we define $f_0(x):=1,\,x>0,\,f_0(0):=0$.
A straightforward computation yields that 
\begin{equation}\label{Renyi f-div}
\rsr{A}{B}{f_\alpha}=\Tr A^\alpha B^{1-\alpha}+\bz\lim_{x\to+\infty}x^{\alpha-1}\jz\Tr A(I-B^0)
\end{equation}
for any $A,B\in\A_+$, and hence, if $0\le\alpha<1$ then
\begin{equation*}
\rsr{A}{B}{f_\alpha}=\Tr A^\alpha B^{1-\alpha},
\end{equation*}
whereas for $\alpha>1$ we have 
\begin{equation*}
\rsr{A}{B}{f_\alpha}=\begin{cases}
\Tr A^\alpha B^{1-\alpha},& \supp A\le\supp B,\\
+\infty,& \text{otherwise}.
\end{cases}
\end{equation*}
The \ki{R\'enyi relative entropy} of $A$ and $B$ with parameter $\alpha\in [0,+\infty)\setminus\{1\}$ is defined as
\begin{equation*}
\rsr{A}{B}{\alpha}:=\frac{1}{\alpha-1}\log\rsr{A}{B}{f_\alpha}=
\begin{cases}
\frac{1}{\alpha-1}\log\Tr A^\alpha B^{1-\alpha},& \supp A\le\supp B\text{ or }\alpha<1,\\
+\infty,& \text{otherwise}.
\end{cases}
\end{equation*}

The choice $f(x):=x\log x$ yields the \ki{relative entropy} of $A$ and $B$,
\begin{equation*}
\rsr{A}{B}{f}=\begin{cases}
\Tr A\bz\log^* A-\log^* B\jz,&\supp A\le\supp B,\\
+\infty,& \text{otherwise},
\end{cases}
\end{equation*}
where the second case follows from $\lim_{x\to+\infty}\frac{x\log x}{x}=+\infty$.
\end{ex}

The following shows that the representing function for an $f$-divergence is unique:
\begin{prop}
Assume that a function $D:\,\A_+\times\A_+\to\R$ can be represented as an $f$-divergence. 
Then the representing function $f$ is uniquely determined by the restriction of $D$ onto the 
trivial subalgebra as
\begin{equation}\label{expression for f}
f(x)=D(xI\|I)/\dim\hil,\ds\ds\ds x\in[0,+\infty).
\end{equation}
In particular, for every $D:\,\A_+\times\A_+\to\R$ there is at most one function $f$ such that
$D=S_f$ holds.
\end{prop}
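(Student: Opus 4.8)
The plan is to show that the value of the $f$-divergence on the trivial subalgebra --- that is, for scalar multiples of the identity --- already pins down $f$ completely, so that two different functions cannot represent the same divergence $D$. Concretely, I would evaluate $\rsr{xI}{I}{f}$ using the explicit expression from Corollary \ref{cor:f-div explicit expression}.

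First I would specialize formula \eqref{f-div explicit expression3} to the case $A=xI$ and $B=I$, both acting on $\hil$. Here $B=I$ is invertible, so its support is all of $\hil$ and the term involving $\Tr P_a Q_0$ and $\of$ drops out entirely (equivalently, $\supp A\le\supp B$ always holds). The operator $A=xI$ has the single spectral value $x$ with spectral projection $P_x=I$, and $B=I$ has the single spectral value $1$ with spectral projection $Q_1=I$. Thus the double sum in \eqref{f-div explicit expression3} collapses to the single term $1\cdot f(x/1)\cdot\Tr(I\cdot I)=f(x)\dim\hil$. This yields
\begin{equation*}
\rsr{xI}{I}{f}=f(x)\dim\hil,
\end{equation*}
which is exactly \eqref{expression for f} after dividing by $\dim\hil$.

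The uniqueness claim then follows immediately: if $D=S_f=S_g$ for two functions $f,g$ satisfying the hypotheses of Definition \ref{def:f-div}, then evaluating both representations at the pair $(xI,I)$ and applying \eqref{expression for f} gives $f(x)=D(xI,I)/\dim\hil=g(x)$ for every $x\in[0,+\infty)$, so $f=g$. I expect no serious obstacle here; the only point requiring minor care is the bookkeeping with the convention $0\cdot\pm\infty:=0$ and the behaviour at the endpoints. Since $B=I$ is strictly positive, one never encounters the $b=0$ branch of \eqref{limit of mean}, so the potentially troublesome $\of$ term never enters and the argument is clean for all $x\ge0$, including $x=0$ where $f(0)$ is directly read off.
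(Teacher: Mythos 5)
Your proof is correct and takes essentially the same approach as the paper: the paper's own proof simply states that \eqref{expression for f} is obvious from \eqref{f-div explicit expression3}, which is exactly the specialization to $A=xI$, $B=I$ that you carry out explicitly, including the observation that the $\of$ term vanishes because $B=I$ has full support. The uniqueness conclusion then follows immediately, just as you argue.
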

\begin{proof}
Formula \eqref{expression for f} is obvious from \eqref{f-div explicit expression3}, and the rest follows immediately.
\end{proof}

In most of the applications, $f$-divergences are used to compare probability 
distributions 
in the classical, and density operators in the quantum case, and one might wonder whether 
there is more freedom in representing a measure as an $f$-divergence if we are only 
interested in density operators instead of general positive semidefinite operators. The 
following simple argument shows that if a measure can be represented as an $f$-divergence on 
quantum states then its values are uniquely determined by its values on classical 
probability distributions.

Given density operators $\rho$ and $\sigma$ with spectral decomposition
$\rho=\sum_{a\in\spect(\rho)}aP_a$ and $\sigma=\sum_{b\in\spect(\sigma)}bQ_b$, 
we can define classical probability density functions $\ns{\rho}{\sigma}{1}$ and $\ns{\rho}{\sigma}{2}$ 
on $\spect(\rho)\times\spect(\sigma)$ as
\begin{equation*}
\ns{\rho}{\sigma}{1}(a,b):=a\Tr P_aQ_b,\ds\ds\ds
\ns{\rho}{\sigma}{2}(a,b):=b\Tr P_aQ_b.
\end{equation*}
This kind of mapping from pairs of quantum states to pairs of classical states was introduced in \cite{NSz}, and is one of the main ingredients in the proofs of the quantum Chernoff and Hoeffding bound theorems.
\begin{lemma}\label{lemma:NSz}
For any two density operators $\rho,\sigma$ and any function $f$ as in Definition \ref{def:f-div}, 
\begin{equation*}
\rsr{\rho}{\sigma}{f}=\rsr{\ns{\rho}{\sigma}{1}}{\ns{\rho}{\sigma}{2}}{f}.
\end{equation*}
\end{lemma}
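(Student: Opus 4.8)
The plan is to reduce both sides to the explicit expression \eqref{f-div explicit expression3} and match them term by term. The right-hand side is the $f$-divergence of two \emph{classical} distributions, which I interpret as mutually commuting diagonal positive operators on $\iC^X$ with index set $X:=\spect(\rho)\times\spect(\sigma)$, so that Definition \ref{def:f-div} and hence \eqref{f-div explicit expression3} apply to them verbatim.

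First I would write out the left-hand side. With $\rho=\sum_a aP_a$ and $\sigma=\sum_b bQ_b$, formula \eqref{f-div explicit expression3} gives
\begin{equation*}
\rsr{\rho}{\sigma}{f}=\sum_{a\in\spect(\rho)}\bigg(\sum_{b\in\spect(\sigma)\setminus\{0\}} bf(a/b)\Tr P_a Q_b +a\of\Tr P_aQ_0\bigg).
\end{equation*}
Next I would evaluate the right-hand side. Since $\ns{\rho}{\sigma}{1}$ and $\ns{\rho}{\sigma}{2}$ are simultaneously diagonalized by the standard basis $\{e_{(a,b)}\}_{(a,b)\in X}$, specializing \eqref{f-div explicit expression3} to this common eigenbasis collapses the double sum over spectra into a single sum over $X$: a pair $(a,b)$ with eigenvalues $\alpha_{(a,b)}:=\ns{\rho}{\sigma}{1}(a,b)=a\Tr P_aQ_b$ and $\beta_{(a,b)}:=\ns{\rho}{\sigma}{2}(a,b)=b\Tr P_aQ_b$ contributes $\beta_{(a,b)}f(\alpha_{(a,b)}/\beta_{(a,b)})$ when $\beta_{(a,b)}\neq 0$, and $\alpha_{(a,b)}\of$ when $\beta_{(a,b)}=0$. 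The key observation is that $\beta_{(a,b)}=b\Tr P_aQ_b\neq 0$ forces $\alpha_{(a,b)}/\beta_{(a,b)}=a/b$, whence $\beta_{(a,b)}f(\alpha_{(a,b)}/\beta_{(a,b)})=b\,f(a/b)\Tr P_aQ_b$; this reproduces exactly the terms with $b\neq 0$ in the display above. For the singular part, $\beta_{(a,b)}=0$ means either $b=0$, giving $\alpha_{(a,b)}\of=a\of\Tr P_aQ_0$ and matching the quantum singular terms, or $\Tr P_aQ_b=0$, in which case $\alpha_{(a,b)}=a\Tr P_aQ_b=0$ as well so the term vanishes.

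The only point requiring care is the bookkeeping of these degenerate entries, and this is the step I would check most carefully. A pair $(a,b)$ with $b\neq 0$ but $\Tr P_aQ_b=0$ gives $\ns{\rho}{\sigma}{2}(a,b)=0=\ns{\rho}{\sigma}{1}(a,b)$, so it lies in neither the regular nor the singular part of the classical sum; on the quantum side it likewise contributes the zero term $b\,f(a/b)\Tr P_aQ_b=0$, so the two sides still agree. I do not expect a genuine obstacle here; the argument is just a matching of two instances of \eqref{f-div explicit expression3}, and the main thing to watch is that the convention $0\cdot\pm\infty=0$ is applied consistently to the vanishing terms when $\of=\pm\infty$. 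Once this is confirmed, both sides reduce to the same sum and the lemma follows.
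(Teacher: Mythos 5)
Your proof is correct and follows exactly the route the paper takes: the paper's own proof is simply the remark that the lemma is immediate from \eqref{f-div explicit expression3}, and your term-by-term matching (including the careful handling of the degenerate entries with $b\neq 0$, $\Tr P_aQ_b=0$ and of the convention $0\cdot\pm\infty=0$) is just that argument written out in full.
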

\begin{proof}
It is immediate from \eqref{f-div explicit expression3}.
\end{proof}
\begin{cor}\label{cor:uniqueness}
Let $f$ and $g$ be functions as in Definition \ref{def:f-div}. If $S_f$ and $S_g$ coincide on classical probability 
distributions then they coincide on quantum states as well. 
\end{cor}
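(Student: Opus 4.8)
The plan is to deduce the quantum equality directly from the classical hypothesis using the reduction in Lemma \ref{lemma:NSz}. First I would fix arbitrary density operators $\rho,\sigma$ with spectral decompositions $\rho=\sum_a aP_a$, $\sigma=\sum_b bQ_b$, and pass to the two classical densities $\ns{\rho}{\sigma}{1},\ns{\rho}{\sigma}{2}$ on $\spect(\rho)\times\spect(\sigma)$ attached to them. Before invoking anything I would verify that these are bona fide probability distributions: summing $\ns{\rho}{\sigma}{1}(a,b)=a\Tr P_aQ_b$ over $b$ uses $\sum_bQ_b=I$ to leave $\sum_a a\Tr P_a=\Tr\rho=1$, and symmetrically the total mass of $\ns{\rho}{\sigma}{2}$ is $\Tr\sigma=1$. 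Thus the pair lands precisely in the class on which $S_f$ and $S_g$ are assumed to agree.

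Next I would apply Lemma \ref{lemma:NSz} to each of $f$ and $g$ separately, obtaining
\begin{equation*}
\rsr{\rho}{\sigma}{f}=\rsr{\ns{\rho}{\sigma}{1}}{\ns{\rho}{\sigma}{2}}{f},
\qquad
\rsr{\rho}{\sigma}{g}=\rsr{\ns{\rho}{\sigma}{1}}{\ns{\rho}{\sigma}{2}}{g}.
\end{equation*}
The classical hypothesis forces the two right-hand sides to coincide, whence $\rsr{\rho}{\sigma}{f}=\rsr{\rho}{\sigma}{g}$; since $\rho,\sigma$ were arbitrary, the corollary follows.

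I do not expect any genuine obstacle here: all the real work has already been carried out in Lemma \ref{lemma:NSz}, which is itself just a repackaging of the explicit spectral formula \eqref{f-div explicit expression3}. The single point deserving attention is that the reduction must produce \emph{normalized} classical data, which is exactly why the statement is restricted to quantum states (density operators): the normalizations $\Tr\rho=\Tr\sigma=1$ are precisely what guarantee that $\ns{\rho}{\sigma}{1}$ and $\ns{\rho}{\sigma}{2}$ are probability distributions rather than merely positive functions. For unnormalized $A,B\in\A_+$ this route would break, in harmony with the preceding uniqueness proposition, which pins down the representing function on all of $\A_+\times\A_+$ by a separate (and equally short) argument.
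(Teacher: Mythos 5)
Your proof is correct and follows exactly the paper's route: the paper's own proof is simply ``Obvious from Lemma \ref{lemma:NSz}'', and you have spelled out that reduction, including the worthwhile check that $\ns{\rho}{\sigma}{1}$ and $\ns{\rho}{\sigma}{2}$ are genuinely normalized probability distributions so that the classical hypothesis applies to them.
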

\begin{proof}
Obvious from Lemma \ref{lemma:NSz}.
\end{proof}

\begin{ex}
For two density operators $\rho,\sigma$, their \ki{quantum fidelity} is given by 
$F(\rho,\sigma):=\Tr\sqrt{\rho^{1/2}\sigma\rho^{1/2}}$ \cite{Uhlmann}.
For classical probability distributions, the fidelity coincides with 
$S_{f_{1/2}}$, where $f_{1/2}(x)=x^{1/2}$. 
If the fidelity could be represented as an $f$-divergence for quantum states then the 
representing function should be $f_{1/2}$, due to Corollary \ref{cor:uniqueness}.
However, the corresponding quantum $f$-divergence is 
$\rsr{\rho}{\sigma}{f_{1/2}}=\Tr\rho^{1/2}\sigma^{1/2}$, which is not equal to 
$F(\rho,\sigma)$ in general. This shows that the 
fidelity of quantum states cannot be represented as an $f$-divergence.
\end{ex}

In Sections \ref{sec:Chernoff} and \ref{sec:error correction} we give similar non-represantability results for measures related
to state discrimination on the state spaces of individual algebras.
\medskip

Our last proposition in this section says that when $\omega(f)$ is finite, the $f$-divergence
is continuous in the second variable.

\begin{prop}\label{prop:continuity}
Assume that $\omega(f)$ is finite. Let $A,B,B_k\in \A$ with $A,B,B_k\ge0$ for all $k\in\bN$,
and assume that $\lim_{k\to\infty}B_k=B$. Then
\begin{equation*}
\lim_{k\to\infty}\rsr{A}{B_k}{f}=\rsr{A}{B}{f}.
\end{equation*}
\end{prop}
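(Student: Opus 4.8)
The plan is to reduce the general continuity statement to the spectral formula \eqref{f-div explicit expression3} and to control the behaviour of the spectral data of $B_k$ as $k\to\infty$. The main subtlety is that the spectral projections $Q_b$ of $B$ need not be the limits of the spectral projections of $B_k$: eigenvalues can split or merge, and in particular the support projection $B_k^0$ can jump discontinuously (if $B$ is singular, the $B_k$ may all be invertible, so $B_k^0=I$ for every $k$ while $B^0\ne I$). The term $\of\Tr A(I-B^0)$ in \eqref{f-div explicit expression} is exactly where this discontinuity must be absorbed, and handling it correctly is the part I expect to be the real obstacle.

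First I would handle the easy reductions. Using \eqref{f-div explicit expression2}, write
\begin{equation*}
\rsr{A}{B_k}{f}=f(0)\Tr B_k+\hinner{B_k^{1/2}}{(f-f(0))\bz\modop{A}{B_k}\jz B_k^{1/2}}+\of\,\Tr A(I-B_k^0).
\end{equation*}
The first term is continuous since $\Tr B_k\to\Tr B$. The point of subtracting $f(0)$ is that $g:=f-f(0)$ satisfies $g(0)=0$; combined with $\lim_{x\to\infty}g(x)/x=\of$, this means $g$ extends to a continuous function on the one-point compactification $[0,+\infty]$ after rescaling, which is what lets me treat the middle term and the last term on an equal footing. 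I would separate two cases according to whether $\of$ is finite or not. If $\of$ is finite, then the whole expression can be written as a single trace of a function applied to a pair of commuting-enough operators and I can argue by approximating $g$ uniformly; if $\of=\pm\infty$, then for the limit to exist one checks that the coefficient $\Tr A(I-B_k^0)$ must be controlled, and I would argue that the potentially divergent contributions cancel against the corresponding divergence inside the modular-operator term.

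The cleanest route, which I would pursue, is to avoid tracking individual projections and instead use the integral-over-the-resolution trick hidden in \eqref{f-div explicit expression3}: define the finite positive measure $\mu_k$ on $[0,+\infty)\times[0,+\infty)$ by $\mu_k(\{(a,b)\})=\Tr P_a Q_b^{(k)}$ where $P_a$ are the (fixed) spectral projections of $A$ and $Q_b^{(k)}$ those of $B_k$, and similarly $\mu$ for $B$. Then $\rsr{A}{B_k}{f}=\int h(a,b)\,d\mu_k$ for the bounded continuous kernel $h(a,b):=bf(a/b)$ for $b>0$ and $h(a,0):=a\of$, provided $\of$ is finite; the claim becomes weak convergence $\mu_k\to\mu$ together with $\int h\,d\mu_k\to\int h\,d\mu$. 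Weak convergence follows from $B_k\to B$ via continuity of spectral calculus tested against polynomials in $b$ with fixed projections $P_a$, i.e. $\Tr P_a\, p(B_k)\to\Tr P_a\, p(B)$ for every polynomial $p$, which pins down all moments of the $b$-marginals. The key technical point is to verify that $h$ is \emph{continuous up to $b=0$}, including the limit $\lim_{b\searrow 0}bf(a/b)=a\of$ from \eqref{limit of mean}; this is where I would invoke Proposition \ref{prop:consistency}, which already packages exactly this limiting behaviour. The only remaining obstacle is the case $\of=\pm\infty$, which I would dispose of by a truncation argument: split $f=f_{\mathrm{bd}}+c\,x$ up to the divergent part, apply the finite-$\of$ argument to $f_{\mathrm{bd}}$, and note that the linear piece contributes $c\,\Tr A$, which is manifestly independent of $B_k$, so continuity is preserved (and when $\supp A\not\le\supp B$ both sides are $+\infty$, matching by Proposition \ref{prop:consistency}).
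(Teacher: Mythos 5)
Your argument for the case $\of$ finite is correct, and it takes a genuinely different route from the paper's. The paper first replaces $B_k$ by invertible $\tilde B_k=B_k+\ep_k I$, then invokes continuity of eigenvalues and spectral projections along $\tilde B_k\to B$ and runs an explicit $\ep$--$\delta$ estimate against the clustered projections $\hat Q_b^{(k)}$. You instead encode the spectral data in the measures $\mu_k^a(\{c\}):=\Tr P_aQ_c^{(k)}$ at fixed $a\in\spect(A)$, get their weak convergence from the moment convergence $\Tr P_aB_k^n\to\Tr P_aB^n$ (legitimate: these measures live on a common compact interval with fixed total mass $\Tr P_a$, so moments determine them and moment convergence gives weak convergence), and integrate the kernel $b\mapsto bf(a/b)$, whose continuity at $b=0$ is exactly finiteness of $\of$ via \eqref{limit of mean}. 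This buys freedom from perturbation theory of spectral projections. Two small repairs: the kernel is not globally bounded (only on the common compact support, which suffices), and one must indeed work with the $b$-marginals at fixed $a$ rather than with joint weak convergence, since $f$ may be discontinuous at $0$ and then $h$ is not jointly continuous at points $(0,b)$.

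The case $\of=\pm\infty$ is a genuine gap. The truncation $f=f_{\mathrm{bd}}+cx$ you propose does not exist: $\omega(cx)=c$ is finite, so $f-cx$ has the same infinite $\omega$ for every finite $c$ (try $f(x)=x\log x$); an infinite $\of$ is superlinear growth, which no linear subtraction removes. The earlier ``cancellation'' idea fails for the same structural reason: the modular term $\hinner{B_k^{1/2}}{f(\modop{A}{B_k})B_k^{1/2}}$ is a finite sum of real numbers, so nothing can cancel $\of\Tr A(I-B_k^0)=\pm\infty$ once $\Tr A(I-B_k^0)>0$. And your parenthetical ``both sides are $+\infty$'' is precisely what must be proved: nothing forces $\supp A\not\le\supp B_k$ (the $B_k$ may all be invertible), in which case every $\rsr{A}{B_k}{f}$ is finite and divergence has to be established. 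That is what the paper actually does when $\rsr{A}{B}{f}=+\infty$: it picks $a_0\in\spect(A)\setminus\{0\}$ with $P_{a_0}Q_0\ne0$ and derives a lower bound containing $(1/\ep)\Tr P_{a_0}\hat Q_0^{(k)}\to(1/\ep)\Tr P_{a_0}Q_0>0$, unbounded as $\ep\searrow0$.

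Moreover, the subcase you rightly flagged as the real obstacle --- $\of=\pm\infty$ with $\rsr{A}{B}{f}$ finite, i.e.\ $\supp A\le\supp B$ --- cannot be closed by any argument, because continuity genuinely fails there: take $f(x)=x\log x$, $A=B=\pr{e_1}$, and $B_k=\pr{v_k}$ with unit vectors $v_k\to e_1$ not proportional to $e_1$; then $B_k\to B$, while $\rsr{A}{B_k}{f}=+\infty$ for all $k$ by \eqref{f-div explicit expression} (since $\Tr A(I-B_k^0)>0$ and $\of=+\infty$), yet $\rsr{A}{B}{f}=0$. Even invertible $B_k$ fail: $B_k=\pr{v_k}+e^{-k^3}(I-\pr{v_k})$ with $v_k=\cos(1/k)e_1+\sin(1/k)e_2$ gives $\rsr{A}{B_k}{f}=k^3\sin^2(1/k)\to+\infty$. (These are exactly the terms --- $a>0$, $P_aQ_0=0$, $c$ small --- that the paper's own estimate also leaves uncontrolled, since its bound $|f(a/c)c-\of a|<\ep$ is only available when $P_aQ_0\ne0$.) So your method proves the proposition exactly where it holds unconditionally, namely for finite $\of$; the infinite-$\of$ case requires either the paper's divergence bound (when $\rsr{A}{B}{f}=\pm\infty$) or extra hypotheses such as $\supp B_k=\supp B$, which is what the application to $B_\ep=B+\ep A$ in Theorem \ref{thm:monotonicity for F} in effect relies on.
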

\begin{proof}
First, by the assumption on $\omega(f)$ and Corollary \ref{cor:f-div explicit expression}, note
that $S(A\|B_k)$ is finite for any $k$. Then by the definition \eqref{f-div def2}, we can choose 
a sequence $\ep_k>0,\,k\in\N$, such that $\lim_{k\to\infty}\ep_k=0$, and for all $k\in\N$,
\begin{equation*}
\rsr{A}{B_k+\ep_kI}{f}-{1\over k}<\rsr{A}{B_k}{f}<\rsr{A}{B_k+\ep_kI}{f}+{1\over k}.
\end{equation*}
Let $\tilde B_k:=B_k+\ep_k I$, which is strictly positive for any $k\in\N$. Obviously,
$\lim_{k\to\infty}\tilde B_k=B$, and the assertion will follow if we can show that 
\begin{equation}\label{lim-S_f}
\lim_{k\to\infty}\rsr{A}{\tilde B_k}{f}=\rsr{A}{B}{f}.
\end{equation}

Let $A=\sum_{a\in\spect(A)}aP_a$, $B=\sum_{b\in\spect(B)}bQ_b$ and $\tilde B_k=\sum_{c\in\spect(\tilde B_k)}cQ_c^{(k)}$ be the spectral
decompositions of the respective operators. Then
\begin{equation*}
S_f(A\|\tilde B_k)=\sum_{a\in\spect(A)}\sum_{c\in\spect(\tilde B_k)}f(a/c)c\Tr P_aQ_c^{(k)}.
\end{equation*}
From the continuity of the eigenvalues and the spectral projections when $\tilde B_k\to B$, we see
that, for every $\delta>0$ with $\delta<{1\over2}\min\{|b-b'|:b,b'\in\spect(B),b\ne b'\}$,
if $k$ is sufficiently large, then we have
\begin{equation*}
\spect(\tilde B_k)\subset\bigcup_{b\in\spect(B)}(b-\delta,b+\delta)
\ds\ds\ds\ds \mbox{(disjoint union)}
\end{equation*}
and moreover,
\begin{equation*}
\hat Q_b^{(k)}:=\sum_{c\in\spect(\tilde B_k)\atop c\in(b-\delta,b+\delta)}Q_c^{(k)}\longrightarrow Q_b
\quad\mbox{as $k\to+\infty$, for all $b\in\spect(B)$}.
\end{equation*}

Due to \eqref{limit of mean}, for every $\ep>0$ there exists a $\delta>0$ as above such that,
for $a\in\spect(A)$, $b\in\spect(B)$ and $c\in\spect(\tilde B_k)$,
\begin{align*}
&|cf(a/c)-bf(a/b)|<\ep\quad
\mbox{if $b>0$ and $c\in(b-\delta,b+\delta)$}, \\
&|cf(a/c)-a\omega(f)|<\ep\quad\ds
\mbox{if $c\in(0,\delta)$}.
\end{align*}
Hence, if $k$ is sufficiently large, then we have by \eqref{f-div explicit expression3}
\begin{align*}
&|\rsr{A}{\tilde B_k}{f}-\rsr{A}{B}{f}| \\
&\quad\le\sum_{a\in\spect(A)}\sum_{b\in\spect(B)\setminus\{0\}}
\left|\sum_{c\in\spect(\tilde B_k)\atop c\in(b-\delta,b+\delta)}
cf(a/c)\Tr P_aQ_c^{(k)}-bf(a/b)\Tr P_aQ_b\right| \\
&\quad\qquad+\sum_{a\in\spect(A)}\left|\sum_{c\in\spect(\tilde B_k)\atop c\in(0,\delta)}
cf(a/c)\Tr P_aQ_c^{(k)}-a\omega(f)\Tr P_aQ_0\right|\\
&\quad\le\sum_{a\in\spect(A)}\sum_{b\in\spect(B)\setminus\{0\}}
\left\{\sum_{c\in\spect(\tilde B_k)\atop c\in(b-\delta,b+\delta)}
|cf(a/c)-bf(a/b)|\Tr P_aQ_c^{(k)}+
\left|bf(a/b)\Tr P_a\left(\hat Q_b^{(k)}-Q_b\right)\right|\right\} \\
&\quad\qquad+\sum_{a\in\spect(A)}\left\{\sum_{c\in\spect(\tilde B_k)\atop c\in(0,\delta)}
|cf(a/c)-a\omega(f)|\Tr P_aQ_c^{(k)}+\left|a\omega(f)\Tr P_a
\left(\hat Q_0^{(k)}-Q_0\right)\right|\right\} \\
&\quad\le\ep\Tr I
+\sum_{a\in\spect(A)}\sum_{b\in\spect(B)\setminus\{0\}}|bf(a/b)|
\norm{\hat Q_b^{(k)}-Q_b}_1
+\sum_{a\in\spect(A)}|a\omega(f)|
\norm{\hat Q_0^{(k)}-Q_0}_1.
\end{align*}
This implies that
\begin{equation*}
\limsup_{k\to\infty}|S_f(A\|\tilde B_k)-S_f(A\|B)|\le\ep\Tr I
\end{equation*}
for every $\ep>0$, and so \eqref{lim-S_f} follows.
\end{proof}

\begin{rem}
The finiteness assumption on $\omega(f)$ is essential in the above proposition. Indeed, take
$f$ such that $\omega(f)=+\infty$ or $-\infty$. Let $A=B=|x\>\<x|$ be a rank $1$ projection,
and $B_k=|x_k\>\<x_k|$ where $\|x_k-x\|\to0$ and $x_k$ is not proportional to $x$ for any $k$.
Then $S_f(A\|B)=f(1)$ while $S_f(A\|B_k)=+\infty$ or $-\infty$, respectively. Note also that
$S_f(A\|B)$ is not continuous in the first variable even when $\omega(f)$ is finite, unless
$f$ is assumed to be continuous at $0$.
\end{rem}

\section{Preliminaries on positive maps}\label{sec:positive maps}
\setcounter{equation}{0}

Let $\A_i\subset\B(\hil_i)$ be finite-dimensional $C^*$-algebras with unit $I_i$ for $i=1,2$.
For a subset $\B\subset\A_i$, we will denote the set of positive elements in $\B$ by $\B_+$; in particular,
$\A_{i,+}$ denotes the set of positive elements in $\A_i$.
For a linear map $\map:\,\A_1\to\A_2$, we denote its adjoint with respect to the Hilbert-Schmidt inner products by $\map^*$. Note that $\map$ and $\map^*$ uniquely determine each other and, moreover, $\map$
is positive/$n$-positive/completely positive if and only if $\map^*$ is positive/$n$-positive/completely positive, and $\map$ is trace-preserving/trace non-increasing if and only if $\map^*$ is unital/sub-unital. 

For given $B\in\A_{1,+}$ and $\map:\,\A_1\to\A_2$, we define
$\map_B:\,\A_1\to\A_2$ and $\map^*_B:\,\A_2\to\A_1$ as
\begin{align}
\map_B(X)&:= \map(B)^{-1/2}\map(B^{1/2}XB^{1/2})\map(B)^{-1/2},\ds\ds\ds X\in\A_1,\label{def:PhiB}\\
\map^*_B(Y)&:= B^{1/2}\map^*\bz\map(B)^{-1/2}Y\map(B)^{-1/2}\jz B^{1/2},\ds\ds\ds Y\in\A_2.\label{def:PhistarB}
\end{align}
With these notations, we have $(\map_B)^*=\map_B^*$ and $(\map_B^*)^*=\map_B$.

For a normal operator $X\in\A_1$, let $P_{\{1\}}(X)$ denote the spectral projection of $X$ onto its fixed-point set. 
Note that if $B\in\A_{1,+}$ then $B^0$ is a projection in $\A_1$ and hence $B^0\A_1 B^0$ is a $C^*$-algebra with unit $B^0$.

 \begin{lemma}\label{lemma:supports}
 If $\map:\,\A_1\to\A_2$ is a positive map and 
 $A,B$ are positive elements in $\A_1$ such that 
 $A^0=B^0$ then $\map(A)^0=\map(B)^0$. In particular, $\map(B)^0=\map(B^0)^0$ for any positive $B\in\A_1$.
 \end{lemma}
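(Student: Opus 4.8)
The plan is to reduce the statement to the monotonicity of positive maps, combined with the elementary fact that the order relation between positive operators is reflected in their support projections. So first I would record two auxiliary observations, and then assemble them.

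The first observation is that if $A,B\in\A_{1,+}$ satisfy $A^0=B^0=:P$, then $A$ and $B$ are comparable in the strong two-sided sense that there exist constants $0<c\le C$ with
\[
cB\le A\le CB.
\]
This is where finite-dimensionality enters: on the range of $P$ both $A$ and $B$ act as strictly positive invertible operators, so the operator $B^{-1/2}AB^{-1/2}$ (with $B^{-1/2}$ the generalized inverse, supported on $\ran P$) is positive and invertible on $\ran P$, whence $cP\le B^{-1/2}AB^{-1/2}\le CP$ for suitable $c,C>0$. Conjugating by $B^{1/2}$ and using $B^{1/2}PB^{1/2}=B$ together with $PAP=A$ then yields the claimed bound. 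The second observation is the elementary lemma that $0\le X\le Y$ implies $X^0\le Y^0$: for $\xi\in\ker Y$ one has $0\le\inner{\xi}{X\xi}\le\inner{\xi}{Y\xi}=0$, so $X^{1/2}\xi=0$ and thus $\ker Y\subseteq\ker X$, i.e.\ $\supp X\le\supp Y$.

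Now I would assemble the proof. Since $\map$ is linear and positive, it is monotone, so applying it to the two-sided bound gives
\[
c\,\map(B)\le\map(A)\le C\,\map(B).
\]
Multiplying a positive operator by a positive scalar does not change its support, so by the second observation the right-hand inequality gives $\map(A)^0\le\map(B)^0$ and the left-hand inequality gives $\map(B)^0\le\map(A)^0$; hence $\map(A)^0=\map(B)^0$, as required. For the final clause I would simply apply this to the pair $B$ and $B^0$: both are positive and share the same support projection (the support of a projection is itself, so $(B^0)^0=B^0$, which also equals $B^0=\supp B$), and therefore $\map(B)^0=\map(B^0)^0$.

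The argument presents no serious obstacle; the only step requiring a little care is the passage from $A^0=B^0$ to the two-sided bound $cB\le A\le CB$, which relies essentially on invertibility on the common support and hence on finite-dimensionality. Everything after that is monotonicity plus the support-ordering lemma.
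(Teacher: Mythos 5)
Your proof is correct and takes essentially the same approach as the paper: the paper's own argument also rests on the equivalence between $A^0=B^0$ and a two-sided bound $\alpha A\le B\le \beta A$ with $\alpha,\beta>0$, applies the positivity of $\map$ to get $\alpha\map(A)\le\map(B)\le\beta\map(A)$, and reads off equality of supports. You simply make explicit the two facts the paper leaves implicit, namely the construction of the constants via the generalized inverse on the common support and the lemma that $0\le X\le Y$ implies $X^0\le Y^0$.
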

 \begin{proof}
 The assumption $A^0=B^0$ is equivalent to the existence of strictly positive numbers $\alpha,\beta$ such that $\alpha A\le B\le \beta A$, which yields
 $\alpha\map(A)\le \map(B)\le \beta\map(A)$
 and hence
 $\map(A)^0=\map(B)^0$.
 \end{proof}

\begin{lemma}\label{lemma:support inequality}
Let $B\in\A_{1,+}$ and let $\map:\,\A_1\to\A_2$ be a 
positive map such that $\map^*(\map(B)^0)\le I_1$ (in particular, this is the case if $\map$ 
is trace non-increasing). Then
\begin{equation*}
\Tr\map(B)\le\Tr B, 
\end{equation*}
and the following are equivalent:
\begin{enumerate}
\item\label{item:trace preserving}
$\Tr\map(B)=\Tr B$.
\item\label{item:functional calculus}
For any function $f$ on $\spect(B)$ such that $f(0)=0$ if $0\in\spect(B)$, we have
\begin{equation*}
f(B)\map^*(\map(B)^0)=\map^*(\map(B)^0)f(B)=f(B).
\end{equation*} 
\item\label{item:projection dominance}
$B^0\le P_{\{1\}}\bz \map^*(\map(B)^0)\jz$.
\item\label{item:trace preserving on algebras}
$\map$ is trace-preserving on $B^0\A_1B^0$.
(In particular, if $A\in\A_{1,+}$ is such that $A^0\le B^0$ then 
$\Tr\map(A)=\Tr A$.)
\item\label{item:PhiB}
For the map $\map_B^*$ given in \eqref{def:PhistarB}, we have
\begin{equation*}
\map_B^*(\map(B))=B.
\end{equation*}
\end{enumerate}
\end{lemma}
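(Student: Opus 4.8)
The plan is to abbreviate $C:=\map^*(\map(B)^0)$, which is positive because $\map^*$ is positive and satisfies $C\le I_1$ by hypothesis, so that $0\le C\le I_1$. First I would establish the trace inequality together with its equality case. Since $\map(B)^0$ is the support projection of $\map(B)$, passing to the Hilbert--Schmidt adjoint gives
\begin{equation*}
\Tr\map(B)=\Tr\bz\map(B)^0\map(B)\jz=\hinner{\map(B)^0}{\map(B)}=\hinner{C}{B}=\Tr(CB),
\end{equation*}
so $\Tr B-\Tr\map(B)=\Tr\bbz(I_1-C)B\bjz\ge 0$ because $I_1-C\ge 0$ and $B\ge 0$. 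This proves the inequality and shows that (i) is equivalent to $\Tr\bbz(I_1-C)B\bjz=0$.

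Next I would prove (i)$\iff$(iii), which is the conceptual heart of the argument. Writing $\Tr\bbz(I_1-C)B\bjz=\hnorm{(I_1-C)^{1/2}B^{1/2}}^2$, equality holds iff $(I_1-C)^{1/2}B^{1/2}=0$, i.e.\ iff $\ran B^{1/2}=\supp B$ is contained in $\ker(I_1-C)$. Since $0\le C\le I_1$, this kernel is exactly the eigenspace of $C$ for the eigenvalue $1$, that is, the range of $P_{\{1\}}(C)$; hence (i) is equivalent to $B^0\le P_{\{1\}}(C)$, which is (iii). The key observation, used repeatedly afterward, is that (iii) is in turn equivalent to the identities $CB^0=B^0C=B^0$ (a projection lies below $P_{\{1\}}(C)$ precisely when $C$ acts as the identity on its range), whence also $B^0CB^0=B^0$.

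The remaining equivalences then follow by inserting support projections. For (iii)$\implies$(ii): any $f$ with $f(0)=0$ satisfies $f(B)=f(B)B^0=B^0f(B)$, so $f(B)C=f(B)B^0C=f(B)$ and symmetrically $Cf(B)=f(B)$; conversely, applying (ii) to the indicator $f=\egy_{(0,+\infty)}$, for which $f(B)=B^0$, returns $B^0C=CB^0=B^0$, i.e.\ (iii). For (iii)$\implies$(iv): if $A\in\A_{1,+}$ with $A^0\le B^0$, then $A\le\norm{A}B^0$, so by Lemma \ref{lemma:supports} $\map(A)^0\le\map(B^0)^0=\map(B)^0$, which gives $\Tr\map(A)=\Tr\bz\map(B)^0\map(A)\jz=\Tr(CA)=\Tr(B^0CB^0A)=\Tr(B^0A)=\Tr A$ (using $B^0A=A$); linearity extends this to all of $B^0\A_1B^0$, and the parenthetical claim is the case just treated. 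Taking $A=B$ recovers (i), so (iv)$\implies$(i).

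Finally, for (v) I would compute from the defining formula \eqref{def:PhistarB}, using $\map(B)^{-1/2}\map(B)\map(B)^{-1/2}=\map(B)^0$, that
\begin{equation*}
\map_B^*(\map(B))=B^{1/2}\map^*(\map(B)^0)B^{1/2}=B^{1/2}CB^{1/2}.
\end{equation*}
Under (iii) this equals $B^{1/2}(B^0CB^0)B^{1/2}=B^{1/2}B^0B^{1/2}=B$, yielding (iii)$\implies$(v); conversely, taking traces in (v) gives $\Tr(CB)=\Tr B$, i.e.\ $\Tr\map(B)=\Tr B$, which is (i). Assembling these implications closes the chains (i)$\iff$(iii)$\iff$(ii), (iii)$\implies$(iv)$\implies$(i), and (iii)$\implies$(v)$\implies$(i), so all five conditions are equivalent. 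I expect the only genuinely substantive step to be the reduction in the second paragraph—identifying the equality case with the support-projection condition $B^0\le P_{\{1\}}(C)$ and its reformulation $CB^0=B^0C=B^0$—since every other implication is a short computation once that pivot is in place.
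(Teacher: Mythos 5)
Your proof is correct and takes essentially the same route as the paper's: both hinge on the single pivot that $\Tr\bz(I_1-C)B\jz=0$ with $0\le C:=\map^*(\map(B)^0)\le I_1$ and $B\ge 0$ forces $C$ to act as the identity on $\supp B$ (equivalently $CB^0=B^0C=B^0$), after which every remaining implication is a short support-projection computation. The only differences are organizational — you use (iii) as the hub and close the loops (iv)$\imp$(i) and (v)$\imp$(i) by taking traces, whereas the paper routes everything through (ii) via the iteration $B^n=\map^*(\map(B)^0)B^n$; these are cosmetic variants of the same argument.
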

\begin{proof}
By assumption, $\map^*(\map(B)^0)\le I_1$ and hence,
\begin{equation*}
0\le\Tr(I_1-\map^*(\map(B)^0))B=\Tr B-\Tr\map^*(\map(B)^0)B=
 \Tr B-\Tr\map(B)^0\map(B)=\Tr B-\Tr\map(B).
\end{equation*}
If $\Tr \map(B)=\Tr B$ then $(I_1-\map^*(\map(B)^0))B=0$, i.e., 
$B=\map^*(\map(B)^0)B$, so we get $B^n=\map^*(\map(B)^0)B^n,\,n\in\N$, 
which yields \ref{item:functional calculus}.
Hence, the implication \ref{item:trace preserving}$\imp$\ref{item:functional calculus} holds.
If \ref{item:functional calculus} holds then we have 
$B^0=\map^*(\map(B)^0)B^0$ and hence,
for any $x\in\hil$ such that $B^0x=x$, we have $x=B^0x=\map^*(\map(B)^0)B^0x=\map^*(\map(B)^0)x$, or equivalently, $x\in\ran P_{\{1\}}\bz \map^*(\map(B)^0)\jz$. This yields 
\ref{item:projection dominance}, and the converse direction 
\ref{item:projection dominance}$\imp$\ref{item:functional calculus} is obvious. Assume now that 
\ref{item:functional calculus} holds. If $X\in B^0\A_1B^0$, then
$XB^0=B^0X=X$, and
\begin{equation*}
\Tr\map(X)=\Tr\map(X)\map(B)^0=\Tr X\map^*(\map(B)^0)=\Tr XB^0\map^*(\map(B)^0)=\Tr XB^0=\Tr X,
\end{equation*}
showing \ref{item:trace preserving on algebras}.
The implication \ref{item:trace preserving on algebras}$\imp$\ref{item:trace preserving} is obvious.

Assume that \ref{item:functional calculus} holds. Then $\map_B^*(\map(B))=B^{1/2}\map^*\bz\map(B)^0\jz B^{1/2}=B$, showing \ref{item:PhiB}. On the other hand, if \ref{item:PhiB} holds then
$B^{1/2}\map^*\bz\map(B)^0\jz B^{1/2}=B$, and hence 
$0=B^{1/2}(I_1-\map^*\bz\map(B)^0\jz)B^{1/2}$. Since 
$I_1-\map^*\bz\map(B)^0\jz\ge 0$, we obtain
$B^{1/2}(I_1-\map^*\bz\map(B)^0\jz)^{1/2}=0$, which in turn yields
$B=B\map^*\bz\map(B)^0\jz$. From this \ref{item:functional calculus} follows as above.
\end{proof}

\begin{cor}\label{cor:trace preserving for A}
Let $A,B\in\A_{1,+}$, and let $\map:\,\A_1\to\A_2$ be a trace non-increasing positive map.
Then $\map$ is  trace-preserving on $(A+B)^0\A_1(A+B)^0$ if and only if
\begin{equation*}
\Tr\map(A)=\Tr A\ds\ds\text{and}\ds\ds \Tr\map(B)=\Tr B.
\end{equation*}
\end{cor}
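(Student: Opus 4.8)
The plan is to collapse the two-operator condition into a single-operator trace condition for $C:=A+B$, after which the statement reduces to a one-line additivity argument. First I would set $C:=A+B$ and observe that, since $A,B\ge 0$, we have $A\le C$ and $B\le C$, whence $A^0\le C^0$ and $B^0\le C^0$; in particular $(A+B)^0\A_1(A+B)^0=C^0\A_1 C^0$ is precisely the algebra on which trace-preservation is being asserted. Because $\map$ is trace non-increasing, $\map^*$ is sub-unital, so $\map^*(\map(C)^0)\le\map^*(I_2)\le I_1$, which is exactly the standing hypothesis required to apply Lemma \ref{lemma:support inequality} with $C$ in place of $B$.

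Next I would invoke the equivalence \ref{item:trace preserving}$\iff$\ref{item:trace preserving on algebras} of Lemma \ref{lemma:support inequality}, applied to $C$: this tells us that $\map$ is trace-preserving on $C^0\A_1 C^0$ if and only if $\Tr\map(C)=\Tr C$. Thus the corollary is equivalent to showing that the scalar identity $\Tr\map(A+B)=\Tr(A+B)$ holds if and only if both $\Tr\map(A)=\Tr A$ and $\Tr\map(B)=\Tr B$ hold.

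Finally, I would conclude by additivity. By linearity of $\map$ and of the trace, $\Tr\map(A+B)=\Tr\map(A)+\Tr\map(B)$ and $\Tr(A+B)=\Tr A+\Tr B$, so
\begin{equation*}
\bz\Tr A-\Tr\map(A)\jz+\bz\Tr B-\Tr\map(B)\jz=\Tr(A+B)-\Tr\map(A+B).
\end{equation*}
Since $\map$ is trace non-increasing, Lemma \ref{lemma:support inequality} gives $\Tr\map(A)\le\Tr A$ and $\Tr\map(B)\le\Tr B$, so each summand on the left-hand side is non-negative; hence the left-hand side vanishes if and only if both summands vanish, which is the claimed equivalence.

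I do not anticipate a genuine obstacle here. The only points requiring care are verifying the hypothesis $\map^*(\map(C)^0)\le I_1$ so that Lemma \ref{lemma:support inequality} legitimately applies to $C$, and ensuring the two individual inequalities $\Tr\map(A)\le\Tr A$, $\Tr\map(B)\le\Tr B$ are available for the ``sum of non-negatives'' step — both of which follow immediately from $\map$ being trace non-increasing.
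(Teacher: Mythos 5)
Your proof is correct and is essentially the paper's intended argument: the paper dispatches this corollary with ``Obvious from Lemma \ref{lemma:support inequality},'' and your route --- applying the equivalence \ref{item:trace preserving}$\iff$\ref{item:trace preserving on algebras} of that lemma to $C:=A+B$ and then splitting $\Tr C-\Tr\map(C)$ into the two non-negative deficits for $A$ and $B$ --- is exactly the reasoning being left implicit. No gaps; the verification that $\map^*(\map(C)^0)\le I_1$ and the individual inequalities $\Tr\map(A)\le\Tr A$, $\Tr\map(B)\le\Tr B$ follow from trace non-increase just as you say.
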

\begin{proof}
Obvious from Lemma \ref{lemma:support inequality}.
\end{proof}

\begin{cor}\label{cor:0 Renyi}
Let $A,B\in\A_{1,+}$ and let $\map:\,\A_1\to\A_2$ be a trace non-increasing positive map such that $\Tr\map(A)=\Tr A$. Then 
\begin{equation*}
\Tr\map(B)\map(A)^0\ge\Tr BA^0\ds\ds\text{and}\ds\ds
\Tr\map(B)(I_2-\map(A)^0)\le\Tr B(I_1-A^0).
\end{equation*}
Note that the first inequality means the monotonicity of the R\'enyi $0$-relative entropy
$\rsr{A}{B}{0}\ge\rsr{\map(A)}{\map(B)}{0}$ under the given conditions.
\end{cor}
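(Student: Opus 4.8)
The plan is to derive both inequalities from the equality characterization of Lemma \ref{lemma:support inequality}, applied with $A$ playing the role of the operator $B$ there. First I would check the hypotheses: since $\map$ is trace non-increasing, $\map^*$ is sub-unital and positive, so $\map^*(\map(A)^0)\le\map^*(I_2)\le I_1$. Thus the standing assumption of Lemma \ref{lemma:support inequality} holds for $A$, and the hypothesis $\Tr\map(A)=\Tr A$ is exactly condition \ref{item:trace preserving} of that lemma. This licenses me to invoke any of the equivalent conditions; the one I need is \ref{item:projection dominance}, namely $A^0\le P_{\{1\}}\bz\map^*(\map(A)^0)\jz$.

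Next I would upgrade this projection inequality to a full operator inequality. Set $T:=\map^*(\map(A)^0)$. As just noted $0\le T\le I_1$, so $T$ is a positive contraction; writing its spectral decomposition $T=\sum_\lambda\lambda E_\lambda$ with $0\le\lambda\le1$ gives $T-P_{\{1\}}(T)=\sum_{\lambda\ne1}\lambda E_\lambda\ge0$. Combining this with the projection dominance above yields $T\ge P_{\{1\}}(T)\ge A^0$, i.e. $\map^*(\map(A)^0)\ge A^0$.

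The first inequality then follows by passing to the Hilbert--Schmidt adjoint: since $\map(B)$ and $\map(A)^0$ are self-adjoint, $\Tr\map(B)\map(A)^0=\hinner{\map(B)}{\map(A)^0}=\hinner{B}{\map^*(\map(A)^0)}=\Tr B\,\map^*(\map(A)^0)$, and because $B\ge0$ and $\map^*(\map(A)^0)\ge A^0$ we obtain $\Tr B\,\map^*(\map(A)^0)\ge\Tr B A^0$, as claimed. For the second inequality I would simply subtract: using $\Tr\map(B)\le\Tr B$ (the trace non-increasing property) together with the first inequality,
\begin{equation*}
\Tr\map(B)(I_2-\map(A)^0)=\Tr\map(B)-\Tr\map(B)\map(A)^0\le\Tr B-\Tr B A^0=\Tr B(I_1-A^0).
\end{equation*}

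The argument is short, and the only genuinely substantive step is the passage from the projection dominance $A^0\le P_{\{1\}}(\map^*(\map(A)^0))$ to the operator inequality $A^0\le\map^*(\map(A)^0)$. This is where one uses crucially that $\map^*(\map(A)^0)$ is a positive \emph{contraction}, so that its spectral part strictly below eigenvalue $1$ contributes nonnegatively rather than spoiling the bound; without the contraction property the comparison with the fixed-point projection would fail. Everything else is routine bookkeeping with the adjoint and the trace non-increasing hypothesis.
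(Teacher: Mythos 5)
Your proof is correct and follows essentially the same route as the paper: invoke Lemma \ref{lemma:support inequality} to get $A^0\le P_{\{1\}}\bz\map^*(\map(A)^0)\jz\le\map^*(\map(A)^0)$, pair this against $B$ under the trace, and obtain the second inequality from $\Tr\map(B)\le\Tr B$. One small quibble with your closing remark: the step $P_{\{1\}}(T)\le T$ needs only $T\ge 0$ (the spectral parts with $\lambda\ne 1$ are nonnegative regardless of whether $\lambda\le 1$), so the contraction property is not actually what makes that comparison work.
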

\begin{proof}
Due to Lemma \ref{lemma:support inequality}, the assumptions yield that
$A^0\le P_{\{1\}}\bz \map^*(\map(A)^0)\jz\le\map^*(\map(A)^0)$, and hence
$0\le\Tr B(\map^*(\map(A)^0)-A^0)=\Tr\map(B)\map(A)^0-\Tr BA^0$. The second inequality follows by taking into account that $\Tr\map(B)\le\Tr B$.
\end{proof}
\medskip

The following lemma yields the monotonicity of the R\'enyi $2$-relative entropies, and is 
needed to prove the monotonicity of general $f$-divergences. The statement and its proof 
can be obtained by following the proofs of Theorem 1.3.3, Theorem 2.3.2 (Kadison's inequality) and Proposition 2.7.3 in \cite{Bhatia2} using the weaker conditions given here. For readers' convenience, we include a self-contained proof here.
\begin{lemma}\label{lemma:2 Renyi monotonicity}
Let $A,B\in\A_{1,+}$ and $\map:\,\A_1\to\A_2$ be a positive map. Then
\begin{equation}\label{Bhatia inequality0}
\map(B^0AB^0)\map(B)\inv\map(B^0AB^0)\le \map(B^0AB\inv AB^0).
\end{equation}
In particular, if $A^0\le B^0$ then
\begin{equation}\label{Bhatia inequality}
\map(A)\map(B)\inv\map(A)\le \map(AB\inv A).
\end{equation}
If, moreover, $\map$ is also trace non-increasing then
\begin{equation}\label{2 Renyi monotonicity}
\rsr{\map(A)}{\map(B)}{f_2}=\Tr \map(A)^2\map(B)\inv\le \Tr A^2B\inv=\rsr{A}{B}{f_2}.
\end{equation}
\end{lemma}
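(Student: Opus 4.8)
The plan is to establish the operator inequality \eqref{Bhatia inequality0} first, then specialize it to get \eqref{Bhatia inequality}, and finally trace both sides of \eqref{Bhatia inequality} to obtain the monotonicity \eqref{2 Renyi monotonicity}. The key to \eqref{Bhatia inequality0} is a Schwarz-type argument for positive maps, realized through a suitable block-matrix (or $2\times 2$ operator-matrix) positivity trick. First I would note that, working inside the corner algebra $B^0\A_1B^0$ where $B$ is invertible, the $2\times 2$ block operator
\begin{equation*}
M:=\begin{pmatrix} B^0AB^{-1}AB^0 & B^0AB^0 \\ B^0AB^0 & B \end{pmatrix}
\end{equation*}
is positive semidefinite, since it factors as $N^*N$ with $N:=\begin{pmatrix} B^{-1/2}AB^0 & B^{1/2}\end{pmatrix}$ acting appropriately; the Schur-complement condition for positivity of such a block matrix is exactly $B^0AB^0\,B^{-1}\,B^0AB^0\le B^0AB^{-1}AB^0$, which holds with equality here. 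Applying $\map$ entrywise to $M$ and using that $\map$ (being positive, hence $2$-positive is \emph{not} assumed, so I must be careful) maps this particular positive block matrix to a positive one gives positivity of
\begin{equation*}
\begin{pmatrix} \map(B^0AB^{-1}AB^0) & \map(B^0AB^0) \\ \map(B^0AB^0) & \map(B) \end{pmatrix}.
\end{equation*}

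The delicate point, and what I expect to be the main obstacle, is justifying that this block matrix is positive under only the hypothesis that $\map$ is positive (not $2$-positive). This is where the structure of $M$ as a \emph{rank-one-type} congruence matters: rather than invoking $2$-positivity of $\map$ on a general positive $2\times 2$ matrix, I would follow the route indicated in the statement (the proofs of Theorems 1.3.3, 2.3.2 and Proposition 2.7.3 in \cite{Bhatia2}), exploiting that the relevant matrix arises from a single off-diagonal generator so that ordinary positivity of $\map$ suffices. Concretely, for an arbitrary vector $\xi$ in the target space I would test positivity against $\begin{pmatrix}\xi \\ \eta\end{pmatrix}$ with the optimal choice $\eta:=-\map(B)^{-1}\map(B^0AB^0)\xi$, reducing the desired inequality to the scalar inequality that a Schwarz/Kadison estimate for the positive map $\map$ provides on the corner algebra. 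Care is needed because $\map(B)$ may be singular, so I would restrict to $\ran\map(B)^0$ using Lemma \ref{lemma:supports}, which ensures $\map(B^0AB^0)$ has range inside $\ran\map(B)=\ran\map(B^0)$ so that $\map(B)^{-1}$ (the generalized inverse) behaves correctly.

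Once \eqref{Bhatia inequality0} is in hand, \eqref{Bhatia inequality} is immediate: if $A^0\le B^0$ then $B^0AB^0=A$ and $B^0AB^{-1}AB^0=AB^{-1}A$, so the corner projections drop out. For \eqref{2 Renyi monotonicity}, I would take the trace of \eqref{Bhatia inequality}. The left side traces to $\Tr\map(A)\map(B)^{-1}\map(A)=\Tr\map(A)^2\map(B)^{-1}$, which by Example \ref{ex: f-div Renyi} (with $\alpha=2$ and $\supp\map(A)\le\supp\map(B)$, guaranteed by $A^0\le B^0$ and Lemma \ref{lemma:supports}) equals $\rsr{\map(A)}{\map(B)}{f_2}$. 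For the right side, $\Tr\map(AB^{-1}A)\le\Tr AB^{-1}A=\Tr A^2B^{-1}=\rsr{A}{B}{f_2}$, where the inequality uses that $\map$ is trace non-increasing and $AB^{-1}A\ge0$; here I would invoke $\Tr\map(X)\le\Tr X$ for positive $X$, which follows from $\map^*(I_2)\le I_1$ (the dual formulation of trace non-increasing). Chaining these gives \eqref{2 Renyi monotonicity}, completing the proof.
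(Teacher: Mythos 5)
Your overall skeleton (block-matrix positivity, then Schur complement, then trace) is the same as the paper's, and your reductions of \eqref{Bhatia inequality} and \eqref{2 Renyi monotonicity} from \eqref{Bhatia inequality0} are correct as written. But at the one step you yourself flag as delicate --- why the entrywise image of your matrix $M$ under a merely positive $\map$ is still positive --- the proposal has a genuine gap. Your ``concrete'' mechanism is circular: testing the image matrix against the vector $\bigl(\xi,\,-\map(B)^{-1}\map(B^0AB^0)\xi\bigr)$ is how one \emph{extracts} the Schur-complement inequality \emph{from} positivity of the image matrix; it cannot establish that positivity. And the ``Schwarz/Kadison estimate for the positive map $\map$'' that you invoke to close the argument is precisely the inequality \eqref{Bhatia inequality0} being proved --- it is the Kadison-type inequality for the map $\Psi(X):=\map(B^{1/2}XB^{1/2})$ --- so nothing has been reduced to anything simpler.

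The missing idea, which is what the paper actually does, is a spectral-decomposition argument exploiting the fact that all entries of the block matrix are generated by a single self-adjoint element. Set $\Psi(X):=\map(B^{1/2}XB^{1/2})$ and $X:=B^{-1/2}AB^{-1/2}=\sum_{x\in\spect(X)}xP_x$. Then the image block matrix decomposes as
\begin{equation*}
\begin{bmatrix}\Psi(X^2) & \Psi(X)\\ \Psi(X) & \Psi(I_1)\end{bmatrix}
=\sum_{x\in\spect(X)}\begin{bmatrix}x^2 & x\\ x & 1\end{bmatrix}\otimes\Psi(P_x),
\end{equation*}
and each summand is positive because it is a positive semidefinite scalar $2\times 2$ matrix tensored with the positive operator $\Psi(P_x)$; this is exactly where ordinary positivity of $\map$ suffices, since $\map$ is only ever applied to the single operators $B^{1/2}P_xB^{1/2}$, never to block matrices. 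Once this positivity is established, conjugating by $\begin{bmatrix} I_2 & -\Psi(X)\Psi(I_1)\inv\\ 0 & I_2\end{bmatrix}$ and restricting to the top-left corner gives $\Psi(X)\Psi(I_1)\inv\Psi(X)\le\Psi(X^2)$, which is \eqref{Bhatia inequality0}; your final specialization and trace steps then go through unchanged.
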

\begin{proof}
Define $\Psi:\,\A_1\to\A_2$ as $\Psi(X):=\map(B^{1/2}XB^{1/2}),\,X\in\A_1$. Let
$X:=B^{-1/2}AB^{-1/2}$ and let $X=\sum_{x\in\sigma(X)}xP_x$ be its spectral decomposition. Then
\begin{equation*}
\hat X:=\begin{bmatrix}
\Psi(X^2) & \Psi(X)\\ \Psi(X) & \Psi(I_1)
\end{bmatrix}
=
\sum_{x\in\sigma(X)}\begin{bmatrix}
x^2 & x\\ x & 1
\end{bmatrix}\otimes \Psi(P_x)\ge 0,
\end{equation*}
and hence we have
\begin{equation*}
0\le\hat Y\hat X\hat Y^*=
\begin{bmatrix}
\Psi(X^2)- \Psi(X)\Psi(I_1)\inv\Psi(X) & \Psi(X)(I_2-\Psi(I)^0)\\ (I_2-\Psi(I_1)^0)\Psi(X) & \Psi(I_1)
\end{bmatrix},
\end{equation*}
where
$$
\hat Y:=
\begin{bmatrix}
I_2 & -\Psi(X)\Psi(I_1)\inv\\ 0 & I_2
\end{bmatrix}.
$$
Hence $\Psi(X^2)\ge \Psi(X)\Psi(I_1)\inv\Psi(X)$, which is exactly \eqref{Bhatia inequality0}.
The inequalities in \eqref{Bhatia inequality} and \eqref{2 Renyi monotonicity} follow immediately.
\end{proof}
\bigskip

We say that a map $\map:\,\A_1\to\A_2$ 
is a \ki{Schwarz map} if 
\begin{equation*}
\snorm{\map}:=\inf\{c\in[0,+\infty)\,:\,\map(X)^*\map(X)\le c\map(X^*X),\,X\in\A\}<+\infty.
\end{equation*}
Obviously, if $\map$ is a Schwarz map then $\map$ is positive, 
and we have $\norm{\map}=\|\map(I_1)\|\le\snorm{\map}$.
(Note that $\norm{\map}=\norm{\map(I_1)}$ is true for any positive map $\map$ \cite[Corollary 2.3.8]{Bhatia2}).
We say that $\map$ is a \ki{Schwarz contraction} if it is a Schwarz map with 
$\snorm{\map}\le 1$. 
A Schwarz contraction $\map$ is also a contraction, due to $\norm{\map}\le\snorm{\map}$. 
Note that a positive map $\map$ is a contraction
if and only if it is subunital, which is equivalent to $\map^*$ being trace non-increasing.
We say that a map $\map$ between two finite-dimensional $C^*$-algebras is a \ki{substochastic map} if its Hilbert-Schmidt adjoint $\map^*$ is a Schwarz contraction, and $\map$ is \ki{stochastic} if it is a trace-preserving substochastic map. 
Note that in the commutative finite-dimensional case substochastic/stochastic maps are exactly the ones that can be represented by substochastic/stochastic matrices.

It is known that if $\map$ is $2$-positive then it is a Schwarz map with $\snorm{\map}=\norm{\map}$. 
In general, however, we might have $\norm{\map}<\snorm{\map}<+\infty$, as the following example shows. In particular, not  every Schwarz map is $2$-positive.
\begin{example}\label{ex:Schwarz maps}
Let $\hil$ be a finite-dimensional Hilbert space, and for every $\ep\in\bR$, let
$\map_\ep:\,\B(\hil)\to\B(\hil)$ be the map
\begin{equation*}
\map_\ep(X):=(1-\ep)X\trans+\ep(\Tr X)I/d,\ds\ds\ds X\in\B(\hil),
\end{equation*}
where $d:=\dim\hil>1$ and $X\trans$ denotes the transpose of $X$ in some fixed basis $\{e_1,\ldots,e_d\}$ of $\hil$.
It was shown in \cite{Tomiyama} that $\map_\ep$ is positive if and only if $0\le \ep\le 1+1/(d-1)$, for $k\ge 2$ it is 
$k$-positive  if and only if $1-1/(d+1)\le\ep\le 1+1/(d-1)$, and it is a Schwarz contraction if and only if
$1-1/\bz 1/2+\sqrt{d+1/4}\jz\le\ep\le 1+1/(d-1)$. This already shows that there are parameter values $\ep$ for which 
$\map_\ep$ is a Schwarz contraction but not $2$-positive.
Moreover, if $\ep\in[0,1)$ then for every $c\in[0,+\infty)$ we have
\begin{align*}
&c\map_\ep(X^*X)-\map_\ep(X^*)\map_\ep(X)\\
&\ds=
c(1-\ep)(X^*X)\trans+c\ep(\Tr X^*X)I/d-
(1-\ep)^2(X^*)\trans X\trans\\
&\ds\ds\s -\ep(1-\ep)(\Tr X)(X^*)\trans/d-\ep(1-\ep)(\Tr X^*)X\trans/d
-\ep^2|\Tr X|^2 I/d^2\\
&\ds\ge
(\Tr X^*X)I/d\left[c\ep-d(1-\ep)^2-2\ep(1-\ep)\sqrt{d}-\ep^2\right],
\end{align*}
where we used that $|\Tr X|^2\le(\Tr I)(\Tr X^*X)$ and $X^*X\le\norm{X}^2 I\le(\Tr X^*X)I$.
This shows that $\map_\ep$ is a Schwarz map for every $\ep\in(0,1)$ and  $\snorm{\map_\ep}\le(1/\ep)(d(1-\ep)^2+2\ep(1-\ep)\sqrt{d}+\ep^2)$. 
Note that for $X:=\diad{e_1}{e_2}$ we have 
\begin{equation*}
0\le\inner{e_1}{\bz\snorm{\map_\ep}\map_\ep(X^*X)-\map_\ep(X^*)\map_\ep(X)\jz e_1}=
\snorm{\map_\ep}\ep/d-(1-\ep)^2,
\end{equation*}
which yields that $\snorm{\map_\ep}\ge d(1-\ep)^2/\ep$. In particular, $\lim_{\ep\searrow 0}\snorm{\map_\ep}=+\infty$.
Since $\map_\ep$ is a positive unital map for every $\ep\in[0,1+1/(d-1)]$, we have $\norm{\map_\ep}=1$ for every $\ep\in[0,1+1/(d-1)]$, while  
$\snorm{\map_\ep}>1$ and hence $\norm{\map_\ep}<\snorm{\map_\ep}$ whenever $(1-\ep)^2/\ep>d$.

Similarly, it was shown in \cite{Tomiyama} that the map
\begin{equation*}
\mapp_\ep(X):=(1-\ep)X+\ep(\Tr X)I/d,\ds\ds\ds X\in\B(\hil),
\end{equation*}
is completely positive if and only if $0\le\ep\le 1+1/(d^2-1)$, for $1\le k\le d-1$ it is $k$-positive if and only if
$0\le \ep\le 1+1/(dk-1)$, and it is a Schwarz contraction if and only if $0\le\ep\le 1+1/d$. A similar computation as above shows that $\mapp_\ep$ is a Schwarz map if and only if $0\le \ep<1+1/(d-1)$, and $\lim_{\ep\nearrow 1+1/(d-1)}\snorm{\mapp_\ep}=+\infty$. 

Finally, the map
\begin{equation*}
\Lambda_\ep(X):=(1-\ep)X\trans+\ep X,\ds\ds\ds X\in\B(\hil),
\end{equation*}
positive if and only if $0\le\ep\le 1$, for each $k\ge2$ it is $k$-positive if and only if $\varepsilon=1$, and
it is a Schwarz contraction if and only if $\varepsilon=1$ \cite{Tomiyama}. Moreover, for $X:=\diad{e_1}{e_2}$ and every $c\in\bR$ we have
$\inner{e_1}{\bz c\Lambda_\ep(X^*X)-\Lambda_\ep(X^*)\Lambda_\ep(X)\jz e_1}=-(1-\ep)^2$, and hence 
$\Lambda_\ep$ is a Schwarz map if and only if $\ep=1$.
\end{example}

\begin{lemma}\label{lemma:unital Schwarz contraction}
Let $\map:\,\A_1\to\A_2$ be a substochastic map, and assume that there exists a $B\in\A_{1,+}\setminus\{0\}$ 
such that $\Tr\map(B)=\Tr B$. Then $\snorm{\map^*}=\norm{\map^*}=1$. 
\end{lemma}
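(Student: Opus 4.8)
The plan is to reduce the whole statement to a single spectral fact about $\map^*(I_2)$, namely that the hypothesis $\Tr\map(B)=\Tr B$ forces $1$ into the spectrum of $\map^*(I_2)$. Since $\map$ is substochastic, its Hilbert-Schmidt adjoint $\map^*:\,\A_2\to\A_1$ is a Schwarz contraction; in particular $\map^*$ is positive with $\snorm{\map^*}\le 1$, and by the elementary bound $\norm{\map^*}=\norm{\map^*(I_2)}\le\snorm{\map^*}$ recorded earlier in this section we already have $\norm{\map^*}\le\snorm{\map^*}\le 1$. Being a Schwarz contraction, $\map^*$ is in particular a contraction, hence subunital, so $\map^*(I_2)\le I_1$ and thus $I_1-\map^*(I_2)\ge 0$. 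It therefore only remains to establish the reverse bound $\norm{\map^*(I_2)}\ge 1$: because $\norm{\map^*}\le\snorm{\map^*}\le 1$, proving $\norm{\map^*(I_2)}\ge 1$ pins down both quantities to be exactly $1$ at once.

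The key step is to transfer the trace identity through the adjoint. Using $\Tr Y=\hinner{I_2}{Y}$ for $Y\in\A_2$ together with the self-adjointness of the positive operator $\map^*(I_2)$, I would write
\begin{equation*}
\Tr B=\Tr\map(B)=\hinner{I_2}{\map(B)}=\hinner{\map^*(I_2)}{B}=\Tr\big(\map^*(I_2)B\big),
\end{equation*}
so that the hypothesis becomes $\Tr\big((I_1-\map^*(I_2))B\big)=0$. Here both $I_1-\map^*(I_2)$ and $B$ are positive semidefinite, and the trace of a product of two positive operators vanishes only when the product itself is zero: concretely, $\Tr\big((I_1-\map^*(I_2))B\big)=\Tr\big(B^{1/2}(I_1-\map^*(I_2))B^{1/2}\big)=0$ with a positive integrand forces $(I_1-\map^*(I_2))^{1/2}B^{1/2}=0$, whence $(I_1-\map^*(I_2))B=0$. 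Consequently $\map^*(I_2)B=B$.

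Finally, since $B\ne 0$ its support is a nonzero subspace, and from $\map^*(I_2)B=B$ the operator $\map^*(I_2)$ acts as the identity on the range of $B$ (for any $x=Bz$ one has $\map^*(I_2)x=\map^*(I_2)Bz=Bz=x$). Hence any unit vector $x$ in the support of $B$ satisfies $\map^*(I_2)x=x$, so $\inner{x}{\map^*(I_2)x}=1$ and therefore $\norm{\map^*}=\norm{\map^*(I_2)}\ge 1$. Combined with $\norm{\map^*}\le\snorm{\map^*}\le 1$, this gives $\snorm{\map^*}=\norm{\map^*}=1$, as claimed.

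I do not expect a serious obstacle here: the only point demanding care is the implication ``$\Tr(CB)=0$ with $C,B\ge 0$ forces $CB=0$'', which is precisely the computation already used in the proof of Lemma \ref{lemma:support inequality}; everything else is bookkeeping with the adjoint relation and the inequality $\norm{\map^*}\le\snorm{\map^*}$. The conceptual content is simply that trace preservation on a single nonzero $B$ already detects an eigenvalue $1$ of $\map^*(I_2)$, which saturates both norms.
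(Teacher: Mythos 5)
Your proof is correct. It differs from the paper's argument in a worthwhile way: the paper proves the lower bound by restricting to the corner algebras $\tilde\A_1:=B^0\A_1B^0$ and $\tilde\A_2:=\map(B)^0\A_2\map(B)^0$, invoking Lemma \ref{lemma:support inequality} to conclude that the compressed adjoint $\tilde\map^*(Y)=B^0\map^*(Y)B^0$ is \emph{unital} (it sends $\map(B)^0$ to $B^0$), and then chaining $1=\|\tilde\map^*\|\le\norm{\map^*}\le\snorm{\map^*}\le 1$. You instead work directly with $\map^*(I_2)$: the identity $\Tr\bz(I_1-\map^*(I_2))B\jz=0$, positivity, and subunitality force $\map^*(I_2)B=B$, so $\map^*(I_2)$ fixes every vector in $\supp B$ and hence $\norm{\map^*}=\norm{\map^*(I_2)}\ge 1$. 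The underlying mechanism is the same in both proofs (a vanishing trace of a product of two positive operators forces the product to vanish, which is exactly the computation inside Lemma \ref{lemma:support inequality}), but your version is self-contained: it needs neither the support projection $\map(B)^0$ nor the corner-algebra construction, only the facts $\norm{\map^*}=\norm{\map^*(I_2)}\le\snorm{\map^*}$ already recorded in Section \ref{sec:positive maps}. What the paper's route buys in exchange is structural information reused later (e.g., in Lemma \ref{lemma:decomposition}): it exhibits the restriction of $\map^*$ to the relevant corners as a genuinely unital Schwarz contraction, which is stronger than the norm statement alone.
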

\begin{proof}
Let $\tilde\A_1:=B^0\A_1B^0,\,\tilde\A_2:=\map(B)^0\A_2\map(B)^0$, and define
$\tilde\map:\,\tilde\A_1\to\tilde\A_2$ as $\tilde\map(X):=\map(B^0XB^0)=\map(X),\,X\in\tilde\A_1$. 
Then $\tilde\map^*(Y)=B^0\map^*(Y)B^0,\,Y\in\tilde\A_2$, and Lemma \ref{lemma:support inequality} yields that 
$\tilde\map^*(\map(B)^0)=B^0$, i.e., $\tilde\map^*$ is unital. Hence, $1=\|\tilde\map^*\|\le\norm{\map^*}\le\snorm{\map^*}\le 1$, from which the assertion follows. 
\end{proof}

\begin{lemma}\label{lemma:Schwarz adjoint}
The set of Schwarz maps is closed under composition, taking the adjoint, and positive linear combinations.
Moreover, for $\alpha\ge 0$ and $\map,\map_1,\map_2:\A_1\to\A_2$,
\begin{equation}\label{Schwarz identites}
\snorm{\alpha\map}=\alpha\snorm{\map},\ds\ds\ds
\snorm{\map_1+\map_2}\le \snorm{\map_1}+\snorm{\map_2}.
\end{equation}
\end{lemma}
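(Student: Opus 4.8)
Throughout write $c:=\snorm{\map}$, and first record that the defining infimum is attained: for each fixed $X$ the set of admissible constants $\{c\ge 0:\map(X)^*\map(X)\le c\,\map(X^*X)\}$ is a closed subinterval of $[0,+\infty)$ (the cone of positive semidefinite operators being closed), hence so is the intersection over all $X$, and therefore $\map(X)^*\map(X)\le\snorm{\map}\,\map(X^*X)$ holds for every $X$. With this in hand the scalar identity is immediate: for $\alpha>0$ the inequality $\alpha^2\map(X)^*\map(X)\le c\alpha\,\map(X^*X)$ is equivalent to $\map(X)^*\map(X)\le(c/\alpha)\map(X^*X)$, so the admissible constants for $\alpha\map$ are exactly $\alpha$ times those for $\map$, giving $\snorm{\alpha\map}=\alpha\snorm{\map}$; the case $\alpha=0$ is trivial. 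For subadditivity I would set $c_i:=\snorm{\map_i}$ (if some $c_i=0$ then $\map_i=0$ and there is nothing to prove, so assume $c_1,c_2>0$), expand $(\map_1+\map_2)(X)^*(\map_1+\map_2)(X)$, and estimate the cross term by $\map_1(X)^*\map_2(X)+\map_2(X)^*\map_1(X)\le s\,\map_1(X)^*\map_1(X)+s^{-1}\map_2(X)^*\map_2(X)$ (which follows from expanding $(\sqrt{s}\,\map_1(X)-s^{-1/2}\map_2(X))^*(\sqrt{s}\,\map_1(X)-s^{-1/2}\map_2(X))\ge 0$). The choice $s=c_2/c_1$ then yields $(\map_1+\map_2)(X)^*(\map_1+\map_2)(X)\le(c_1+c_2)(\map_1(X^*X)+\map_2(X^*X))$, which is exactly the Schwarz inequality for $\map_1+\map_2$ with constant $c_1+c_2$.

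For composition, let $\map:\A_1\to\A_2$ and $\mapp:\A_2\to\A_3$ be Schwarz maps with constants $c=\snorm{\map}$ and $d=\snorm{\mapp}$. Since $\mapp$ is positive, hence order-preserving on self-adjoint operators, applying it to $\map(X)^*\map(X)\le c\,\map(X^*X)$ gives $\mapp(\map(X)^*\map(X))\le c\,\mapp(\map(X^*X))$; combining this with the Schwarz inequality of $\mapp$ applied to the single element $\map(X)$, namely $\mapp(\map(X))^*\mapp(\map(X))\le d\,\mapp(\map(X)^*\map(X))$, yields $(\mapp\circ\map)(X)^*(\mapp\circ\map)(X)\le cd\,(\mapp\circ\map)(X^*X)$. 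Hence $\mapp\circ\map$ is Schwarz with $\snorm{\mapp\circ\map}\le\snorm{\mapp}\snorm{\map}$. Closure under positive linear combinations then follows at once by combining the scalar identity with subadditivity.

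The main obstacle is closure under taking the adjoint. By symmetry ($\map^{**}=\map$) it suffices to prove $\snorm{\map^*}\le\snorm{\map}$, which will in fact give equality. Note that $\snorm{\map}\le c$ is equivalent to positivity on the diagonal of the $\A_2$-valued sesquilinear form $\Gamma(X,Y):=c\,\map(X^*Y)-\map(X)^*\map(Y)$ (one checks $\Gamma(X,Y)^*=\Gamma(Y,X)$ since positive maps preserve adjoints). Testing the target inequality $\map^*(Y)^*\map^*(Y)\le c\,\map^*(Y^*Y)$ against an arbitrary $\pr{\xi}$ reduces it, via the identities $\inner{\xi}{\map^*(Y^*Y)\xi}=\Tr\map(\pr{\xi})Y^*Y$ and $\norm{\map^*(Y)\xi}^2=\sum_j|\hinner{\map(\diad{f_j}{\xi})}{Y}|^2$ (the sum over an orthonormal basis $\{f_j\}$ of $\hil$), to a ``transposed'' version of the Schwarz condition evaluated on the rank-one operators $\diad{f_j}{\xi}$. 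The plan is to package $\snorm{\map}\le c$ as positivity of a single operator built from $\map$ on the canonical maximally entangled vector, and to verify that passing to $\map^*$ corresponds to a transposition preserving that positivity.

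The hard part is that Schwarz maps need not be $2$-positive, so the full Choi operator of $\map$ need not be positive, and one cannot invoke any clean dilation. Consequently the reformulation must isolate only the ``rank-one'' fragment of positivity that the diagonal condition $\Gamma(X,X)\ge 0$ actually encodes, and then show that precisely this fragment is exchanged between $\map$ and $\map^*$. I expect this bookkeeping — matching the vector-level Cauchy--Schwarz consequences of $\Gamma(X,X)\ge 0$ with the corresponding statement written for $\map^*$ on the basis operators $\diad{f_j}{\xi}$ — to be the technical heart of the argument, whereas the identities $\snorm{\alpha\map}=\alpha\snorm{\map}$ and $\snorm{\map_1+\map_2}\le\snorm{\map_1}+\snorm{\map_2}$ and the composition bound are routine.
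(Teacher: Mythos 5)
Your homogeneity identity, your subadditivity bound (via the cross-term estimate $\map_1(X)^*\map_2(X)+\map_2(X)^*\map_1(X)\le s\,\map_1(X)^*\map_1(X)+s^{-1}\map_2(X)^*\map_2(X)$ with $s=c_2/c_1$), and your composition bound $\snorm{\mapp\circ\map}\le\snorm{\mapp}\snorm{\map}$ are all correct; the subadditivity argument is a legitimate direct alternative to the paper's route, which instead checks through an exact algebraic identity that a convex combination of Schwarz contractions is a Schwarz contraction and then rescales. The problem is the closure under adjoints, which is where the lemma has real content, and there your proposal has a genuine gap. First, you do not actually prove it: you reduce to rank-one operators and then describe a ``plan'' whose ``technical heart'' you explicitly leave undone. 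Second, and more fatally, the plan aims at a false target. You assert that it suffices to prove $\snorm{\map^*}\le\snorm{\map}$, ``which will in fact give equality.'' This is false: take $\map:=\Tr:\B(\iC^d)\to\iC$. Then $\snorm{\Tr}=d$ (Cauchy--Schwarz gives $|\Tr X|^2\le d\,\Tr X^*X$, saturated at $X=I$), while $\map^*$ is the unital embedding $\lambda\mapsto\lambda I$, a unital completely positive map, so $\snorm{\map^*}=1$. Applying this to $\Psi:=\map^*$ shows $\snorm{\Psi^*}=d>1=\snorm{\Psi}$, so the inequality you set out to prove fails in general, and no bookkeeping can rescue it. (This example sits inside the paper's own proof, which uses $\snorm{\Tr}=d_2$.) A third, smaller, issue: your reduction feeds the rank-one operators $\diad{f_j}{\xi}$ into $\map$, but these need not lie in $\A_1$, which is only a $C^*$-subalgebra of $\B(\hil_1)$.

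Any correct argument must therefore tolerate a dimension-dependent degradation of the constant, and this is exactly how the paper proceeds: it extends $\map$ to $\tilde\map:=\iota_2\circ\map\circ\pi_1$ on all of $\B(\hil_1)$, where $\pi_1$ is the trace-preserving conditional expectation onto $\A_1$ (unital and completely positive, so $\snorm{\tilde\map}=\snorm{\map}$ and the rank-one operators are now in the domain); it applies the Schwarz inequality of $\tilde\map$ to $\diad{v}{e_i}$, multiplies by $Y$ and $Y^*$, takes traces, uses that $\Tr$ itself is a Schwarz map with $\snorm{\Tr}=d_2$, and sums over the basis to obtain $\inner{v}{\map^*(Y^*)\map^*(Y)v}\le d_1d_2\snorm{\map}\inner{v}{\map^*(Y^*Y)v}$. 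The conclusion is only $\snorm{\map^*}\le d_1d_2\snorm{\map}<+\infty$, i.e., that $\map^*$ is again a Schwarz map; that is all the lemma claims, and, by the counterexample above, all that is true.
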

\begin{proof}
The assertion about the composition is obvious. To prove closedness under the adjoint, 
assume that
$\map:\,\A_1\to\A_2$ is a Schwarz map. Our goal is to prove that $\map^*$ is a 
Schwarz map, too. Let $\iota_k$ be the trivial embedding of $\A_k$ into $\B(\hil_k)$ for 
$k=1,2$. The adjoint $\pi_k:=\iota_k^*$ of $\iota_k$ is the trace-preserving conditional 
expectation (or equivalently, the Hilbert-Schmidt orthogonal projection) from $\B(\hil_k)$ 
onto $\A_k$. 
Since $\iota_k$ is completely positive, so is $\pi_k$, and since $\pi_k$ is unital, it is also a Schwarz contraction.
Let $\tilde\map:=\iota_2\circ\map\circ\pi_1$, the adjoint of which is 
$\tilde\map^*=\iota_1\circ\map^*\circ\pi_2$. Note that $\tilde\map$ is a Schwarz map, too, 
with $\|\tilde\map\|_S=\snorm{\map}$, since for any $X\in\B(\hil_1)$,
\begin{align*}
\tilde\map(X^*)\tilde\map(X)=\iota_2\bz\map(\pi_1(X^*))\map(\pi_1(X))\jz\le
\snorm{\map}\iota_2\map\bz\pi_1(X^*)\pi_1(X)\jz \le
\snorm{\map}\tilde\map(X^*X).
\end{align*}
Hence, for any vector $v\in\hil_1$ and any orthonormal basis $\{e_i\}_{i=1}^{d_1}$ in 
$\hil_1$, we have
\begin{equation*}
\snorm{\map}\tilde\map(\pr{v})\ge\tilde\map(\diad{v}{e_i})\tilde\map(\diad{e_i}{v}),\ds\ds\ds
i=1,\ldots,d_1,
\end{equation*}
where $d_1:=\dim\hil_1$.
Let $Y\in\A_2$ be arbitrary. Multiplying the above inequality with $Y$ from the left and 
$Y^*$ from the right, and taking the trace, we obtain
\begin{equation*}
\snorm{\map}\inner{v}{\tilde\map^*(Y^*Y)v}=
\snorm{\map}\Tr Y\tilde\map(\pr{v})Y^*\ge
\Tr Y\tilde\map(\diad{v}{e_i})\tilde\map(\diad{e_i}{v})Y^*.
\end{equation*}
Note that $\Tr:\,\A_2\to\iC$ is completely positive, and hence it is a Schwarz map with 
$\snorm{\Tr}=\norm{\Tr(I_2)}=d_2:=\dim\hil_2$. Hence, the above inequality can be continued 
as
\begin{align*}
d_2\snorm{\map}\inner{v}{\tilde\map^*(Y^*Y)v}&\ge
\Tr Y\tilde\map(\diad{v}{e_i})\Tr\tilde\map(\diad{e_i}{v})Y^*
=
\inner{v}{\tilde\map^*(Y^*)e_i}\inner{e_i}{\tilde\map^*(Y)v}, 
\end{align*}
and summing over $i$ yields 
\begin{align*}
d_1d_2\snorm{\map}\inner{v}{\tilde\map^*(Y^*Y)v}&\ge
\inner{v}{\tilde\map^*(Y^*)\tilde\map^*(Y)v}.
\end{align*}
Since the above inequality is true for any $v\in\hil_1$, and $\tilde\map^*(Y)=\map^*(Y)$ for any $Y\in\A_2$, the assertion follows.

The assertion on positive linear combinations follows from \eqref{Schwarz identites}, and the first identity in 
\eqref{Schwarz identites} is obvious. To see the second identity, assume first that $\map_1$ and $\map_2$ are Schwarz contractions. Then, for any $\ep\in[0,1]$ and any $X\in\A_1$ we have
\begin{align*}
&\bz(1-\ep)\map_1+\ep\map_2\jz(X^*X)-\bz(1-\ep)\map_1+\ep\map_2\jz(X^*)\bz(1-\ep)\map_1+\ep\map_2\jz(X)\\
&\ds=
(1-\ep)\left[\map_1(X^*X)-\map_1(X^*)\map_1(X)\right]
+\ep\left[\map_2(X^*X)-\map_2(X^*)\map_2(X)\right]\\
&\ds\ds\ds+\ep(1-\ep)\left[\bz\map_1(X)-\map_2(X)\jz^*\bz\map_1(X)-\map_2(X)\jz\right]\ge 0,
\end{align*}
and hence $(1-\ep)\map_1+\ep\map_2$ is a Schwarz contraction for any $\ep\in[0,1]$. Finally, let $\map_1,\map_2:\,\A_1\to\A_2$ be non-zero Schwarz maps. Then $\tilde\map_k:=\map_k/\snorm{\map_k}$ is a Schwarz contraction for $k=1,2$, and choosing $\ep:=\snorm{\map_2}/\bz\snorm{\map_1}+\snorm{\map_2}\jz$, we get 
\begin{equation*}
\snorm{\map_1+\map_2}=\bz\snorm{\map_1}+\snorm{\map_2}\jz\|(1-\ep)\tilde\map_1+\ep\tilde\map_2\|_S\le
\snorm{\map_1}+\snorm{\map_2}.\qedhere
\end{equation*}
\end{proof}

Lemma \ref{lemma:Schwarz} and Corollary \ref{cor:condexp} below are well-known when $\map$ and $\gamma$ are unital $2$-positive maps. Their proofs are essentially the same for Schwarz contractions, which we provide here for the readers' convenience.

\begin{lemma}\label{lemma:Schwarz}
Let $\map:\,\A_1\to\A_2$ be a Schwarz map, and let
\begin{equation*}
\M_{\map}:=\{X\in\A_1\,:\,\map(X)\map(X^*)=\snorm{\map}\map(XX^*)\}.
\end{equation*} Then
\begin{equation}\label{mult dom}
X\in\M_{\map}\ds\ds\ds\text{if and only if}\ds\ds\ds
\map(X)\map(Z)=\snorm{\map}\map(XZ),\ds Z\in\A_1.
\end{equation}
Moreover, the set $\M_{\map}$
is a vector space that is closed under multiplication.
\end{lemma}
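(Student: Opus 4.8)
The plan is to prove the characterization \eqref{mult dom} first and then read off the algebraic structure of $\M_\map$ as a formal consequence. Write $c:=\snorm{\map}$ for brevity; the defining inequality $\map(Y)^*\map(Y)\le c\,\map(Y^*Y)$ holds with this value of $c$ (the set of admissible constants is closed), and if $c=0$ it forces $\map=0$, in which case $\M_\map=\A_1$ and everything is trivial, so assume $c>0$. Since a Schwarz map is positive it is $*$-preserving, $\map(X^*)=\map(X)^*$, and applying the defining inequality to $Y:=X^*$ produces the companion inequality $\map(X)\map(X)^*\le c\,\map(XX^*)$ for every $X\in\A_1$. Thus $\M_\map$ is exactly the zero set of the positive semidefinite ``defect'' $D_X:=c\,\map(XX^*)-\map(X)\map(X)^*$.

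The core of the argument is an operator Cauchy--Schwarz inequality relating $D_X$, $D_Z$, and the cross term $C:=c\,\map(XZ^*)-\map(X)\map(Z)^*$. Because $\map$ is merely a Schwarz map, and need not be $2$-positive (see Example \ref{ex:Schwarz maps}), I cannot invoke positivity of an ampliation $\map\otimes\id$ on $2\times2$ matrices; I would instead polarize by hand. Substituting $Y:=X+\lambda Z$ (for $\lambda\in\iC$) into $\map(Y)\map(Y)^*\le c\,\map(YY^*)$ and expanding yields
\begin{equation*}
D_X+\bar\lambda\,C+\lambda\,C^*+|\lambda|^2 D_Z\ge 0,\qquad\lambda\in\iC,
\end{equation*}
where $*$-preservation identifies the $\lambda$-coefficient as $C^*$. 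Pairing against an arbitrary vector $\xi\in\hil_2$ gives a nonnegative scalar quadratic form in $\lambda$, whose nonnegativity is equivalent to the discriminant bound $|\inner{\xi}{C\xi}|^2\le\inner{\xi}{D_X\xi}\,\inner{\xi}{D_Z\xi}$. I expect this to be the one genuinely substantive step, since it is precisely here that the weak (Schwarz rather than $2$-positive) hypothesis must be exploited; the remainder is bookkeeping.

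Given the Cauchy--Schwarz inequality, \eqref{mult dom} is immediate. The implication ``$\Leftarrow$'' follows by setting $Z:=X^*$. For ``$\Rightarrow$'', if $X\in\M_\map$ then $D_X=0$, so $\inner{\xi}{C\xi}=0$ for all $\xi\in\hil_2$; over a complex Hilbert space this forces $C=0$, i.e.\ $c\,\map(XZ^*)=\map(X)\map(Z)^*$ for every $Z$, and replacing $Z$ by $Z^*$ gives $\map(X)\map(Z)=c\,\map(XZ)$ for all $Z\in\A_1$.

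Finally, the algebraic structure is deduced directly from \eqref{mult dom}. Linearity of $\map$ shows that the property ``$\map(X)\map(Z)=c\,\map(XZ)$ for all $Z$'' is preserved under linear combinations of $X$, so $\M_\map$ is a vector space. For closure under multiplication, let $X,Y\in\M_\map$; then for every $Z$,
\begin{equation*}
\map(XY)\map(Z)=\tfrac1c\,\map(X)\map(Y)\map(Z)=\map(X)\map(YZ)=c\,\map(XYZ),
\end{equation*}
where the outer equalities use \eqref{mult dom} for $X$ and the middle one uses it for $Y$. By \eqref{mult dom} again this means $XY\in\M_\map$, completing the proof.
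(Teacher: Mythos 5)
Your proof is correct and follows essentially the same route as the paper: both arguments expand the Schwarz defect of $X+\lambda Z$ and exploit nonnegativity of the resulting quadratic form to kill the cross term $\snorm{\map}\map(XZ^*)-\map(X)\map(Z)^*$, then read off the algebraic structure of $\M_\map$ from \eqref{mult dom}. The only (cosmetic) difference is that the paper works at the operator level with a real parameter $t$ plus the $iZ$ substitution, while you scalarize against vectors with a complex parameter $\lambda$ and finish by Cauchy--Schwarz and polarization.
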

\begin{proof}
We may assume that $\snorm{\map}>0$, since otherwise $\map=0$ and the assertions 
become trivial. Define 
$\gamma(X_1,X_2):=\snorm{\map}\map(X_1X_2^*)-\map(X_1)\map(X_2)^*,\,X_1,X_2\in\A_1$.
Let $X\in\M_{\map}$, $Z\in\A_1$ and $t\in\bR$. Then
\begin{align*}
0&\le \gamma(tX+Z,tX+Z)=t^2\gamma(X,X)+t[\gamma(X,Z)+\gamma(Z,X)]+\gamma(Z,Z)\\
&=
t[\gamma(X,Z)+\gamma(Z,X)]+\gamma(Z,Z).
\end{align*}
Since this is true for any $t\in\bR$, we get $\gamma(X,Z)+\gamma(Z,X)=0$, and repeating the same argument with $iZ$ in place of $Z$, we get $\gamma(X,Z)-\gamma(Z,X)=0$. Hence,
$\map(X)\map(Z)=\snorm{\map}\map(XZ)$. The implication in the other direction is obvious.
The assertion about the algebraic structure of $\M_{\map}$ follows immediately from \eqref{mult dom}.
\end{proof}

For a map $\gamma$ 
from a $C^*$-algebra into itself, we denote by $\fix{\gamma}$ the set of fixed points of $\gamma$.
\begin{cor}\label{cor:condexp}
Let $\gamma:\,\A\to\A$ be a Schwarz contraction, and assume that there exists a strictly positive linear functional $\alpha$ on $\A$ such that $\alpha\circ\gamma=\alpha$. Then
$\snorm{\gamma}=\norm{\gamma}=1$,
$\fix{\gamma}$ is a non-zero $C^*$-algebra, $\gamma$ is a $C^*$-algebra morphism on $\fix{\gamma}$, and $\gamma_{\infty}:=\lim_{n\to\infty}\frac{1}{n}\sum_{k=1}^n\gamma^k$ is an $\alpha$-preserving conditional expectation onto $\fix{\gamma}$.
\end{cor}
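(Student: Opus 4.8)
The plan is to bootstrap in three stages: first upgrade the contractivity of $\gamma$ to \emph{unitality}, then use the faithfulness of $\alpha$ to force the Schwarz inequality to saturate on fixed points, and finally feed this saturation into the multiplicative-domain machinery of Lemma~\ref{lemma:Schwarz}. For the first stage, note that a Schwarz contraction is positive and subunital: putting $X=I$ in $\gamma(X)^*\gamma(X)\le\snorm{\gamma}\gamma(X^*X)$ and using $\snorm{\gamma}\le1$ gives $\gamma(I)^2\le\gamma(I)$, whence $0\le\gamma(I)\le I$. Since $\alpha\circ\gamma=\alpha$ we have $\alpha(I-\gamma(I))=0$ with $I-\gamma(I)\ge0$, so strict positivity of $\alpha$ forces $\gamma(I)=I$. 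Then $\norm{\gamma}=\norm{\gamma(I)}=1$ (the first equality holding for any positive map), and combined with $\norm{\gamma}\le\snorm{\gamma}\le1$ this yields $\snorm{\gamma}=\norm{\gamma}=1$.

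The crux is that every fixed point lies in the multiplicative domain. Let $X\in\fix{\gamma}$; then $X^*\in\fix{\gamma}$ too, as $\gamma$ is $*$-preserving. With $\snorm{\gamma}=1$ the Schwarz inequality reads $X^*X=\gamma(X)^*\gamma(X)\le\gamma(X^*X)$, so $\gamma(X^*X)-X^*X\ge0$; applying the invariant faithful functional gives $\alpha(\gamma(X^*X)-X^*X)=\alpha(X^*X)-\alpha(X^*X)=0$, hence $\gamma(X^*X)=X^*X$, and the same computation for $X^*$ gives $\gamma(XX^*)=XX^*$. Therefore $\gamma(X)\gamma(X^*)=XX^*=\gamma(XX^*)$, i.e.\ $X\in\M_\gamma$. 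By \eqref{mult dom}, for any $Y\in\fix{\gamma}$ we get $\gamma(XY)=\gamma(X)\gamma(Y)=XY$, so $\fix{\gamma}$ is closed under multiplication; being a self-adjoint linear subspace containing $I=\gamma(I)$, it is a (nonzero) unital $C^*$-subalgebra, and since $\gamma$ restricts to the identity there it is trivially a $C^*$-algebra morphism on $\fix{\gamma}$.

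Finally I would construct $\gamma_\infty$. As $\gamma$ is a contraction on the finite-dimensional space $\A$, it is power-bounded, so the mean ergodic theorem (equivalently, the Jordan form, since power-boundedness excludes nontrivial Jordan blocks at unimodular eigenvalues) guarantees that $\frac1n\sum_{k=1}^n\gamma^k$ converges to an idempotent $\gamma_\infty$ with $\gamma\gamma_\infty=\gamma_\infty$ and range $\fix{\gamma}$. Each $\gamma^k$ is a Schwarz contraction and so is each average, by Lemma~\ref{lemma:Schwarz adjoint}; passing to the limit, $\gamma_\infty$ is a Schwarz contraction, it is positive, $\alpha\circ\gamma^k=\alpha$ gives $\alpha\circ\gamma_\infty=\alpha$, and $\gamma_\infty(I)=I$ gives $\snorm{\gamma_\infty}=\norm{\gamma_\infty}=1$. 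Since $\fix{\gamma_\infty}=\fix{\gamma}$, the saturation argument of the previous paragraph applies verbatim to $\gamma_\infty$, placing every $a\in\fix{\gamma}$ and its adjoint in $\M_{\gamma_\infty}$. Using \eqref{mult dom} for $\gamma_\infty$ with $a\in\fix{\gamma}$ yields the left module identity $\gamma_\infty(aZ)=a\gamma_\infty(Z)$, and applying it to $a^*$ and taking adjoints (using that $\gamma_\infty$ is $*$-preserving) yields the right module identity $\gamma_\infty(Za)=\gamma_\infty(Z)a$. Together with idempotency onto the $C^*$-algebra $\fix{\gamma}$, positivity and unitality, this identifies $\gamma_\infty$ as an $\alpha$-preserving conditional expectation onto $\fix{\gamma}$.

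I expect the main obstacle to be the saturation step: the inequality $\gamma(X)^*\gamma(X)\le\gamma(X^*X)$ collapses to an equality on fixed points only after exploiting the invariance and faithfulness of $\alpha$ simultaneously, and extracting both the left and right module properties for $\gamma_\infty$ forces one to run this argument for $a$ and $a^*$ in tandem with the $*$-preservation of $\gamma_\infty$.
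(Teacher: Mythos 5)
Your proof is correct, and while it hinges on the same key ingredient as the paper's proof --- the saturation trick (Schwarz inequality plus an invariant faithful functional forces $\gamma(X^*X)=X^*X=\gamma(X^*)\gamma(X)$ on fixed points, placing $\fix{\gamma}$ inside the multiplicative domain of Lemma \ref{lemma:Schwarz}) --- it departs from the paper in three places, each defensible. First, you prove \emph{unitality}: $\gamma(I)^2\le\gamma(I)$ gives $\gamma(I)\le I$, and then $\alpha(I-\gamma(I))=0$ with $\alpha$ faithful forces $\gamma(I)=I$. The paper never establishes this; it gets non-emptiness of $\fix{\gamma}$ from the dual eigenvalue argument ($\alpha\circ\gamma=\alpha$ means $\gamma^*$ fixes the strictly positive density of $\alpha$, so $1$ is an eigenvalue of $\gamma$), and obtains $\norm{\gamma}=\snorm{\gamma}=1$ from the unit $\tilde I$ of $\fix{\gamma}$ rather than from $I$ itself. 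Your route is a mild strengthening (it shows $I\in\fix{\gamma}$, so the fixed-point algebra is unital with the same unit as $\A$) and is arguably cleaner. Second, for the existence of $\gamma_\infty$ you use power-boundedness and the Jordan form in finite dimensions, whereas the paper introduces the inner product $\inner{X}{Y}:=\alpha(X^*Y)$, notes that $\gamma$ is a contraction for it (via Schwarz plus invariance), and invokes von Neumann's mean ergodic theorem; the paper's choice has the advantage that $\gamma_\infty$ is immediately the $\alpha$-orthogonal projection onto $\fix{\gamma}$, which makes the $\alpha$-preservation and projection properties transparent, while yours is more elementary but requires you to verify separately that $\gamma_\infty$ inherits positivity, the Schwarz property, unitality and $\alpha$-invariance in the limit. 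Third, to get the bimodule property you re-run the saturation argument on $\gamma_\infty$ itself (legitimate, since you have shown $\gamma_\infty$ is an $\alpha$-invariant Schwarz contraction with $\fix{\gamma_\infty}=\fix{\gamma}$), whereas the paper applies Lemma \ref{lemma:Schwarz} once to $\gamma$, obtaining $\gamma(XY)=X\gamma(Y)$ for $X\in\fix{\gamma}$, $Y\in\A$, which passes to Ces\`aro averages and the limit directly; the paper's path is shorter, but yours has the small bonus of recording explicitly that $\gamma_\infty$ is itself a unital Schwarz contraction and of spelling out both the left and right module identities.
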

\begin{proof}
The assumption $\alpha\circ\gamma=\alpha$ is equivalent to $\gamma^*(A)=A$, where 
$\alpha(X)=\Tr AX,\,X\in\A$, and $A$ is strictly positive definite.
Thus $1$ is an eigenvalue of $\gamma^*$ and therefore also of $\gamma$. Hence, the 
fixed-point set of $\gamma$ is non-empty, and it is obviously a linear subspace in $\A$, 
which is also self-adjoint due to the positivity of $\gamma$. If $X\in\fix{\gamma}$ then 
$0\le\alpha\bz \gamma(X^*X)-\gamma(X^*)\gamma(X)\jz=\alpha\bz 
\gamma(X^*X)\jz-\alpha(X^*X)=0$, and hence $\gamma(X^*X)=\gamma(X^*)\gamma(X)=X^*X$, i.e., 
$X^*X\in\fix{\gamma}$. The polarization identity then yields that $\fix{\gamma}$ is closed 
also under multiplication, so it is a $C^*$-subalgebra of $\A$. Let $\tilde I$ be the unit of 
$\fix{\gamma}$; then $1=\|\tilde I\|=\|\gamma(\tilde 
I)\|\le\norm{\gamma}\le\snorm{\gamma}\le 1$, so $\snorm{\gamma}=1$. Repeating the above 
argument with $X^*$ yields that $\fix{\gamma}\subset\M_\gamma\cap\M_\gamma^*$, where $\M_\gamma$ is defined as in Lemma \ref{lemma:Schwarz}. Moreover, by 
Lemma \ref{lemma:Schwarz}, $\gamma$ is a $C^*$-algebra morphism on 
$\M_\gamma\cap\M_\gamma^*$, and hence also on $\fix{\gamma}$.
Note that $\inner{X}{Y}:=\alpha(X^*Y)$ defines an inner product on $\A$ with respect to 
which $\gamma $ is a contraction, and hence $\gamma_\infty$ exists and is the orthogonal 
projection onto $\fix{\gamma}$, due to von Neumann's mean ergodic theorem. By Lemma 
\ref{lemma:Schwarz} we have 
$\gamma(XY)=\gamma(X)\gamma(Y)=X\gamma(Y)$ for any $X\in\fix{\gamma}$ and $Y\in\A$, which 
yields that $\gamma_{\infty}$ is a conditional expectation.
\end{proof}

\begin{lemma}\label{lemma:decomposition}
Let $B_1:=B\in\A_{1,+}$ be non-zero, and let $\map:\,\A_1\to\A_2$ be a trace 
non-increasing $2$-positive map 
such that $\Tr\Phi(B)=\Tr B$. Let $B_2:=\map(B)$. Then there exist decompositions
$\supp B_m=\bigoplus_{k=1}^r\hil_{m,k,L}\otimes \hil_{m,k,R},\,m=1,2$, 
invertible density operators $\omega_{B,k}$ on $\hil_{1,k,R}$ 
and $\tilde \omega_{B,k}$ on $\hil_{2,k,R}$, and
unitaries $U_k\,:\hil_{1,k,L}\to\hil_{2,k,L}$
such that 
\begin{align}
\fix{\map_B^*\circ\map}_+&=\bigoplus_{k=1}^r\B(\hil_{1,k,L})_+\otimes\omega_{B,k},\nonumber\\
\map(A_{1,k,L}\otimes\omega_{B,k})&=U_k A_{1,k,L}U_k^*\otimes \tilde \omega_{B,k},\ds\ds\ds A_{1,k,L}\in\B(\hil_{1,k,L}).\label{unitaries}
\end{align}
\end{lemma}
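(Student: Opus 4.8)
The plan is to apply the ergodic/conditional-expectation machinery of Corollary \ref{cor:condexp}, but \emph{not} directly to $\map_B^*\circ\map$ (whose Schwarz-contraction property is unclear), rather to the ``reversed'' composition $\delta:=\map^*\circ\map_B$, and then to transport the resulting fixed-point algebra back by conjugation with $B^{1/2}$. First I would reduce to the compressed algebras $\tilde\A_1:=B^0\A_1B^0$ and $\tilde\A_2:=B_2^0\A_2B_2^0$; since $\ran\map_B^*\subseteq B^{1/2}\A_1B^{1/2}=\tilde\A_1$, we have $\fix{\map_B^*\circ\map}\subseteq\tilde\A_1$. As $\map$ is $2$-positive and trace non-increasing, $\map^*$ is a Schwarz contraction (a $2$-positive subunital map satisfies $\snorm{\map^*}=\norm{\map^*}\le1$), and $\map_B$, being $2$-positive and unital from $\tilde\A_1$ to $\tilde\A_2$ (indeed $\map_B(B^0)=B_2^0$), is a Schwarz contraction with $\snorm{\map_B}=1$. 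By Lemma \ref{lemma:Schwarz adjoint} the composition $\delta=\map^*\circ\map_B$ is then a Schwarz contraction, and it maps $\tilde\A_1$ into itself (see the intertwining below).

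Next I would produce a faithful invariant functional for $\delta$. Since $(\map_B)^*=\map_B^*$ and $(\map^*)^*=\map$, we have $\delta^*=\map_B^*\circ\map=:\gamma$, and the hypothesis $\Tr\map(B)=\Tr B$ gives $\gamma(B)=\map_B^*(\map(B))=B$ by Lemma \ref{lemma:support inequality}\ref{item:PhiB}. Hence $\alpha(X):=\Tr BX$, which is strictly positive on $\tilde\A_1$, satisfies $\alpha(\delta(X))=\Tr\gamma(B)X=\alpha(X)$. Corollary \ref{cor:condexp} then yields that $\mathcal{N}:=\fix\delta$ is a nonzero $C^*$-subalgebra of $\tilde\A_1$, that $\delta$ is a $C^*$-morphism on $\mathcal{N}$, and that there is an $\alpha$-preserving conditional expectation $\delta_\infty$ onto $\mathcal{N}$.

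The bridge back to $\gamma$ is the pair of identities $\map\bz B^{1/2}XB^{1/2}\jz=B_2^{1/2}\map_B(X)B_2^{1/2}$ (immediate from \eqref{def:PhiB}) and $\map_B^*\bz B_2^{1/2}ZB_2^{1/2}\jz=B^{1/2}\map^*(Z)B^{1/2}$ for $Z\in\tilde\A_2$ (immediate from \eqref{def:PhistarB}, using $\map(B)^{-1/2}B_2^{1/2}=B_2^0$). Writing $\mathrm{Ad}_{S}(X):=SXS$, these read $\map\circ\mathrm{Ad}_{B^{1/2}}=\mathrm{Ad}_{B_2^{1/2}}\circ\map_B$ and $\map_B^*\circ\mathrm{Ad}_{B_2^{1/2}}=\mathrm{Ad}_{B^{1/2}}\circ\map^*$, whence $\gamma\circ\mathrm{Ad}_{B^{1/2}}=\mathrm{Ad}_{B^{1/2}}\circ\delta$ on $\tilde\A_1$. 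Since $\mathrm{Ad}_{B^{1/2}}$ is a linear bijection of $\tilde\A_1$, this gives $\fix{\map_B^*\circ\map}=B^{1/2}\mathcal{N}B^{1/2}$. Now I would invoke the structure theory of finite-dimensional $C^*$-algebras to choose a decomposition $B^0\hil=\bigoplus_{k=1}^r\hil_{1,k,L}\otimes\hil_{1,k,R}$ with $\mathcal{N}=\bigoplus_k\B(\hil_{1,k,L})\otimes I_{\hil_{1,k,R}}$; the existence of the $\alpha$-preserving conditional expectation $\delta_\infty$ onto $\mathcal{N}$ forces, by Takesaki's theorem, invariance of $\mathcal{N}$ under the modular automorphisms $X\mapsto B^{it}XB^{-it}$, which for such an $\mathcal{N}$ is equivalent to $B$ having the product form $B=\bigoplus_k\sigma_{1,k,L}\otimes\omega_{B,k}$ with invertible densities $\omega_{B,k}$ on $\hil_{1,k,R}$. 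Conjugating $\mathcal{N}_+$ by $B^{1/2}$ and absorbing the invertible factors $\sigma_{1,k,L}^{1/2}$ then yields $\fix{\map_B^*\circ\map}_+=\bigoplus_k\B(\hil_{1,k,L})_+\otimes\omega_{B,k}$.

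Finally, for the action of $\map$ I would show $\map_B$ is a $*$-morphism on $\mathcal{N}$. For $X\in\mathcal{N}=\fix\delta$, tightness in the chain $X^*X=\delta(X^*X)=\map^*\bz\map_B(X^*X)\jz\ge\map^*\bz\map_B(X)^*\map_B(X)\jz\ge\map^*(\map_B(X))^*\map^*(\map_B(X))=\delta(X)^*\delta(X)=X^*X$ forces $\map_B(X^*X)=\map_B(X)^*\map_B(X)$, i.e.\ $\mathcal{N}\subseteq\M_{\map_B}$, so by Lemma \ref{lemma:Schwarz} $\map_B$ is multiplicative on $\mathcal{N}$; being unital and $*$-preserving it is an injective $*$-morphism of $\mathcal{N}$ into $\tilde\A_2$. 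The classification of such morphisms furnishes a compatible decomposition $B_2^0\hil=\bigoplus_k\hil_{2,k,L}\otimes\hil_{2,k,R}$, unitaries $U_k:\hil_{1,k,L}\to\hil_{2,k,L}$ with $\map_B(A_L\otimes I)=U_kA_LU_k^*\otimes I$, and (from $B_2=\map(B)$ together with the same modular argument on $\tilde\A_2$) densities $\tilde\omega_{B,k}$; feeding this into $\map\bz B^{1/2}XB^{1/2}\jz=B_2^{1/2}\map_B(X)B_2^{1/2}$ gives $\map(A_{1,k,L}\otimes\omega_{B,k})=U_kA_{1,k,L}U_k^*\otimes\tilde\omega_{B,k}$ after reabsorbing the $\sigma$-factors into the $U_k$. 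The main obstacle is the structural step of the last two paragraphs: extracting the tensor-product form of $\mathcal{N}$ and of $B$ (the modular/Takesaki input) and promoting the multiplicativity of $\map_B$ on $\mathcal{N}$ to the explicit unitaries $U_k$ with the matching state factors $\tilde\omega_{B,k}$; the choice to run the argument through $\delta=\map^*\circ\map_B$, where the Schwarz-contraction property is transparent, is what makes Corollary \ref{cor:condexp} applicable at all.
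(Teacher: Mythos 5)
Your plan is, in substance, the paper's own proof: your $\delta=\map^*\circ\map_B$ is (up to the compression discussed below) exactly the map $\gamma_1$ that the paper analyzes, followed by the same chain Corollary \ref{cor:condexp} $\to$ Takesaki's theorem $\to$ finite-dimensional structure theory. However, there is a genuine gap at precisely the step that makes Corollary \ref{cor:condexp} applicable. You assert that $\delta=\map^*\circ\map_B$ maps $\tilde\A_1=B^0\A_1B^0$ into itself, ``see the intertwining below''. This is false in general, and the intertwining cannot prove it: conjugation by $B^{1/2}$ kills whatever part of $\delta(X)$ lies outside $\supp B$, so $\mathrm{Ad}_{B^{1/2}}\circ\delta=\gamma\circ\mathrm{Ad}_{B^{1/2}}$ only controls the compression $B^0\delta(X)B^0$, not $\delta(X)$ itself. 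Concretely, take $\A_1=\A_2=\B(\iC^2)$, $\map(X):=(\Tr X)\pr{e_1}$ (completely positive and trace-preserving) and $B:=\pr{e_1}$; all hypotheses of the lemma hold, and one computes $\map_B(X)=X_{11}\pr{e_1}$, $\map^*(Y)=Y_{11}I$, hence $\delta(X)=X_{11}I$, so that $\delta(\pr{e_1})=I\notin\tilde\A_1$. Thus $\delta$ is not a map on $\tilde\A_1$; and viewed on all of $\A_1$, the invariant functional $\Tr(B\,\cdot)$ is not faithful (in this example $\delta$ admits no faithful invariant functional at all, since $\delta^*(D)=(\Tr D)\pr{e_1}$ is never invertible), so Corollary \ref{cor:condexp} cannot be invoked either way. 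The same defect undermines your identification $\fix{\map_B^*\circ\map}=B^{1/2}\fix{\delta}B^{1/2}$: the intertwining gives only the inclusion $B^{1/2}\fix{\delta}B^{1/2}\subseteq\fix{\map_B^*\circ\map}$, while for the converse the natural preimage $X:=B^{-1/2}WB^{-1/2}$ of $W\in\fix{\map_B^*\circ\map}$ satisfies only $B^0\delta(X)B^0=X$, not $\delta(X)=X$ (in the example, $W=\pr{e_1}$ gives $\delta(X)=I\ne X$); the set identity can be salvaged, but not by the bijectivity of $\mathrm{Ad}_{B^{1/2}}$ on $\tilde\A_1$. The repair is exactly the paper's device: replace $\map^*$ by the compressed adjoint $\tilde\map^*(Y):=B^0\map^*(Y)B^0$, the Hilbert--Schmidt adjoint of the restriction of $\map$ to $\tilde\A_1$. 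Then $\tilde\map^*\circ\map_B$ is a unital (by Lemma \ref{lemma:support inequality}) $2$-positive map of $\tilde\A_1$ into itself, hence a Schwarz contraction, with faithful invariant functional $\Tr(B\,\cdot)$, and your subsequent argument goes through.

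Beyond this, two steps are asserted rather than proved. First, the final ``reabsorption'': from $B=\bigoplus_k\sigma_{1,k,L}\otimes\omega_{B,k}$ and the product form of $B_2$, your computation yields $\map(A_{1,k,L}\otimes\omega_{B,k})=V_kA_{1,k,L}V_k^*\otimes\tilde\omega'_{B,k}$ with $V_k=\sigma_{2,k,L}^{1/2}U_k\sigma_{1,k,L}^{-1/2}$, which is invertible but not unitary; you need the additional fact that $\map$ is trace-preserving on $B^0\A_1B^0$ (Lemma \ref{lemma:support inequality}) to conclude $V_k^*V_k=c_kI$, so that $V_k$ is a scalar multiple of a unitary and the scalar can be pushed into the normalization of $\tilde\omega_{B,k}$. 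Second, your ``same modular argument on $\tilde\A_2$'', which must supply the product form of $B_2$ compatibly with the image blocks, requires introducing the second composition $\gamma_2:=\map_B\circ\tilde\map^*$ (with invariant density $B_2$) and checking $\map_B\bz\fix{\tilde\map^*\circ\map_B}\jz=\fix{\gamma_2}$ --- precisely the second map the paper works with. With these repairs, your proof coincides in all essentials with the paper's, which obtains the product structure of the fixed points via the auxiliary densities $A+B$, Takesaki's theorem and the arguments of \cite{MP} and \cite{OP}, and delegates the unitaries to \cite{MPhD}; your idea of transporting everything by $\mathrm{Ad}_{B^{1/2}}$ and using $B$ alone as the invariant density is an attractive streamlining, but it is not an independent route.
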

\begin{proof}
Let $\tilde\A_1:=B^0\A_1B^0,\,\tilde\A_2:=\map(B)^0\A_2\map(B)^0$, and define
$\tilde\map:\,\tilde\A_1\to\tilde\A_2$ as $\tilde\map(X):=\map(B^0XB^0)=\map(X),\,X\in\tilde\A_1$. Then
$\tilde\map^*(Y)=B^0\map^*(Y)B^0$, $Y\in\tilde\A_2$, and a straightforward computation verifies that
$\tilde\map_B(X):=\tilde\map(B)^{-1/2}\tilde\map(B^{1/2}XB^{1/2})\tilde\map(B)^{-1/2}=\map_B(X),\,X\in\tilde\A_1$, and $\tilde\map_B^*(Y):=B^{1/2}\tilde\map^*(\tilde\map(B)^{-1/2}Y\tilde\map(B)^{-1/2})B^{1/2}=
\map_B^*(Y),\,Y\in\tilde\A_2$.
Let $\gamma_1:=\tilde\map^*\circ\tilde\map_B$ and
$\gamma_2:=\tilde\map_B\circ\tilde\map^*$. 
Obviously, $\gamma_1$ and $\gamma_2$ are again $2$-positive and, since
\begin{align*}
\gamma_1(B^0)&=\tilde\map^*(\map(B)^0)=B^0\map^*(\map(B)^0)B^0=B^0,\\
\gamma_2(\map(B)^0)&=
\map(B)^{-1/2}\map(B^{1/2}\map^*(\map(B)^0)B^{1/2})\map(B)^{-1/2}=\map(B)^0
\end{align*} 
due to Lemma \ref{lemma:support inequality},
they are also unital. Hence, $\snorm{\gamma_i}=\norm{\gamma_i}=1,\,i=1,2$.
Note that if $A_1:=A\in\fix{\map_B^*\circ\map}_+$ then $A^0\le B^0$ and hence $A\in\tilde\A_1$, and 
\begin{equation*}
\gamma_1^*(A+B)=\map_B^*(\map(A+B))={A+B},\ds\ds
\gamma_2^*(\map(A+B))=\map(\map_B^*(\map(A+B)))=\map(A+B).
\end{equation*} 
Let $A_2:=\map(A_1)$.
By the above,
$\gamma_m$ leaves the faithful state $\alpha_m$ with density $(A_m+B_m)/\Tr (A_m+B_m)$ invariant, and hence, by Corollary \ref{cor:condexp},
$\fix{\gamma_m}$ is a $C^*$-algebra of the form
$\fix{\gamma_m}=\bigoplus_{k=1}^r\B(\hil_{m,k,L})\otimes I_{m,k,R}$, where 
$\bigoplus_{k=1}^r\hil_{m,k,L}\otimes \hil_{m,k,R}$ is a decomposition of $\supp B_m$. Moreover,
$\lim_{n\to\infty}\frac{1}{n}\sum_{k=1}^n\gamma_m^k$ gives an $\alpha_m$-preserving conditional expectation onto 
$\fix{\gamma_m}$, for $m=1,2$. Hence, by Takesaki's theorem \cite{Takesaki},
$(A_m+B_m)^{it}\fix{\gamma_m}(A_m+B_m)^{-it}=\fix{\gamma_m}$. Now the argument of Section 3 in \cite{MP} yields the 
existence of invertible density operators $\omega_{A,B,k}$ on $\hil_{1,k,R}$ and positive definite operators 
$X_{1,k,L,A,B}$ on  $\hil_{1,k,L}$ such that 
$A+B=\bigoplus_{k=1}^r X_{1,k,L,A,B}\otimes \omega_{A,B,k}$. By Theorem 9.11 in \cite{OP}, we have 
$(A+B)^{it}B^{-it}\in\fix{\gamma_1}$ for every $t\in\R$, which yields that $\omega_{A,B,k}$ is independent of $A$, and hence that every $A\in\fix{\map_B^*\circ\map}_+$ can be written in the form $A=\bigoplus_{k=1}^r A_{1,k,L}\otimes \omega_{B,k}$ with $\omega_{B,k}:=\omega_{A,B,k}$ and some positive semidefinite operators $A_{1,k,L}$ on $\hil_{1,k,L}$. 
This shows that $\fix{\map_B^*\circ\map}_+\subset\bigoplus_{k=1}^r\B(\hil_{1,k,L})_+\otimes\omega_{B,k}$.
For the proof of \eqref{unitaries}, we refer to Theorem 4.2.1 in \cite{MPhD}. Finally, the decomposition
$B=\oplus_{k=1}^r B_{1,k,L}\otimes \omega_{B,k}$ together with \eqref{unitaries} shows that 
$\fix{\map_B^*\circ\map}_+\supset\bigoplus_{k=1}^r\B(\hil_{1,k,L})_+\otimes\omega_{B,k}$.
\end{proof}

\section{Monotonicity}\label{sec:monotonicity}
\setcounter{equation}{0}
Now we turn to the proof of the monotonicity of the $f$-divergences under substochastic maps.
Let $\A_i\subset\B(\hil_i)$ be finite-dimensional $C^*$-algebras for $i=1,2$. 
Recall that we call a map $\map:\,\A_1\to\A_2$ substochastic if $\map^*$ satisfies the Schwarz inequality
\begin{equation*}
\map^*(Y^*)\map^*(Y)\le\map^*(Y^*Y),\ds\ds\ds Y\in\A_2,
\end{equation*}
and $\map$ is called stochastic if it is a trace-preserving substochastic map.

For a  
$B\in\A_{1,+}$ and a substochastic map $\map:\,\A_1\to\A_2$, we define the map
$V:\,\A_2\to\A_1$ as
\begin{equation}\label{def:V}
V(X):=\map^*(X\map(B)^{-1/2})B^{1/2},\ds\ds\ds X\in\A_2.
\end{equation}
Note that $V=R_{B^{1/2}}\circ\map^*\circ R_{\map(B)^{-1/2}}$ and hence
$V^*=R_{\map(B)^{-1/2}}\circ\map\circ R_{B^{1/2}}$,
which yields 
\begin{equation}\label{V star}
V^*(B^{1/2})=\map(B)^{1/2}.
\end{equation}
\begin{lemma}\label{lemma:V}
We have the following equivalence:
\begin{equation*}
V(\map(B)^{1/2})=B^{1/2}\ds\ds\ds\text{if and only if}\ds\ds\ds
\Tr\map(B)=\Tr B.
\end{equation*}
\end{lemma}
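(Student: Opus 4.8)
The plan is to reduce the statement to Lemma~\ref{lemma:support inequality} by first rewriting $V(\map(B)^{1/2})$ in closed form. Since powers are taken on supports, $\map(B)^{1/2}\map(B)^{-1/2}=\map(B)^0$, and hence from the definition \eqref{def:V} I obtain
\begin{equation*}
V(\map(B)^{1/2})=\map^*\bz\map(B)^{1/2}\map(B)^{-1/2}\jz B^{1/2}=\map^*(\map(B)^0)B^{1/2}.
\end{equation*}
Thus the claimed equivalence becomes $\map^*(\map(B)^0)B^{1/2}=B^{1/2}\iff\Tr\map(B)=\Tr B$. Before applying Lemma~\ref{lemma:support inequality} I would verify its hypothesis: as $\map$ is substochastic, $\map^*$ is a Schwarz contraction, hence positive and subunital, so $\map^*(\map(B)^0)\le\map^*(I_2)\le I_1$; in particular $\map$ is positive and the lemma is applicable.

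For the direction $\Tr\map(B)=\Tr B\imp V(\map(B)^{1/2})=B^{1/2}$, I would invoke the equivalence \ref{item:trace preserving}$\iff$\ref{item:functional calculus} of Lemma~\ref{lemma:support inequality}. Applying item~\ref{item:functional calculus} to the function $f(x):=\sqrt{x}$, which satisfies $f(0)=0$, gives $\map^*(\map(B)^0)B^{1/2}=B^{1/2}$, that is, $V(\map(B)^{1/2})=B^{1/2}$.

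For the converse, suppose $V(\map(B)^{1/2})=B^{1/2}$, i.e.\ $\map^*(\map(B)^0)B^{1/2}=B^{1/2}$. Multiplying on the left by $B^{1/2}$, taking the trace, and then using the defining adjoint relation $\Tr\map^*(Y)X=\Tr Y\map(X)$ together with $\map(B)^0\map(B)=\map(B)$, I would compute
\begin{equation*}
\Tr B=\Tr B^{1/2}\map^*(\map(B)^0)B^{1/2}=\Tr\map^*(\map(B)^0)B=\Tr\map(B)^0\map(B)=\Tr\map(B),
\end{equation*}
which is the desired equality.

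I do not expect any serious obstacle here: once $V(\map(B)^{1/2})$ has been rewritten as $\map^*(\map(B)^0)B^{1/2}$, everything reduces to the already-established characterizations of trace preservation in Lemma~\ref{lemma:support inequality}. The only points requiring care are checking that the hypothesis $\map^*(\map(B)^0)\le I_1$ holds, which follows from subunitality of the Schwarz contraction $\map^*$, and keeping in mind that powers are taken on supports, so that $\map(B)^{1/2}\map(B)^{-1/2}$ collapses to the support projection $\map(B)^0$ rather than to the identity.
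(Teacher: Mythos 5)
Your proof is correct and follows essentially the same route as the paper's: both reduce to the identity $V(\map(B)^{1/2})=\map^*(\map(B)^0)B^{1/2}$ and then rely on Lemma \ref{lemma:support inequality} (your explicit check that $\map^*(\map(B)^0)\le I_1$ is left implicit in the paper, since substochastic maps are trace non-increasing). The only, harmless, difference is in the converse direction, where you conclude $\Tr\map(B)=\Tr B$ by a direct trace computation using the adjoint relation, while the paper first upgrades $\map^*(\map(B)^0)B^{1/2}=B^{1/2}$ to condition \ref{item:functional calculus} of that lemma and then applies the equivalence \ref{item:functional calculus}$\iff$\ref{item:trace preserving}.
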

\begin{proof}
By definition,
\begin{equation*}
V(\map(B)^{1/2})=\map^*(\map(B)^{1/2}\map(B)^{-1/2})B^{1/2}=
\map^*(\map(B)^0)B^{1/2}.
\end{equation*}
Hence, if $\Tr\map(B)=\Tr B$ then $V(\map(B)^{1/2})=B^{1/2}$ due to Lemma \ref{lemma:support inequality}. On the other hand,
$B^{1/2}=V(\map(B)^{1/2})=\map^*(\map(B)^0)B^{1/2}$ yields $\map^*(\map(B)^0)B^n=B^n,\,n\in\N$,  and hence also \ref{item:functional calculus} of Lemma \ref{lemma:support inequality}, which in turn yields $\Tr\map(B)=\Tr B$.
\end{proof}

\begin{lemma}\label{lemma:contraction}
The map $V$ is a contraction and
\begin{equation}\label{contraction inequality}
V^*\bz \modop{A}{B}\jz V\le L_{\map(A)}R_{\map(B)\inv}.
\end{equation}
Moreover, when $\map^*$ is a C$^*$-algebra morphism, $V$ is an isometry if $\map(B)$ is invertible, and \eqref{contraction inequality} holds with equality if $B$ is invertible.
\end{lemma}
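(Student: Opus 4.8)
The plan is to reduce both assertions to inequalities between Hilbert--Schmidt inner products and to establish them by two applications of the Schwarz inequality for $\map^*$, with careful bookkeeping of the support projections. Throughout I abbreviate $Z:=X\map(B)^{-1/2}\in\A_2$ for $X\in\A_2$, and I will use repeatedly that $\map^*$ is $*$-preserving (being positive) and satisfies $\map^*(Y^*)\map^*(Y)\le\map^*(Y^*Y)$ with constant $1$ (as $\map$ is substochastic), the adjoint relation $\Tr\map^*(Y)^*A=\Tr Y^*\map(A)$, and the elementary identities $B^{1/2}B^{-1/2}=B^0$ and $\map(B)^{-1/2}\map(B)\map(B)^{-1/2}=\map(B)^0$. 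For \eqref{contraction inequality} I will factor the relative modular operators through their positive square roots $L_{A^{1/2}}R_{B^{-1/2}}$ and $L_{\map(A)^{1/2}}R_{\map(B)^{-1/2}}$, whose squares are $\modop{A}{B}$ and $\modop{\map(A)}{\map(B)}$; the inequality then becomes $\hnorm{L_{A^{1/2}}R_{B^{-1/2}}\,VX}^2\le\hnorm{L_{\map(A)^{1/2}}R_{\map(B)^{-1/2}}\,X}^2$ for all $X\in\A_2$.

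For the contraction claim, I would compute $\hnorm{VX}^2=\Tr\map^*(Z)^*\map^*(Z)\,B$ straight from \eqref{def:V}, bound $\map^*(Z)^*\map^*(Z)\le\map^*(Z^*Z)$ by the Schwarz inequality, and apply the adjoint relation to get $\Tr Z^*Z\,\map(B)=\Tr X^*X\,\map(B)^0\le\hnorm{X}^2$, the final step because $\map(B)^0\le I_2$.

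For \eqref{contraction inequality}, the key observation is $L_{A^{1/2}}R_{B^{-1/2}}\,VX=A^{1/2}\map^*(Z)B^0$, so the left-hand norm square equals $\Tr\map^*(Z)^*A\,\map^*(Z)B^0\le\Tr\map^*(Z)^*A\,\map^*(Z)$ because $0\le B^0\le I_1$. Writing this as $\Tr A^{1/2}\map^*(Z)\map^*(Z)^*A^{1/2}$ and using the Schwarz inequality in the form $\map^*(Z)\map^*(Z)^*\le\map^*(ZZ^*)$ (the usual inequality applied to $Z^*$) bounds it by $\Tr\map^*(ZZ^*)A$; the adjoint relation together with $ZZ^*=X\map(B)^{-1}X^*$ turns this into $\Tr X^*\map(A)X\,\map(B)^{-1}=\hnorm{L_{\map(A)^{1/2}}R_{\map(B)^{-1/2}}\,X}^2$, which is the right-hand side.

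Finally, if $\map^*$ is a $C^*$-algebra morphism then both Schwarz estimates above hold with equality. In the contraction argument the only remaining inequality is $\Tr X^*X\,\map(B)^0\le\Tr X^*X$, which holds with equality for every $X$ exactly when $\map(B)^0=I_2$, i.e.\ when $\map(B)$ is invertible, so $V$ is then an isometry. In the proof of \eqref{contraction inequality} the only other step is dropping $B^0$, and $\Tr\map^*(Z)^*A\,\map^*(Z)B^0=\Tr\map^*(Z)^*A\,\map^*(Z)$ for all $X$ exactly when $B^0=I_1$, i.e.\ when $B$ is invertible, giving equality there. The one point requiring genuine care---and the only real obstacle---is the systematic handling of the generalized inverses and of the support projections $B^0,\map(B)^0$ in the non-invertible case; once the identities $B^{1/2}B^{-1/2}=B^0$ and $\map(B)^{-1/2}\map(B)\map(B)^{-1/2}=\map(B)^0$ are invoked consistently, the argument reduces to the clean double use of the Schwarz inequality described above.
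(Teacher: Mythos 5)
Your proof is correct and follows essentially the same route as the paper's: both establish the contraction property and \eqref{contraction inequality} by computing the relevant Hilbert--Schmidt quadratic forms, applying the Schwarz inequality for $\map^*$ (once to $Z$, once to $Z^*$), dropping the support projections $B^0$ and $\map(B)^0$, and using the duality $\Tr\map^*(Y)A=\Tr Y\map(A)$, with the equality cases traced back to exactly the same two steps. Your factorization through the positive square roots $L_{A^{1/2}}R_{B^{-1/2}}$ is only a cosmetic repackaging of the paper's direct trace computation.
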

\begin{proof}
Let $X\in\A_2$. Then,
\begin{align}
 \hnorm{VX}^2&=
\Tr (VX)^*(VX)
 =
 \Tr B^{1/2}\map^*(\map(B)^{-1/2}X^*) \map^*(X\map(B)^{-1/2})B^{1/2}\nonumber\\
 &\le
\snorm{\map^*} \Tr B^{1/2} \map^*(\map(B)^{-1/2}XX^*\map(B)^{-1/2})B^{1/2}\label{ineq:monotonicity of f-div1}\\
&=
\snorm{\map^*} \Tr\map(B)\map(B)^{-1/2}XX^*\map(B)^{-1/2}
=
\snorm{\map^*}\Tr\map(B)^0 XX^*\nonumber\\
& \le
 \snorm{\map^*}\Tr XX^*=\snorm{\map^*}\hnorm{X}^2\le\hnorm{X}^2.\label{ineq:monotonicity of f-div2}
\end{align}
If $\map^*$ is a C$^*$-algebra morphism then $\snorm{\map^*}=1$ and the inequality in \eqref{ineq:monotonicity of f-div1} holds with equality, and if $\map(B)$ is invertible then and the inequality in
\eqref{ineq:monotonicity of f-div2} holds with equality.
Similarly,
 \begin{align}
 \hinner{X}{V^*\bz \modop{A}{B}\jz V X}&=
  \Tr (VX)^*A(VX)B\inv\nonumber\\
  &=
  \Tr B^{1/2} \map^*(\map(B)^{-1/2}X^*) A \map^*(X\map(B)^{-1/2}) B^{1/2}B\inv\nonumber\\
  &=
  \Tr A \map^*(X\map(B)^{-1/2}) B^0 \map^*(\map(B)^{-1/2}X^*)\nonumber\\
  &\le
  \Tr A  \map^*(X\map(B)^{-1/2}) \map^*(\map(B)^{-1/2}X^*) \label{ineq:monotonicity of f-div3}\\
  &\le
  \snorm{\map^*}\Tr A  \map^*(X\map(B)^{-1/2}\map(B)^{-1/2}X^*)  \label{ineq:monotonicity of f-div4}\\
  &=
  \snorm{\map^*}\Tr \map(A) X\map(B)^{-1} X^* 
  =
  \snorm{\map^*}\hinner{X}{L_{\map(A)}R_{\map(B)\inv}X}\nonumber\\
&\le
\hinner{X}{L_{\map(A)}R_{\map(B)\inv}X}.\label{ineq:monotonicity of f-div5}
 \end{align}
If $\map^*$ is a C$^*$-algebra morphism then $\snorm{\map^*}=1$ and the inequalities in
\eqref{ineq:monotonicity of f-div4} and \eqref{ineq:monotonicity of f-div5} hold with equality,
and if $B$ is invertible then \eqref{ineq:monotonicity of f-div3} holds with equality.
\end{proof}

Recall that a real-valued function $f$ on $[0,+\infty)$
is \ki{operator convex} if $f(tA+(1-t)B)\le tf(A)+(1-t)f(B)$, $t\in[0,1]$,
for any  positive semi-definite operators $A,B$ on any finite-dimensional Hilbert space (or equivalently, on some infinite-dimensional Hilbert space).
For a continuous real-valued function $f$ on $[0,+\infty)$, the following are equivalent
(see \cite[Theorem 2.1]{HanPed}): (i) $f$ is operator convex on $[0,+\infty)$ and $f(0)\le 0$; (ii) $f(V^*AV)\le V^*f(A)V$ for any contraction $V$ and any positive semi-definite operator $A$.
The function $f$ is \ki{operator monotone decreasing} if $f(A)\ge f(B)$ whenever $A$ and $B$ are such that $0\le A\le B$. If $f$ is operator monotone decreasing on $[0,+\infty)$ then it is also operator convex (see the proof of \cite[Theorem 2.5]{HanPed} or  \cite[Theorem V.2.5]{Bhatia}).
A function $f$ is \ki{operator concave} (resp., \ki{operator monotone increasing}) if
$-f$ is operator convex (resp., operator monotone decreasing).
An operator convex function on $[0,+\infty)$ is automatically continuous on $(0,+\infty)$, but might be 
discontinuous at $0$. For instance, a straightforward computation shows that the characteristic 
function $\egy_{\{0\}}$ of the set $\{0\}$ is operator convex on $[0,+\infty)$.
It is easy to verify that the functions
\begin{equation}\label{def:vfi_t}
\vfi_t(x):=-\frac{x}{x+t}=-1+\frac{t}{x+t}
\end{equation}
are operator monotone decreasing and hence operator convex on $[0,+\infty)$ for every $t\in(0,+\infty)$.

\begin{thm}\label{thm:monotonicity for F}
Let $A,B\in\A_{1,+}$, let $\map:\,\A_1\to\A_2$ be a substochastic map such that
$\Tr\map(B)=\Tr B$, and let $f$ be an operator convex function on $[0,+\infty)$. Assume that
\begin{equation}\label{monotonicity conditions}
\Tr\map(A)=\Tr A\ds\ds\ds\text{or}\ds\ds\ds 0\le\of.
\end{equation}
Then,
\begin{equation}\label{f-div monotonicity inequality}
\rsr{\map(A)}{\map(B)}{f}\le \rsr{A}{B}{f}.
\end{equation}
\end{thm}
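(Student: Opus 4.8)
The plan is to reduce a general operator convex $f$ to a small family of ``primitive'' functions via the integral representation of Section~\ref{sec:integral representation}, prove the inequality for each primitive, and reassemble. Concretely, I would write $f$ on $[0,+\infty)$ in a form such as
\[
f(x)=f(0)+\beta x+\gamma x^2+\int_{(0,+\infty)}\vfi_t(x)\,d\mu(t),
\]
with $\gamma\ge 0$, a non-negative measure $\mu$, and $\vfi_t(x)=-x/(x+t)$ as in \eqref{def:vfi_t}. Using the explicit finite-sum formula \eqref{f-div explicit expression3}, the $f$-divergence decomposes accordingly,
\[
\rsr{A}{B}{f}=f(0)\Tr B+\beta\Tr A+\gamma\,\rsr{A}{B}{f_2}+\int_{(0,+\infty)}\rsr{A}{B}{\vfi_t}\,d\mu(t),
\]
where $f_2(x)=x^2$; the interchange of $\int d\mu$ with the finite sum over $\spect(A)\times\spect(B)$ is justified by Fubini together with monotone convergence, while Proposition~\ref{prop:continuity} controls the passage through the regularizing limit $B+\ep I\to B$ implicit in \eqref{f-div def2}. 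The key point is to read off $\of$ from this representation: $\of=\beta$ when $\gamma=0$, and $\of=+\infty$ when $\gamma>0$.

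Each primitive is then treated separately. The constant term contributes $f(0)\Tr\map(B)=f(0)\Tr B$, which is preserved since $\Tr\map(B)=\Tr B$. For each $\vfi_t$, which is operator monotone decreasing, operator convex, and vanishes at $0$, I would use the contraction $V$ of \eqref{def:V}: the contraction inequality \eqref{contraction inequality} gives $V^*\bz\modop AB\jz V\le\modop{\map(A)}{\map(B)}$, operator monotone decreasingness gives $\vfi_t\bz\modop{\map(A)}{\map(B)}\jz\le\vfi_t\bz V^*\bz\modop AB\jz V\jz$, and the Hansen--Pedersen inequality (since $\vfi_t(0)=0$) gives $\vfi_t\bz V^*\bz\modop AB\jz V\jz\le V^*\vfi_t\bz\modop AB\jz V$; pairing with $\map(B)^{1/2}$ and using $V\map(B)^{1/2}=B^{1/2}$ (Lemma~\ref{lemma:V}, valid because $\Tr\map(B)=\Tr B$) yields $\rsr{\map(A)}{\map(B)}{\vfi_t}\le\rsr{A}{B}{\vfi_t}$. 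As $\omega(\vfi_t)=0$, no support hypothesis is needed here, by Corollary~\ref{cor:f-div explicit expression}. The quadratic term is the R\'enyi $2$-divergence, for which Lemma~\ref{lemma:2 Renyi monotonicity} gives $\rsr{\map(A)}{\map(B)}{f_2}\le\rsr{A}{B}{f_2}$, and $\gamma\ge 0$ preserves the inequality.

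The only term whose sign is not automatic is the linear one, and this is exactly where hypothesis \eqref{monotonicity conditions} enters; I would argue by cases. If $\Tr\map(A)=\Tr A$, then $\beta\Tr\map(A)=\beta\Tr A$ is preserved for either sign of $\beta$, and summing the pieces gives the claim. Otherwise $0\le\of$, so either $\gamma=0$ and $\of=\beta\ge0$, whence $\beta\Tr\map(A)\le\beta\Tr A$ by trace non-increase of substochastic maps; or $\gamma>0$ and $\of=+\infty$, in which case I split on supports: if $\supp A\not\le\supp B$ then $\rsr{A}{B}{f_2}=+\infty$ forces $\rsr{A}{B}{f}=+\infty$ and the inequality is trivial, while if $\supp A\le\supp B$ then $A\in B^0\A_1B^0$ and Lemma~\ref{lemma:support inequality} gives $\Tr\map(A)=\Tr A$, returning to the first case (with $\supp\map(A)\le\supp\map(B)$ by Lemma~\ref{lemma:supports}, so every term is finite).

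I expect the main obstacle to be twofold. First, establishing the integral representation with the correct coefficient constraints ($\gamma\ge 0$, $\mu\ge 0$) and the precise relation linking $\mu,\beta,\gamma$ to $\of$ and to the boundary behaviour of $f$ at $0$. Second, the bookkeeping that legitimizes the term-by-term decomposition of $\rsr{A}{B}{f}$ in the presence of possibly infinite values and non-nested supports, where Proposition~\ref{prop:continuity} is needed to exchange the regularizing limit $\ep\searrow0$ with the integral over $\mu$. The per-primitive inequalities themselves are comparatively routine given Lemmas~\ref{lemma:contraction}, \ref{lemma:V} and \ref{lemma:2 Renyi monotonicity}.
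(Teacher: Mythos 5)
Your reduction to primitives fails at its very first step: the integral representation you posit does not exist for the functions this theorem is chiefly about. What Theorem \ref{thm:integral representation} actually provides is $f(x)=f(0)+ax+bx^2+\int_{(0,+\infty)}\bz\frac{x}{1+t}+\vfi_t(x)\jz\,d\mu(t)$ with a \emph{unique} measure $\mu$ satisfying only \eqref{finite integral1}, i.e.\ $\int(1+t)^{-2}\,d\mu(t)<+\infty$. Your form, with $\int\vfi_t(x)\,d\mu(t)$ standing alone, forces $\int(1+t)^{-1}\,d\mu(t)<+\infty$ (evaluate at $x=1$, where $\vfi_t(1)=-1/(1+t)$), and by the uniqueness of $\mu$ no other measure can be substituted; equivalently, your representation exists precisely when $x\mapsto f(x)-bx^2$ has at most linear growth (cf.\ Proposition \ref{prop:finite of}). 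This excludes a whole third class that your dichotomy ``$\gamma=0\Rightarrow\of=\beta$ finite, $\gamma>0\Rightarrow\of=+\infty$'' overlooks: functions with no quadratic term but $\of=+\infty$. Concretely, for $f(x)=x\log x$ the canonical $\mu$ is Lebesgue measure, and for $f(x)=x^\alpha$ with $\alpha\in(1,2)$ it is $c_\alpha t^{\alpha-1}\,dt$ (Example \ref{ex:representations}); in both cases $\int\vfi_t(x)\,d\mu(t)=-\infty$ for every $x>0$, so no finite $\beta$ can make your formula true. Correspondingly, your decomposition of the divergence degenerates: since $\rsr{A}{B}{\vfi_t}\approx-\frac{1}{t}\Tr AB^0$ as $t\to+\infty$, one gets $\int\rsr{A}{B}{\vfi_t}\,d\mu(t)=-\infty$ whenever $\Tr AB^0>0$, and the right-hand side of your displayed identity reads $\infty-\infty$. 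So the proposal, as written, proves the theorem only for $f$ of at most linear growth beyond the quadratic part — it misses exactly the relative entropy and the R\'enyi divergences with $\alpha\in(1,2)$.

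The repair is not bookkeeping but a further idea, which is what the paper's proof supplies. Keeping the compensating term inside the integral, the divergence integrand becomes $\frac{\Tr AB^0}{1+t}+\rsr{A}{B}{\vfi_t}$, and besides your (correct, and identical to the paper's) per-primitive estimates for $\vfi_t$ and $f_2$, one must now compare $\Tr AB^0$ with $\Tr\map(A)\map(B)^0$, a term that also appears separately with a coefficient $a$ of arbitrary sign. Only the inequality $\Tr\map(A)\map(B)^0\ge\Tr AB^0$ is available in general (Corollary \ref{cor:0 Renyi}), which points the wrong way; one needs \emph{equality}. The paper gets it first when $\supp A\le\supp B$ (both sides equal $\Tr A=\Tr\map(A)$ by Lemma \ref{lemma:support inequality}), and then extends to the case $\Tr\map(A)=\Tr A$ by the perturbation $B_\ep:=B+\ep A$ — which satisfies $\supp A\le\supp B_\ep$ and $\Tr\map(B_\ep)=\Tr B_\ep$ — followed by $\ep\searrow0$ via Proposition \ref{prop:continuity}; the remaining cases ($\of=+\infty$, and $0\le\of<+\infty$ via Proposition \ref{prop:finite of}) then split off as in your proposal. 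Note that this is a different, and essential, use of the continuity proposition than the regularization $B+\ep I$ you mention. Finally, a smaller but genuine omission: the theorem does not assume $f$ continuous at $0$, and operator convex functions on $[0,+\infty)$ can jump there (e.g.\ $\egy_{\{0\}}$), so no integral-representation argument alone can finish; the paper closes this by writing $f=\tilde f+\alpha\egy_{\{0\}}$ with $\alpha=f(0)-f(0^+)\ge0$ and proving monotonicity of $S_{\egy_{\{0\}}}$ separately, as the $\alpha\searrow0$ limit of the monotonicity for $-x^\alpha$, $\alpha\in(0,1)$.
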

\begin{proof}
First we prove the theorem when $f$ is continuous at $0$.
Due to Theorem \ref{thm:integral representation}, we have the representation
\begin{equation*}
f(x)=f(0)+ax+bx^2
+\int_{(0,\infty)}\bz\frac{x}{1+t}+\vfi_t(x)\jz\,d\mu(t),\ds x\in[0,+\infty),
\end{equation*}
where $b\ge 0$ and $\vfi_t(x)$ is given in \eqref{def:vfi_t}. Define
\begin{equation*}
\Delta:=\modop{A}{B}\ds\ds\ds\text{and}\ds\ds\ds \tilde\Delta:= L_{\map(A)}R_{\map(B)\inv}.
\end{equation*}
Then
\begin{align}
\rsr{A}{B}{f}=&
f(0)\Tr B+
a\Tr AB^0+b\Tr A^2B\inv\nonumber\\
&+
\int_{(0,+\infty)}\bz\frac{\Tr AB^0}{1+t}+\rsr{A}{B}{\vfi_t}\jz\,d\mu(t)+\of\Tr A(I-B^0).
\label{fdiv decomposition}
\end{align}

Note that $\Tr B=\Tr\map(B)$ by assumption and, since $b\ge 0$, we have $b\Tr A^2B\inv\ge b\Tr \map(A)^2\map(B)\inv$ due to Lemma \ref{lemma:2 Renyi monotonicity}.
Since $\vfi_t$ is operator convex, operator 
monotonic decreasing and $\vfi_t(0)=0$, we have
\begin{equation}\label{F-3.14}
V^*\vfi_t(\Delta)V\ge \vfi_t(V^*\Delta V)\ge \vfi_t(\tilde\Delta)
\end{equation}
for the contraction $V$ defined in \eqref{def:V},
due to \eqref{contraction inequality} and \cite[Theorem 2.1]{HanPed} as mentioned above.
Hence, by Lemma \ref{lemma:V},
\begin{align}
\rsr{A}{B}{\vfi_t}=\hinner{B^{1/2}}{\vfi_t(\Delta) B^{1/2}}&=
\hinner{V\map(B)^{1/2}}{\vfi_t(\Delta) V\map(B)^{1/2}}\nonumber\\
&\ge
\hinner{\map(B)^{1/2}}{\vfi_t(\tilde\Delta) \map(B)^{1/2}}=\rsr{\map(A)}{\map(B)}{\vfi_t}.\label{F-3.15}
\end{align}
Therefore, in order to prove the monotonicity inequality \eqref{f-div monotonicity inequality}, it suffices to prove the monotonicity of the remaining terms in 
\eqref{fdiv decomposition}.

Assume first that $\supp A\le\supp B$, and hence also $\Tr\map(A)=\Tr A$ (see Lemma \ref{lemma:support inequality}). Then
$\Tr AB^0=\Tr A=\Tr\map(A)=\Tr\map(A)\map(B)^0$, which also yields 
$\Tr A(I_1-B^0)=\Tr\map(A)(I_2-\map(B)^0)$. Hence, all the terms in \eqref{fdiv decomposition} are monotonic non-increasing under $\map$, and therefore 
we have the inequality \eqref{f-div monotonicity inequality}.

If $\of=+\infty$, then either $\supp A\nleq\supp B$, in which case
\begin{equation*}
\rsr{A}{B}{f}=+\infty\ge \rsr{\map(A)}{\map(B)}{f},
\end{equation*}
or we have $\supp A\le\supp B$,
and hence \eqref{f-div monotonicity inequality} follows by the previous argument.

Next, assume that $\Tr\map(A)=\Tr A$, and define
$B_\ep:=B+\ep A,\,\ep>0$. Then $\Tr\map(B_\ep)=\Tr\map(B)+\ep\Tr\map(A)=\Tr B+\ep\Tr A=\Tr B_\ep$, and
$\supp A\le \supp B_\ep$. Hence, by the previous argument,
\begin{align}\label{mon proof2}
\rsr{\map(A)}{\map(B_\ep)}{f}\le\rsr{A}{B_\ep}{f}.
\end{align}
By the previous paragraph, it is sufficient to consider the case where $\of$ is finite, and therefore
Proposition \ref{prop:continuity} can be used to obtain \eqref{f-div monotonicity inequality} 
by taking the limit $\ep\searrow 0$ in \eqref{mon proof2}.

Finally, assume that $0\le\of<+\infty$. By Proposition \ref{prop:finite of}, this yields the representation
\begin{equation*}
f(x)=f(0)+\of x+\int_{(0,\infty)}\vfi_t(x)\,d\mu(t),
\end{equation*}
and hence
\begin{align*}
\rsr{A}{B}{f}&=f(0)\Tr B+\of\Tr AB^0+
\int_{(0,+\infty)}\rsr{A}{B}{\vfi_t}\,d\mu(t)+\of\Tr A(I-B^0)\\
&=
f(0)\Tr B+\of\Tr A+\int_{(0,+\infty)}\rsr{A}{B}{\vfi_t}\,d\mu(t).
\end{align*}
Since $\Tr\map(A)\le\Tr A$, inequality \eqref{f-div monotonicity inequality} follows.

So far, we have proved the theorem for the case where $f$ is continuous at $0$. Consider the functions $\tilde f_\alpha(x):=-x^\alpha,\,x\ge 0,\,0<\alpha<1$. Then $\tilde f_\alpha$ is operator convex, continuous at $0$ and $\omega(\tilde f_\alpha)=0$ for all $\alpha\in(0,1)$. Hence, by the above, we have 
\begin{equation}\label{Renyi monotonicity 01}
-\Tr\map(A)^{\alpha}\map(B)^{1-\alpha}=\rsr{\map(A)}{\map(B)}{\tilde f_\alpha}\le\rsr{A}{B}{\tilde f_\alpha}=-\Tr A^{\alpha}B^{1-\alpha},\ds\ds \alpha\in(0,1).
\end{equation}
Taking the limit $\alpha\searrow 0$, we obtain
\begin{equation}\label{Renyi monotonicity 0}
\Tr\map(A)^0\map(B)\ge\Tr A^0B,
\end{equation}
which in turn yields
\begin{equation}\label{singular monotonicity}
\rsr{\map(A)}{\map(B)}{\egy_{\{0\}}}=\Tr\map(B)-\Tr\map(A)^0\map(B)\le\Tr B-\Tr A^0B=\rsr{A}{B}{\egy_{\{0\}}}.
\end{equation}

Assume now that $f$ is an operator convex function on $[0,+\infty)$, that is not necessarily continuous at $0$. 
Convexity of $f$ yields that $f(0^+):=\lim_{x\searrow 0}f(x)$ is finite,  and $\alpha:=f(0)-f(0^+)\ge 0$. Note that
$\tilde f:=f-\alpha\egy_{\{0\}}$ is operator convex and continuous at $0$, $\omega(\tilde f)=\omega(f)$, and
$\rsr{A}{B}{f}=\rsr{A}{B}{\tilde f}+\alpha\rsr{A}{B}{\egy_{\{0\}}}$ for any $A,B\in\A_{1,+}$.
Applying the previous argument to $\tilde f$ and using \eqref{singular monotonicity}, we see that 
\begin{align*}
\rsr{\map(A)}{\map(B)}{f}&=\rsr{\map(A)}{\map(B)}{\tilde f}+\alpha\rsr{\map(A)}{\map(B)}{\egy_{\{0\}}}\\
&\le
\rsr{A}{B}{\tilde f}+\alpha\rsr{A}{B}{\egy_{\{0\}}}=
\rsr{A}{B}{f}
\end{align*}
if any of the conditions in \eqref{monotonicity conditions} holds,
completing the proof of the theorem.
\end{proof}

\begin{rem}
Note that $\supp A\le\supp B$ is also sufficient for \eqref{f-div monotonicity inequality} to hold, 
due to Lemma \ref{lemma:support inequality}.
\end{rem}

\begin{ex}\label{ex:Renyi1}
Let $A,B\in\A_{1,+}$ and $\map:\,\A_1\to\A_2$ be a substochastic map such that $\Tr\map(B)=\Tr B$.
Let $\sgn x:=x/|x|,\,x\ne 0$, and define
$\tilde f_\alpha:=\sgn(\alpha-1)f_\alpha,\,0<\alpha\ne 1$, where $f_\alpha$ is given in
Example \ref{ex: f-div Renyi}.
Since $\tilde f_\alpha$ is operator convex, and
$\omega(\tilde f_\alpha)\ge 0$ for all $\alpha\in[0,2]\setminus\{1\}$, Theorem \ref{thm:monotonicity for F} yields that 
\begin{align}
\sgn(\alpha-1)\Tr\map(A)^\alpha\map(B)^{1-\alpha}&=
\rsr{\map(A)}{\map(B)}{\tilde f_\alpha}\nonumber\\
&\le
\rsr{A}{B}{\tilde f_\alpha}=\sgn(\alpha-1)\Tr A^\alpha B^{1-\alpha}
\label{Renyi monotonicity2}
\end{align}
when $\alpha\in(1,2]$ and $\supp A\le\supp B$. (Note that 
$\rsr{\map(A)}{\map(B)}{\tilde f_\alpha}\le\rsr{A}{B}{\tilde f_\alpha}=+\infty$ is trivial when $\alpha\in(1,2]$ and 
$\supp A\nleq\supp B$.)
The same inequality has been shown in the proof of Theorem \ref{thm:monotonicity for F} for $\alpha\in[0,1)$; see 
\eqref{Renyi monotonicity 01} and \eqref{Renyi monotonicity 0}.
This yields the monotonicity of the R\'enyi relative entropies,
\begin{align}
\rsr{\map(A)}{\map(B)}{\alpha}
&=\frac{1}{\alpha-1}\log\rsr{\map(A)}{\map(B)}{f_\alpha}
\le
\frac{1}{\alpha-1}\log\rsr{A}{B}{f_\alpha}=
\rsr{A}{B}{\alpha}\label{Renyi monotonicity}
\end{align}
for $\alpha\in[0,2]\setminus\{1\}$.

Since $\of\ge 0$ for $f(x):=x\log x$, Theorem \ref{thm:monotonicity for F} also yields the monotonicity of the relative entropy,
\begin{equation*}
\sr{\map(A)}{\map(B)}\le \sr{A}{B}.
\end{equation*}
\end{ex}
\medskip

\begin{rem}
In the proof of Theorem \ref{thm:monotonicity for F} it was essential that $f$ is operator 
convex, but it is not known if it is actually necessary. See Appendix \ref{sec:classical} for some special cases where convexity of $f$ is sufficient.
\end{rem}

Theorem \ref{thm:monotonicity for F} yields the joint convexity of the $f$-divergences:
\begin{cor}\label{cor:joint convexity}
Let $A_i,B_i\in\A_+$ and $p_i\ge0$ for $i=1,\ldots,r$, and let $f$ be an operator convex function on $[0,+\infty)$. Then
\begin{equation*}
S_f\bz\sum\nolimits_i p_iA_i\big\|\sum\nolimits_i p_iB_i\jz\le\sum\nolimits_i p_i\rsr{A_i}{B_i}{f}.
\end{equation*}
\end{cor}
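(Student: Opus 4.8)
The plan is to deduce the joint convexity from the monotonicity theorem (Theorem \ref{thm:monotonicity for F}) by a standard ``block diagonalization plus pinching'' trick, which reduces convexity to the monotonicity of $f$-divergences under a suitable stochastic map. First I would form the block-diagonal operators $\hat A:=\bigoplus_{i=1}^r p_i A_i$ and $\hat B:=\bigoplus_{i=1}^r p_i B_i$, living in the algebra $\hat\A:=\bigoplus_{i=1}^r\A\subset\B(\bigoplus_{i=1}^r\hil)$. Because the blocks are supported on mutually orthogonal subspaces, part (ii) of Corollary \ref{cor:scaling} (additivity under orthogonal supports) together with part (i) (positive homogeneity) gives
\begin{equation*}
\rsr{\hat A}{\hat B}{f}=\sum\nolimits_i\rsr{p_i A_i}{p_i B_i}{f}=\sum\nolimits_i p_i\rsr{A_i}{B_i}{f},
\end{equation*}
which is exactly the right-hand side of the claimed inequality.

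Next I would introduce the map that collapses the blocks to their sum. Define $\map:\,\hat\A\to\A$ by $\map(\bigoplus_i X_i):=\sum_i X_i$; this is the partial trace over the block index (equivalently, the ``summation'' map), and it is completely positive and trace-preserving, hence a stochastic map in the sense of the paper. Under $\map$ we have $\map(\hat A)=\sum_i p_i A_i$ and $\map(\hat B)=\sum_i p_i B_i$, so that $\Tr\map(\hat B)=\Tr\hat B$ automatically. Applying Theorem \ref{thm:monotonicity for F} to $\hat A,\hat B$ and $\map$ yields
\begin{equation*}
S_f\bz\sum\nolimits_i p_i A_i\big\|\sum\nolimits_i p_i B_i\jz=\rsr{\map(\hat A)}{\map(\hat B)}{f}\le\rsr{\hat A}{\hat B}{f}=\sum\nolimits_i p_i\rsr{A_i}{B_i}{f},
\end{equation*}
which is the assertion.

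The one point that needs care is the hypothesis \eqref{monotonicity conditions} of Theorem \ref{thm:monotonicity for F}: to invoke it I must check that either $\Tr\map(\hat A)=\Tr\hat A$ or $0\le\of$. The former holds here because the summation map is trace-preserving, so $\Tr\map(\hat A)=\Tr\hat A=\sum_i p_i\Tr A_i$, and thus the first alternative in \eqref{monotonicity conditions} is satisfied without any assumption on $\of$. This is where the argument is cleanest: since $\map$ is genuinely trace-preserving (not merely trace non-increasing) I need no sign condition on the function $f$ at infinity. The main obstacle, if any, is purely bookkeeping: verifying that the summation map really is stochastic and that the orthogonality hypothesis $A_i^0\vee B_i^0\perp A_j^0\vee B_j^0$ of Corollary \ref{cor:scaling}(ii) holds for the block embedding — both of which are immediate from the direct-sum construction. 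I would remark that if some $p_i=0$ the corresponding block contributes nothing and may be discarded, so one may assume all $p_i>0$ without loss of generality.
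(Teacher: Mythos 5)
Your proof is correct and is essentially the paper's own argument: the paper forms $A=\sum_i p_iA_i\otimes\delta_i$, $B=\sum_i p_iB_i\otimes\delta_i$ in $\A\otimes\B(\iC^r)$ and applies Theorem \ref{thm:monotonicity for F} to the partial trace $X\otimes Y\mapsto X\Tr Y$ together with Corollary \ref{cor:scaling}, which is the same construction as your direct sum $\bigoplus_i p_iA_i$ with the summation map, up to the obvious identification of the block-diagonal subalgebra. Your explicit verification of the trace-preservation hypothesis in \eqref{monotonicity conditions} is a detail the paper leaves implicit, but nothing more is needed.
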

\begin{proof}
Let $\delta_1,\ldots,\delta_r$ be a set of orthogonal rank-one projections on $\iC^r$, and define
$A:=\sum_{i=1}^r p_iA_i\otimes\delta_i,\,B:=\sum_{i=1}^r p_iB_i\otimes\delta_i$. The map 
$\map:\,\A\otimes\B(\iC^r)\to\A$, given by $\map(X\otimes Y):=X\Tr Y,\,X\in\A,\,Y\in\B(\iC^r)$, is completely positive and trace-preserving and hence, by Theorem \ref{thm:monotonicity for F},
\begin{equation}\label{joint convexity}
S_f\bz\sum\nolimits_i p_iA_i\big\|\sum\nolimits_i p_iB_i\jz
=\rsr{\map(A)}{\map(B)}{f}\le\rsr{A}{B}{f}=\sum\nolimits_i p_i\rsr{A_i}{B_i}{f},
\end{equation}
where the last identity is due to Corollary \ref{cor:scaling}.
\end{proof}
\medskip

\begin{rem}\label{rem:decomposability}
For an operator convex function $f$ on $[0,+\infty)$ let $\M_f(\A_1,\A_2)$ denote the set
of positive linear maps $\map:\,\A_1\to\A_2$ such that the monotonicity
$\rsr{\map(A)}{\map(B)}{f}\le\rsr{A}{B}{f}$ holds for all $A,B\in\A_1$. 
The joint convexity of the $f$-divergences shows that $\M_f(\A_1,\A_2)$ is convex.
Indeed, if $\map_1,\map_2\in\M_f(\A_1,\A_2)$ then
Corollary \ref{cor:joint convexity} yields
\begin{align*}
&\rsr{(1-\lambda)\map_1(A)+\lambda\map_2(A)}{(1-\lambda)\map_1(B)+\lambda\map_2(B)}{f}\\
&\ds\le
(1-\lambda)\rsr{\map_1(A)}{\map_1(B)}{f}+\lambda\rsr{\map_2(A)}{\map_2(B)}{f}\\
&\ds\le
(1-\lambda)\rsr{A}{B}{f}+\lambda\rsr{A}{B}{f}=\rsr{A}{B}{f}
\end{align*}
for any $\lambda\in[0,1]$ and $A,B\in\A_1$.
Note also that if $\map_1\in\M_f(\A_1,\A_2)$ and $\map_2\in\M_f(\A_2,\A_3)$ then
$\map_2\circ\map_1\in\M_f(\A_1,\A_3)$.

We say that a linear map $\map:\,\A_1\to\A_2$ 
is a \ki{co-Schwarz map} if there is a
$c\in[0,\infty)$ such that
\begin{equation*}
\map(X^*)\map(X)\le c\map(XX^*),\ds\ds\ds X\in\A_1,
\end{equation*}
and it is a \ki{co-Schwarz contraction} if the above inequality holds with $c=1$. It is
easy to see that a linear map $\map:\,\A_1\to\A_2$ is a co-Schwarz map (resp., a
co-Schwarz contraction) if and only if there is a Schwarz map (resp., a Schwarz contraction)
$\tilde\map:\,\A_1\trans\to\A_2$ such that $\map=\tilde\map\circ T$, where
$T(X):=X\trans$ denotes the transpose of $X\in\A_1$ with respect to a fixed orthonormal
basis of $\hil_1$, and $\A_1\trans:=\{X\trans\,:\,X\in\A_1\}\subset B(\hil_1)$.
Furthermore, we say that $\map$ is \ki{co-substochastic} (resp., \ki{co-stochastic}) if
$\map^*$ is a a co-Schwarz contraction (resp., a unital co-Schwarz contraction).
Theorem \ref{thm:monotonicity for F} holds also when $\map:\,\A_1\to\A_2$ is a
co-substochastic map. This follows immediately from Theorem \ref{thm:monotonicity for F}
and the fact that transpositions leave every $f$-divergences invariant (see
\ref{item:isometry invariance} of Corollary \ref{cor:scaling}). Alternatively, this can be
proved by replacing the operator $V$ defined in \eqref{def:V} with the conjugate-linear map 
\begin{equation}\label{(4.22)}
\hat V(X):=\map^*(\map(B)^{-1/2}X^*)B^{1/2},\ds\ds\ds X\in\A_2,
\end{equation}
and following the proofs of Lemma \ref{lemma:contraction} and Theorem \ref{thm:monotonicity
for F} with $\hat V$ in place of $V$. 

Recall that a positive map is called decomposable if it can be written as the sum of a completely positive map and a completely positive map composed with a transposition. By the above, a similar notion of decomposability is sufficient for the monotonicity of the $f$-divergences. Namely, if a trace-preserving positive map $\map:\,\A_1\to\A_2$ is decomposable in the sense that it can be written as a convex combination of a stochastic and a co-stochastic map then 
$\map\in\M_f(\A_1,\A_2)$ for any operator convex function $f$ on $[0,+\infty)$.
Example \ref{ex:Schwarz maps} provides simple examples of trace-preserving positive maps that are decomposable in this sense but which are neither stochastic nor co-stochastic.
\end{rem}

\section{Equality in the monotonicity}\label{sec:equality}
\setcounter{equation}{0}

In this section we analyze the situation where the monotonicity inequality 
\begin{equation*}
\rsr{\map(A)}{\map(B)}{f}\le\rsr{A}{B}{f}
\end{equation*} 
holds with equality, based on the integral representation of operator convex functions that we give in Section \ref{sec:integral representation}. 

By Theorem \ref{thm:integral representation}, every operator convex function $f$ on $[0,+\infty)$ admits a decomposition
\begin{equation}\label{integral decomposition3}
f(x)=\alpha\egy_{\{0\}}(x)+f(0^+)+ax+bf_2(x)
+\int_{(0,\infty)}\bz\frac{x}{1+t}+\vfi_t(x)\jz\,d\mu_f(t),\ds\ds\ds x\in[0,+\infty),
\end{equation}
where $\alpha,b\ge 0$, $f(0^+):=\lim_{x\searrow 0}f(x)$, $\egy_{\{0\}}$ is the characteristic function of the singleton $\{0\}$,
$f_2(x):=x^2$,  $\vfi_t(x)$ is given in \eqref{def:vfi_t}, and $\mu_f$ is a 
positive measure on $(0,+\infty)$.

Recall that $\spect(X)$ denotes the spectrum of an operator $X$. 
We will use the notation $|H|$ to denote the cardinality of a set $H$.
Given $B\in\A_{1,+}$ and a positive map $\map:\,\A_1\to\A_2$, let
$\map_B:\,\A_1\to\A_2$ and $\map_B^*:\,\A_2\to\A_1$ 
be the maps defined in \eqref{def:PhiB} and \eqref{def:PhistarB}.

\begin{thm}\label{thm:equality}
Let $A,B\in\A_{1,+}$ be such that $\supp A\le\supp B$,
let $\map:\A_1\to\A_2$ be a substochastic map such that $\Tr\map(B)=\Tr B$, and define
\begin{equation*}
\Delta:=\modop{A}{B}\ds\ds\ds\text{and}\ds\ds\ds \tilde\Delta:= L_{\map(A)}R_{\map(B)\inv}.
\end{equation*}
Then, for the following conditions 
\ref{item:existence of inverse}--\ref{item:special inverse}, we have
\begin{equation*}
\mbox{\ref{item:existence of inverse}$\imp$\ref{item:existence of inverse2}$\imp$
\ref{item:equality1}$\imp$\ref{item:equality2}$\iff$\ref{item:primitives}$\iff$
\ref{item:cocycle}$\iff$\ref{item:cocycle3}$\iff$\ref{item:cocycle2}$\iff$
\ref{item:logarithms}$\imp$\ref{item:special inverse},}
\end{equation*}
and if $\map$ is $2$-positive then
\ref{item:special inverse}$\imp$\ref{item:existence of inverse} holds as well.
\begin{enumerate}
\item\label{item:existence of inverse}
There exists a stochastic map $\mapp:\A_2\to\A_1$ such that
\begin{equation}\label{reversion on A and B}
\mapp(\map(A))=A,\ds\ds\ds\mapp(\map(B))=B.
\end{equation}
\item\label{item:existence of inverse2}
There exists a substochastic map $\mapp:\A_2\to\A_1$ such that 
\eqref{reversion on A and B} holds.
\item\label{item:equality1}
For every operator convex function $f$ on $[0,+\infty)$,
\begin{equation}\label{fdiv equality}
\rsr{\map(A)}{\map(B)}{f}=\rsr{A}{B}{f}.
\end{equation}
\item\label{item:equality2}
The equality in \eqref{fdiv equality} holds for some operator convex function $f$ 
on $[0,+\infty)$ such that 
\begin{equation}\label{support condition}
|\supp\mu_f|\ge |\spect(\Delta)\cup\spect(\tilde\Delta)|.
\end{equation}
\item\label{item:primitives}
There exists a $T\subset(0,+\infty)$ such that $|T|\ge |\spect(\Delta)\cup\spect(\tilde\Delta)|$ and
\begin{equation*}
\rsr{\map(A)}{\map(B)}{\vfi_t}=\rsr{A}{B}{\vfi_t},\ds\ds\ds t\in T.
\end{equation*}
\item\label{item:cocycle}
$B^0\map^*\bz \map(B)^{-z}\map(A)^{z}\jz=B^{-z}A^{z}$ for all $z\in\iC$.
\item\label{item:cocycle3}
$B^0\map^*\bz \map(B)^{-\alpha}\map(A)^{\alpha}\jz=B^{-\alpha}A^{\alpha}$ for some $\alpha\in(0,2)\setminus\{1\}$.
\item\label{item:cocycle2}
$B^0\map^*\bz \map(B)^{-it}\map(A)^{it}\jz=B^{-it}A^{it}$ for all $t\in\R$.
\item\label{item:logarithms}
$B^0\map^*\bz \log^*\map(A)-(\log^*\map(B))\map(A)^0\jz=\log^* A-(\log^* B)A^0$.
\item\label{item:special inverse}
$\map^*_B(\map(A))=A$.
\end{enumerate}
Moreover, \ref{item:existence of inverse2}$\imp$\ref{item:equality1} holds 
without assuming that
$\supp A\le\supp B$. If $\map$ is $n$-positive/
completely positive then $\mapp$ in \ref{item:existence of inverse} can also be 
assumed to be $n$-positive/completely positive.
\end{thm}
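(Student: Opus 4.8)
The plan is to prove the easy implications and the sandwich argument directly, isolate the analytic core as the single equivalence \ref{item:primitives}$\iff$\ref{item:cocycle}, and build the reverse map by hand. First, \ref{item:existence of inverse}$\imp$\ref{item:existence of inverse2} is trivial. For \ref{item:existence of inverse2}$\imp$\ref{item:equality1}, note that \eqref{reversion on A and B} forces $\Tr A=\Tr\mapp(\map(A))\le\Tr\map(A)\le\Tr A$, hence $\Tr\map(A)=\Tr A$ and $\Tr\mapp(\map(A))=\Tr\map(A)$; Theorem \ref{thm:monotonicity for F} then applies to both $\map$ and $\mapp$, and sandwiching $\rsr{A}{B}{f}=\rsr{\mapp(\map(A))}{\mapp(\map(B))}{f}\le\rsr{\map(A)}{\map(B)}{f}\le\rsr{A}{B}{f}$ gives \eqref{fdiv equality}; since this uses no support hypothesis, \ref{item:existence of inverse2}$\imp$\ref{item:equality1} holds in general. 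Finally \ref{item:equality1}$\imp$\ref{item:equality2} is immediate by taking $f(x):=x\log x\in\F$, whose representing measure has infinite support.

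For \ref{item:equality2}$\imp$\ref{item:primitives} I would substitute the representation \eqref{integral representation3} into the $\supp A\le\supp B$ form of the divergence and use its linearity in $f$. Because $\supp A\le\supp B$ forces $\Tr\map(A)=\Tr A$ (Lemma \ref{lemma:support inequality}) and $\supp\map(A)\le\supp\map(B)$, the $f(0)$- and $\psi_f$-terms cancel in the difference, leaving $\int_{(0,\infty)}\bz\rsr{A}{B}{\vfi_t}-\rsr{\map(A)}{\map(B)}{\vfi_t}\jz\,d\mu_f(t)=0$. Each integrand is non-negative by \eqref{F-3.15} and continuous in $t$, so its zero set is closed and contains $\supp\mu_f$; this produces a set $T$ of cardinality at least $|\spect(\Delta)\cup\spect(\tilde\Delta)|$. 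The converse \ref{item:primitives}$\imp$\ref{item:equality2} is settled by choosing $f:=\sum_{t\in T'}\vfi_t\in\F$ for a subset $T'\subseteq T$ of the required size.

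The heart is \ref{item:primitives}$\iff$\ref{item:cocycle}. Put $\xi:=B^{1/2}$, $\eta:=\map(B)^{1/2}$ and let $V$ be the contraction of \eqref{def:V}, so $V\eta=\xi$ by Lemma \ref{lemma:V}. The equalities for $t\in T$ make a rational function of $t$, with poles in $-\spect(\Delta)\cup(-\spect(\tilde\Delta))$, vanish at more points than its degree, hence identically; comparing residues shows that $\Delta$ in the state $\xi$ and $\tilde\Delta$ in $\eta$ have the same spectral distribution, and that equality holds in both steps of \eqref{F-3.14} tested on $\eta$. Thus $V^*(\Delta+t)\inv\xi=(\tilde\Delta+t)\inv\eta$, while the distribution-matching gives $\norm{(\Delta+t)\inv\xi}=\norm{(\tilde\Delta+t)\inv\eta}=\norm{V^*(\Delta+t)\inv\xi}$; since $I-VV^*\ge0$ this forces $VV^*(\Delta+t)\inv\xi=(\Delta+t)\inv\xi$, i.e. $(\Delta+t)\inv\xi=V(\tilde\Delta+t)\inv\eta$. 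Integrating against the resolvent representation of $x\mapsto x^z$ upgrades this to $\Delta^z\xi=V\tilde\Delta^z\eta$ for all $z\in\iC$, and evaluating $\Delta^z\xi=A^zB^{1/2-z}$, $\tilde\Delta^z\eta=\map(A)^z\map(B)^{1/2-z}$ and the definition of $V$ turns it, after one adjoint, exactly into \ref{item:cocycle}. I expect this upgrade from the scalar equality to the operator intertwining to be the main obstacle: it works only because the non-isometric defect of $V$ annihilates the relevant vectors, which is precisely where $\Tr\map(B)=\Tr B$ and the distribution-matching are used.

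It remains to link the cocycle conditions and construct the reverse map, for which I take \ref{item:cocycle} as a hub. Since $z\mapsto\map(B)^{-z}\map(A)^z$ is entire, \ref{item:cocycle} gives \ref{item:cocycle3}, \ref{item:cocycle2} by restriction and \ref{item:logarithms} by differentiation at $z=0$; conversely each of \ref{item:cocycle3}, \ref{item:cocycle2}, \ref{item:logarithms} implies \ref{item:equality2} by pairing with $B$ (resp.\ $A$) under the trace, which yields $\Tr A^\alpha B^{1-\alpha}=\Tr\map(A)^\alpha\map(B)^{1-\alpha}$ (equality for $\sgn(\alpha-1)f_\alpha\in\F$), its analytic continuation, or $\sr{A}{B}=\sr{\map(A)}{\map(B)}$ (equality for $x\log x\in\F$), each of the functions involved having a representing measure of full support. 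For \ref{item:logarithms}$\imp$\ref{item:special inverse} I would pass through \ref{item:cocycle} and use the multiplicative-domain structure of Lemma \ref{lemma:Schwarz}: the cocycle places $\map(B)^{-1/2}\map(A)^{1/2}$ and its adjoint in $\M_{\map^*}$, so that $\map^*_B(\map(A))=B^{1/2}\map^*\bz\map(B)^{-1/2}\map(A)^{1/2}\jz\map^*\bz\map(A)^{1/2}\map(B)^{-1/2}\jz B^{1/2}=A$. Finally, for \ref{item:special inverse}$\imp$\ref{item:existence of inverse} under $2$-positivity I take $\mapp:=\map^*_B$, which reverses $B$ by Lemma \ref{lemma:support inequality} and $A$ by \ref{item:special inverse}; $2$-positivity makes $\map_B$ a Schwarz contraction so that $\mapp$ is substochastic, and Lemma \ref{lemma:decomposition} describing the fixed-point algebra of $\map^*_B\circ\map$ lets me complete $\mapp$ to a trace-preserving map off $\supp\map(B)$ without disturbing its action on $\map(A),\map(B)$, with $n$-positivity or complete positivity inherited by choosing the completion accordingly.
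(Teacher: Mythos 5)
Most of your plan is sound and, apart from packaging, coincides with the paper's proof: the sandwich argument for \ref{item:existence of inverse2}$\imp$\ref{item:equality1}, the reduction of \ref{item:equality2} to the primitive divergences $S_{\vfi_t}$, the use of the contraction $V$ of \eqref{def:V} together with equality in \eqref{F-3.14}, the defect argument $VV^*x=x$, the trace-pairing arguments for \ref{item:cocycle3}, \ref{item:cocycle2}, \ref{item:logarithms}$\imp$\ref{item:equality2}, and the completion of $\map_B^*$ to a trace-preserving map for \ref{item:special inverse}$\imp$\ref{item:existence of inverse}. Your genuine variations in the heart \ref{item:primitives}$\imp$\ref{item:cocycle} also check out: degree counting for the rational function $t\mapsto\hinner{B^{1/2}}{(\Delta+tI_1)\inv B^{1/2}}-\hinner{\map(B)^{1/2}}{(\tilde\Delta+tI_2)\inv\map(B)^{1/2}}$ is a legitimate substitute for the Cauchy-matrix Lemma \ref{lemma:approximation}; the residue/distribution matching replaces the paper's use of the $\gamma=2$ resolvent identity for the norm equality; and the resolvent-integral upgrade to $x\mapsto x^z$ is valid because your first step gives the intertwining for \emph{all} $t>0$, not just $t\in T$.

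There is, however, a genuine gap in your treatment of \ref{item:logarithms}$\imp$\ref{item:special inverse}: the claim that \ref{item:cocycle} places $u:=\map(B)^{-1/2}\map(A)^{1/2}$ and $u^*$ in $\M_{\map^*}$ is false in general. Counterexample: on $\iC^2$ take $A=B=\pr{e_1}$ and the completely positive trace-preserving map $\map(X):=\inner{e_1}{Xe_1}\pr{e_1}+\inner{e_2}{Xe_2}\tfrac12 I$; then \ref{item:cocycle} holds (both sides equal $\pr{e_1}$ for every $z$) and $u=uu^*=\pr{e_1}$, but $\map^*(uu^*)=\pr{e_1}+\half\pr{e_2}$ while $\map^*(u)\map^*(u^*)=\pr{e_1}+\tfrac14\pr{e_2}$, so $u\notin\M_{\map^*}$. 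The reason is that \ref{item:cocycle} only controls $B^0\map^*(\,\cdot\,)$, and the natural positivity-plus-trace argument yields only $B^0\bz\map^*(uu^*)-\map^*(u)\map^*(u^*)\jz B^0=0$; the (positive) difference can survive on $(\supp B)^\perp$ whenever $B$ is not invertible. This is precisely why the paper passes to the compressed map $\Psi(Y):=B^0\map^*(Y)B^0$, proves that it is a Schwarz contraction, and establishes the multiplicativity $\Psi(u_tu_t^*)=\Psi(u_t)\Psi(u_t^*)$ by an explicit trace computation before invoking Lemma \ref{lemma:Schwarz}. Your final displayed identity is true, but only because the outer factors $B^{1/2}$ enact exactly this compression; as justified in your sketch it does not follow. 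The repair stays within your scheme and is even shorter than the paper's route: with $\Psi$ as above, \ref{item:cocycle} at $z=1/2$ gives $\Psi(u)=B^{-1/2}A^{1/2}$, the Schwarz inequality for $\Psi$ gives $D:=\Psi(uu^*)-\Psi(u)\Psi(u^*)\ge0$, and $\Tr B^{1/2}DB^{1/2}=\Tr\map(B)^0\map(A)-\Tr B^0A=\Tr\map(A)-\Tr A=0$ by Lemma \ref{lemma:support inequality}; since $D=B^0DB^0$, this forces $D=0$, and sandwiching with $B^{1/2}$ yields $\map_B^*(\map(A))=A$ with no need for Lemma \ref{lemma:Schwarz} or analytic continuation. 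A minor remark in the same spirit: Lemma \ref{lemma:decomposition} is not needed to complete $\map_B^*$ to a stochastic map in \ref{item:special inverse}$\imp$\ref{item:existence of inverse}; adding $Y\mapsto[\Tr(I_2-\map(B)^0)Y]\rho$ for an arbitrary state $\rho$ does the job, exactly as in the paper, and this addition leaves $\map(A)$ and $\map(B)$ unaffected since their supports lie under $\map(B)^0$.
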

\begin{proof}
The implication 
\ref{item:existence of inverse}$\imp$\ref{item:existence of inverse2} is obvious. Assume 
that \ref{item:existence of inverse2} holds, and let $\tilde A:=\map(A),\,\tilde 
B:=\map(B)$. Then $\Tr A=\Tr\Psi(\tilde A)\le\Tr\tilde A=\Tr\map(A)\le\Tr A$ and similarly 
for $B$ and $\tilde B$, which yields $\Tr\Psi(\tilde A)=\Tr\tilde 
A,\,\Tr\Psi(\tilde B)=\Tr\tilde B$ and $\Tr\map(A)=\Tr A,\,\Tr\map(B)=\Tr B$ (note that this 
latter is automatic here, and not necessary to assume from the beginning).
Applying Theorem \ref{thm:monotonicity for F} twice, we get that
$\rsr{A}{B}{f}=\rsr{\Psi(\tilde A)}{\Psi(\tilde B)}{f}\le \rsr{\tilde A}{\tilde B}{f}=
\rsr{\map(A)}{\map(B)}{f}$ $\le\rsr{A}{B}{f}$ for any operator convex function $f$ on $[0,+\infty)$, proving 
\ref{item:equality1}.
The implication
\ref{item:equality1}$\imp$\ref{item:equality2} is again obvious.

Note that if $A=0$ then $\rsr{A}{B}{f}=f(0)\Tr B$ for any function $f$, and 
\ref{item:existence of inverse}--\ref{item:special inverse} hold true automatically. Hence, 
for the rest we will assume that $A\ne 0$ and hence also $B\ne 0$.

Assume that \ref{item:equality2} holds, i.e., $\rsr{\map(A)}{\map(B)}{f}=\rsr{A}{B}{f}$ for an operator convex function $f$ on $[0,+\infty)$ satisfying \eqref{support condition}. By \eqref{integral decomposition3},
we have
\begin{align*}
\rsr{A}{B}{f}&=
\alpha\rsr{A}{B}{\egy_{\{0\}}}+
f(0^+)\Tr B+a\Tr A+b\rsr{A}{B}{f_2}\\
&\ds+
\int_{(0,+\infty)}\bz\frac{\Tr A}{1+t}+\rsr{A}{B}{\vfi_t}\jz\,d\mu(t)
\end{align*}
(cf.~\eqref{fdiv decomposition}).
Note that $\Tr\map(B)=\Tr B$ by assumption and $\Tr\map(A)=\Tr A$ follows due to 
Lemma \ref{lemma:support inequality}.
Thus,
\begin{align*}
0&=\rsr{A}{B}{f}-\rsr{\map(A)}{\map(B)}{f}\\
&=
\alpha\bz\rsr{A}{B}{\egy_{\{0\}}}-\rsr{\map(A)}{\map(B)}{\egy_{\{0\}}}\jz+
b\bz\rsr{A}{B}{f_2}-\rsr{\map(A)}{\map(B)}{f_2}\jz\\
&\ds+
\int_{(0,+\infty)}\bz
\rsr{A}{B}{\vfi_t}-\rsr{\map(A)}{\map(B)}{\vfi_t}\jz\,d\mu_f(t).
\end{align*}
By Theorem \ref{thm:monotonicity for F}, the $f$-divergences corresponding to 
$\egy_{\{0\}},\,f_2$ and $\vfi_t$ are monotonic non-increasing under $\map$, and hence the
above equality yields that
\begin{align*}
\rsr{\map(A)}{\map(B)}{\vfi_t}=\rsr{A}{B}{\vfi_t}
\end{align*}
for all $t\in\supp\mu_f$.
This gives \ref{item:primitives} with $T:=\supp \mu_f$.

Assume now that \ref{item:primitives} holds. This means that
for every $t\in T$,
\begin{align*}
0=\rsr{A}{B}{\vfi_t}-\rsr{\map(A)}{\map(B)}{\vfi_t}
=
\hinner{\map(B)^{1/2}}{( V^*\vfi_t(\Delta) V-\vfi_t(\tilde\Delta))\map(B)^{1/2}},
\end{align*}
where we used that $V\map(B)^{1/2}=B^{1/2}$ due to Lemma \ref{lemma:V}
(note that $\omega(\vfi_t)=0,\,t>0$).
By \eqref{F-3.14} this is equivalent to
\begin{equation*}
V^*\vfi_t(\Delta) V\map(B)^{1/2}=\vfi_t(\tilde\Delta)\map(B)^{1/2},\ds\ds\ds t\in T,
\end{equation*}
or equivalently,
\begin{equation*}
V^*\left[-I_1+t(\Delta+tI_1)\inv\right] B^{1/2}=\left[-I_2+t(\tilde\Delta+tI_2)\inv\right]\map(B)^{1/2},\ds\ds\ds t\in T.
\end{equation*}
By \eqref{V star} we get
\begin{equation*}
V^*(\Delta+tI_1)\inv B^{1/2}=(\tilde\Delta+tI_2)\inv\map(B)^{1/2},\ds\ds\ds t\in T.
\end{equation*}
Using Lemma \ref{lemma:approximation} below and the assumption that 
$|T|\ge |\spect(\Delta)\cup\spect(\tilde\Delta)|$, we obtain
\begin{equation}\label{(5.5)}
V^*h(\Delta) B^{1/2}=h(\tilde\Delta)\map(B)^{1/2}
\end{equation}
for any function $h$ on $\spect(\Delta)\cup\spect(\tilde\Delta)$. In particular,
\begin{equation}\label{monotonicity equality1}
V^*(\Delta+tI_1)^{-\gamma} B^{1/2}=(\tilde\Delta+tI_2)^{-\gamma}\map(B)^{1/2},\ds\ds\ds \gamma,t>0.
\end{equation}
Using \eqref{monotonicity equality1} with $\gamma=1$ and $\gamma=2$, we obtain
\begin{align*}
\hnorm{V^*(\Delta+tI_1)\inv B^{1/2}}^2&=
\hinner{(\tilde\Delta+tI_2)\inv\map(B)^{1/2}}{(\tilde\Delta+tI_2)\inv\map(B)^{1/2}}\\
&=
\hinner{(\tilde\Delta+tI_2)^{-2}\map(B)^{1/2}}{\map(B)^{1/2}}\\
&=
\hinner{V^*(\Delta+tI_1)^{-2} B^{1/2}}{\map(B)^{1/2}}\\
&=
\hinner{(\Delta+tI_1)^{-2} B^{1/2}}{B^{1/2}}\\
&=
\hnorm{(\Delta+tI_1)\inv B^{1/2}}^2.
\end{align*}
Therefore, we have $\hnorm{V^*x}^2=\hnorm{x}^2$ for $x:=(\Delta+tI_1)^{-1} B^{1/2}$, and since $V$ is a contraction, we get
$0\le\hnorm{VV^*x-x}^2=\hnorm{VV^*x}^2-2\hnorm{V^*x}^2+\hnorm{x}^2=\hnorm{VV^*x}^2-\hnorm{x}^2\le 0$, by which
$VV^*(\Delta+tI_1)\inv B^{1/2}=(\Delta+tI_1)\inv B^{1/2}$. Substituting \eqref{monotonicity equality1} with $\gamma=1$, we finally obtain
\begin{equation}\label{monotonicity equality2}
V(\tilde\Delta+tI_2)\inv\map(B)^{1/2}=(\Delta+tI_1)\inv B^{1/2},\ds\ds\ds t>0,
\end{equation}
and using again Lemma \ref{lemma:approximation}, we get 
\begin{equation*}
Vh(\tilde\Delta)\map(B)^{1/2}=h(\Delta) B^{1/2}
\end{equation*}
for any function $h$ on $\spect(\Delta)\cup\spect(\tilde\Delta)$.
By the definition \eqref{def:V} of $V$, this means that
\begin{equation*}
\map^*\bz \bz h(\tilde\Delta)\map(B)^{1/2}\jz\map(B)^{-1/2}\jz B^{1/2}=h(\Delta) B^{1/2}.
\end{equation*}
In particular, the choice $h(x):=x^{z}, x>0,\,h(0):=0$, yields 
\begin{equation}\label{cocycle2}
\map^*\bz \map(A)^z\map(B)^{-z}\jz B^{1/2} =A^{z}B^{1/2-z},\ds\ds\ds z\in\bC.
\end{equation}
Multiplying from the right with $B^{-1/2}$ and taking the adjoint, we obtain 
\ref{item:cocycle}.

The implication \ref{item:cocycle}$\imp$\ref{item:cocycle3} is obvious. Assume now that 
\ref{item:cocycle3} holds, i.e., 
$B^{-\alpha}A^{\alpha}=B^0\map^*\bz \map(B)^{-\alpha}\map(A)^{\alpha}\jz$ for some $\alpha\in(0,2)\setminus\{1\}$. Multiplying by $B$ and taking the trace, we obtain
\begin{align*}
\rsr{A}{B}{f_\alpha}&=\Tr A^{\alpha}B^{1-\alpha}=
\Tr B\map^*\bz \map(B)^{-\alpha}\map(A)^{\alpha}\jz
=
\Tr \map(B)\map(B)^{-\alpha}\map(A)^{\alpha}\\
&=
\rsr{\map(A)}{\map(B)}{f_\alpha},
\end{align*}
where $f_\alpha(x):=x^\alpha,\,x\ge 0$. Since the support of the representing measure $\mu_{f_\alpha}$ is $(0,+\infty)$ (see Example \ref{ex:representations}), we see that \ref{item:cocycle3} implies
\ref{item:equality2}.
The equivalence of \ref{item:cocycle} and \ref{item:cocycle2} is obvious from the fact that the functions $z\mapsto B^0\map^*\bz \map(B)^{-z}\map(A)^{z}\jz$ and $z\mapsto B^{-z}A^{z}$ are both analytic on the whole complex plane.
Differentiating \ref{item:cocycle2} at $t=0$, we obtain \ref{item:logarithms}. A straightforward computation shows that \ref{item:logarithms} yields \ref{item:equality2} for $f(x):=x\log x$, that is, the equality for the standard relative entropy (note that the support of the representing measure for $x\log x$ is $(0,+\infty)$ by Example \ref{ex:representations}).
Hence, we have proved that \ref{item:existence of inverse}$\imp$\ref{item:existence of inverse2}$\imp$\ref{item:equality1}$\imp$\ref{item:equality2}$\iff$\ref{item:primitives}$\iff$\ref{item:cocycle}
$\iff$\ref{item:cocycle3}$\iff$\ref{item:cocycle2}$\iff$\ref{item:logarithms}.

Assume now that \ref{item:cocycle} holds.
In particular, the choice $z=0$ yields
\begin{equation}\label{support equality}
B^0\map^*\bz\map(A)^0\jz=A^0
\end{equation}
(recall that $A^0\le B^0$).
Since $\map$ is substochastic, we have $\map^*(Y^*Y)\ge\map^*(Y^*)\map^*(Y)\ge\map^*(Y^*)B^0\map^*(Y)$, and multiplying from both sides by $B^0$, we obtain that $\Psi(Y):=B^0\map^*(Y)B^0$, $Y\in\A_2$, is a Schwarz contraction. 
For $u_t:=\map(B)^{-it}\map(A)^{it}$ and $w_t:=B^{-it}A^{it}$,
we have
\begin{equation*}
u_tu_t^*=\map(B)^{-it}\map(A)^0\map(B)^{it},\ds\ds\ds
w_tw_t^*=B^{-it}A^0B^{it},\ds\ds\ds t\in\R.
\end{equation*}
Note that \ref{item:cocycle} says that $B^0\map^*(u_t)=w_t$, and hence
$\Psi(u_t)=w_t B^0=w_t$. Thus,
\begin{align*}
0&\le
\Tr B^{1/2}\bz\Psi(u_tu_t^*)-\Psi(u_t)\Psi(u_t^*) \jz B^{1/2}
=
\Tr B\map^*(u_tu_t^*)-\Tr Bw_tw_t^*\\
&=
\Tr \map(B)\map(B)^{-it}\map(A)^0\map(B)^{it}-\Tr BB^{-it}A^0B^{it}
=
\Tr \map(B)\map(A)^0-\Tr BA^0\\
&=
\Tr B\map^*(\map(A)^0)-\Tr BA^0
=
\Tr BA^0-\Tr BA^0
=0,
\end{align*}
where we used \eqref{support equality}.
Hence, $B^{1/2}\Psi(u_tu_t^*)B^{1/2}=B^{1/2}\Psi(u_t)\Psi(u_t^*)B^{1/2}$, and multiplying from both sides with $B^{-1/2}$, we obtain
$\Psi(u_tu_t^*)=\Psi(u_t)\Psi(u_t^*)$. 
Since $\Psi(u_t)\ne 0$, and $\Psi$ is a Schwarz contraction, this yields that $\snorm{\mapp}=1$ and $u_t\in\M_{\Psi}$.
Hence, by Lemma \ref{lemma:Schwarz}, $\Psi(u_tY)=\Psi(u_t)\Psi(Y)=w_t\map^*(Y)B^0$ for all $Y\in\A_2$ and $t\in\R$, i.e.,
\begin{equation*}
B^0\map^*\bz \map(B)^{-it}\map(A)^{it}Y\jz B^0=B^{-it}A^{it}\map^*(Y)B^0,\ds\ds\ds t\in\R,\,Y\in\A_2.
\end{equation*}
Note that the maps $z\mapsto B^0\map^*\bz \map(B)^{-z}\map(A)^{z}Y\jz B^0$ and
$z\mapsto B^{-z}A^{z}\map^*(Y)B^0$ are analytic on the whole complex plane and coincide on $i\R$ and thus they are equal for every $z\in\iC$. Choosing $z=1/2$ and $Y:=\map(A)^{1/2}\map(B)^{-1/2}$, we get 
\begin{align*}
B^0\map^*\bz \map(B)^{-1/2}\map(A)^{1/2}\map(A)^{1/2}\map(B)^{-1/2}\jz B^0
&=
B^{-1/2}A^{1/2}\map^*(\map(A)^{1/2}\map(B)^{-1/2})B^0\\
&=
B^{-1/2}A^{1/2}A^{1/2}B^{-1/2},
\end{align*}
where we used the adjoint of \ref{item:cocycle} with $z=1/2$. Multiplying from both sides by $B^{1/2}$, we obtain \ref{item:special inverse}.

Finally, assume that \ref{item:special inverse} holds, and hence
\begin{equation*}
\map_B^*(\map(A))=A,\ds\ds\ds
\map_B^*(\map(B))=B.
\end{equation*}
Note that 
$\map_B^*$ is not necessarily trace-preserving, as $(\map_B^*)^*(I_{1})=
\map_B(I_{1})=\map(B)^0$, which might be strictly smaller than $I_{2}$. However, 
if $\rho$ is a density operator on $\hil_1$ then the map
$X\mapsto\map_B(X)+(\Tr \rho X)(I_2-\map(B)^0)$ is obviously unital and hence its adjoint
$\mapp:\,\A_2\to\A_1,\,\mapp(Y)=\map_B^*(Y)+[\Tr(I_2-\map(B)^0)Y]\rho$ is trace-preserving. Moreover, $\mapp(\map(A))=\map_B^*(\map(A))$ and $\mapp(\map(B))=\map_B^*(\map(B))$, as one can easily verify.
Since $\mapp$ is obtained from $\map^*$ by composing it with completely positive maps and adding a completely positive map, it inherits the positivity of $\map^*$, i.e., 
if $\map$, and hence $\map^*$, is $n$-positive/completely positive then so is $\mapp$. In particular, if $\map$ is $2$-positive then $\mapp^*$ is a unital $2$-positive map and hence it is also a Schwarz contraction, i.e., $\mapp$ is stochastic. Thus
\ref{item:special inverse}$\imp$\ref{item:existence of inverse} holds in this case.
\end{proof}
\begin{lemma}\label{lemma:approximation}
If $f$ is a complex-valued function on finitely many points $\{x_i\}_{i\in I}\subset [0,+\infty)$ then for any pairwise different positive numbers $\{t_i\}_{i\in I}$, there exist complex numbers $\{c_i\}_{i\in I}$ such that $f(x_i)=\sum_{j\in I}c_j\frac{1}{x_i+t_j},\,i\in I$.
\end{lemma}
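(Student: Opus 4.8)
The plan is to recognize the statement as the invertibility of a Cauchy matrix. Writing $n:=|I|$, the assertion is precisely that the linear system $\sum_{j\in I}c_j\frac{1}{x_i+t_j}=f(x_i)$, $i\in I$, is solvable for the unknowns $\{c_j\}_{j\in I}$. In matrix form this reads $Mc=v$, where $v:=(f(x_i))_{i\in I}$ and $M$ is the $n\times n$ matrix with entries $M_{ij}:=\frac{1}{x_i+t_j}$. All entries are well defined and finite, since $x_i\ge 0$ and $t_j>0$ give $x_i+t_j>0$. As $M$ is square, it suffices to show that $M$ is invertible, i.e.\ that the only solution of $Mc=0$ is $c=0$.

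To prove injectivity, I would pass to rational functions. Given $c=(c_j)_{j\in I}$ with $Mc=0$, consider $R(x):=\sum_{j\in I}c_j\frac{1}{x+t_j}$. Because the $t_j$ are pairwise distinct and positive, $R$ is a rational function whose only poles are simple, located at the distinct negative points $-t_j$, with residue $c_j$ at $-t_j$. Clearing denominators, set $P(x):=R(x)\prod_{k\in I}(x+t_k)=\sum_{j\in I}c_j\prod_{k\ne j}(x+t_k)$, which is a polynomial of degree at most $n-1$.

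The condition $Mc=0$ says exactly that $R(x_i)=0$ for every $i\in I$, and since $\prod_{k\in I}(x_i+t_k)\ne 0$ this is equivalent to $P(x_i)=0$ for all $i\in I$. As the $x_i$ are pairwise distinct (being the points on which $f$ is defined), $P$ vanishes at $n$ distinct points while having degree at most $n-1$; hence $P\equiv 0$, and therefore $R\equiv 0$. Reading off the residue at each pole $-t_j$ then forces $c_j=0$ for all $j$. Thus $M$ is injective, hence invertible, and the desired coefficients $\{c_i\}_{i\in I}$ exist (and are in fact unique).

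The argument is elementary; the only point requiring care is the degree count, where it is essential that the numerator $P$ has degree strictly smaller than the number of interpolation points, which is what makes the $n$ distinct vanishing conditions force $P\equiv 0$. I note that one could alternatively invoke the classical Cauchy determinant formula $\det M=\prod_{i<j}(x_i-x_j)(t_i-t_j)\big/\prod_{i,j}(x_i+t_j)$, which is nonzero under exactly the hypotheses at hand (distinct $x_i$, distinct $t_j$, and $x_i+t_j>0$); but the rational-function argument above keeps the presentation self-contained, in line with the rest of the paper.
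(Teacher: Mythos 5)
Your proof is correct, and its core reduction is exactly the paper's: the paper's entire proof consists of observing that the coefficient matrix $C_{ij}=\frac{1}{x_i+t_j}$ is a Cauchy matrix, invertible because the $x_i$ are pairwise distinct and the $t_j$ are pairwise distinct. Where you differ is in how the invertibility is justified. The paper simply cites this as a known fact about Cauchy matrices (implicitly, the classical determinant formula), whereas you prove injectivity from scratch: a vanishing linear combination $\sum_j c_j/(x+t_j)$ clears to a polynomial of degree at most $n-1$ vanishing at $n$ distinct points, hence identically zero, and evaluating at $x=-t_j$ (equivalently, reading off residues) forces each $c_j=0$. Your route is more elementary and self-contained, which fits the paper's stated goal of requiring only standard matrix analysis; the paper's citation is shorter but leans on an external classical result. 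Both establish uniqueness of the coefficients as well as existence, which the lemma does not require but costs nothing. The only cosmetic caveat is your closing determinant formula: as written it agrees with the standard Cauchy determinant since the sign discrepancies in the factors $(x_i-x_j)(t_i-t_j)$ cancel pairwise, but since you only use it as an aside, nothing in your argument depends on it.
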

\begin{proof}
The matrix $C$ with entries $C_{ij}:=\frac{1}{x_i+t_j},\,i,j\in I$, is a Cauchy matrix which is invertible due to the assumptions that $x_i\ne x_j$ and $t_i\ne t_j$ for $i\ne j$. From this the statement follows.
\end{proof}

\begin{cor}
Assume that $\supp A_i\le\supp B_i,\,i=1,\ldots,r$, in the setting of Corollary 
\ref{cor:joint convexity}. Then equality holds in \eqref{joint convexity} if and only if
\begin{equation*}
p_iA_i=p_iB_i^{1/2}\bz\sum\nolimits_j p_jB_j\jz^{-1/2}\bz\sum\nolimits_j p_jA_j\jz\bz\sum\nolimits_j p_jB_j\jz^{-1/2} B_i^{1/2},\ds\ds\ds i=1,\ldots,r.
\end{equation*}
\end{cor}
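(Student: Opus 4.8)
The plan is to apply Theorem \ref{thm:equality} to the completely positive, trace-preserving map $\map:\,\A\otimes\B(\iC^r)\to\A$ and the operators $A,B$ constructed in the proof of Corollary \ref{cor:joint convexity}, and then to make condition \ref{item:special inverse} explicit. First I would record the data of that construction: $A=\sum_i p_iA_i\otimes\delta_i$ and $B=\sum_i p_iB_i\otimes\delta_i$, so that $\map(A)=\sum_i p_iA_i=:\bar A$ and $\map(B)=\sum_i p_iB_i=:\bar B$. Since $\supp A_i\le\supp B_i$ for each $i$ and the $\delta_i$ are orthogonal, we have $\supp A\le\supp B$; moreover $\Tr\map(B)=\Tr\bar B=\Tr B$. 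Thus both standing hypotheses of Theorem \ref{thm:equality} hold, and $\map$ is in particular substochastic and $2$-positive. Equality in \eqref{joint convexity} is exactly the equality $\rsr{\map(A)}{\map(B)}{f}=\rsr{A}{B}{f}$ in the monotonicity inequality.

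Because $\map$ is $2$-positive, conditions \ref{item:existence of inverse}--\ref{item:special inverse} of Theorem \ref{thm:equality} are all equivalent, so it suffices to show that \ref{item:special inverse}, namely $\map_B^*(\map(A))=A$, is the asserted identity. The direction \textit{formula $\imp$ equality} then holds for every operator convex $f$, since \ref{item:special inverse}$\imp$\ref{item:equality1}. For the converse I would use that equality for the given $f$ forces \ref{item:special inverse} through the chain \ref{item:equality2}$\iff\cdots\iff$\ref{item:special inverse}; this requires that the representing measure $\mu_f$ meet the support condition \eqref{support condition}. I would flag this as the genuine hypothesis on $f$: it can fail (e.g.\ for $f(x)=x^2$, whose measure is trivial), so the truly clean biconditional is between the formula and \ref{item:equality1}, equality for \emph{all} operator convex functions.

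The core computation, which I expect to be the only real work, is to evaluate $\map_B^*(\map(A))$. I would first compute the Hilbert--Schmidt adjoint $\map^*(W)=W\otimes I_r$, with $I_r=\sum_k\delta_k$, directly from $\hinner{W}{\map(X\otimes Y)}=(\Tr Y)\hinner{W}{X}$. Using the block structure $B^{1/2}=\sum_i\sqrt{p_i}\,B_i^{1/2}\otimes\delta_i$ (valid since powers are taken on supports and the blocks are orthogonal) together with $\delta_i\delta_j=\delta_{ij}\delta_i$ and $\sum_k\delta_k=I_r$, the definition \eqref{def:PhistarB} gives
\begin{equation*}
\map_B^*(\map(A))=B^{1/2}\bz\bar B^{-1/2}\bar A\,\bar B^{-1/2}\otimes I_r\jz B^{1/2}
=\sum\nolimits_i p_i\,B_i^{1/2}\bar B^{-1/2}\bar A\,\bar B^{-1/2}B_i^{1/2}\otimes\delta_i.
\end{equation*}
Comparing this block by block with $A=\sum_i p_iA_i\otimes\delta_i$ turns $\map_B^*(\map(A))=A$ into
\begin{equation*}
p_iA_i=p_i\,B_i^{1/2}\bz\sum\nolimits_j p_jB_j\jz^{-1/2}\bz\sum\nolimits_j p_jA_j\jz\bz\sum\nolimits_j p_jB_j\jz^{-1/2}B_i^{1/2},\ds\ds i=1,\ldots,r,
\end{equation*}
which is precisely the claimed characterization. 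The main obstacle is therefore the bookkeeping: carrying the tensor/direct-sum structure correctly through $\map^*$ and through the operator square roots, and keeping track of the support conventions so that $\bar B^{-1/2}$ and the $B_i^{1/2}$ interact as intended.
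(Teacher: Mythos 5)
Your proof is correct and is essentially the paper's own argument: the paper likewise reduces the corollary to condition \ref{item:special inverse} of Theorem \ref{thm:equality} and ``writes out'' $\map_B^*(\map(A))=A$ in the setting of Corollary \ref{cor:joint convexity}, which is exactly the block computation you perform (including the identification $\map^*(W)=W\otimes I_r$ and the block-diagonal square root of $B$). Your caveat that the direction from equality to the formula requires $f$ to satisfy the support condition \eqref{support condition} on $\mu_f$---so that, strictly speaking, the clean biconditional is with equality for \emph{all} operator convex $f$---is a valid refinement of the paper's terse statement rather than a deviation from its proof.
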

\begin{proof}
It is immediate from writing out the equality $A=\map_B^*(\map(A))$ given in \ref{item:special inverse} in the setting of Corollary \ref{cor:joint convexity}.
\end{proof}

\begin{rem}
Note that if $\supp A\le\supp B$ and $\Tr\map(B)=\Tr B$ then for a linear function $f(x)=f(0)+ax$, the preservation of the $f$-divergence is automatic,
and has no implication on the reversibility of $\map$ on $\{A,B\}$. Indeed, 
we have 
$\Tr\map(A)=\Tr A$ due to Lemma \ref{lemma:support inequality}, and 
\begin{equation*}
\rsr{\map(A)}{\map(B)}{f}=f(0)\Tr \map(B)+a\Tr \map(A)=f(0)\Tr B+a\Tr A=\rsr{A}{B}{f}.
\end{equation*}

The $f$-divergence corresponding to the quadratic function $f_2(x):=x^2$ is 
$\rsr{A}{B}{f_2}=\Tr A^2B\inv$ (when $\supp A\le\supp B$). Preservation of the 
$f$-divergence by a stochastic map is not automatic in this case; however, it is not 
sufficient for the reversibility of the map, either.
Indeed, it was shown in Example 2.2 of \cite{JPP} that there exists a positive definite 
operator $D_{123}$ on a tripartite Hilbert space $\hil_1\otimes\hil_2\otimes\hil_3$, such 
that 
\begin{equation}\label{2-div counterexample}
D_{123}(\tau_1\otimes D_{23})\inv=(D_{12}\otimes\tau_3)(\tau_1\otimes D_2\otimes\tau_3)\inv,
\end{equation}
but
\begin{equation}\label{2-div counterexample2}
D_{123}^{it}(\tau_1\otimes D_{23})^{-it}\ne (D_{12}\otimes\tau_3)^{it}(\tau_1\otimes D_2\otimes\tau_3)^{-it} \ds\text{for some}\ds t\in\R,
\end{equation}
where $\tau_i:=\frac{1}{\dim\hil_i}I_i$, and
$D_{23}:=\Tr_{\hil_1}D_{123},\s D_{12}:=\Tr_{\hil_3}D_{123}$,
$D_2:=\Tr_{\hil_1\otimes\hil_3}D_{123}$.
Define $\hil:=\hil_1\otimes\hil_2\otimes\hil_3,\,A:=D_{123}$ and $B:=\tau_1\otimes D_{23}$. Let $\A_1:=\B(\hil)$,
$\A_2:=\B(\hil_1\otimes\hil_2)\otimes I_{3}$ and let $\map^*$ be the identical embedding of $\A_2$ into $\A_1$.
Then,
\eqref{2-div counterexample} reads as
\begin{equation*}
AB\inv=\map(A)\map(B)\inv.
\end{equation*}
Multiplying both sides by $A$ and taking the trace, we obtain
\begin{equation}\label{2-div counterexample3}
\Tr A^2B\inv=\Tr A\map(A)\map(B)\inv.
\end{equation}
Note that $\map$ is the orthogonal (with respect to the Hilbert-Schmidt inner product) projection from $\A_1$ onto $\A_2$, i.e., $\map$ is the conditional expectation onto $\A_2$ with respect to $\Tr$, and $\map(A)\map(B)\inv\in\A_2$. Hence, we have
$\Tr A\map(A)\map(B)\inv=\Tr\map(A)^2\map(B)\inv$.
Hence, \eqref{2-div counterexample3} can be rewritten as
\begin{equation*}
\rsr{A}{B}{f_2}=\Tr A^2B\inv=\Tr\map(A)^2\map(B)\inv=\rsr{\map(A)}{\map(B)}{f_2}.
\end{equation*}
However, \eqref{2-div counterexample2} tells that 
\begin{equation*}
A^{it}B^{-it}\ne \map^*\bz\map(A)^{it}\map(B)^{-it}\jz\ds\text{for some}\ds t\in\R,
\end{equation*}
and hence \ref{item:cocycle2} in Theorem \ref{thm:equality} is not satisfied. Since $\map$ is $2$-positive (actually, completely positive), it means that none of 
\ref{item:existence of inverse}--\ref{item:special inverse} of Theorem \ref{thm:equality}
are satisfied.
\end{rem}

\begin{rem}
It was shown in \cite{Csi} that,
in the classical setting, preservation of an $f$-divergence by $\map$ is equivalent to the 
reversibility condition \ref{item:special inverse} of Theorem \ref{thm:equality} whenever 
$f$ is strictly convex.
We reformulate the classical case in our setting in Appendix \ref{sec:classical}, and use 
the condition for equality to give a necessary and sufficient condition for the equality in 
the operator H\"older and inverse H\"older inequalities.
\end{rem}
\begin{rem}\label{rem:support}
The classical case suggests that the support condition \eqref{support condition} might be 
too restrictive in general. On the other hand, \cite{Jencova2} provides an example where 
the $f$-divergence corresponding to a function $f$ with $|\supp\mu_f|=1$ is preserved and yet the 
reversibility property \ref{item:special inverse} of Theorem \ref{thm:equality} fails to 
hold. This shows that the support condition \eqref{support condition} cannot be completely removed in general.
\end{rem}

\begin{rem}
Theorem \ref{thm:equality} holds also if we replace $\map$ and $\mapp$ with
co-(sub)stochastic maps, and change conditions \ref{item:cocycle}--\ref{item:cocycle2} to the following:
\begin{itemize}
\item[\ref{item:cocycle}$'$] $B^0\map^*(\map(A)^z\map(B)^{-z})=B^{-z}A^z$ for all $z\in\bC$.
\item[\ref{item:cocycle3}$'$] $B^0\map^*(\map(A)^\alpha\map(B)^{-\alpha})=B^{-\alpha}A^\alpha$ for some
$\alpha\in(0,2)\setminus\{1\}$.
\item[\ref{item:cocycle2}$'$] $B^0\map^*(\map(A)^{it}\map(B)^{-it})=B^{-it}A^{it}$ for all $t\in\bR$.
\end{itemize}
In the proof of \ref{item:primitives}$\imp$\ref{item:cocycle}$'$,
the previous equality $Vh(\tilde\Delta)\map(B)^{1/2}=h(\Delta)B^{1/2}$ in
\eqref{(5.5)} is replaced with
$$
\hat V h(\tilde\Delta)\map(B)^{1/2}=\bar h(\Delta)B^{1/2}
$$
due to the conjugate-linearity of $\hat V$, where $\hat V$ is given in \eqref{(4.22)}.
In the proof of \ref{item:cocycle}$'$ $\imp$\ref{item:special inverse}, let
$u_t:=\map(A)^{it}\map(B)^{-it}$ and $w_t:=B^{-it}A^{it}$; then
$$
u_t^*u_t=\map(B)^{-it}\map(A)^0\map(B)^{it},\quad
w_tw_t^*=B^{-it}A^0B^{-it},\qquad t\in\bR.
$$
Using that $\map$ is a co-Schwarz contraction, we have $\map(u_t^*u_t)=\map(u_t)\map(u_t^*)$.
From the multplicative domain for a co-Schwarz contraction, we have
$\map(Yu_t)=\map(u_t)\map(Y)=w_t\map^*(Y)B^0$ for all $Y\in\A_2$ and $t\in\bR$. The rest
of the proof is as before with $Y=\map(B)^{-1/2}\map(A)$.
The implication \ref{item:special inverse}$\imp$\ref{item:existence of inverse} holds also if we assume $\map$ to be $2$-copositive.
\end{rem}

\begin{rem}
Note that the assumption that $\map$ is substochastic guarantees that $(\map_B^*)^*=\map_B$ is a Schwarz map, which is also subunital. However, as Example \ref{ex:Schwarz maps}
shows, there exist subunital Schwarz maps that are not Schwarz contractions.
Even more, it was shown in \cite{Jencova2} that if $\map$ is not $2$-positive then there 
exists a positive invertible $B$ such that $\map_B$ is not a Schwarz contraction. 
To circumvent this problem, we assumed that $\map$ is $2$-positive in the proof of 
\ref{item:special inverse}$\imp$\ref{item:existence of inverse} of Theorem 
\ref{thm:equality}. 
Note on the other hand that the monotonicity inequality holds not only for substochastic 
maps but also for \ki{Schwarz decomposable maps}, i.e., for those maps that can be 
decomposed as a convex combination of a 
substochastic and a co-substochastic map; see Remark \ref{rem:decomposability}.
Hence, the implication \ref{item:special inverse}$\imp$\ref{item:equality1} might still hold 
even if $\map_B^*$ is not a substochastic map. It is easy to see that this is the case, for 
instance, if $\map$ is $2$-decomposable, i.e., it is the convex combination of two trace 
non-increasing maps, one being $2$-positive and the other a composition of a $2$-positive 
map with a transposition. It is an open question whether the Schwarz decomposability of $\map$ implies that $\map_B^*$ is Schwarz decomposable for every positive semidefinite $B$.
\end{rem}

\section{Distinguishability measures related to binary state discrimination}
\label{sec:Chernoff}
\setcounter{equation}{0}

Let $\A\subset\B(\hil)$ be a $C^*$-algebra, where $\hil$ is a finite-dimensional Hilbert space, and let $\S(\A)$ be the state space of $\A$, i.e., $\S(\A):=\{A\in\A_+\,:\,\Tr A=1\}$ is the set of density operators in $\A$.

\begin{defin}
For $A,B\in\A_+$, the \ki{Chernoff distance} $\chdist{A}{B}$ of $A$ and $B$ is defined as
\begin{equation}\label{Chernoff def}
\chdist{A}{B}:=
\sup_{0\le\alpha<1}\left\{(1-\alpha)\rsr{A}{B}{\alpha}\right\}
=
-\min_{0\le\alpha\le 1}\psif{A}{B}{\alpha},
\end{equation}
where $\rsr{A}{B}{\alpha}$ is the R\'enyi relative entropy defined in Example \ref{ex: f-div Renyi}, and
\begin{align}
\psif{A}{B}{\alpha}:=\log\Tr A^{\alpha}B^{1-\alpha},\ds\ds\ds\alpha\in\bR.
\end{align}
For every $r\in\bR$, we define the \ki{Hoeffding distance} $\hdist{A}{B}{r}$ of $A$ and $B$ as
\begin{equation}\label{Hoeffding def}
\hdist{A}{B}{r}:=
\sup_{0\le\alpha<1}\rsr{e^r A}{B}{\alpha}=
\sup_{0\le\alpha<1}\left\{-\frac{\alpha r}{1-\alpha}+\rsr{A}{B}{\alpha}\right\}
=\sup_{0\le \alpha<1}\frac{-\alpha r-\psif{A}{B}{\alpha}}{1-\alpha}.
\end{equation}
\end{defin}

\begin{rem}\label{rem:Hoeffding properties}
Note that 
\begin{equation}\label{Hoeffding def2}
\hdist{A}{B}{r}
=\sup_{s\ge 0}\{-sr-\psift{A}{B}{s}\},
\end{equation}
where
\begin{align*}
\psift{A}{B}{s}:=(1+s)\psif{A}{B}{s/(1+s)},\ds s\in[0,+\infty),\ds\ds\ds
\psift{A}{B}{s}:=+\infty,\ds s<0.
\end{align*}
For simplicity, we will use the notation $\psi(\alpha)=\psif{A}{B}{\alpha}$ and
$\tilde\psi(s):=\psift{A}{B}{s}$.
Let $\tilde\psi^*(r):=\sup_{s\in\bR}\{sr-\tilde\psi(s)\}$ be the
\ki{polar function}, or \ki{Legendre-Fenchel transform} of $\tilde\psi$ \cite{ET}.
By \eqref{Hoeffding def2},
$\hdist{\rho}{\sigma}{r}=\tilde\psi^*(-r),\,r\in\bR$.
It is easy to see (by computing its second derivative) that $\psi$ is convex, 
and hence so is $\tilde\psi$. Furthermore,
$\tilde\psi'(s)=\psi(s/(1+s))+\psi'(s/(1+s))/(1+s),\,s\in(0,+\infty)$, and
$\derright{\tilde\psi}(0)=\psi(0)+\psi'(0)$, where $\derright{\tilde\psi}(0)$ is the right derivative of $\tilde\psi$ at $0$. In particular, 
$\lim_{s\to+\infty}\tilde\psi'(s)=\psi(1)$. Hence,
\begin{equation*}
\hdist{A}{B}{r}=\tilde\psi^*(-r)=\begin{cases}
-\tilde\psi(0)=-\psi(0),&-r<\psi(0)+\psi'(0),\\
+\infty,&-r>\psi(1).
\end{cases}
\end{equation*}
It is easy to see that
\begin{align*}
\psi(0)&=-\rsr{A}{B}{0},\ds
\text{and if}\ds A^0\ge B^0\ds\text{then}\ds
\psi'(0)=-\sr{B}{A},\\
\psi(1)&=-\rsr{B}{A}{0},\ds
\text{and if}\ds A^0\le B^0\ds\text{then}\ds
\psi'(1)=\sr{A}{B}.
\end{align*}

Being a polar function, $\tilde\psi^*$ is convex, and hence so is
the function $r\mapsto\hdist{\rho}{\sigma}{r}$.
Moreover, $\tilde\psi$ is lower semicontinuous and thus the bipolar theorem (see, e.g., Proposition 4.1 in \cite{ET})
yields that $\tilde\psi$ is the polar function of its polar $\tilde\psi^*$. Hence,
for every $s\in[0,+\infty)$, we have
\begin{align*}
(1+s)\psi\bz\frac{s}{1+s}\jz=\tilde\psi(s)
&=\sup_{r\in\R}\{sr-\tilde\psi^*(r)\}
=
\sup_{\psi(0)+\psi'(0)\le -r\le \psi(1)}\{-rs-\tilde\psi^*(-r)\}.
\end{align*}
Replacing $s$ with $\alpha/(1-\alpha)$, we finally get that for every $\alpha\in[0,1)$,
\begin{equation}\label{Hoeffding inversion}
-\rsr{A}{B}{\alpha}=\frac{\psi(\alpha)}{1-\alpha}
=\sup_{r\in\bR}\left\{\frac{-r\alpha}{1-\alpha}-\hdist{A}{B}{r}\right\}
=\sup_{-\psi(1)\le r\le -\psi(0)-\psi'(0)}
\left\{\frac{-r\alpha}{1-\alpha}-\hdist{A}{B}{r}\right\}.
\end{equation}
That is, the R\'enyi $\alpha$-relative entropies with parameter $\alpha\in[0,1)$ and the Hoeffding 
distances mutually determine each other.

If $\Tr A\le 1$ then $\psi(1)=\log\Tr AB^0\le 0$, and hence the optimization is over non-negative
values of $r$ in the last formula of \eqref{Hoeffding inversion}.
Thus, $\alpha\mapsto \rsr{A}{B}{\alpha}$ is monotonic increasing on $[0,1)$ and hence
\begin{equation*}
\hdist{A}{B}{0}=\lim_{\alpha\nearrow 1}\rsr{A}{B}{\alpha}=:\rsr{A}{B}{1}.
\end{equation*}
Note that $\tilde\psi^*$ is lower semicontinuous (see, e.g., Proposition 4.1 and Corollary 4.1 in \cite{ET}), and hence
$\tilde\psi^*(0)\le\liminf_{r\searrow 0}\tilde\psi^*(-r)$. On the other hand,
it is obvious from the definition that $r\mapsto\hdist{A}{B}{r}=\tilde\psi^*(-r)$ is monotonic decreasing on $\bR$, and hence we finally obtain
\begin{equation}\label{0 Hoeffding}
\lim_{r\searrow 0}\hdist{A}{B}{r}=\lim_{r\searrow 0}\tilde\psi^*(-r)=\tilde\psi^*(0)=
\hdist{A}{B}{0}=\rsr{A}{B}{1}.
\end{equation}
Finally, it is easy to verify that 
\begin{equation}\label{0 Hoeffding2}
\rsr{A}{B}{1}=\sr{A}{B}\ds\ds\ds\text{if}\ds\ds\ds \Tr A= 1.
\end{equation}
\end{rem}
\bigskip

The importance of the above measures comes from the problem of binary state discrimination, that 
we briefly describe below. Assume that 
we have several identical copies of a quantum system, and we know that either all of them 
are in a state described by a density operator $\rho$, or all of them are in a state 
described by a density operator $\sigma$. We assume that the system's Hilbert space $\hil$ 
is finite-dimensional. Our goal is to give a good guess on the true 
state of the system, based on the outcome of a binary POVM measurement $(T,I-T)$ on a fixed 
number (say $n$) copies, where $T$ is an operator on $\hil^{\otimes n}$ satisfying $0\le 
T\le I$. If the outcome corresponding to $T$ happens then we conclude that the state of the 
system is $\rho$, and an error occurs if the true state is $\sigma$, which has 
probability $\beta_n(T):=\Tr\sigma^{\otimes n} T$. 
Similarly, the outcome corresponding to $I-T$ yields the guess $\sigma$ for the true state, 
and the probability of error in this case is
$\alpha_n(T):=\Tr\rho^{\otimes n}(I-T)$. If, moreover, there are prior probabilities $p$ and 
$1-p$ assigned to $\rho$ and $\sigma$, then the optimal Bayesian error probability is given 
by 
\begin{align*}
P_{n,p}:=\min_{0\le T\le I}\{p\alpha_n(T)+(1-p)\beta_n(T)\}
=(1-\norm{p\rho^{\otimes n}-(1-p)\sigma^{\otimes n}})/2,
\end{align*}
where the minimum is reached at $T=\{p\rho^{\otimes n}-(1-p)\sigma^{\otimes n}>0\}$, the spectral projection corresponding to the positive part of the spectrum of $p\rho^{\otimes n}-(1-p)\sigma^{\otimes n}$.
For every $p\in(0,1)$, let
\begin{equation}
\tdist{\rho^{\otimes n}}{\sigma^{\otimes n}}{p}:=
\begin{cases}
-\log\frac{1}{2p}(1-\norm{p\rho^{\otimes n}-(1-p)\sigma^{\otimes n}}_1)=-\log\frac{1}{p}P_{n,p},& 0<p\le 1/2,\\
-\log\frac{1}{2(1-p)}(1-\norm{p\rho^{\otimes n}-(1-p)\sigma^{\otimes n}}_1)=-\log\frac{1}{1-p}P_{n,p},& 1/2<p<1.
\end{cases}\label{Tp def}
\end{equation}
The theorem for the \ki{quantum Chernoff bound} \cite{Aud,NSz} says that, as the number of copies $n$ tends to infinity, the error probabilities $P_{n,p}$ decay exponentially, and the rate of the decay is given by the Chernoff distance. More formally, 
\begin{equation}\label{Chernoff limit}
-\lim_{n\to\infty}(1/n)\log P_{n,p}=
\lim_{n\to\infty}(1/n)\tdist{\rho^{\otimes n}}{\sigma^{\otimes n}}{p}=\chdist{\rho}{\sigma},\ds\ds\ds p\in(0,1).
\end{equation}
In the asymmetric setting of the \ki{quantum Hoeffding bound}, the error probabilities $\alpha_n$ are required to be exponentially small, and $\beta_n$ is optimized under this constraint, i.e., one is interested in the quantities
\begin{equation*}
\beta_{n,r}:=\min\{\beta_n(T)\,:\,\alpha_n(T)\le e^{-nr},\,T\in\B(\hil^{\otimes n}),\,0\le T\le I\},
\end{equation*}
where $r$ is some fixed positive number.
The theorem for the quantum Hoeffding bound \cite{Hayashi,Nagaoka} says that, for every $r>0$, the error probabilities $\beta_{n,r}$ decay exponentially fast as $n$ goes to infinity, and the decay rate is given by the Hoeffding distance with parameter $r$. Moreover, if $\supp\rho\le\supp\sigma$, then for every $r> 0$ we have a real number $a_r$ such that
\cite{HMO,Nagaoka}
\begin{equation}\label{Hoeffding limit}
-\lim_{n\to\infty}(1/n)\log\beta_{n,r}=
\lim_{n\to\infty}(1/n)\tdist{\rho^{\otimes n}}{\sigma^{\otimes n}}{\frac{e^{-na_r}}{1+e^{-na_r}}}=\hdist{\rho}{\sigma}{r}.
\end{equation}

Note that for density operators $\rho$ and $\sigma$, $\psi(\alpha|\rho\|\sigma)=\log\Tr\rho^\alpha\sigma^{1-\alpha}\le 0$ for every $\alpha\in[0,1]$ due to H\"older's inequality \eqref{Holder1}. Hence, $\chdist{\rho}{\sigma}\ge 0$, and $\chdist{\rho}{\sigma}= 0$ if and only if equality holds in H\"older's inequality, which is equivalent to $\rho=\sigma$. Similarly, $\hdist{\rho}{\sigma}{r}\ge 0$ for every $r\in\bR$, and
$\hdist{\rho}{\sigma}{r}=0$ if and only if 
$\rho=\sigma$, or 
$\supp \rho\ge\supp\sigma$ and 
$r\ge \sr{\sigma}{\rho}$. 

\begin{prop}\label{prop:ch monotonicity}
Let $A,B\in\A_{1,+}$ and let $\map:\,\A_1\to\A_2$ be a substochastic map such that
$\Tr\map(B)=\Tr B$. Then
\begin{equation}\label{Chernoff monotonicity}
\chdist{\map(A)}{\map(B)}\le\chdist{A}{B}\ds\ds\text{and}\ds\ds
\hdist{\map(A)}{\map(B)}{r}\le\hdist{A}{B}{r},\ds r\in\bR.
\end{equation}
If there exists a substochastic map $\mapp:\,\A_2\to\A_1$ such that 
$\mapp(\map(A))=A$ and $\mapp(\map(B))=B$ then the inequalities in 
\eqref{Chernoff monotonicity} hold with equality.
\end{prop}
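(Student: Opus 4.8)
The plan is to reduce everything to the monotonicity of the R\'enyi relative entropies established in Example \ref{ex:Renyi1}. Recall that both distinguishability measures are defined as suprema over the parameter range $\alpha\in[0,1)$: the Chernoff distance as $\chdist{A}{B}=\sup_{0\le\alpha<1}(1-\alpha)\rsr{A}{B}{\alpha}$, and the Hoeffding distance as $\hdist{A}{B}{r}=\sup_{0\le\alpha<1}\{-\alpha r/(1-\alpha)+\rsr{A}{B}{\alpha}\}$. The key input is inequality \eqref{Renyi monotonicity}, which under the standing hypotheses that $\map$ is substochastic and $\Tr\map(B)=\Tr B$ gives $\rsr{\map(A)}{\map(B)}{\alpha}\le\rsr{A}{B}{\alpha}$ for every $\alpha\in[0,2]\setminus\{1\}$, in particular for every $\alpha\in[0,1)$.

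First I would establish the Chernoff inequality. For fixed $\alpha\in[0,1)$ the coefficient $1-\alpha$ is positive, so multiplying the R\'enyi inequality through by $1-\alpha$ yields $(1-\alpha)\rsr{\map(A)}{\map(B)}{\alpha}\le(1-\alpha)\rsr{A}{B}{\alpha}$. Taking the supremum over $\alpha\in[0,1)$ on both sides gives $\chdist{\map(A)}{\map(B)}\le\chdist{A}{B}$. For the Hoeffding inequality the additive term $-\alpha r/(1-\alpha)$ depends only on $\alpha$ and $r$, not on the operators, so adding it to both sides of the R\'enyi inequality and taking the supremum over $\alpha\in[0,1)$ gives $\hdist{\map(A)}{\map(B)}{r}\le\hdist{A}{B}{r}$ for every $r\in\bR$. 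Both suprema respect the convention that the R\'enyi entropies may take the value $+\infty$, so no separate treatment of the extended-real-valued case is required.

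For the equality statement I would apply monotonicity a second time, now to the reversing map $\mapp$. The only point requiring attention is that $\mapp$ also meets the trace hypothesis needed to invoke the first part: since $\mapp(\map(B))=B$ we have $\Tr\mapp(\map(B))=\Tr B=\Tr\map(B)$, so $\mapp$ preserves the trace of $\map(B)$. Writing $\tilde A:=\map(A)$ and $\tilde B:=\map(B)$ and applying the already-proved monotonicity to the substochastic map $\mapp$ gives $\chdist{\mapp(\tilde A)}{\mapp(\tilde B)}\le\chdist{\tilde A}{\tilde B}$, that is $\chdist{A}{B}\le\chdist{\map(A)}{\map(B)}$, and likewise for the Hoeffding distance. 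Combined with the forward inequalities, this forces equality throughout.

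There is no serious obstacle here: the proposition is essentially a packaging result, and the only step requiring a moment's care is verifying that $\mapp$ inherits the trace-preservation condition on $\map(B)$ that licenses the reverse application of monotonicity. Everything else follows immediately from the pointwise-in-$\alpha$ R\'enyi monotonicity and the elementary fact that taking a supremum preserves inequalities.
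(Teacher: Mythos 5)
Your proof is correct. For the monotonicity inequalities your argument is exactly the paper's: both reduce to the R\'enyi monotonicity $\rsr{\map(A)}{\map(B)}{\alpha}\le\rsr{A}{B}{\alpha}$, $\alpha\in[0,1)$, from Example \ref{ex:Renyi1}, and then take suprema in the definitions \eqref{Chernoff def} and \eqref{Hoeffding def}. Where you diverge is the equality case: the paper invokes Theorem \ref{thm:equality} (the implication from existence of a substochastic reversal to equality of \emph{all} $f$-divergences, which holds without the support condition) to conclude that each R\'enyi relative entropy with $\alpha\in[0,1)$ is preserved, and then passes to the suprema; you instead sandwich directly at the level of the Chernoff and Hoeffding functionals, applying the already-proved first half of the proposition to $\mapp$ after checking the trace bookkeeping $\Tr\mapp(\map(B))=\Tr B=\Tr\map(B)$. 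The two routes are mathematically equivalent --- the relevant implication of Theorem \ref{thm:equality} is itself proved by a double application of Theorem \ref{thm:monotonicity for F} --- but yours is more self-contained, needing no appeal to Section \ref{sec:equality}, while the paper's yields the stronger intermediate fact that the R\'enyi entropies (indeed all $f$-divergences) are preserved pointwise in $\alpha$, which is the form used later in Theorems \ref{thm:equality2} and \ref{thm:error correction}. Your verification of the hypothesis on $\mapp$ is the one genuinely necessary care point, and you handled it correctly.
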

\begin{proof}
By Example \ref{ex:Renyi1}, $\rsr{\map(A)}{\map(B)}{\alpha}\le\rsr{A}{B}{\alpha}$ 
for every $\alpha\in[0,1)$, and equality holds for every $\alpha\in[0,1)$
if there exists a substochastic map $\mapp:\,\A_2\to\A_1$ such that 
$\mapp(\map(A))=A$ and $\mapp(\map(B))=B$, due to Theorem \ref{thm:equality}. The assertion then follows immediately from 
the definitions \eqref{Chernoff def} and \eqref{Hoeffding def}.
\end{proof}

Our goal now is to give the converse of the above proposition, i.e., to show that 
equality in the inequalities of \eqref{Chernoff monotonicity} yields the existence 
of a substochastic map $\mapp:\,\A_2\to\A_1$ such that 
$\mapp(\map(A))=A$ and $\mapp(\map(B))=B$. This would be immediate from 
Theorem \ref{thm:equality} if the Chernoff and the Hoeffding distances could be represented as $f$-divergences (at least when $\map$ is also assumed to be $2$-positive). However, no such representation is possible, as is shown in the following proposition:

\begin{prop}\label{prop:Chernoff not fdiv}
The Chernoff and the Hoeffding distances cannot be represented as $f$-divergences on the state space of any non-trivial finite-dimensional $C^*$-algebra.
\end{prop}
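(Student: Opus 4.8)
The plan is to reduce everything to classical distributions on two points, where a representation as an $f$-divergence forces a functional equation on the representing function that turns out to have no solution. First I would invoke Corollary~\ref{cor:uniqueness}: if the Chernoff distance (or the Hoeffding distance $\hdist{A}{B}{r}$ for a fixed $r$) could be written as $\rsr{A}{B}{f}$ on $\S(\A)$, then restricting to the commutative subalgebra generated by two orthogonal minimal projections (these exist in any non-trivial $\A$) the two quantities would agree on the corresponding two-point classical system. There both sides reduce to their classical expressions with the \emph{same} function $f$ --- the Chernoff distance because $\Tr A^\alpha B^{1-\alpha}=\sum_i p_i^\alpha q_i^{1-\alpha}$ on commuting operators, and the $f$-divergence to $q_1f(p_1/q_1)+q_2f(p_2/q_2)$ by \eqref{f-div explicit expression3}. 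So it suffices to show that neither the classical two-point Chernoff distance nor the classical two-point Hoeffding distance (for $r>0$) is of this form.

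On two points I would parametrize by likelihood ratios. For $p=(p_1,p_2)$, $q=(q_1,q_2)$ with $p\ne q$, set $x:=p_1/q_1$ and $y:=p_2/q_2$; the normalization $q_1x+q_2y=1$ forces (after relabeling) $x>1>y>0$, and conversely $(x,y)$ recovers $q_1=(1-y)/(x-y)$, $q_2=(x-1)/(x-y)$. Writing the distance in these coordinates, a representation as an $f$-divergence is precisely the identity
\begin{equation*}
(x-y)\,\mathrm{dist}(x,y)=(1-y)f(x)+(x-1)f(y),\qquad x>1>y>0 .
\end{equation*}
I would then read off $f$ from the boundary $y\searrow 0$. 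As $y\searrow 0$ one has $p\to(1,0)$ with $q$ interior, and a short computation from \eqref{Chernoff def} and \eqref{Hoeffding def} gives $\mathrm{dist}(x,y)\to-\log q_1=\log x$ for \emph{both} the Chernoff distance and the Hoeffding distance with $r>0$. Substituting this into the identity shows that $\lim_{y\searrow 0}f(y)$ exists and is finite (call it $c$, the same for every fixed $x>1$), and that $f(x)=x\log x-c(x-1)$ for all $x>1$; in particular $\of=\lim_{x\to\infty}f(x)/x=+\infty$.

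The contradiction now comes from the opposite limit $x\to\infty$ with $y\in(0,1)$ fixed. Dividing the identity by $x$ gives
\begin{equation*}
\frac{x-y}{x}\,\mathrm{dist}(x,y)=\frac{1-y}{x}f(x)+\frac{x-1}{x}f(y),
\end{equation*}
whose right-hand side tends to $(1-y)\of+f(y)=+\infty$ because $\of=+\infty$ and $1-y>0$. Hence the identity forces $\mathrm{dist}(x,y)\to+\infty$. But the distance in fact stays bounded: $\chdist{p}{q}\to-\log y$ for the Chernoff distance, and $\hdist{p}{q}{r}\to 0$ for the Hoeffding distance whenever $y>e^{-r}$ (here the optimal $\alpha$ tends to $0$). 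This is the desired contradiction, so no representing $f$ can exist, and the reduction of the first paragraph then rules out representability on $\S(\A)$ for every non-trivial $\A$.

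The genuinely technical step --- and the main obstacle --- is justifying the two limiting values of the distances, i.e.\ interchanging the limit in $(x,y)$ with the optimization over $\alpha$ and checking that the optimizer does not escape to the endpoint $\alpha=1$, where the objective $(-\alpha r-\psi(\alpha))/(1-\alpha)$ tends to $-\infty$ for $r>0$; everything else is routine bookkeeping. I would also flag the exceptional cases for the Hoeffding distance: by \eqref{0 Hoeffding} and \eqref{0 Hoeffding2} one has $\hdist{\rho}{\sigma}{0}=\rsr{\rho}{\sigma}{1}=\sr{\rho}{\sigma}$ on states, which \emph{is} an $f$-divergence, so the statement is to be understood for $r\ne 0$; for $r<0$ the Hoeffding distance is identically $+\infty$ off the diagonal and hence is likewise not an $f$-divergence on full-support two-point distributions.
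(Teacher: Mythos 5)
Your argument rests on exactly the same mechanism as the paper's proof: restrict to two-point (commuting) states, extract a functional equation from the hypothesized representation, deduce $\of=+\infty$, and then contradict this with the finiteness of the Chernoff/Hoeffding distances in a regime where the $f$-divergence is forced to blow up. All the limiting values you assert are in fact correct, so the proof can be completed along your lines. However, the step you yourself flag --- interchanging the limit in $(x,y)$ with the optimization over $\alpha$ --- is a genuine hole as written, not routine bookkeeping: the convergence of $\alpha\mapsto\psif{p}{q}{\alpha}$ is non-uniform at the endpoints (in your first limit $\psif{p}{q}{0}=0$ for every $y>0$, while the limiting boundary pair has $\psi(0)=-\log x$, and the optimizer drifts to the endpoint at a $y$-dependent rate), so one needs explicit two-sided bounds, e.g.\ from dropping one of the two terms inside $\psi$, to pin down the limits. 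You state the conclusion of this analysis but do not carry it out.

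The paper's proof shows that this entire detour is avoidable: since the representation $\mathrm{dist}=S_f$ is assumed on the whole state space, one may evaluate both sides \emph{directly at boundary states}, where everything is in closed form and no limit interchange arises. With $\rho:=\pr{e_1}$ and $\sigma_p:=p\pr{e_1}+(1-p)\pr{e_2}$ one has $\rsr{\rho}{\sigma_p}{f}=pf(1/p)+(1-p)f(0)$ and $\chdist{\rho}{\sigma_p}=\hdist{\rho}{\sigma_p}{r}=-\log p$ for $r>0$; letting $p\searrow 0$ in this scalar identity gives $\of=+\infty$. Then, swapping the arguments, $\rsr{\sigma_p}{\rho}{f}=+\infty$ by the term $\of\Tr A(I-B^0)$ in \eqref{f-div explicit expression}, while $\chdist{\sigma_p}{\rho}=-\log p$ is finite and $\hdist{\sigma_p}{\rho}{r}=0$ once $r\ge-\log p$ --- the same contradiction you reach, with all quantities computed exactly. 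Replacing your two limiting regimes by these two boundary evaluations closes the gap and shortens the proof considerably. (Incidentally, the reduction in your first paragraph needs only the explicit formula \eqref{f-div explicit expression3} on commuting states, not Corollary \ref{cor:uniqueness}.) Your closing remark about the exceptional cases is a valid and worthwhile observation: $\hdist{\rho}{\sigma}{0}=\sr{\rho}{\sigma}$ on states, which \emph{is} an $f$-divergence, so the statement for the Hoeffding distances is indeed to be read for $r>0$, which is also how the paper's proof treats it.
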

\begin{proof}
Let $\A\subset\B(\hil)$ where $\dim\hil\ge 2$, and let $e_1,e_2$ be orthonormal vectors in 
$\hil$ such that $\pr{e_j}\in\A,\,j=1,2$. Define $\rho:=\pr{e_1},\,\sigma_p:=p\pr{e_1}+(1-p)\pr{e_2},\,p\in(0,1)$. One can 
easily check that $\chdist{\rho}{\sigma_p}=\hdist{\rho}{\sigma_p}{r}=-\log p$ for every 
$r> 0$, while $\rsr{\rho}{\sigma_p}{f}=pf(1/p)+(1-p)f(0)$ for any function $f$ on 
$[0,+\infty)$. Hence, if any of the above measures can be represented as an $f$-divergence, 
then we have
$pf(1/p)+(1-p)f(0)=-\log p$ for the representing function $f$, and taking the limit 
$p\searrow 0$ yields $\of=+\infty$. In particular, $\rsr{\sigma_p}{\rho}{f}=+\infty$ for 
every $p\in(0,1)$. On the other hand, $\chdist{\sigma_p}{\rho}=-\log p$ and 
$\hdist{\sigma_p}{\rho}{r}=0$ if $r\ge-\log p$. That is, $\chdist{\sigma_p}{\rho}$ is finite 
for every $p\in(0,1)$ and for every $r>0$ there exists a $p\in(0,1)$ such that 
$\hdist{\sigma_p}{\rho}{r}$ is finite.
\end{proof}

Note, however, that for the applications of Theorems \ref{thm:monotonicity for F} and 
\ref{thm:equality}, it is sufficient to have a more general representability. 
Indeed, let $\A$ be a finite-dimensional $C^*$-algebra and $D:\,\S(\A)\times\S(\A)\to\bR$. 
We say that
\ki{$D$ is a monotone function of an $f$-divergence} on the state space of $\A$ if there 
exists an operator convex function $f:\,[0,+\infty)\to\bR$ and a strictly 
monotonic increasing function 
$g:\,\{\rsr{\rho}{\sigma}{f}\,:\,\rho,\sigma\in\S(\A)\}\to\bR\cup\{\pm\infty\}$ such that
\begin{equation*}
\dist{\rho}{\sigma}=g\bz\rsr{\rho}{\sigma}{f}\jz,\ds\ds\ds\rho,\sigma\in\S(\A).
\end{equation*}
Obviously, if $D$ is a monotone function of an $f$-divergence then it is monotonic non-increasing under stochastic maps due to Theorem \ref{thm:monotonicity for F}.
Moreover, if 
$\dist{\map(\rho)}{\map(\sigma)}=\dist{\rho}{\sigma}$ for some stochastic map $\map$ and 
$\rho,\sigma\in\S(\A)$ such that $\supp\rho\le\supp\sigma$, and
the representing function $f$ satisfies $|\supp \mu_f|\ge|\spect(L_\rho R_{\sigma\inv})\cup\spect(L_{\map(\rho)} R_{\map(\sigma)\inv})|$ then $\map_\sigma^*(\map(\rho))=\rho$, due to \ref{item:equality2} of Theorem \ref{thm:equality}.
For instance, the R\'enyi $\alpha$-relative entropy is a monotone function of the 
$\tilde f_\alpha$-divergence with $g(x):=\frac{1}{\alpha-1}\log\sgn(\alpha-1)x$, for every 
$\alpha\in[0,2]\setminus\{1\}$. However, the same argument as in Proposition 
\ref{prop:Chernoff not fdiv} yields that none of the R\'enyi relative entropies with 
parameter $\alpha\in(0,1)$ can be represented as $f$-divergences. 

\begin{prop}
For any $r\in(0,+\infty)$ and any non-trivial $C^*$-algebra $\A$, the Hoeffding distance $H_r$ cannot be represented on the state space of $\A$ as a monotone function of an $f$-divergence
with an operator convex function $f$ on $[0,+\infty)$ such that $|\supp\mu_f|\ge 6$.
\end{prop}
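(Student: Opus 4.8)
The plan is to argue by contradiction, feeding a carefully chosen stochastic map into the reversibility consequence of Theorem~\ref{thm:equality} recorded just above: if $\hdist{\cdot}{\cdot}{r}$ were a monotone function of an $f$-divergence with $f\in\F$ and $|\supp\mu_f|\ge 6$, then \emph{any} stochastic map $\map$ that preserves $\hdist{\cdot}{\cdot}{r}$ on a pair $\rho,\sigma\in\S(\A)$ with $\supp\rho\le\supp\sigma$ and with $|\spect(\modop{\rho}{\sigma})\cup\spect(\modop{\map(\rho)}{\map(\sigma)})|\le 6$ would be forced to satisfy $\map_\sigma^*(\map(\rho))=\rho$. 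So it suffices to exhibit, inside an arbitrary non-trivial $\A$, such a pair and such a stochastic map for which the Hoeffding distance is preserved but $\map_\sigma^*(\map(\rho))\ne\rho$.

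The construction I would use is classical and supported on two orthogonal minimal projections. Since $\A\ne\iC$ there are orthogonal nonzero projections $p_1,p_2\in\A$; put $q:=p_1+p_2$. For a small parameter $\delta>0$ set $\rho:=(\tfrac12+\delta)p_1+(\tfrac12-\delta)p_2$ and $\sigma:=\tfrac12 p_1+\tfrac12 p_2$, so that $\supp\rho=\supp\sigma=q$. For the map take $\map:=M\circ E$, where $E$ is the trace-preserving conditional expectation (pinching) onto the abelian subalgebra generated by $p_1,p_2,I-q$, and $M$ is the classical mixing that fixes $I-q$ and acts on the $(p_1,p_2)$-block by the symmetric doubly stochastic matrix with diagonal entries $1-p$ and off-diagonal entries $p$, for a fixed $p\in(0,\tfrac12)$. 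Both $E$ and $M$ are stochastic, hence so is $\map$, and on the diagonal states one gets $\map(\sigma)=\sigma$ and $\map(\rho)=M\rho=(\tfrac12+\delta')p_1+(\tfrac12-\delta')p_2$ with $\delta':=(1-2p)\delta$.

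The verification then breaks into three routine points. First, the relevant modular operators live on $q\A q$ and have spectra $\{1\pm2\delta\}$ and $\{1\pm2\delta'\}$, so $|\spect(\modop{\rho}{\sigma})\cup\spect(\modop{\map(\rho)}{\map(\sigma)})|\le 4\le 6$, meeting the support hypothesis of the consequence. Second, using the characterization of vanishing Hoeffding distance stated just before Proposition~\ref{prop:ch monotonicity}: since both pairs are full-rank on $q$ with equal supports, $\hdist{\rho}{\sigma}{r}=0$ precisely when $r\ge\sr{\sigma}{\rho}=-\tfrac12\log(1-4\delta^2)$, and similarly for the image pair (with $\delta'$, and $\sr{\map(\sigma)}{\map(\rho)}\le\sr{\sigma}{\rho}$ by monotonicity). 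Hence, for the prescribed $r>0$, choosing $\delta$ small enough that $-\tfrac12\log(1-4\delta^2)<r$ yields $\hdist{\rho}{\sigma}{r}=0=\hdist{\map(\rho)}{\map(\sigma)}{r}$, i.e.\ the Hoeffding distance is preserved. Third, because $\map(\sigma)=\sigma=\tfrac12 q$ and $\map^*=E\circ M$, the Petz recovery collapses to $\map_\sigma^*(\map(\rho))=q\,\map^*(\map(\rho))\,q=M^2\rho=(\tfrac12+(1-2p)^2\delta)p_1+(\tfrac12-(1-2p)^2\delta)p_2$, which differs from $\rho$ since $(1-2p)^2<1$ and $\delta\ne0$.

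Combining these, the assumed representability together with the preservation of $\hdist{\cdot}{\cdot}{r}$ would force $\map_\sigma^*(\map(\rho))=\rho$, contradicting $\map_\sigma^*(\map(\rho))=M^2\rho\ne\rho$. I expect the only genuinely delicate point to be arranging the degenerate regime $\hdist{\rho}{\sigma}{r}=0$ for the \emph{given} value of $r$ while keeping $\map$ irreversible on $\{\rho,\sigma\}$; this is exactly what scaling $\delta$ accomplishes, since shrinking $\delta$ drives $\sr{\sigma}{\rho}$ below $r$ yet never makes $M^2\rho$ equal to $\rho$. The secondary, purely bookkeeping obstacle is to realize the example inside an arbitrary non-trivial $\A$, which the pinching $E$ handles by turning the two-point mixing into an honest stochastic endomorphism of $\A$ acting only on the $q$-block.
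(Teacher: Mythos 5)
Your proposal is correct and follows essentially the same route as the paper's proof: exploit the degeneracy of the Hoeffding distance by taking two distinct classical two-point states with $\sr{\sigma}{\rho}<r$ (so that $\hdist{\rho}{\sigma}{r}=0$), exhibit a stochastic map that preserves this vanishing value while failing to be reversible, and derive a contradiction from the reversibility consequence of Theorem \ref{thm:equality} stated just before the proposition. The only real difference is the choice of channel: the paper uses the fully depolarizing map $X\mapsto(\Tr X)I/\dim\hil$, for which $\map(\rho)=\map(\sigma)$ makes irreversibility immediate (since then $\map_\sigma^*(\map(\rho))=\map_\sigma^*(\map(\sigma))=\sigma\ne\rho$), whereas your partial mixing keeps the image states distinct and requires the explicit Petz-recovery computation $\map_\sigma^*(\map(\rho))=M^2\rho\ne\rho$ --- slightly more work, but equally valid, and both constructions keep the joint spectrum of the relative modular operators within the cardinality bound $6$.
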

\begin{proof}
Let $\A\subset\B(\hil)$ be a $C^*$-algebra and let $e_1,e_2$ be orthogonal vectors in $\hil$ such that $\pr{e_1},\pr{e_2}\in\A$. Choose $p,q\in(0,1)$ such that $p\ne q$ and $q\log\frac{q}{p}+(1-q)\log\frac{1-q}{1-p}<r$, and define $\rho:=p\pr{e_1}+(1-p)\pr{e_2}$ and 
$\sigma:=q\pr{e_1}+(1-q)\pr{e_2}$. Then $\psi(0|\rho\|\sigma)=0$ and 
$-\psi(0|\rho\|\sigma)-\psi'(0|\rho\|\sigma)=\sr{\sigma}{\rho}=q\log\frac{q}{p}+(1-q)\log\frac{1-q}{1-p}<r$, and 
hence $\hdist{\rho}{\sigma}{r}=-\psi(0|\rho\|\sigma)=0$. Define $\map:\,\A\to\A,\,\map(X):=(\Tr X)I/(\dim\hil)$. Then 
$\map$ is completely positive and trace-preserving, $\map(\rho)=\map(\sigma)$, and hence 
$\hdist{\map(\rho)}{\map(\sigma)}{r}=0=\hdist{\rho}{\sigma}{r}$. 
Note that $|\spect\bz L_{\rho}R_{\sigma\inv}\jz|\le 5$ and $|\spect\bz L_{\map(\rho)}R_{\map(\sigma)\inv}\jz|=1$.
If we had 
$\hdist{\rho}{\sigma}{r}=g\bz\rsr{\rho}{\sigma}{f}\jz$ and
$\hdist{\map(\rho)}{\map(\sigma)}{r}=g\bz\rsr{\map(\rho)}{\map(\sigma)}{f}\jz$
for some strictly monotone $g$ and an operator convex $f$ on $[0,+\infty)$ such that 
$|\supp\mu_f|\ge 6$ then Theorem \ref{thm:equality} would yield
$\map_\sigma^*(\map(\rho))=\rho$. However, $\map(\rho)=\map(\sigma)$ and hence
$\map_\sigma^*(\map(\rho))=\map_\sigma^*(\map(\sigma))=\sigma\ne\rho$.
\end{proof}

The above proposition also shows that the preservation of a Hoeffding distance of a pair $(\rho,\sigma)$ by a stochastic map for a given parameter $r$ might not be sufficient for the reversibility of $\map$ on $\{\rho,\sigma\}$ in the sense of Theorem \ref{thm:equality}; the reason for this in the above proof is that 
the Hoeffding distance might be equal to zero even for non-equal states. 
The Chernoff distance, on the other hand, is always strictly positive for unequal states; yet
the following example shows that the preservation of the Chernoff distance 
is not sufficient for reversibility in general, either.

\begin{ex}\label{ex:Chernoff}
Let $\hil:=\bC^3$ and let $\A$ be the commutative $C^*$-algebra of operators on $\hil$ that are diagonal in some fixed basis $e_1,e_2,e_3$. Let
$\rho:=(2/3)\pr{e_1}+(1/3)\pr{e_2}$, $\sigma:=(1/6)\pr{e_1}+(1/3)\pr{e_2}+(1/2)\pr{e_3}$, and define $\map:\,\A\to\A$ as 
\begin{equation*}
\map(\pr{e_1}):=\map(\pr{e_2}):=\pr{e_1},\ds\ds\ds\map(\pr{e_3}):=\pr{e_3}.
\end{equation*}
Then $\map$ is completely positive and trace-preserving, and we have $\map(\rho)=\pr{e_1}$, $\map(\sigma)=(1/2)\pr{e_1}+(1/2)\pr{e_3}$. 
For every $\alpha\in\bR$, we have
$\Tr\rho^\alpha\sigma^{1-\alpha}=\frac{2+4^\alpha}{6}$ and 
$\Tr\map(\rho)^\alpha\map(\sigma)^{1-\alpha}=2^{\alpha-1}$, and hence
\begin{align*}
\chdist{\map(\rho)}{\map(\sigma)}&=-\log\psif{\map(\rho)}{\map(\sigma)}{0}=\rsr{\map(\rho)}{\map(\sigma)}{0}=\log2
=\rsr{\rho}{\sigma}{0}\\
&=-\log\psif{\rho}{\sigma}{0}=\chdist{\rho}{\sigma}.
\end{align*}
On the other hand, it is easy to see that $\map_\sigma^*(\map(\rho))=(1/3)\pr{e_1}+(2/3)\pr{e_2}\ne\rho$, and therefore \ref{item:special inverse} of Theorem \ref{thm:equality} does not hold, and hence
$\map$ is not reversible on the pair $\{\rho,\sigma\}$.
\end{ex}
\begin{rem}
Note that in the setting of Theorem \ref{thm:equality}, if $\map$ is $2$-positive and 
$\rsr{\map(A)}{\map(B)}{\alpha}=\rsr{A}{B}{\alpha}$ for some $\alpha\in(0,1)$ then 
$\map_B^*(\map(A))=A$, i.e., the preservation of a R\'enyi $\alpha$-relative entropy with 
some $\alpha\in(0,1)$ is sufficient for the reversibility of $\map$ on $\{A,B\}$. The above 
example shows that the same is not true for the $0$-relative entropy.
\end{rem}

\begin{cor}
Let $\A$ be a $C^*$-algebra which contains at least $3$ orthogonal non-zero projections. Then the Chernoff distance cannot be represented on its state space as a monotone function of an $f$-divergence with an 
operator convex $f$ on $[0,+\infty)$ such that $|\supp\mu_f|\ge 6$. 
\end{cor}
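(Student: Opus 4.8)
The plan is to argue by contradiction and to reduce to Example \ref{ex:Chernoff} via the implication \ref{item:equality2}$\imp$\ref{item:special inverse} of Theorem \ref{thm:equality}, exactly as in the Hoeffding case above. Suppose that on $\S(\A)$ one had $\chdist{\rho}{\sigma}=g\bz\rsr{\rho}{\sigma}{f}\jz$ for some strictly monotone increasing $g$ and some $f\in\F$ with $|\supp\mu_f|\ge 6$. Since $\dim\A\ge 3$, I would first pick three mutually orthogonal rank-one projections $\pr{e_1},\pr{e_2},\pr{e_3}\in\A$ and realize the data of Example \ref{ex:Chernoff} inside $\A$: take $\rho,\sigma$ as there, and define $\map:\,\A\to\A$ as the composition of the trace-preserving conditional expectation $E$ onto the commutative subalgebra $\C$ generated by the $\pr{e_j}$ with the map $\map_0$ of the example. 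This $\map$ is completely positive and trace-preserving, and since $\rho,\sigma\in\C$ and $E$ fixes $\C$ pointwise, $\map(\rho)$, $\map(\sigma)$ and $\map_\sigma^*(\map(\rho))$ are computed precisely as in Example \ref{ex:Chernoff}.

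Next I would convert the preservation of the Chernoff distance into preservation of the $f$-divergence. Example \ref{ex:Chernoff} gives $\chdist{\map(\rho)}{\map(\sigma)}=\chdist{\rho}{\sigma}$, and since $g$ is injective this forces $\rsr{\map(\rho)}{\map(\sigma)}{f}=\rsr{\rho}{\sigma}{f}$. Together with $\supp\rho\le\supp\sigma$, the stochasticity of $\map$ with $\Tr\map(\sigma)=\Tr\sigma$, and $f\in\F$, this puts us in the setting of condition \ref{item:equality2} of Theorem \ref{thm:equality}, provided the support condition \eqref{support condition} holds. To check the latter, write $\Delta:=\modop{\rho}{\sigma}$ and $\tilde\Delta:=L_{\map(\rho)}R_{\map(\sigma)\inv}$; their spectra consist of ratios of the eigenvalues of the respective pairs, giving $\spect(\Delta)\subseteq\{4,2,4/3,1,2/3,0\}$ and $\spect(\tilde\Delta)\subseteq\{2,0\}$, so that $|\spect(\Delta)\cup\spect(\tilde\Delta)|\le 6\le|\supp\mu_f|$. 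Theorem \ref{thm:equality} then yields \ref{item:special inverse}, i.e. $\map_\sigma^*(\map(\rho))=\rho$, contradicting the computation $\map_\sigma^*(\map(\rho))=(1/3)\pr{e_1}+(2/3)\pr{e_2}\ne\rho$ of Example \ref{ex:Chernoff}.

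I expect the main point to be the calibration of the number $6$: the threshold in the statement is exactly $|\spect(\Delta)\cup\spect(\tilde\Delta)|$ for the example, so the eigenvalue count must be carried out carefully to confirm that it does not exceed $6$ and hence that \eqref{support condition} is met. A secondary, more routine issue is the passage from the concrete three-level example to an arbitrary $\A$ of dimension at least $3$: one must confirm that composing $\map_0$ with the pinching $E$ changes neither $\map(\rho),\map(\sigma)$ nor $\map_\sigma^*(\map(\rho))$, which holds because all the states involved lie in the commutative subalgebra $\C$ on which $E$ acts trivially.
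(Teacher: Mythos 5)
Your proof is correct and is exactly the unpacking of the paper's one-line proof (``Immediate from Example \ref{ex:Chernoff}''): embed the example into $\A$, use injectivity of $g$ to pass from preservation of the Chernoff distance to preservation of $\rsr{\cdot}{\cdot}{f}$, verify the support condition \eqref{support condition} via the count $|\spect(\Delta)\cup\spect(\tilde\Delta)|\le 6$, and apply Theorem \ref{thm:equality} to contradict $\map_\sigma^*(\map(\rho))\ne\rho$. The only caveat is your claim that $\dim\A\ge 3$ yields three mutually orthogonal rank-one projections in $\A$ (false, e.g., for $\A=\B(\bC^2)$), but this reading of the hypothesis is precisely what the paper's own one-line proof implicitly assumes, so it is an imprecision inherited from the paper rather than a gap in your argument.
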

\begin{proof}
Immediate from Example \ref{ex:Chernoff}.
\end{proof}

After the above preparation, we are ready to prove the analogue of Theorem \ref{thm:equality} for the preservation of the Chernoff and the Hoeffding distances. The preservation of the Chernoff distance was already treated in the proof of Theorem 6 in \cite{Jencova} in the case where both operators are invertible density operators and the substochastic map is the trace-preserving conditional expectation onto a subalgebra. We use essentially the same proof to treat the general case below.

\begin{thm}\label{thm:equality2}
Let $A,B\in\A_{1,+}$ be such that $\supp A\le\supp B$,
let $\map:\A_1\to\A_2$ be a substochastic map such that $\Tr\map(B)=\Tr B$, 
and assume that 
\ref{item:ch preservation} or \ref{item:Hoeffding pres} below holds:

\begin{enumerate}

\item\label{item:ch preservation}
$\chdist{\map(A)}{\map(B)}\ne\rsr{\map(A)}{\map(B)}{0},\s
\chdist{\map(A)}{\map(B)}\ne\rsr{\map(B)}{\map(A)}{0}$, and
\begin{equation*}
\chdist{\map(A)}{\map(B)}=\chdist{A}{B}.
\end{equation*}

\item\label{item:Hoeffding pres}
For some  $r\in(-\psif{\map(A)}{\map(B)}{1},-\psif{\map(A)}{\map(B)}{0}-\psi'(0|\map(A)\|\map(B))$,
\begin{equation}\label{Hoeffding equality}
\hdist{\map(A)}{\map(B)}{r}=\hdist{A}{B}{r}.
\end{equation}
\end{enumerate}
Then $\map_B^*(\map(A))=A$, and if $\map$ is $2$-positive then there exists a stochastic map $\mapp:\,\A_2\to\A_1$ such that 
$\mapp(\map(A))=A$ and $\mapp(\map(B))=B$.
\end{thm}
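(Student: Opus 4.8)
The plan is to reduce both cases \ref{item:ch preservation} and \ref{item:Hoeffding pres} to the single statement that $\rsr{\map(A)}{\map(B)}{\alpha^{*}}=\rsr{A}{B}{\alpha^{*}}$ for some interior parameter $\alpha^{*}\in(0,1)$, and then to feed this into Theorem \ref{thm:equality}. Once such an $\alpha^{*}$ is found, note that $\tilde f_{\alpha^{*}}(x)=-x^{\alpha^{*}}$ from Example \ref{ex:Renyi1} is operator convex, lies in $\F$, and has representing measure supported on all of $(0,+\infty)$; hence $|\supp\mu_{\tilde f_{\alpha^{*}}}|=\infty$, while $\spect(\Delta)\cup\spect(\tilde\Delta)$ (with $\Delta,\tilde\Delta$ as in Theorem \ref{thm:equality}) is finite, so the support condition \eqref{support condition} holds automatically. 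Since $\rsr{A}{B}{\alpha^{*}}$ is a strictly monotone function of the $\tilde f_{\alpha^{*}}$-divergence, preservation of the former is exactly condition \ref{item:equality2} of Theorem \ref{thm:equality}. That theorem then yields \ref{item:special inverse}, i.e. $\map_B^*(\map(A))=A$, and when $\map$ is $2$-positive the implication \ref{item:special inverse}$\imp$\ref{item:existence of inverse} produces the stochastic reverse map $\mapp$.

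For case \ref{item:ch preservation}, recall from \eqref{Chernoff def} that $\chdist{A}{B}=-\min_{0\le\alpha\le1}\psif{A}{B}{\alpha}$ and that $\alpha\mapsto\psif{A}{B}{\alpha}$ is convex (Remark \ref{rem:Hoeffding properties}). The values of $-\psif{\map(A)}{\map(B)}{\alpha}$ at $\alpha=0$ and $\alpha=1$ are precisely $\rsr{\map(A)}{\map(B)}{0}$ and $\rsr{\map(B)}{\map(A)}{0}$, so the two non-degeneracy assumptions say exactly that the value $\chdist{\map(A)}{\map(B)}$ is attained at neither endpoint; by convexity the minimum of $\psif{\map(A)}{\map(B)}{\cdot}$ over $[0,1]$ is then attained at some interior $\alpha^{*}\in(0,1)$. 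The Rényi monotonicity of Example \ref{ex:Renyi1}, rewritten via \eqref{Renyi monotonicity2}, gives $\psif{\map(A)}{\map(B)}{\alpha}\ge\psif{A}{B}{\alpha}$ for all $\alpha\in(0,1)$. Combining this at $\alpha^{*}$ with $\psif{\map(A)}{\map(B)}{\alpha^{*}}=-\chdist{\map(A)}{\map(B)}=-\chdist{A}{B}\le\psif{A}{B}{\alpha^{*}}$ (the last inequality because $-\chdist{A}{B}$ is the minimum of $\psif{A}{B}{\cdot}$) forces $\psif{\map(A)}{\map(B)}{\alpha^{*}}=\psif{A}{B}{\alpha^{*}}$, i.e. preservation of $\rsr{A}{B}{\alpha^{*}}$.

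Case \ref{item:Hoeffding pres} is analogous, starting from $\hdist{A}{B}{r}=\sup_{0\le\alpha<1}\frac{-\alpha r-\psif{A}{B}{\alpha}}{1-\alpha}$ in \eqref{Hoeffding def}. The role of the assumed range $r\in(-\psif{\map(A)}{\map(B)}{1},\,-\psif{\map(A)}{\map(B)}{0}-\psi'(0|\map(A)\|\map(B)))$ is exactly, through the Legendre--Fenchel analysis of Remark \ref{rem:Hoeffding properties} (where $\hdist{\map(A)}{\map(B)}{r}=+\infty$ below the left end and equals the boundary value $\rsr{\map(A)}{\map(B)}{0}$ above the right end), to guarantee that the supremum defining $\hdist{\map(A)}{\map(B)}{r}$ is finite and attained at an interior $\alpha^{*}\in(0,1)$. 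Evaluating the supremum for $(A,B)$ at this same $\alpha^{*}$ and using $\psif{A}{B}{\alpha^{*}}\le\psif{\map(A)}{\map(B)}{\alpha^{*}}$ gives $\hdist{A}{B}{r}\ge\frac{-\alpha^{*}r-\psif{A}{B}{\alpha^{*}}}{1-\alpha^{*}}\ge\frac{-\alpha^{*}r-\psif{\map(A)}{\map(B)}{\alpha^{*}}}{1-\alpha^{*}}=\hdist{\map(A)}{\map(B)}{r}$; the equality hypothesis \eqref{Hoeffding equality} collapses this chain to equalities, again yielding $\psif{A}{B}{\alpha^{*}}=\psif{\map(A)}{\map(B)}{\alpha^{*}}$.

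The first paragraph is a routine translation into Theorem \ref{thm:equality}; the substance is in the interior-optimizer arguments of the last two paragraphs. I expect the main obstacle to be verifying rigorously that the Chernoff minimizer, respectively the Hoeffding maximizer, for the image pair $(\map(A),\map(B))$ lands strictly inside $(0,1)$ under the stated non-degeneracy and range conditions, including the boundary and support-mismatch subtleties arising when $A$ or $B$ is not invertible, where $\alpha\mapsto\psif{\cdot}{\cdot}{\alpha}$ may fail to be differentiable or finite at the endpoints. This is precisely where the convexity of $\alpha\mapsto\psif{\cdot}{\cdot}{\alpha}$ and the endpoint identities recorded in Remark \ref{rem:Hoeffding properties} carry the argument.
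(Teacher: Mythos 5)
Your proposal is correct and follows essentially the same route as the paper's own proof: the non-degeneracy (resp.\ range) hypotheses force the Chernoff minimizer (resp.\ Hoeffding optimizer) for the pair $(\map(A),\map(B))$ to lie at an interior $\alpha^*\in(0,1)$, and the R\'enyi monotonicity sandwich then yields $\Tr\map(A)^{\alpha^*}\map(B)^{1-\alpha^*}=\Tr A^{\alpha^*}B^{1-\alpha^*}$. Feeding this into condition \ref{item:equality2} of Theorem \ref{thm:equality} (the support condition being automatic since the measure representing $\pm x^{\alpha^*}$ has support all of $(0,+\infty)$) gives $\map_B^*(\map(A))=A$, and the $2$-positive case follows from the implication \ref{item:special inverse}$\imp$\ref{item:existence of inverse}, exactly as in the paper.
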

\begin{proof}
Assume first that \ref{item:ch preservation} holds. Due to the assumptions 
$\chdist{\map(A)}{\map(B)}\ne\rsr{\map(A)}{\map(B)}{0}=-\psif{\map(A)}{\map(B)}{0}$,
$\chdist{\map(A)}{\map(B)}\ne\rsr{\map(B)}{\map(A)}{0}=-\psif{\map(A)}{\map(B)}{1}$, and the 
definition 
\eqref{Chernoff def} of the Chernoff distance, there exists an $\alpha^*\in(0,1)$ such that 
$\chdist{\map(A)}{\map(B)}=-\psif{\map(A)}{\map(B)}{\alpha^*}$. Using the monotonicity 
relation \eqref{Renyi monotonicity 01}, we get 
\begin{align*}
\chdist{\map(A)}{\map(B)}&=-\log\Tr\map(A)^{\alpha^*}\map(B)^{1-\alpha^*}
\le
-\log\Tr A^{\alpha^*}B^{1-\alpha^*}\le 
\chdist{A}{B}=\chdist{\map(A)}{\map(B)}.
\end{align*}
Hence, $\Tr\map(A)^{\alpha^*}\map(B)^{1-\alpha^*}=\Tr A^{\alpha^*}B^{1-\alpha^*}$, which yields $\map_B^*(\map(A))=A$ due to \ref{item:equality2} of Theorem \ref{thm:equality}.

Assume next that \eqref{Hoeffding equality} holds for some
$r\in(-\psif{\map(A)}{\map(B)}{1},-\psif{\map(A)}{\map(B)}{0}-\psi'(0|\map(A)\|\map(B))$. Then there exists an $s^*\in(0,+\infty)$ such that 
$\hdist{\map(A)}{\map(B)}{r}=-s^*r-\tilde\psi(s^*|\map(A)\|\map(B))$ (see Remark \ref{rem:Hoeffding properties}). Thus, $\hdist{\map(A)}{\map(B)}{r}=
-\alpha^* r/(1-\alpha^*)+\rsr{\map(A)}{\map(B)}{\alpha^*}$, where $\alpha^*:=\frac{s^*}{1+s^*}\in(0,1)$. Using the monotonicity \eqref{Renyi monotonicity}, we obtain
\begin{align*}
\hdist{\map(A)}{\map(B)}{r}&=
-\alpha^* r/(1-\alpha^*)+\rsr{\map(A)}{\map(B)}{\alpha^*}\\
&\le
-\alpha^* r/(1-\alpha^*)+\rsr{A}{B}{\alpha^*}
\le
\hdist{A}{B}{r}=\hdist{\map(A)}{\map(B)}{r}.
\end{align*}
Hence, $\Tr\map(A)^{\alpha^*}\map(B)^{1-\alpha^*}=\Tr A^{\alpha^*}B^{1-\alpha^*}$, which yields $\map_B^*(\map(A))=A$ due to \ref{item:equality2} of Theorem \ref{thm:equality}.

Finally, if $\map$ is $2$-positive then $\map_B^*(\map(A))=A$ yields the existence of $\mapp$ in the last assertion the same way as in the proof of 
\ref{item:special inverse}$\imp$\ref{item:existence of inverse} in Theorem \ref{thm:equality}.
\end{proof}

\begin{cor}\label{cor:equality3}
Assume in the setting of Theorem \ref{thm:equality2} that $\supp A=\supp B$ and $\Tr A=\Tr B$. If $\chdist{\map(A)}{\map(B)}=\chdist{A}{B}$ then $\map_B^*(\map(A))=A$.
\end{cor}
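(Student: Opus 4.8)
The plan is to deduce the corollary from Theorem~\ref{thm:equality2} via its condition~\ref{item:ch preservation}: since $\chdist{\map(A)}{\map(B)}=\chdist{A}{B}$ is assumed, all that is needed is to check the two non-degeneracy requirements $\chdist{\map(A)}{\map(B)}\ne\rsr{\map(A)}{\map(B)}{0}$ and $\chdist{\map(A)}{\map(B)}\ne\rsr{\map(B)}{\map(A)}{0}$. First I would discard the trivial case $A=0$ (then $B=0$ and both sides of the claimed identity vanish) and assume $A\ne0$, writing $c:=\Tr A=\Tr B>0$. The extra hypotheses propagate to the images: from $A^0=B^0$ and Lemma~\ref{lemma:supports} we get $\map(A)^0=\map(B)^0$, while $A^0\le B^0$ together with $\Tr\map(B)=\Tr B$ and Lemma~\ref{lemma:support inequality}, condition~\ref{item:trace preserving on algebras}, gives $\Tr\map(A)=\Tr A=c$. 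Thus both pairs $(A,B)$ and $(\map(A),\map(B))$ have equal supports and common trace $c$.

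Next I would compute the endpoint values of $\psi$. For any $X,Y\in\A_{1,+}$ with $X^0=Y^0$ and $\Tr X=\Tr Y=c$ one has $\Tr X^0 Y=\Tr Y=c$ and $\Tr XY^0=\Tr X=c$, so $\psif{X}{Y}{0}=\psif{X}{Y}{1}=\log c$ and hence $\rsr{X}{Y}{0}=\rsr{Y}{X}{0}=-\log c$. Applied to $(\map(A),\map(B))$ this shows that the two quantities in the non-degeneracy requirements coincide and equal $-\log c$; moreover, since $\chdist{X}{Y}=-\min_{0\le\alpha\le1}\psif{X}{Y}{\alpha}$ and the minimum is at most the value at $\alpha=0$, both $\chdist{\map(A)}{\map(B)}$ and $\chdist{A}{B}$ are $\ge-\log c$. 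The verification therefore reduces to a single dichotomy on whether this lower bound is strict for the image pair.

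In the non-degenerate branch $\chdist{\map(A)}{\map(B)}>-\log c$, both requirements of condition~\ref{item:ch preservation} hold and Theorem~\ref{thm:equality2} delivers $\map_B^*(\map(A))=A$ at once. In the degenerate branch $\chdist{\map(A)}{\map(B)}=-\log c$, the preservation hypothesis forces $\chdist{A}{B}=-\log c$, i.e.\ $\min_\alpha\psif{A}{B}{\alpha}=\log c$; since the Cauchy--Schwarz estimate $\Tr A^{1/2}B^{1/2}=\hinner{A^{1/2}}{B^{1/2}}\le(\Tr A)^{1/2}(\Tr B)^{1/2}=c$ gives $\psif{A}{B}{1/2}\le\log c$, we obtain $\log c=\min_\alpha\psif{A}{B}{\alpha}\le\psif{A}{B}{1/2}\le\log c$, hence $\psif{A}{B}{1/2}=\log c$, which is equality in Cauchy--Schwarz and so $A^{1/2}=\lambda B^{1/2}$; the constraint $\Tr A=\Tr B$ then gives $\lambda=1$ and $A=B$ (equivalently, one may invoke the fact recorded before Proposition~\ref{prop:ch monotonicity} that the Chernoff distance of distinct states is strictly positive). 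In that case $\map_B^*(\map(A))=\map_B^*(\map(B))=B=A$ by Lemma~\ref{lemma:support inequality}, condition~\ref{item:PhiB}. The only genuinely substantive point, and the expected obstacle, is this isolation of the degenerate branch: condition~\ref{item:ch preservation} of Theorem~\ref{thm:equality2} was tailored to exclude precisely the case where the Chernoff distance sits at the common endpoint value $-\psif{\map(A)}{\map(B)}{0}=-\psif{\map(A)}{\map(B)}{1}$, and the equal-support/equal-trace assumptions are exactly what pin this endpoint value to $-\log c$ for both pairs and force strictness away from $A=B$.
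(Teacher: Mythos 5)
Your proof is correct and follows essentially the same route as the paper's: both reduce to condition \ref{item:ch preservation} of Theorem \ref{thm:equality2} after using the equal-support and equal-trace hypotheses to pin the endpoint values of the image pair to $\psif{\map(A)}{\map(B)}{0}=\psif{\map(A)}{\map(B)}{1}=\log c$ (with $c:=\Tr A=\Tr B$), and both treat the degenerate case $\chdist{\map(A)}{\map(B)}=-\log c$ separately by showing it forces $A=B$, whence $\map_B^*(\map(A))=\map_B^*(\map(B))=B=A$. The only difference is minor: in the degenerate branch the paper deduces constancy of $\alpha\mapsto\psif{A}{B}{\alpha}$ via convexity and then invokes the H\"older equality condition of Corollary \ref{cor:Holder}, whereas you obtain $A=B$ directly from the equality case of Cauchy--Schwarz at $\alpha=1/2$; both arguments are valid.
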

\begin{proof}
Let $\psi(\alpha):=\psif{\map(A)}{\map(B)}{\alpha},\,\alpha\in\bR$.
By the assumptions, we have $\supp\map(A)=\supp\map(B)$ and $\Tr\map(A)=\Tr\map(B)$, and hence
$\psi(0)=\psi(1)$. Since $\psi$ is convex, there are two possibilities: either $\psi$ is constant, or the minimum of $\psi$ on $[0,1]$ is attained at some $\alpha^*\in(0,1)$.
In the latter case we have $\chdist{\map(A)}{\map(B)}\ne\rsr{\map(A)}{\map(B)}{0},\s
\chdist{\map(A)}{\map(B)}\ne\rsr{\map(B)}{\map(A)}{0}$, and hence the assertion follows due to 
Theorem \ref{thm:equality2}.
If $\psi$ is constant then we have 
$\Tr\map(A)^\alpha\map(B)^{1-\alpha}=e^{\psi(\alpha)}=e^{\psi(1)}=\Tr\map(A)=(\Tr\map(A))^\alpha(\Tr\map(B))^{1-\alpha}$ for every $\alpha\in[0,1]$, and the equality case in H\"older's inequality yields that 
$\map(A)$ is constant multiple of $\map(B)$ (see Corollary \ref{cor:Holder}).
Since $\Tr\map(A)=\Tr\map(B)$, this yields that $\map(A)=\map(B)$. Similarly,
\begin{align*}
-\min_{0\le\alpha\le 1}\psif{A}{B}{\alpha}&=\chdist{A}{B}=\chdist{\map(A)}{\map(B)}
=
-\log\Tr\map(A)=-\log\Tr A=-\psif{A}{B}{0},
\end{align*}
and since $\Tr A=\Tr B$, we also have $-\log\Tr A=-\log\Tr B=-\psif{A}{B}{1}$. Hence,
$\alpha\mapsto\psif{A}{B}{\alpha}$ is constant on $[0,1]$, and the same argument as above yields that $A=B$. Therefore, 
$\map_B^*(\map(A))=\map_B^*(\map(B))=B=A$.
\end{proof}

\begin{rem}
Note that the interval 
$(-\psif{\map(A)}{\map(B)}{1},-\psif{\map(A)}{\map(B)}{0}-\psi'(0|\map(A)\|\map(B))$
in \ref{item:Hoeffding pres} of Theorem \ref{thm:equality2} might be empty; this happens if 
and only if $\alpha\mapsto\psif{\map(A)}{\map(B)}{\alpha}$ is constant. A characterization of 
this situation was given in Lemma 3.2 of \cite{HMO}.
\end{rem}

\section{Error correction}\label{sec:error correction}
\setcounter{equation}{0}

Noise in quantum mechanics is usually modeled by completely positive trace non-increasing 
maps. The aim of error correction is, given a noise operation $\map$, to identify a 
subset $\C$ of the state space (called the code) and a quantum operation $\Psi$ such that 
it reverses the action of the noise on the code, i.e., $\Psi(\map(\rho))=\rho,\,\rho\in\C$.
It was first noticed in \cite{Petz2} that the preservation of certain distinguishability 
measures 
of two states by the noise operation is a sufficient condition for correctability of the 
noise on those two states. This result was later extended to general families of states in 
\cite{JP,JP2}. The measures considered in these papers were the R\'enyi relative entropies 
and the standard relative entropy. Recently, the same problem was considered in \cite{BNPV} 
using 
the measures $T_p$ given in \eqref{Tp def}, and similar results were found, although only under some extra technical
conditions. Below we summarize these results and extend them to a wide class of measures, 
based on Theorem \ref{thm:equality}. 

Let $\A_i$ be a $C^*$-algebra on $\hil_i$ for $i=1,2$, and let $\S(\A_i)$ denote the set of density operators in $\A_i$.
For a non-empty set $\C\subset\S(\A_1)$, let $\co\C$ denote the closed convex hull of $\C$,
and let 
$\supp\C$ be the supremum of the supports of all states in $\C$. Note that there exists a state
$\sigma\in\co\C$ such that $\supp\sigma=\supp\C$. We introduce the notation
$d^2:=(\dim\hil_1)^2+(\dim\hil_2)^2$.
Note that if $X\in\A_1$ and $\map:\,\A_1\to\A_2$ is a trace non-increasing positive map then
\begin{align*}
\|\Phi(X)\|_1&=\max\{\Tr\Phi(X)S\,:\,S\in\A_2\ \mbox{self-adjoint},\,-I_2\le S\le I_2\} \\
&=\max\{\Tr X\Phi^*(S)\,:\,S\in\A_2\ \mbox{self-adjoint},\,-I_2\le S\le I_2\} \\
&\le\max\{\Tr XR\,:\,R\in\A_1\ \mbox{self-adjoint},\,-I_1\le R\le I_1\}=\|X\|_1,
\end{align*}
which in particular yields that 
the measures $T_p$ are monotonic non-increasing under substochastic maps.

\begin{thm}\label{thm:error correction}
Let $\map:\,\A_1\to\A_2$ be a trace-preserving $2$-positive map, and let $\C\subset\S(\A_1)$ be a non-empty set of states. The following are equivalent:
\begin{enumerate}
\item\label{item:ec existence of inverse}
There exists a stochastic map $\mapp:\A_2\to\A_1$ such that
for every $\rho\in\co\C$,
\begin{equation}\label{ec inverse relation}
\mapp(\map(\rho))=\rho.
\end{equation}
\item\label{item:ec equality1}
For every operator convex function $f$ on $[0,+\infty)$, and every $\rho,\sigma\in\co\C$,
\begin{equation}\label{preserved measures1}
\rsr{\map(\rho)}{\map(\sigma)}{f}=\rsr{\rho}{\sigma}{f}.
\end{equation}
\item\label{item:ec equality2}
The equality \eqref{preserved measures1} holds for 
every $\rho\in\C$ and for some $\sigma\in\S(\A_1)$ such that $\supp\sigma\ge\supp\C$, and
some operator convex $f$ on $[0,+\infty)$ such that $|\supp\mu_f|\ge d^2$.
\item\label{item:ec primitives}
$\rsr{\map(\rho)}{\map(\sigma)}{\vfi_t}=\rsr{\rho}{\sigma}{\vfi_t}$
for every $\rho\in\C$ and for some $\sigma\in\S(\A_1)$ such that $\supp\sigma\ge\supp\C$, and a set $T$ of $t$'s such that $|T|\ge d^2$.
\item\label{item:Hoeffding preservation}
For every $\rho,\sigma\in\co\C$ and every $r\in\bR$,
\begin{equation}\label{Hoeffding preservation}
\hdist{\map(\rho)}{\map(\sigma)}{r}=\hdist{\rho}{\sigma}{r}.
\end{equation}
\item\label{item:Hoeffding preservation2}
The equality in \eqref{Hoeffding preservation} holds for every $\rho\in\C$ and for some $\sigma\in\S(\A_1)$ such that $\supp\sigma\ge\supp\C$, and for every $r\in (0,\delta)$ for some $\delta>0$.
\item\label{item:ec special inverse}
For every $\rho\in\co\C$ and every $\sigma\in\co\C$ such that $\supp\sigma=\supp\C$, 
\begin{equation}\label{special inverse map}
\map^*_\sigma(\map(\rho))=\rho.
\end{equation}
\item\label{item:ec special inverse2}
The equality \eqref{special inverse map} holds for every $\rho\in\C$ and some $\sigma\in\S(\A_1)$.
\item\label{item:decomposition}
There exist decompositions $\supp\C=\bigoplus_{k=1}^r\hil_{1,k,L}\otimes\hil_{1,k,R}$ and
$\supp\map(\C)=\bigoplus_{k=1}^r\hil_{2,k,L}\otimes\hil_{2,k,R}$,
invertible density operators $\omega_k$ on $\hil_{1,k,R}$ and $\tilde\omega_k$ on $\hil_{2,k,R}$, and unitaries
$U_k:\,\hil_{1,k,L}\to \hil_{2,k,L}$, $k=1,\ldots,r$, such that every $\rho\in\C$ can be written in the form
\begin{equation*}
\rho=\bigoplus_{k=1}^r p_k\rho_{k,L}\otimes\omega_k
\end{equation*}
with some density operators $\rho_{k,L}$ on $\hil_{1,k,L}$ and probability distribution $\{p_k\}_{k=1}^r$, and
\begin{equation*}
\map(A\otimes\omega_k)=U_kAU_k^*\otimes \tilde\omega_k,\ds\ds\ds A\in\B(\hil_{1,k,L}).
\end{equation*}
\end{enumerate}

Moreover, if $\map$ is $n$-positive/completely positive then $\mapp$ in \ref{item:ec existence of inverse} can also be chosen to be $n$-positive/completely positive.
The implications \ref{item:ec existence of inverse}$\imp$\ref{item:ec equality1}$\imp$
\ref{item:ec equality2}$\imp$\ref{item:ec primitives}$\imp$\ref{item:ec special inverse2} hold also if we only assume $\map$ to be substochastic.

Furthermore, criterion \ref{item:preservation of Tp} below is sufficient for 
\ref{item:ec existence of inverse}--\ref{item:ec special inverse2} to hold, and it is also necessary if $\map$ is completely positive.
\begin{enumerate}
\setcounter{enumi}{9}
\item\label{item:preservation of Tp}
For every $\rho\in\C$, every $p\in(0,1)$, every $n\in\N$, and for some $\sigma\in\S(\A_1)$ such that $\supp\sigma\ge\supp\C$,
\begin{equation}\label{preservation of Tp}
\tdist{\map^{\otimes n}(\rho^{\otimes n})}{\map^{\otimes n}(\sigma^{\otimes n})}{p}=
\tdist{\rho^{\otimes n}}{\sigma^{\otimes n}}{p}.
\end{equation}
\end{enumerate}
\end{thm}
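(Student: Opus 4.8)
The plan is to reduce the entire equivalence to the pairwise statements of Theorem \ref{thm:equality} (for the $f$-divergences) and Theorem \ref{thm:equality2} (for the Chernoff and Hoeffding distances), together with the structural Lemma \ref{lemma:decomposition}, and to handle the many-copy measures $T_p$ through the asymptotic Chernoff and Hoeffding bound theorems \eqref{Chernoff limit}--\eqref{Hoeffding limit}. First I would fix once and for all a reference state $\sigma_0\in\co\C$ with $\supp\sigma_0=\supp\C$, which exists by the remark preceding the theorem, and record the uniform cardinality bound $|\spect(L_\rho R_{\sigma\inv})\cup\spect(L_{\map(\rho)}R_{\map(\sigma)\inv})|\le(\dim\hil_1)^2+(\dim\hil_2)^2=d^2$, valid for every $\rho,\sigma$; this is exactly what makes the constant $d^2$ in \ref{item:ec equality2} and \ref{item:ec primitives} large enough to satisfy the support hypothesis \eqref{support condition} of Theorem \ref{thm:equality} simultaneously for all $\rho\in\C$. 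Since for any $\rho\in\C$ and any admissible reference $\sigma$ we have $\supp\rho\le\supp\C\le\supp\sigma$, the pair $(\rho,\sigma)$ always falls under the hypotheses of Theorems \ref{thm:equality} and \ref{thm:equality2}.

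The $f$-divergence chain \ref{item:ec existence of inverse}$\imp$\ref{item:ec equality1}$\imp$\ref{item:ec equality2}$\imp$\ref{item:ec primitives}$\imp$\ref{item:ec special inverse2} I would prove for merely substochastic $\map$. For \ref{item:ec existence of inverse}$\imp$\ref{item:ec equality1} I use the sandwich argument: since $\map$ and $\mapp$ are trace-preserving and $\mapp\circ\map$ fixes every $\rho,\sigma\in\co\C$, two applications of the monotonicity Theorem \ref{thm:monotonicity for F} give $\rsr{\rho}{\sigma}{f}=\rsr{\mapp(\map(\rho))}{\mapp(\map(\sigma))}{f}\le\rsr{\map(\rho)}{\map(\sigma)}{f}\le\rsr{\rho}{\sigma}{f}$, forcing equality. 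The step \ref{item:ec equality1}$\imp$\ref{item:ec equality2} is a specialization (pick any $f\in\F$ with $|\supp\mu_f|\ge d^2$ and take $\sigma=\sigma_0$). For \ref{item:ec equality2}$\imp$\ref{item:ec primitives} and \ref{item:ec primitives}$\imp$\ref{item:ec special inverse2} I would run Theorem \ref{thm:equality} on the pair $(\rho,\sigma)$ for each $\rho\in\C$: the support bound guarantees \eqref{support condition}, so the implications \ref{item:equality2}$\imp$\ref{item:primitives}$\imp$\ref{item:special inverse} of that theorem yield preservation of the $\vfi_t$-divergences on a set of size $\ge d^2$ and then $\map_\sigma^*(\map(\rho))=\rho$, which are precisely \ref{item:ec primitives} and \ref{item:ec special inverse2}.

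To close the equivalence and reach the structural statement and a genuine recovery map I would now use $2$-positivity. Given \ref{item:ec special inverse2}, every $\rho\in\C$ lies in $\fix{\map_{\sigma_0}^*\circ\map}$; invoking Lemma \ref{lemma:decomposition} with $B=\sigma_0$ (where $2$-positivity and $\Tr\map(\sigma_0)=\Tr\sigma_0$ are needed) produces the decomposition of $\supp\C=\supp\sigma_0$, the invertible ancillas $\omega_k,\tilde\omega_k$ and the unitaries $U_k$, and since each $\rho\in\C$ belongs to $\fix{\map_{\sigma_0}^*\circ\map}_+=\bigoplus_k\B(\hil_{1,k,L})_+\otimes\omega_k$ it automatically acquires the block form of \ref{item:decomposition}. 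The crucial point is that the single map $\map_{\sigma_0}^*$ does not depend on $\rho$, so it reverses $\map$ on every $\rho\in\C$ at once, hence on $\co\C$ by linearity and continuity; extending $\map_{\sigma_0}^*$ to a trace-preserving $\mapp$ exactly as in the proof of \ref{item:special inverse}$\imp$\ref{item:existence of inverse} of Theorem \ref{thm:equality} (the correction term vanishes on $\map(\co\C)$ because $\supp\map(\rho)\le\supp\map(\sigma_0)$), and noting that $2$-positivity makes $\map_{\sigma_0}^*$ a Schwarz contraction, yields the stochastic $\mapp$ of \ref{item:ec existence of inverse}, with the $n$-positive/completely positive refinement inherited as in Theorem \ref{thm:equality}. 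Conditions \ref{item:ec special inverse} and \ref{item:Hoeffding preservation} then follow by applying the pairwise Theorem \ref{thm:equality} and Proposition \ref{prop:ch monotonicity} to each pair in $\co\C$, and \ref{item:Hoeffding preservation2} is a specialization.

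Finally, the distinguishability and many-copy criteria. For \ref{item:Hoeffding preservation2}$\imp$\ref{item:ec special inverse2} I would feed the preserved Hoeffding distance at a small $r\in(0,\delta)$ into Theorem \ref{thm:equality2}: since $\supp\rho\le\supp\sigma$ forces $\psi(1|\map(\rho)\|\map(\sigma))\le 0$, small positive $r$ lies in the admissible interval of \ref{item:Hoeffding pres} there (the degenerate case of that interval being empty, i.e.\ $\psi$ affine, being disposed of as in Corollary \ref{cor:equality3}), giving $\map_\sigma^*(\map(\rho))=\rho$. For \ref{item:preservation of Tp}: if $\map$ is completely positive and reversible, then $\map^{\otimes n}$ is reversed by $\mapp^{\otimes n}$ and the $T_p$ are preserved by contractivity of the trace norm; conversely, preservation of $T_p$ for all $n$ and all $p\in(0,1)$ transfers, via the quantum Chernoff and Hoeffding limits \eqref{Chernoff limit}--\eqref{Hoeffding limit} and the inversion formula \eqref{Hoeffding inversion}, into preservation of the Chernoff distance and hence of a Hoeffding distance at a usable $r$, so Theorem \ref{thm:equality2} again delivers \ref{item:ec special inverse2}. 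The main obstacle I anticipate is twofold: ensuring that the single reference decomposition of Lemma \ref{lemma:decomposition} is simultaneously compatible with the entire family $\C$, so that one canonical $\mapp$ recovers all states at once; and the $T_p$ step, where one must route through the asymptotic theorems and carefully manage the parametrization $p=e^{-na_r}/(1+e^{-na_r})$ together with the degenerate cases in which the relevant exponent function is affine.
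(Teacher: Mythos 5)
Most of your architecture coincides with the paper's: the chain \ref{item:ec existence of inverse}$\imp$\ref{item:ec equality1}$\imp$\ref{item:ec equality2}$\imp$\ref{item:ec primitives}$\imp$\ref{item:ec special inverse2} via Theorem \ref{thm:equality} with the uniform bound $d^2$ on the joint spectra, the completion of $\map_\sigma^*$ to a stochastic $\mapp$, Lemma \ref{lemma:decomposition} for \ref{item:decomposition}, and the necessity of \ref{item:preservation of Tp} for completely positive $\map$ via monotonicity of $T_p$ under $\mapp^{\otimes n}\circ\map^{\otimes n}$. The genuine gap is in your treatment of \ref{item:Hoeffding preservation2} (and hence also of the sufficiency of \ref{item:preservation of Tp}). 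You propose to apply \ref{item:Hoeffding pres} of Theorem \ref{thm:equality2} at a fixed small $r$, which requires the interval $\bz-\psif{\map(\rho)}{\map(\sigma)}{1},\,-\psif{\map(\rho)}{\map(\sigma)}{0}-\psi'(0|\map(\rho)\|\map(\sigma))\jz$ to be nonempty, i.e.\ $\alpha\mapsto\psif{\map(\rho)}{\map(\sigma)}{\alpha}$ to be non-affine on $[0,1]$, and you claim the affine case can be ``disposed of as in Corollary \ref{cor:equality3}''. That corollary's argument only works when $\supp\rho=\supp\sigma$ and $\Tr\rho=\Tr\sigma$, where the degenerate case is $\psi$ \emph{constant} and H\"older's equality condition forces $\map(\rho)$ to be proportional to $\map(\sigma)$. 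Here $\supp\rho\le\supp\sigma$ may be strict, and $\psi$ affine does not imply proportionality: take $\map=\id$ and commuting $\rho,\sigma$ with $\sigma=\frac12\rho+\frac12\tau$, $\tau\perp\rho$; then \ref{item:Hoeffding preservation2} holds trivially, $\psif{\map(\rho)}{\map(\sigma)}{\alpha}=(\alpha-1)\log 2$ is affine and non-constant, the admissible interval is empty, and $\map(\rho)$ is not proportional to $\map(\sigma)$ --- so your argument simply cannot conclude anything for such pairs. The paper avoids the case split entirely: by \eqref{0 Hoeffding} and \eqref{0 Hoeffding2}, $\lim_{r\searrow 0}\hdist{\rho}{\sigma}{r}=\sr{\rho}{\sigma}$, so preservation of the Hoeffding distances on $(0,\delta)$ gives preservation of the relative entropy, i.e.\ of the $f$-divergence for $f(x)=x\log x$, whose representing measure has support $(0,+\infty)$; this yields \ref{item:ec equality2} and re-enters the already established chain. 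The same $r\searrow 0$ reduction is how the paper handles \ref{item:preservation of Tp}$\imp$\ref{item:Hoeffding preservation2}; your detour through ``preservation of the Chernoff distance'' is additionally problematic because, by Example \ref{ex:Chernoff}, Chernoff preservation alone does not imply reversibility when supports differ.

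A second, smaller defect is one of ordering: in your closing step you assert that \ref{item:ec special inverse2} --- which only provides \emph{some} $\sigma\in\S(\A_1)$ --- implies that every $\rho\in\C$ is fixed by $\map_{\sigma_0}^*\circ\map$ for your chosen reference $\sigma_0\in\co\C$. This transfer is precisely the content of \ref{item:ec special inverse2}$\imp$\ref{item:ec special inverse} and is not immediate. The paper proves it by first completing $\map_\sigma^*$ (for the $\sigma$ given in \ref{item:ec special inverse2}, noting that this condition forces $\supp\rho\le\supp\sigma$) to a stochastic $\mapp$, thereby obtaining \ref{item:ec existence of inverse}, and only then applying Theorem \ref{thm:equality} to the pairs $(\rho,\sigma_0)$ to get \ref{item:ec special inverse}, from which Lemma \ref{lemma:decomposition} gives \ref{item:decomposition}. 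Your argument is repaired by the same reordering; as written, it assumes what still needs to be shown.
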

\begin{proof}
The implications \ref{item:ec existence of inverse}$\imp$\ref{item:ec equality1}$\imp$
\ref{item:ec equality2}$\imp$\ref{item:ec primitives}$\imp$\ref{item:ec special inverse2} follow immediately from
Theorem \ref{thm:equality} under the condition that $\map$ is substochastic (note that in the implication \ref{item:ec equality2}$\imp$\ref{item:ec primitives}, $T$ 
can be chosen to be $\supp\mu_f$, and hence it is independent of the pair $(\rho,\sigma)$).
If \ref{item:ec special inverse2} holds then
$\rho=\map_\sigma^*(\map(\rho))=
\sigma^{1/2}\map^*\bz\map(\sigma)^{-1/2}\map(\rho)\map(\sigma)^{-1/2}\jz\sigma^{1/2}$ implies that 
$\supp\rho\le\supp\sigma$ for every $\rho\in\co\C$, and hence $\map_\sigma^*$ can be completed to a map $\mapp$ as required in \ref{item:ec existence of inverse} the same way as in the proof of \ref{item:special inverse}$\imp$\ref{item:existence of inverse} in Theorem \ref{thm:equality}.
This proves \ref{item:ec special inverse2}$\imp$\ref{item:ec existence of inverse}.
Assume that \ref{item:ec existence of inverse} holds.
Fixing any $\rho\in\co\C$ and $\sigma\in\co\C$ such that $\supp\sigma=\supp\C$, 
we have $\mapp(\map(\rho))=\rho$ and $\mapp(\map(\sigma))=\sigma$, and 
Theorem \ref{thm:equality} yields \eqref{special inverse map} for this pair $(\rho,\sigma)$, 
proving \ref{item:ec existence of inverse}$\imp$\ref{item:ec special inverse}. 
The implication \ref{item:ec special inverse}$\imp$\ref{item:ec special inverse2} is 
obvious. 

The implication \ref{item:ec existence of inverse}$\imp$\ref{item:Hoeffding preservation} follows by Proposition \ref{prop:ch monotonicity}, 
and the implication \ref{item:Hoeffding preservation}$\imp$\ref{item:Hoeffding 
preservation2} is obvious. Assume now that \ref{item:Hoeffding preservation2} holds.
Then, by \eqref{0 Hoeffding} and \eqref{0 Hoeffding2}, we have $\sr{\map(A)}{\map(B)}=\sr{A}{B}$, i.e., the equality holds for the standard relative entropy, which is the $f$-divergence corresponding to $f(x)=x\log x$. Since the support of the representing measure for $x\log x$ is $(0,+\infty)$, this yields \ref{item:ec equality2}. The implication 
\ref{item:preservation of Tp}$\imp$\ref{item:Hoeffding preservation2} follows from \eqref{Hoeffding limit}.
Assume that $\map$ is completely positive and \ref{item:ec existence of inverse} holds. Then we can assume $\mapp$ to 
be completely positive, and hence $\map^{\otimes n}$ and $\mapp^{\otimes n}$ are positive and trace-preserving for 
every $n\in\N$. Thus, by the monotonicity of the measures $T_p$,
$\tdist{\rho^{\otimes n}}{\sigma^{\otimes n}}{p}
=\tdist{\mapp^{\otimes n}(\map^{\otimes n}(\rho^{\otimes n}))}{\mapp^{\otimes n}(\map^{\otimes n}(\sigma^{\otimes n}))}{p}\le
\tdist{\map^{\otimes n}(\rho^{\otimes n})}{\map^{\otimes n}(\sigma^{\otimes n})}{p}
\le
\tdist{\rho^{\otimes n}}{\sigma^{\otimes n}}{p}$, and hence
\ref{item:preservation of Tp} holds.

Finally, 
\ref{item:ec special inverse}$\imp$\ref{item:decomposition} follows due to Lemma \ref{lemma:decomposition},
and \ref{item:decomposition}$\imp$\ref{item:ec special inverse} is a matter of straightforward computation.
\end{proof}

Briefly, the above theorem tells that if the noise doesn't decrease some suitable measure of the pairwise 
distinguishability on a set of states then its action can be reversed on that set with some other quantum operation; 
moreover, the reversion operation can be constructed by using the noise operation and any state with maximal support. 
There are apparent differences between the conditions given above; indeed, \ref{item:ec equality2} tells that the 
preservation of one single $f$-divergence is sufficient, while \ref{item:ec primitives} requires the preservation of 
sufficiently (but finitely) many $f$-divergences, \ref{item:Hoeffding preservation} requires the preservation of a 
continuum number of measures, and \ref{item:preservation of Tp} requires even more. The equivalence between 
\ref{item:ec equality2} and \ref{item:ec primitives} is easy to understand; as we have seen in the proof of Theorem 
\ref{thm:equality}, as far as monotonicity and equality in the monotonicity are considered, any $f$-divergence with 
an operator convex $f$ which is not a polynomial is equivalent to the collection of $\vfi_t$-divergences with $t\in\supp\mu_f$, and the condition on the 
cardinality of $\supp\mu_f$ is imposed so that any function on the joint spectrum of the relative modular operators 
can be decomposed as a linear combination of $\vfi_t$'s, which in turn is used to construct the inversion map 
$\map_\sigma^*$. 
It is an open question how much the condition on the cardinality of $\supp f$ can be improved; cf.~Remark \ref{rem:support}.

Note that \ref{item:ec equality2} tells in particular that the preservation of the pairwise 
R\'enyi relative entropies 
for one single parameter value $\alpha\in(0,2)$ is sufficient for reversibility. This is in 
contrast with 
\ref{item:Hoeffding preservation2}, where the preservation of continuum many Hoeffding 
distances are required, despite
the symmetry suggested by \eqref{Hoeffding def} and \eqref{Hoeffding inversion}. 
On the other hand, we have the following:

\begin{prop}\label{prop:Hoeffding preservation}
In the setting of Theorem \ref{thm:error correction}, assume that 
there exists a $\C_0\subset\S(\A_1)$ such that $\co\C_0=\co\C$, and a $\sigma\in\S(\A_1)$ such that $\supp\sigma\ge\supp\C$, and the following hold:
\begin{equation*}
0<m:=\inf_{\rho\in\C_0}\{-\psif{\map(\rho)}{\map(\sigma)}{0}
-\psi'(0|\map(\rho)\|\map(\sigma))\}
\end{equation*}
and for some $r\in(0,m)$,
\begin{equation*}
\hdist{\map(\rho)}{\map(\sigma)}{r}=\hdist{\rho}{\sigma}{r},\ds\ds\ds\rho\in\C_0.
\end{equation*}
Then $\map_\sigma^*(\map(\rho))=\rho$ for every $\rho\in\co\C$.
\end{prop}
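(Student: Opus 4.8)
The plan is to apply the Hoeffding-distance criterion \ref{item:Hoeffding pres} of Theorem \ref{thm:equality2} separately to each pair $(\rho,\sigma)$ with $\rho\in\C_0$, thereby obtaining $\map_\sigma^*(\map(\rho))=\rho$ on the generating set $\C_0$, and then to propagate this identity to all of $\co\C=\co\C_0$ by a linearity-and-continuity argument. The only substantive point is to verify that the single fixed $r$ provided by the hypothesis falls inside the admissibility interval of condition \ref{item:Hoeffding pres} \emph{simultaneously} for every $\rho\in\C_0$.

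First I would record the standing hypotheses of Theorem \ref{thm:equality2} for a fixed $\rho\in\C_0$. Since $\C_0\subset\co\C_0=\co\C$, every such $\rho$ satisfies $\supp\rho\le\supp\C\le\supp\sigma$, and $\Tr\map(\sigma)=\Tr\sigma=1$ because $\map$ is trace-preserving. To place $r$ inside the interval $(-\psif{\map(\rho)}{\map(\sigma)}{1},-\psif{\map(\rho)}{\map(\sigma)}{0}-\psi'(0|\map(\rho)\|\map(\sigma)))$, I would argue as follows. From $\supp\rho\le\supp\sigma$ there is a constant $c>0$ with $\rho\le c\sigma$, whence $\map(\rho)\le c\map(\sigma)$ by positivity of $\map$, and therefore $\map(\rho)^0\le\map(\sigma)^0$. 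Consequently $\map(\rho)\map(\sigma)^0=\map(\rho)$, so $\psif{\map(\rho)}{\map(\sigma)}{1}=\log\Tr\map(\rho)\map(\sigma)^0=\log\Tr\map(\rho)=0$, and the lower endpoint of the interval equals $0<r$. For the upper endpoint, the definition of $m$ as an infimum over $\C_0$ gives $-\psif{\map(\rho)}{\map(\sigma)}{0}-\psi'(0|\map(\rho)\|\map(\sigma))\ge m>r$. Hence $r\in(0,m)$ lies strictly inside the interval for every $\rho\in\C_0$ at once; this uniformity is precisely what the hypothesis $m>0$ buys, and is the reason a single $r$ can serve the whole family. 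The assumed equality $\hdist{\map(\rho)}{\map(\sigma)}{r}=\hdist{\rho}{\sigma}{r}$ then lets Theorem \ref{thm:equality2} conclude $\map_\sigma^*(\map(\rho))=\rho$ for each $\rho\in\C_0$.

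Finally I would pass to the closed convex hull. The map $g\colon\A_1\to\A_1$ defined by $g(X):=\map_\sigma^*(\map(X))-X$ is linear, and continuous since the algebras are finite-dimensional. By the previous step $g$ vanishes on $\C_0$, hence on every finite convex combination of elements of $\C_0$ by linearity, and hence on the closure $\co\C_0=\co\C$ by continuity. This yields $\map_\sigma^*(\map(\rho))=\rho$ for every $\rho\in\co\C$, as claimed. I expect the only genuine obstacle to be the interval bookkeeping of the middle paragraph, and within it the collapse of the lower endpoint to $0$, which rests on the support domination $\map(\rho)^0\le\map(\sigma)^0$; once that is secured, the statement reduces formally to Theorem \ref{thm:equality2} together with a routine linearity argument.
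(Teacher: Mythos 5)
Your proof is correct and takes exactly the route the paper does: the paper's own proof is the single line ``Immediate from Theorem \ref{thm:equality2},'' and your argument simply fills in what that glosses over. In particular, your two substantive verifications --- that the lower endpoint $-\psif{\map(\rho)}{\map(\sigma)}{1}$ collapses to $0$ via the support domination $\map(\rho)^0\le\map(\sigma)^0$ (so the single $r\in(0,m)$ lies in the admissibility interval of \ref{item:Hoeffding pres} of Theorem \ref{thm:equality2} uniformly over $\C_0$), and the linearity-plus-continuity extension of $\map_\sigma^*(\map(\rho))=\rho$ from $\C_0$ to $\co\C_0=\co\C$ --- are precisely the routine steps the paper leaves implicit.
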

\begin{proof}
Immediate from Theorem \ref{thm:equality2}.
\end{proof}

Finally, if all the states in $\C$ have the same support then some of the conditions in 
Theorem \ref{thm:error correction} and Proposition \ref{prop:Hoeffding preservation} can be 
simplified, and we can give a simple condition in terms of preservation of the Chernoff 
distance:

\begin{prop}
Let $\map:\,\A_1\to\A_2$ be a trace-preserving $2$-positive map and let $\C\subset\S(\A_1)$ 
be a non-empty set of states such that $\supp\rho=\supp\C$ for every $\rho\in\C$. Assume 
that there exists a $\sigma\in\S(\A_1)$ such that $\supp\sigma=\supp\C$ and one of the 
following holds:
\begin{enumerate}
\item\label{item:Tp preservation}
There exists a $p\in(0,1)$ such that
\begin{equation}\label{preservation of Tp2}
\tdist{\map^{\otimes n}(\rho^{\otimes n})}{\map^{\otimes n}(\sigma^{\otimes n})}{p}=
\tdist{\rho^{\otimes n}}{\sigma^{\otimes n}}{p},\ds\ds\ds \rho\in\C,\ds n\in\bN.
\end{equation}

\item\label{item:Chernoff preservation}
For every $\rho\in\C$,
\begin{equation*}
\chdist{\map(\rho)}{\map(\sigma)}=\chdist{\rho}{\sigma}.
\end{equation*}

\item\label{item:Hoeffding pres3}
There exists a $\C_0$ such that $\co\C_0=\co\C$ and an $r\in(0,\inf_{\rho\in\C_0}\sr{\map(\sigma)}{\map(\rho}))$ such that for every $\rho\in\C_0$,
\begin{equation}\label{Hoeffding pres3}
\hdist{\map(\rho)}{\map(\sigma)}{r}=\hdist{\rho}{\sigma}{r}.
\end{equation}
\end{enumerate}
Then 
\begin{equation}\label{reversion}
\map_\sigma^*(\map(\rho))=\rho,\ds\ds\ds \rho\in\co\C.
\end{equation}
\end{prop}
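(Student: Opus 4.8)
The plan is to treat the three hypotheses separately, in each case reducing to a reversibility result already at hand --- Corollary~\ref{cor:equality3}, Theorem~\ref{thm:equality2} or Proposition~\ref{prop:Hoeffding preservation} --- and then, where necessary, to propagate the pointwise identity $\map_\sigma^*(\map(\rho))=\rho$ to all of $\co\C$ by linearity and continuity. Throughout I would use that a trace-preserving $2$-positive $\map$ is stochastic: its adjoint is unital and $2$-positive, hence a Schwarz contraction, so $\Tr\map(\sigma)=\Tr\sigma$ and all standing assumptions of the cited results hold. I would also use that $\map^{\otimes n}(\rho^{\otimes n})=\map(\rho)^{\otimes n}$.

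First I would reduce \ref{item:Tp preservation} to \ref{item:Chernoff preservation}. Dividing \eqref{preservation of Tp2} by $n$ and letting $n\to\infty$, the quantum Chernoff bound \eqref{Chernoff limit} converts both sides into Chernoff distances, yielding $\chdist{\map(\rho)}{\map(\sigma)}=\chdist{\rho}{\sigma}$ for every $\rho\in\C$, which is precisely hypothesis \ref{item:Chernoff preservation}. Under \ref{item:Chernoff preservation} I would then fix $\rho\in\C$ and apply Corollary~\ref{cor:equality3} to the pair $(\rho,\sigma)$: here $\supp\rho=\supp\C=\supp\sigma$ and $\Tr\rho=\Tr\sigma=1$, so the hypotheses $\supp A=\supp B$, $\Tr A=\Tr B$ of that corollary are in force, and the preserved Chernoff distance gives $\map_\sigma^*(\map(\rho))=\rho$. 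Since the map $\map_\sigma^*\circ\map-\id$ is linear and continuous on $\A_1$ and vanishes on $\C$, it vanishes on the linear span of $\C$ and, by continuity, on $\co\C$; this yields \eqref{reversion}.

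For \ref{item:Hoeffding pres3} I would invoke Proposition~\ref{prop:Hoeffding preservation} with the given $\C_0$ and $\sigma$. The one point needing care is matching the admissible range of $r$: Proposition~\ref{prop:Hoeffding preservation} requires $r\in(0,m)$ with $m=\inf_{\rho\in\C_0}\{-\psif{\map(\rho)}{\map(\sigma)}{0}-\psi'(0|\map(\rho)\|\map(\sigma))\}$, whereas \ref{item:Hoeffding pres3} provides $r\in(0,\inf_{\rho\in\C_0}\sr{\map(\sigma)}{\map(\rho)})$. Because the relevant states satisfy $\supp\rho=\supp\C=\supp\sigma$, Lemma~\ref{lemma:supports} gives $\map(\rho)^0=\map(\sigma)^0$, whence $\rsr{\map(\rho)}{\map(\sigma)}{0}=0$ and, by the identities recorded in Remark~\ref{rem:Hoeffding properties}, $-\psif{\map(\rho)}{\map(\sigma)}{0}-\psi'(0|\map(\rho)\|\map(\sigma))=\sr{\map(\sigma)}{\map(\rho)}$. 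Thus $m=\inf_{\rho\in\C_0}\sr{\map(\sigma)}{\map(\rho)}$ and the two ranges for $r$ coincide, so Proposition~\ref{prop:Hoeffding preservation} delivers $\map_\sigma^*(\map(\rho))=\rho$ for every $\rho\in\co\C$.

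I expect this support bookkeeping --- guaranteeing $\map(\rho)^0\ge\map(\sigma)^0$, so that the endpoint of the Hoeffding range admits its relative-entropy expression and the quantity $m$ of Proposition~\ref{prop:Hoeffding preservation} really equals the infimum appearing in \ref{item:Hoeffding pres3} --- to be the only genuinely delicate step. The other two cases are essentially direct citations, the case \ref{item:Tp preservation} requiring in addition only the asymptotic Chernoff identity and the case \ref{item:Chernoff preservation} only the linearity-continuity extension from $\C$ to $\co\C$.
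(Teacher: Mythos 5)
Your proof is correct and follows essentially the same route as the paper: case (i) reduces to case (ii) via the Chernoff-bound limit \eqref{Chernoff limit}, case (ii) yields reversibility through Corollary \ref{cor:equality3}, and case (iii) reduces to Proposition \ref{prop:Hoeffding preservation} by using the equal supports to identify $-\psif{\map(\rho)}{\map(\sigma)}{0}-\psi'(0|\map(\rho)\|\map(\sigma))$ with $\sr{\map(\sigma)}{\map(\rho)}$. Your two added details---the linearity-and-continuity extension of $\map_\sigma^*\circ\map=\id$ from $\C$ to $\co\C$ in case (ii), and the explicit support bookkeeping via Lemma \ref{lemma:supports}---are steps the paper leaves implicit, and they are handled correctly.
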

\begin{proof}
The implication \ref{item:Tp preservation}$\imp$\ref{item:Chernoff preservation} is immediate from \eqref{Chernoff limit}, and 
\ref{item:Chernoff preservation} implies \eqref{reversion} due to Corollary \ref{cor:equality3}. Assume now that \ref{item:Hoeffding pres3} holds. Since $\supp\rho=\supp\sigma,\,\rho\in\C_0$, we have
$\psif{\map(\rho)}{\map(\sigma)}{0}=0$ and 
$-\psi'(0|\map(\rho)\|\map(\sigma))=\sr{\map(\sigma)}{\map(\rho)},\,\rho\in\C_0$. Hence, \eqref{Hoeffding pres3} yields 
\eqref{reversion} due to Proposition \ref{prop:Hoeffding preservation}.
\end{proof}
\medskip

Note that the conditions \eqref{preservation of Tp} and \eqref{preservation of Tp2} are 
very different from the others, as they require the preservation of some measure for
arbitrary tensor powers. These conditions could be 
simplified if the trace-norm distance could be 
represented as an $f$-divergence. Note that this is possible in the classical case; indeed, 
if $p$ and $q$ are 
probability density functions on some finite set $\X$, and $f(x):=|x-1|,\,x\in\R$, then 
\begin{equation*}
\rsr{p}{q}{f}=\sum_{x\in\X} q(x)|p(x)/q(x)-1|=\sum_{x\in\X}|p(x)-q(x)|=\norm{p-q}_1.
\end{equation*}
Note, however, that the above $f$ is not operator convex, and hence the proof given in Theorem \ref{thm:equality} 
wouldn't work for it. Even worse, the trace-norm distance cannot be represented as an $f$-divergence, as we show below by a simple argument.

\begin{cor}\label{cor:trace norm}
If the observable algebra of a quantum system is non-commutative then the trace-norm distance on its state space cannot be represented as an $f$-divergence.
\end{cor}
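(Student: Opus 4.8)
The plan is to argue by contradiction, exploiting the fact that an $f$-divergence is completely determined by its values on classical (i.e.\ commuting) states. So suppose that on the state space $\S(\A)$ of a non-commutative finite-dimensional $C^*$-algebra $\A$ the trace-norm distance $D(\rho,\sigma):=\norm{\rho-\sigma}_1$ admits a representation $D=\rsr{\cdot}{\cdot}{f}$ for some $f$ as in Definition~\ref{def:f-div}.

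First I would pin down $f$. As recorded in the display preceding the statement, for classical probability distributions $p,q$ one has $\rsr{p}{q}{g}=\norm{p-q}_1$ for the choice $g(x):=\abs{x-1}$; note $g$ is continuous on $[0,+\infty)$ and $\of=\lim_{x\to+\infty}\abs{x-1}/x=1$ exists, so $g$ is a legitimate representing function and $S_g$ is a well-defined $f$-divergence. Since $D=S_f$ and $S_g$ both agree with the trace-norm distance, hence with each other, on all classical distributions, Corollary~\ref{cor:uniqueness} forces $S_f=S_g$ on all of $\S(\A)$. Thus it suffices to exhibit a single pair of states on which $S_g$ differs from the trace-norm distance.

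For the counterexample I would use non-commutativity to place two pure states in a two-dimensional subspace. Since $\A$ is non-commutative it contains a block $\B(\kil)$ with $\dim\kil\ge 2$; choosing unit vectors $\psi,\phi$ spanning a $2$-dimensional subspace and setting $\rho:=\pr{\psi}$, $\sigma:=\pr{\phi}$ gives density operators in $\A$. By the isometry invariance in \ref{item:isometry invariance} of Corollary~\ref{cor:scaling} (and the corresponding invariance of the trace norm), both quantities may be computed inside $\B(\iC^2)$, where they depend only on the overlap $c:=\abs{\inner{\psi}{\phi}}^2$. A standard computation gives $\norm{\rho-\sigma}_1=2\sqrt{1-c}$, while formula~\eqref{f-div explicit expression3} applied with $g(x)=\abs{x-1}$ (so that $b\,g(a/b)=\abs{a-b}$ and $\of=1$), summed over the eigenvalues $\{0,1\}$ of $\rho$ and of $\sigma$, yields $\rsr{\rho}{\sigma}{g}=2(1-c)$. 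For any $c\in(0,1)$, i.e.\ for any distinct non-orthogonal pure pair, we have $2(1-c)\ne 2\sqrt{1-c}$, contradicting $D=S_g$.

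The only delicate point is the evaluation of $\rsr{\rho}{\sigma}{g}$: because distinct pure states satisfy $\supp\rho\nleq\supp\sigma$, the term $\of\,\Tr\rho(I-\sigma^0)$ in \eqref{f-div explicit expression3} is genuinely present and must be tracked (it supplies exactly the $(1-c)$ coming from $a=1,\,b=0$). Everything else is a short calculation, and the reduction to $\B(\iC^2)$ via isometry invariance is precisely what makes non-commutativity the essential hypothesis, since in the commutative case the two quantities coincide by the classical identity above.
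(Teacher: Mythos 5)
Your proposal follows the same two-step template as the paper's proof---extract information about $f$ from classical (commuting) states inside $\S(\A)$, then contradict the representation on a pair of non-orthogonal pure states---and your computations are correct: $\norm{\rho-\sigma}_1=2\sqrt{1-c}$ while $\rsr{\rho}{\sigma}{g}=2(1-c)$ for $g(x)=|x-1|$, with the singular term $\omega(g)\Tr\rho(I-\sigma^0)$ properly tracked. In fact the paper's proof is exactly this computation at $c=1/2$ (it takes $\rho=\pr{e_1}$, $\sigma=\pr{\psi}$ with $\psi=(e_1+e_2)/\sqrt{2}$). Where you differ is in the first step: the paper never invokes Corollary \ref{cor:uniqueness}. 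It uses the diagonal states $\rho=s\pr{e_1}+(1-s)\pr{e_2}$, $\sigma=t\pr{e_1}+(1-t)\pr{e_2}$ to extract only the scalar constraints $f(1)=0$, $\omega(f)<+\infty$ and $\omega(f)+f(0)=2$ (by setting $s=t$, resp.\ letting $t\searrow0$, in $tf(s/t)+(1-t)f((1-s)/(1-t))=2|s-t|$), and these are precisely the three values of $f$ that enter \eqref{f-div explicit expression3} for a pair of pure states. Your route is instead the one the paper itself uses in the fidelity example following Corollary \ref{cor:uniqueness}, and for the present statement it needs more care.

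The step where you invoke Corollary \ref{cor:uniqueness} is a genuine gap. Your hypothesis is only that $S_f$ equals the trace-norm distance on $\S(\A)$, so you know $S_f=S_g$ only on classical pairs realizable as commuting density operators \emph{inside} $\A$, not ``on all classical distributions'' as you assert. This matters for the smallest case the corollary must cover, $\A=\B(\iC^2)$: there every commutative subalgebra is two-dimensional, so you control only binary distributions, whereas Corollary \ref{cor:uniqueness} is proved via Lemma \ref{lemma:NSz}, and the Nussbaum--Szko\l a pair associated with two distinct non-orthogonal pure states is supported on three points of $\spect(\rho)\times\spect(\sigma)$---such a pair cannot be realized in $\S(\B(\iC^2))$. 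Two repairs are available: (i) follow the paper's route above, which uses nothing beyond binary diagonal states; or (ii) keep your route but add the observation that agreement of $S_f$ and $S_g$ on all binary classical pairs forces $h:=f-g$ to satisfy $th(s/t)+(1-t)h\bz(1-s)/(1-t)\jz=0$ for all $s\in[0,1]$, $t\in(0,1)$, which yields $h(x)=(1-x)h(0)$ for all $x\ge0$; since a difference proportional to $1-x$ contributes $h(0)(\Tr\sigma-\Tr\rho)=0$ whenever $\rho,\sigma$ are density operators, one still gets $S_f=S_g$ on all of $\S(\A)$. With either repair your argument is complete.
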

\begin{proof}
Assume that $\A\subset\B(\hil)$ is non-commutative; then we can find orthonormal vectors 
$e_1,e_2\in\hil$ such that $\diad{e_i}{e_j}\in\A,\,i=1,2$. 
(For simplicity, we neglect possible higher multiplicities; taking them into account would only result in a constant multiplication factor in the formulas below.)
Assume that the trace-norm
distance can be represented as an $f$-divergence. Then, for every $s\in[0,1]$ and 
$t\in(0,1)$, when $\rho:=s|e_1\>\<e_1|+(1-s)|e_2\>\<e_2|$ and $\sigma:=t|e_1\>\<e_1|+(1-t)|e_2\>\<e_2|$,
we have
\begin{equation*}
tf(s/t)+(1-t)f((1-s)/(1-t))=\rsr{\rho}{\sigma}{f}=\norm{\rho-\sigma}_1=2|s-t|.
\end{equation*}
Letting $s=t$ gives $f(1)=0$. Letting $t\searrow0$ gives $s\omega(f)+f(1-s)=2s$ for all
$s\in(0,1]$. This implies that $\omega(f)$ is finite and $\omega(f)+f(0)=2$.
Now let $\rho:=\pr{e_1}$ and $\sigma:=\pr{\psi}$, where $\psi:=(e_1+e_2)/\sqrt{2}$. Then
$\|\rho-\sigma\|_1=\sqrt2$, while by (2.6) one can easily compute
\begin{equation*}
\rsr{\rho}{\sigma}{f}=\half f(1)+\half\omega(f)+\half f(0)
=\half(\omega(f)+f(0))=1.\qedhere
\end{equation*}
\end{proof}

\begin{rem}
A similar argument as above can be used to show that for any $p\in(0,1)$, the measure 
$D_p(\rho\|\sigma):=1-\norm{p\rho-(1-p)\sigma}_1$ cannot be represented as an $f$-divergence
on the state space of any non-commutative finite-dimensional $C^*$-algebra.
\end{rem}

\begin{rem}
In general, a function on pairs of classical probability distributions might have several different extensions to 
quantum states. A function that can be represented as an $f$-divergence has an extension given by the corresponding 
quantum $f$-divergence. It is not clear whether this extension has any operational significance in the case of 
$f(x):=|x-1|$.
\end{rem}

While the impossibility to represent the trace-norm distance as an $f$-divergence shows that the approach followed in Theorem \ref{thm:error correction} cannot be used to simplify the condition in \ref{item:preservation of Tp} of the theorem, other approaches might lead to better results. Indeed, the results of the recent paper \cite{BNPV} can be reformulated in the following way:
\begin{thm}\label{thm:BNPV}
Let $\C\subset\S(\A_1)$ be a convex set of states and let $\map:\,\A_1\to\A_2$ be a completely positive trace-preserving map such that
\begin{equation*}
\tdist{\map(\rho)}{\map(\sigma)}{p}=\tdist{\rho}{\sigma}{p},\ds\ds\ds p\in(0,1).
\end{equation*}
Then the fixed-point set of $\map_P^*\circ\map$ is a $C^*$-subalgebra of $P\A_1P$, where $P$ is the projection onto $\supp\C$, and the trace-preserving conditional expectation $\P$ from $P\A_1P$ onto $\fix{\map_P^*\circ\map}$ is $T_p$-preserving for all $p\in(0,1)$. If, moreover, the restriction of $\P$ onto $\C$ is surjective onto 
the state space of $\fix{\map_P^*\circ\map}$ then (i)--(x) of Theorem \ref{thm:error correction} hold.
\end{thm}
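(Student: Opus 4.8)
The plan is to reduce Theorem~\ref{thm:BNPV} to the already-established Theorem~\ref{thm:error correction} by identifying the relevant reversibility structure. I would begin by invoking the results of \cite{BNPV} directly: the hypothesis that $\tdist{\map(\rho)}{\map(\sigma)}{p}=\tdist{\rho}{\sigma}{p}$ for all $p\in(0,1)$ and all $\rho,\sigma\in\C$ is precisely the single-copy trace-norm preservation condition analyzed there. First I would recall that $\map_P^*\circ\map$ is a $2$-positive (indeed completely positive) trace non-increasing map on $P\A_1P$ that is unital (since $\Tr\map$ preserves traces on the support $P$, Lemma~\ref{lemma:support inequality} gives $\map^*(\map(P)^0)\ge P$ after restriction), and hence by Corollary~\ref{cor:condexp} its fixed-point set $\fix{\map_P^*\circ\map}$ is a $C^*$-subalgebra of $P\A_1P$ and the mean ergodic average $\P:=\lim_{n\to\infty}\frac1n\sum_{k=1}^n(\map_P^*\circ\map)^k$ is a trace-preserving conditional expectation onto it. That $\P$ is $T_p$-preserving would follow from the contractivity of $\map_P^*\circ\map$ under the trace-norm together with the input hypothesis, propagated through the Cesàro average.

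The heart of the argument is the final conditional: assuming that the restriction $\P|_\C$ is surjective onto the state space of $\fix{\map_P^*\circ\map}$, I would verify condition \ref{item:ec special inverse} (or \ref{item:ec special inverse2}) of Theorem~\ref{thm:error correction}, namely that $\map_\sigma^*(\map(\rho))=\rho$ for every $\rho\in\co\C$ and suitable $\sigma$ with $\supp\sigma=\supp\C$. The key observation is that $\rho\in\fix{\map_P^*\circ\map}$ exactly when $\map_P^*(\map(\rho))=\rho$, so surjectivity of $\P|_\C$ combined with the conditional-expectation structure should force every state in $\C$ (hence in $\co\C$) to lie in the fixed-point algebra. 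Concretely, I would choose $\sigma\in\co\C$ with $\supp\sigma=\supp\C$, observe that $\sigma$ is a fixed point of $\map_P^*\circ\map$ (so $\map_\sigma^*(\map(\sigma))=\sigma$), and then use the surjectivity to deduce that each $\rho\in\C$ is fixed as well, giving $\map_\sigma^*(\map(\rho))=\rho$. This is precisely \ref{item:ec special inverse2}, from which the full equivalence chain \ref{item:ec existence of inverse}--\ref{item:ec special inverse2} of Theorem~\ref{thm:error correction} follows.

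The main obstacle I anticipate is bridging the gap between the \emph{trace-norm}-based reversibility supplied by \cite{BNPV} and the \emph{$f$-divergence}-based reversibility of Theorem~\ref{thm:error correction}, since Corollary~\ref{cor:trace norm} shows the trace-norm distance is genuinely not an $f$-divergence, so the two frameworks cannot be identified termwise. The resolution is that both frameworks produce the \emph{same} canonical reversal map $\map_\sigma^*$ and the \emph{same} fixed-point algebra $\fix{\map_P^*\circ\map}$; thus the role of \cite{BNPV} is only to establish the algebraic fixed-point structure and the surjectivity hypothesis, after which I switch entirely to the $f$-divergence machinery to conclude reversibility. A secondary technical point is ensuring that $P\A_1P$, rather than all of $\A_1$, is the correct ambient algebra throughout—this requires carefully tracking that $\supp\rho\le\supp\C=\ran P$ for all $\rho\in\co\C$, which is automatic from the definition of $\supp\C$, and that $\map_P^*$ restricted to $P\A_1P$ coincides with the map appearing in Theorem~\ref{thm:error correction} when $\sigma$ has support exactly $P$.
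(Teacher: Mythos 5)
The first thing to note is that the paper contains no proof of Theorem \ref{thm:BNPV} at all: it is stated explicitly as a \emph{reformulation of the results of} \cite{BNPV}, and the surrounding discussion (Corollary \ref{cor:trace norm} and the remarks around it) explains why it cannot be derived from the paper's own machinery --- the trace-norm distances underlying $T_p$ are not $f$-divergences, so the hypothesis $\tdist{\map(\rho)}{\map(\sigma)}{p}=\tdist{\rho}{\sigma}{p}$ verifies none of the conditions of Theorem \ref{thm:equality} or \ref{thm:error correction} directly. Your opening step is sound and is indeed available inside the paper: $\gamma:=\map_P^*\circ\map$ is completely positive, unital on $P\A_1P$ (via Lemma \ref{lemma:support inequality}), and also trace-preserving on $P\A_1P$ --- a point you should verify explicitly, since Corollary \ref{cor:condexp} requires a faithful \emph{invariant} functional, not merely unitality plus trace non-increasingness; with $\alpha=\Tr$ this gives the $C^*$-algebra structure of $\fix{\gamma}$ and the trace-preserving conditional expectation $\P=\gamma_\infty$. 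But note that this much holds for an arbitrary CPTP map $\map$ and uses the $T_p$-hypothesis nowhere, so it cannot be ``the role of \cite{BNPV}''.

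The two remaining claims are exactly where the content of the theorem lies, and both of your arguments for them have genuine gaps. First, the $T_p$-preservation of $\P$ does not follow from ``contractivity propagated through the Ces\`aro average'': since $\map_P^*$ is positive and trace non-increasing, monotonicity yields only the one-sided bound $\tdist{\gamma(\rho)}{\gamma(\sigma)}{p}\le\tdist{\map(\rho)}{\map(\sigma)}{p}=\tdist{\rho}{\sigma}{p}$, and even if equality held for $\gamma$, iterating would require the preservation hypothesis for the pairs $\bz\gamma^k(\rho),\gamma^k(\sigma)\jz$, which lie outside $\C$; the Ces\`aro limit therefore inherits only an inequality. Second, in the final step you ``observe that $\sigma$ is a fixed point of $\map_P^*\circ\map$'' for $\sigma\in\co\C$ with $\supp\sigma=\supp\C$, and assert that surjectivity ``should force'' each $\rho\in\C$ into $\fix{\gamma}$. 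Neither statement is justified, and the first is precisely the conclusion to be proven: there is no a priori reason for any element of $\co\C$ to be fixed by $\gamma$, and surjectivity of $\P|_{\C}$ onto the state space of $\fix{\gamma}$ only says that every state of the fixed-point algebra is of the form $\P(\rho)$ with $\rho\in\C$; it says nothing about $\P(\rho)=\rho$. The missing rigidity argument --- that surjectivity together with the $T_p$-preservation of $\P$ forces $\co\C\subset\fix{\gamma}_+$ --- is the substance of \cite{BNPV}, and it must be carried out on the trace-norm side; it cannot be delegated to the $f$-divergence machinery, because the hypothesis supplies no preserved $f$-divergence to feed into Theorem \ref{thm:error correction}. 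Once $\co\C\subset\fix{\map_P^*\circ\map}_+$ is actually in hand, your reduction target is correct: $\map_P^*=\map_\tau^*$ for the state $\tau:=P/\Tr P$, so condition \ref{item:ec special inverse2} of Theorem \ref{thm:error correction} holds, and since $\map$ is completely positive, all of the conditions (i)--(x) there follow.
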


Note that the continuum many conditions requiring the preservation of $T_p$ for all $p\in(0,1)$ in Theorem \ref{thm:BNPV} can be simplified to a single condition, requiring that $\map$ is trace-norm preserving on the real subspace generated by $\C$. 
Note also that the surjectivity condition is sufficient but obviously not necessary. It is, however, an open question whether it can be completely removed.  
In the approach followed in \cite{BNPV}, it is important that one starts with a convex set of states. The same problem was studied in \cite{Jencova} in a different setting, and the following has been shown:
\begin{thm}\label{thm:Jencova}
Let $\rho,\sigma\in\S(\A)$ be invertible density operators and $\map$ be the trace-preserving conditional expectation onto a subalgebra $\A_0$ of $\A$. Assume that 
$\tdist{\map(\rho)}{\map(\sigma)}{p}=\tdist{\rho}{\sigma}{p}$ for every $p\in(0,1)$, and
$\A_0$ is commutative or $\rho$ and $\sigma$ commute. Then $\map_\sigma^*(\map(\rho))=\rho$ and $\map_\rho^*(\map(\sigma))=\sigma$.
\end{thm}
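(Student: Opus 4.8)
The plan is to reduce the preservation of all the Bayesian quantities $\tdist{\rho}{\sigma}{p}$ to a statement about the optimal discriminating projections, and then use the two special hypotheses to recover $\rho$ from $\sigma$ and an element of $\A_0$. First I would note that, for each fixed $p$, the definition \eqref{Tp def} exhibits $\tdist{\cdot}{\cdot}{p}$ as a strictly monotone function of $\norm{p\rho-(1-p)\sigma}_1$, so the hypothesis is equivalent to $\norm{X_p}_1=\norm{\map(X_p)}_1$ for every $p\in(0,1)$, where $X_p:=p\rho-(1-p)\sigma$ and $\map(X_p)=p\map(\rho)-(1-p)\map(\sigma)$ by linearity.

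The heart of the argument is to upgrade this norm equality to a support statement. Since $\map$ is the trace-preserving (hence Hilbert--Schmidt-orthogonal) conditional expectation onto $\A_0$, its adjoint $\map^*$ is the inclusion $\A_0\hookrightarrow\A$, which is unital and positive. Using the variational formula $\norm{X_p}_1=\max\{\Tr X_pR:\,R=R^*,\,-I\le R\le I\}$ recalled in this section, and taking $R:=\map^*(\sgn\map(X_p))$ (admissible since $\map^*$ is unital positive), the chain $\norm{\map(X_p)}_1=\Tr X_p\,\map^*(\sgn\map(X_p))\le\norm{X_p}_1$ must saturate. As the maximizer of $\Tr X_pR$ is unique on $\supp X_p$ and equals $\sgn X_p$ there, and as $\map^*(\sgn\map(X_p))=\sgn\map(X_p)\in\A_0$, I conclude $\sgn X_p\in\A_0$ for every $p$ with $X_p$ invertible, i.e.\ for all but finitely many $p$ (the function $\det X_p$ is a polynomial in $p$). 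Hence all the spectral projections $\{X_p>0\}$ lie in $\A_0$.

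It remains to convert this into $\map_\sigma^*(\map(\rho))=\rho$, and here the hypotheses enter. If $\rho$ and $\sigma$ commute, then $\{X_p>0\}=\{\rho\sigma\inv>(1-p)/p\}$, so as $p$ varies these threshold projections generate all spectral projections of $L:=\rho\sigma\inv$; thus $L\in\A_0$ and $\rho=L\sigma$ with $[L,\sigma]=0$. The bimodule property of the conditional expectation gives $\map(\rho)=L\map(\sigma)$, and from $\map(\sigma L)=\map(\sigma)L$ together with $\sigma L=L\sigma$ one gets $[L,\map(\sigma)]=0$; then $\map_\sigma^*(\map(\rho))=\sigma^{1/2}\map(\sigma)^{-1/2}L\map(\sigma)\map(\sigma)^{-1/2}\sigma^{1/2}=\sigma^{1/2}L\sigma^{1/2}=\rho$. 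If instead $\A_0$ is commutative with minimal projections $E_1,\dots,E_m$, then each $\{X_p>0\}\in\A_0$ is a sum of $E_j$'s, so $p\mapsto\{X_p>0\}$ is a monotone step function acquiring a block $E_j$ at some threshold $p_j^*$; at that value $\ran E_j$ is precisely the kernel of the self-adjoint $X_{p_j^*}$, forcing $X_{p_j^*}E_j=0$ and hence $\rho E_j=c_j\sigma E_j$ with $c_j=(1-p_j^*)/p_j^*>0$. Summing over $j$ yields $\rho=\sigma C=C\sigma$ with $C:=\sum_jc_jE_j\in\A_0$ self-adjoint, whence $\map(\rho)=C\map(\sigma)$ and the same short computation (now with $[C,\map(\sigma)]=0$ automatic, as $\A_0$ is abelian) gives $\map_\sigma^*(\map(\rho))=\rho$. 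Interchanging $\rho$ and $\sigma$ delivers $\map_\rho^*(\map(\sigma))=\sigma$.

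I expect the delicate point to be the commutative-$\A_0$ case of the last step: justifying that each jump of the $\A_0$-valued step function $p\mapsto\{X_p>0\}$ contributes a block whose range is exactly the zero eigenspace of $X_{p_j^*}$, and handling simultaneous crossings (where one instead obtains $\rho\,(\sum E_j)=c\,\sigma\,(\sum E_j)$ and the argument proceeds verbatim). The trace-norm saturation argument of the second paragraph is conceptually central but robust; by contrast, reading off the block identities $\rho E_j=c_j\sigma E_j$ from the spectral data of the generally non-commuting family $\{X_p\}$ is exactly where the commutativity hypotheses are genuinely used.
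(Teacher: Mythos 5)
The first thing to note is that the paper contains no proof of this statement for you to be compared against: Theorem \ref{thm:Jencova} is quoted as a known result from \cite{Jencova} (``The same problem was studied in \cite{Jencova} in a different setting, and the following has been shown''). So your proposal must be judged on its own merits, and on its own merits it is correct -- it amounts to a self-contained proof of the cited theorem.

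Your reduction of the hypothesis to $\norm{p\rho-(1-p)\sigma}_1=\norm{p\map(\rho)-(1-p)\map(\sigma)}_1$ for all $p\in(0,1)$ is immediate from \eqref{Tp def}. The saturation step is sound: writing $X_p=X_+-X_-$ for the Jordan decomposition, any self-adjoint contraction $R$ with $\Tr X_pR=\norm{X_p}_1$ satisfies $\Tr X_+(I-R)=0$ and $\Tr X_-(I+R)=0$, hence $(I-R)X_+=(I+R)X_-=0$, and when $X_p$ is invertible this forces $R=\sgn X_p$; since the saturating admissible choice is $R=\map^*(\sgn\map(X_p))=\sgn\map(X_p)\in\A_0$, you get $\sgn X_p=\sgn\map(X_p)\in\A_0$. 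Moreover $\det X_p$ is a nonzero polynomial in $p$ (it equals $\det\rho\neq 0$ at $p=1$), so only finitely many $p$ are lost. The commuting-$(\rho,\sigma)$ case then goes through exactly as you wrote it. In the commutative-$\A_0$ case, the two points you flag as delicate are genuine but both hold. First, monotonicity of $p\mapsto\{X_p>0\}$ is \emph{not} a consequence of $X_p$ being operator-increasing alone (positive-part supports of an increasing family need not be nested in general); here it holds because all these projections lie in the commutative algebra $\A_0$, hence commute pairwise, and then $\{X_{p_1}>0\}\wedge\bz I-\{X_{p_2}>0\}\jz=0$ for $p_1<p_2$ follows from the strict inequality $X_{p_2}>X_{p_1}$. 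Second, the identification of each jump with $\ker X_{p^*}$ is analytic perturbation theory: the eigenvalue branches of $X_p$ are strictly increasing (the derivative is $\rho+\sigma\ge\lambda_{\min}(\rho+\sigma)I>0$), so the one-sided limits of $\{X_p>0\}$ at a crossing $p^*$ are $\{X_{p^*}>0\}$ from below and $\{X_{p^*}>0\}$ plus the kernel projection from above; hence the jump block $F$ satisfies $X_{p^*}F=0$, i.e.\ $\rho F=\frac{1-p^*}{p^*}\sigma F$, with simultaneous crossings handled exactly as you indicate. Summing gives $\rho=\sigma C=C\sigma$ with $C\in\A_0$ positive and invertible -- note that the commutation $[C,\sigma]=0$, which is what justifies the final step $\sigma^{1/2}C\sigma^{1/2}=\rho$ in the recovery computation, follows from self-adjointness of $\rho=\sigma C$ -- and the interchange of $\rho$ and $\sigma$ is legitimate because the family of hypotheses is invariant under $p\mapsto 1-p$. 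Incidentally, your saturation step gives slightly more, namely $\sgn X_p=\sgn\map(X_p)$, which via the bimodule property yields the operator equality $\map(|X_p|)=|\map(X_p)|$; this is the natural ``sufficiency of the optimal tests'' form of the hypothesis, and it is the bridge between the trace-norm formulation used here and conditions on the subalgebra $\A_0$.
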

\begin{rem}
In \cite{Jencova} the condition $\tdist{\map(\rho)}{\map(\sigma)}{p}=\tdist{\rho}{\sigma}{p},\,p\in(0,1)$, was called $2$-sufficiency, and
\begin{equation}\label{(2,n)}
T_p(\map^{\otimes n}(\rho^{\otimes n})\|\map^{\otimes n}(\sigma^{\otimes n}))=T_p(\rho^{\otimes n}\|\sigma^{\otimes n}),\ds\ds\ds p\in(0,1),\s n\in\bN, 
\end{equation}
 was called $(2,n)$-sufficiency. It was also shown in Theorem 6 of \cite{Jencova} that in the setting of Theorem \ref{thm:Jencova}, 
\eqref{(2,n)} is sufficient for the conclusion of Theorem \ref{thm:Jencova} to hold.
\end{rem}

\section{An integral representation for operator convex functions}\label{sec:integral representation}
\setcounter{equation}{0}

Operator monotone and operator convex functions play an important role in quantum 
information theory \cite{Petzbook}. 
Several ways are known to decompose them as integrals of some families of functions of simpler forms
\cite{Bhatia,Hiai}.
Here we present a representation that is well-suited for our analysis of $f$-divergences, and seems to be a new result.

\begin{thm}\label{thm:integral representation}
A continuous real-valued function $f$ on $[0,+\infty)$ is operator convex if and only if 
there exist a real number $a$, a non-negative number $b$, and a non-negative measure $\mu$ on $(0,+\infty)$, 
satisfying 
\begin{equation}\label{finite integral1}
\int_{(0,+\infty)}{d\mu(t)\over(1+t)^2}<+\infty,
\end{equation}
such that 
\begin{equation}\label{integral representation1}
f(x)=f(0)+ax+bx^2
+\int_{(0,+\infty)}\biggl({x\over1+t}-{x\over x+t}\biggr)\,d\mu(t),\ds x\in[0,+\infty).
\end{equation}
Moreover, the numbers $a,b$, and the measure $\mu$ are uniquely determined by $f$, and
\begin{equation*}
b=\lim_{x\to+\infty}{f(x)\over x^2},\ds\ds\ds\ds
a=f(1)-f(0)-b.
\end{equation*}
\end{thm}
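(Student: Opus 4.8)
The plan is to prove both directions by reducing to the classical Löwner theory of operator monotone functions on $(0,+\infty)$. First I would dispose of the easy "if" direction. Each summand on the right-hand side of \eqref{integral representation1} is operator convex: $ax$ is affine, $bx^2$ is operator convex because $b\ge 0$, and for fixed $t>0$ the integrand is $\frac{x}{1+t}+\vfi_t(x)$, an affine function plus the operator convex function $\vfi_t$ of \eqref{def:vfi_t}. Writing the integrand as $k_t(x):=\frac{x(x-1)}{(1+t)(x+t)}$, a short estimate gives $\norm{k_t(A)}=O(t^{-2})$ for any fixed $A\ge 0$, so \eqref{finite integral1} guarantees that $f(A)=f(0)I+aA+bA^2+\int_{(0,\infty)}k_t(A)\,d\mu(t)$ converges in operator norm. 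Since the convexity inequality for each $k_t$ is an inequality between positive semidefinite operators, it survives integration against the positive measure $\mu$, and hence $f$ is operator convex.

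For the "only if" direction, after subtracting the constant $f(0)$ I may assume $f(0)=0$. The key structural input is the standard equivalence (see \cite{HanPed,Bhatia}): a continuous $f$ on $[0,+\infty)$ with $f(0)\le 0$ is operator convex if and only if $g(x):=f(x)/x$ is operator monotone on $(0,+\infty)$. Applying the Löwner integral representation to $g$ and rewriting the canonical Pick kernel into the normalization that vanishes at $x=1$, I would obtain $g(x)=a+bx+\int_{(0,\infty)}\frac{x-1}{(1+t)(x+t)}\,d\mu(t)$ with $b\ge 0$ and a positive measure $\mu$ satisfying \eqref{finite integral1}. Multiplying through by $x$ yields \eqref{integral representation1}, since $x\cdot\frac{x-1}{(1+t)(x+t)}=k_t(x)=\frac{x}{1+t}+\vfi_t(x)$. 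The two constants then drop out by inspection: setting $x=1$ kills the integral because $k_t(1)=0$, giving $a=f(1)-f(0)-b$, while the bound $\frac{k_t(x)}{x^2}=\frac{x-1}{x(1+t)(x+t)}\le\frac{1}{(1+t)^2}$ for $x\ge 1$ lets dominated convergence show $\int k_t(x)\,d\mu(t)=o(x^2)$, whence $b=\lim_{x\to+\infty}f(x)/x^2$.

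It remains to argue uniqueness. With $a,b$ pinned down by the two limit formulas, any two representing measures $\mu_1,\mu_2$ would satisfy $\int_{(0,\infty)}\frac{x-1}{(1+t)(x+t)}\,d(\mu_1-\mu_2)(t)=0$ for all $x>0$; differentiating in $x$ reduces this to $\int_{(0,\infty)}(x+t)^{-2}\,d(\mu_1-\mu_2)(t)=0$ for all $x>0$, that is, to the vanishing of the derivative of the Stieltjes transform of $\mu_1-\mu_2$. Letting $x\to+\infty$ fixes the additive constant at $0$, so the transform itself vanishes identically, and its injectivity forces $\mu_1=\mu_2$ (equivalently, one invokes the uniqueness clause of Löwner's theorem).

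The hard part will be the passage in the second paragraph from the textbook form of Löwner's representation to the specific normalized kernel $\frac{x-1}{(1+t)(x+t)}$ carrying precisely the measure condition \eqref{finite integral1}: one must track the behaviour of the representing measure as $t\to 0$ and $t\to+\infty$ to confirm that the normalizations match, that the linear coefficient is exactly $b\ge 0$, and that no mass escapes to the endpoints. The genuinely new feature, and the delicate point, is the behaviour at $x=0$, where $f$ is only assumed continuous and need not be differentiable; controlling the representation there is what makes the closed half-line statement stronger than the classical one on $(0,+\infty)$ and is the step I would write out most carefully.
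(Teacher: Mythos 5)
Your architecture is sound and both directions can be made to work, but you take a genuinely different reduction from the paper. You invoke the Hansen--Pedersen equivalence (\cite{HanPed}; see also \cite{Bhatia}): after normalizing $f(0)=0$, operator convexity of $f$ is equivalent to operator monotonicity of $g(x)=f(x)/x$ on $(0,+\infty)$, and you then feed $g$ into the Nevanlinna/L\"owner representation. The paper instead uses Kraus' theorem \cite{Kraus}: $g(x)=(f(x)-f(1))/(x-1)$ is operator monotone on $(0,+\infty)$, and---crucially---this $g$ extends continuously to $0$ with $g(0)=f(1)-f(0)$, so it admits the representation \eqref{integral representation for op.mon.} with a \emph{finite} measure $m$; setting $d\mu(t)=(1+t)^2\,dm(t)$ then yields \eqref{finite integral1} at once, and uniqueness of $\mu$ is inherited from uniqueness of $m$. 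Your divided-by-$x$ route buys a more direct computation ($f(x)=xg(x)$ immediately produces the kernel $x/(1+t)-x/(x+t)$), but it pays for this at the endpoint, exactly as you suspect.

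That endpoint issue is the one genuine gap: you defer precisely the step that carries the content of the theorem on the \emph{closed} half-line. Since $g(x)=f(x)/x$ may diverge as $x\searrow 0$ (e.g.\ $f(x)=-\sqrt{x}$), its Pick/Nevanlinna measure $\nu$ lives a priori on $[0,+\infty)$ and may carry an atom at $t=0$, whose kernel contribution $\nu(\{0\})(x-1)/x$ is invisible in the formula you state over $(0,+\infty)$. The gap closes as follows: multiplying by $x$, the atom contributes $\nu(\{0\})(x-1)$ to $f$, while for the part of the integral over $(0,+\infty)$ one has the dominations $x|x-1|/\bigl((1+t)(x+t)\bigr)\le 1$ for $t\in(0,1]$, $x\in[0,1]$ (and $\nu((0,1])<+\infty$ by \eqref{finite integral1}) and $x|x-1|/\bigl((1+t)(x+t)\bigr)\le 2/(1+t)^2$ for $t\ge 1$, $x\in[0,1]$; dominated convergence then gives $\lim_{x\searrow 0}f(x)=-\nu(\{0\})$. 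Continuity of $f$ at $0$ together with $f(0)=0$ forces $\nu(\{0\})=0$, which is what licenses restricting the measure to the open half-line. With this argument added (and the routine local domination $(x+t)^{-2}\le C(1+t)^{-2}$ justifying the differentiation under the integral in your uniqueness step---or, more simply, quoting uniqueness in L\"owner's representation as the paper does), your proof is complete. The paper's choice of the divided difference at $x=1$ is precisely the device that makes this endpoint analysis unnecessary.
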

\begin{proof}
Obviously, if $f$ admits an integral representation as in \eqref{integral representation1} then $f$ is operator convex, and 
\begin{equation*}
f(1)=f(0)+a+b,\ds\ds\ds\ds b=\lim_{x\to+\infty}{f(x)\over x^2},
\end{equation*}
where the latter follows by the Lebesgue dominated convergence theorem, using \eqref{finite integral1} and that, for $x>1$,
\begin{equation*}
0\le\frac{1}{x^2}\bz\frac{x}{1+t}-\frac{x}{x+t}\jz=\frac{x-1}{x(x+t)(1+t)}\le
\frac{2x}{x(1+t)(1+t)}=\frac{2}{(1+t)^2}.
\end{equation*}
Hence what is left to prove is that any operator convex function admits a representation as in 
\eqref{integral representation1}, and that the measure $\mu$ is uniquely determined by $f$.

Assume now that $f$ is an operator convex function on $[0,+\infty)$. 
Then, by Kraus' theorem (see \cite{Kraus} or Corollary 2.7.8 in \cite{Hiai}), the function
\begin{equation*}
g(x):={f(x)-f(1)\over x-1},\qquad x\in[0,+\infty)\setminus\{1\},\ds\ds\ds
g(1):=f'(1),
\end{equation*}
is an operator monotone function on
$(0,+\infty)$. Therefore, it admits an integral representation
\begin{equation}\label{integral representation for op.mon.}
g(x)=a'+bx+\int_{(0,+\infty)}{x(1+t)\over x+t}\,dm(t),\ds\ds x\in[0,+\infty),
\end{equation}
where $m$ is a positive finite measure on $(0,+\infty)$, and 
\begin{equation*}
a'=g(0)=f(1)-f(0),\qquad 0\le b=\lim_{x\to+\infty}{g(x)\over x}=\lim_{x\to+\infty}{f(x)\over x^2}
\end{equation*}
(see Theorem 2.7.11 in \cite{Hiai} or pp.~144--145 in \cite{Bhatia}). Here, the measure $m$, as well as
$a',b$, are unique and
\begin{equation*}
m((0,+\infty))=g(1)-a'-b=f'(1)-f(1)+f(0)-b.
\end{equation*}
Thus, we have
\begin{align*}
f(x)&=f(1)+g(x)(x-1) \\
&=
f(1)+(f(1)-f(0))(x-1)+bx(x-1)
+\int_{(0,+\infty)}{x(x-1)(1+t)\over x+t}\,dm(t) \\
&=
f(0)+(f(1)-f(0)-b)x+bx^2
+\int_{(0,+\infty)}\biggl({x\over1+t}-{x\over x+t}\biggr)
(1+t)^2\,dm(t)\\
&=
f(0)+ax+bx^2
+\int_{(0,+\infty)}\biggl({x\over1+t}-{x\over x+t}\biggr)\,d\mu(t),
\end{align*}
where we have defined $ a:=f(1)-f(0)-b$ and $d\mu(t):=(1+t)^2\,dm(t)$. Finiteness of $m$ yields that $\mu$ satisfies \eqref{finite integral1}.

Finally, to see the uniqueness of the measure $\mu$, 
assume that $f$ admits an integral representation as in \eqref{integral representation1}.
Then, $f$ is operator convex, and hence the function $g$ on $[0,+\infty)$, defined as
\begin{align}\label{integral representation of g}
g(x)&:={f(x)-f(1)\over x-1}
=(a+b)+bx+\int_{(0,+\infty)}{x(1+t)\over x+t}
\,{d\mu(t)\over(1+t)^2},
\end{align}
is operator monotone. Therefore, it admits an integral representation as in 
\eqref{integral representation for op.mon.}, and the uniqeness of the parameters of that representation yields that $d\mu(t)=(1+t)^2 dm(t)$. Hence, the measure $\mu$ is uniquely determined by $f$.
\end{proof}

\begin{cor}\label{cor:integral representation2}
Assume that $f$ is a continuous operator convex function on $[0,+\infty)$ 
that is not a polynomial. Then it can be written in the form
\begin{equation}\label{integral representation2}
f(x)=f(0)+bx^2
+\int_{(0,+\infty)}\biggl(\psi(t)x-{x\over x+t}\biggr)\,d\mu(t),\ds x\in[0,+\infty),
\end{equation}
where $b=\lim_{x\to+\infty}f(x)/x^2\ge 0$, and $\mu$ is a non-negative measure on $(0,+\infty)$. Moreover, we can choose
\begin{equation}\label{psi}
\psi(t):=\frac{1}{1+t}+\frac{f(1)-f(0)-b}{f'(1)-f(1)+f(0)-b}\cdot\frac{1}{(1+t)^2},
\end{equation}
and if $b=0$ and $f'(1)\ge 0$ then $\psi(t)\ge 0,\,t\in (0,+\infty)$.
\end{cor}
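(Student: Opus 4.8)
The plan is to obtain \eqref{integral representation2} from Theorem \ref{thm:integral representation} by absorbing the linear term $ax$ into the integral. Since $f$ is operator convex, that theorem gives
\begin{equation*}
f(x)=f(0)+ax+bx^2+\int_{(0,+\infty)}\bz\frac{x}{1+t}-\frac{x}{x+t}\jz\,d\mu(t),
\end{equation*}
with $a=f(1)-f(0)-b$, $b=\lim_{x\to+\infty}f(x)/x^2\ge 0$, and $\mu$ the unique non-negative measure satisfying \eqref{finite integral1}. The only quantity I still need from the theorem is the total ``weight'' $M:=\int_{(0,+\infty)}(1+t)^{-2}\,d\mu(t)$, which is finite by \eqref{finite integral1}. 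Recalling from the proof of Theorem \ref{thm:integral representation} that $d\mu(t)=(1+t)^2\,dm(t)$ for a finite positive measure $m$, I get $M=\int_{(0,+\infty)}dm(t)=m((0,+\infty))=f'(1)-f(1)+f(0)-b$, which is exactly the denominator appearing in \eqref{psi}.

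First I would check that this denominator is non-zero. Since $f$ is not a polynomial, the measure $\mu$ cannot vanish (otherwise $f(x)=f(0)+ax+bx^2$ would be a polynomial of degree at most two), and hence $M=\int(1+t)^{-2}\,d\mu(t)>0$. Then the absorption is a one-line computation: because $\int_{(0,+\infty)}\frac{a}{M}(1+t)^{-2}\,d\mu(t)=a$, I may write $ax=\int_{(0,+\infty)}\frac{a}{M}\,\frac{x}{(1+t)^2}\,d\mu(t)$ and combine it with the $x/(1+t)$ term inside the integrand, obtaining $\frac{x}{1+t}+\frac{a}{M}\,\frac{x}{(1+t)^2}=\psi(t)x$ with $\psi(t)=\frac{1}{1+t}+\frac{a}{M(1+t)^2}$. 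Substituting $a=f(1)-f(0)-b$ and $M=f'(1)-f(1)+f(0)-b$ recovers the explicit $\psi$ of \eqref{psi} and yields \eqref{integral representation2}.

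For the final positivity assertion I would argue as follows. Assume $b=0$ and $f'(1)\ge 0$; then $M=f'(1)-f(1)+f(0)>0$ and $a=f(1)-f(0)$, so $a+M=f'(1)\ge 0$. For every $t>0$ I compute $(1+t)^2\psi(t)=(1+t)+a/M\ge 1+a/M=f'(1)/M\ge 0$, whence $\psi(t)\ge 0$. There is no serious obstacle in this argument; the only points requiring care are the identification $M=f'(1)-f(1)+f(0)-b$ inherited from the proof of Theorem \ref{thm:integral representation} and the observation that the non-polynomiality of $f$ guarantees $M>0$, which is precisely what keeps the denominator in \eqref{psi} well defined.
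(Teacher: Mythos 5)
Your proposal is correct and follows essentially the same route as the paper's own proof: invoke Theorem \ref{thm:integral representation}, use non-polynomiality to ensure $m((0,+\infty))=\int_{(0,+\infty)}(1+t)^{-2}\,d\mu(t)>0$, absorb the linear term $ax$ into the integral to produce $\psi$, and establish positivity via $(1+t)^2\psi(t)\ge 1+a/m((0,+\infty))=f'(1)/m((0,+\infty))$. The only (immaterial) difference is that you import the identity $m((0,+\infty))=f'(1)-f(1)+f(0)-b$ directly from the proof of Theorem \ref{thm:integral representation}, whereas the paper re-derives it inside the corollary's proof from \eqref{integral representation of g}.
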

\begin{proof}
Since $f$ is operator convex, it can be written in the form \eqref{integral representation1}
due to Theorem \ref{thm:integral representation}.
Since $f$ is not a polynomial, we have $m((0,+\infty))>0$, where
$dm(t):=d\mu(t)/(1+t^2)$.
Moreover, by \eqref{integral representation of g},
$f'(1)=g(1)=a+2b+m((0,+\infty))$,
from which $m((0,+\infty))=f'(1)-a-2b$. Using that $a=f(1)-f(0)-b$, we finally obtain
\begin{equation*}
a=\frac{a}{m((0,+\infty))}\int_{(0,+\infty)}\,dm(t)=
\frac{f(1)-f(0)-b}{f'(1)-f(1)+f(0)-b}\int_{(0,+\infty)}\frac{1}{(1+t)^2}\,d\mu(t).
\end{equation*}
Substituting it into \eqref{integral representation1}, we obtain 
\eqref{integral representation2} with $\psi$ as in \eqref{psi}.
Note that $(1+t)^2\psi(t)=1+t+\frac{a}{m((0,+\infty))}\ge 1+\frac{a}{m((0,+\infty))}$. Hence, if $b=0$ and $0\le f'(1)=a+2b+m((0,+\infty))=a+m((0,+\infty))$ then $\psi(t)\ge 0$, proving the last assertion.
\end{proof}

\begin{example}\label{ex:representations}
 \s\ 
\begin{enumerate}
\item\label{item:xlogx} $f(x):=x\log x$ admits the integral representation
\begin{equation*}
x\log x=\int_{(0,+\infty)}\biggl({x\over1+t}-{x\over x+t}\biggr)\,dt.
\end{equation*}
($f(0)=a=b=0$ and $\mu$ is the Lebesgue measure in \eqref{integral representation1}.)
\item\label{item:alpha<1} $f(x):=-x^\alpha$ ($0<\alpha<1$) admits the integral representation
(see \cite[Exercise V.1.10]{Bhatia})
\begin{equation*}
-x^\alpha={\sin\alpha\pi\over\pi}\int_{(0,+\infty)}
\biggl(-{x\over x+t}\biggr)t^{\alpha-1}\,dt.
\end{equation*}
($f(0)=b=0$, $d\mu(t)={\sin\alpha\pi\over\pi}t^{\alpha-1}dt$, and $\psi\equiv 0$ in \eqref{integral representation2}.)
Using that 
${\sin\alpha\pi\over\pi}\int_{(0,+\infty)}
{xt^{\alpha-1}\over1+t}\,dt=x$, we have 
\begin{align*}
-x^\alpha
&=-x+{\sin\alpha\pi\over\pi}\int_{(0,+\infty)}
\biggl({x\over1+t}-{x\over x+t}\biggr)t^{\alpha-1}\,dt.
\end{align*}
($f(0)=b=0,\,a=-1$, and $d\mu(t)={\sin\alpha\pi\over\pi}t^{\alpha-1}dt$ in \eqref{integral representation1}.)
\item\label{item:alpha>1} By the previous point, $f(x):=x^\alpha$ ($1<\alpha<2$) admits the representation
\begin{equation*}
x^\alpha={\sin(\alpha-1)\pi\over\pi}\int_{(0,+\infty)}
{x^2 t^{\alpha-2}\over x+t}\,dt 
={\sin(\alpha-1)\pi\over\pi}\int_{(0,+\infty)}
\biggl({x\over t}-{x\over x+t}\biggr)t^{\alpha-1}\,dt
\end{equation*}
($f(0)=b=0$, $\psi(t)=1/t$, and $d\mu(t)={\sin(\alpha-1)\pi\over\pi}t^{\alpha-1}\,dt$ in 
\eqref{integral representation2}.)
Using that
\begin{equation*}
{\sin(\alpha-1)\pi\over\pi}\int_{(0,+\infty)}
\biggl({x\over t}-{x\over1+t}\biggr)t^{\alpha-1}\,dt
={\sin(\alpha-1)\pi\over\pi}\int_{(0,+\infty)}
{xt^{\alpha-2}\over1+t}\,dt=x,
\end{equation*}
we also obtain
\begin{equation*}
x^{\alpha}=x+{\sin(\alpha-1)\pi\over\pi}\int_{(0,+\infty)}
\biggl({x\over1+t}-{x\over x+t}\biggr)t^{\alpha-1}\,dt.
\end{equation*}
($f(0)=0,\,a=1,\,b=0$ and $d\mu(t)={\sin(\alpha-1)\pi\over\pi}t^{\alpha-1}\,dt$ in \eqref{integral representation1}.)
\end{enumerate}
\end{example}
\medskip

Note that the function $\psi$ in \eqref{integral representation2} is not unique. 
For instance, if $\mu$ is finitely supported on a set $\{t_1,\ldots,t_r\}$ then only 
the sum $\sum_{i=1}^r\psi(t_r)$ is determined by $f$ while the individual values
$\psi(t_1),\ldots,\psi(t_r)$ are not.

Note also that in general, $\int_{(0,+\infty)}\frac{1}{1+t}\,d\mu(t)$ might not be finite and hence the term $\int_{(0,+\infty)}\frac{x}{1+t}\,d\mu(t)$ cannot be merged with $ax$ in
\eqref{integral representation1}. Similarly, the integral 
$\int_{(0,+\infty)}\psi(t)\,d\mu(t)$ might be infinite and hence it might not be possible to 
separate it as a linear term in the representation \eqref{integral representation2} of $f$. This is clear, for instance, from \ref{item:xlogx} of Example \ref{ex:representations}.
We have the following:

\begin{prop}\label{prop:finite of}
For a continuous real-valued function $f$ on $[0,+\infty)$ the following are equivalent:
\begin{itemize}
\item[\rm(i)] $f$ is operator convex on $[0,+\infty)$ with
$\lim_{x\to+\infty}f(x)/x<+\infty$;
\item[\rm(ii)] there exist an $\alpha\in\mathbb{R}$ and a positive measure $\mu$ on $(0,+\infty)$,
satisfying
\begin{equation}\label{finite int}
\int_{(0,+\infty)}{d\mu(t)\over1+t}<+\infty,
\end{equation}
such that
\begin{equation}\label{special representation}
f(x)=f(0)+\alpha x-\int_{(0,+\infty)}{x\over x+t}\,d\mu(t),\qquad x\in[0,+\infty).
\end{equation}
\end{itemize}
\end{prop}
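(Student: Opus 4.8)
The plan is to read condition (ii) off the general integral representation of Theorem~\ref{thm:integral representation}, the only point requiring real work being the upgrade of the integrability condition \eqref{finite integral1} to the stronger one \eqref{finite int}; the converse implication will follow by inspecting the representation \eqref{special representation} directly. The whole matter hinges on relating the finiteness of $\of=\lim_{x\to+\infty}f(x)/x$ to the finiteness of $\int_{(0,+\infty)}\frac{d\mu(t)}{1+t}$.

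For (i)$\,\imp\,$(ii) I would first invoke Theorem~\ref{thm:integral representation} to write
\begin{equation*}
f(x)=f(0)+ax+bx^2+\int_{(0,+\infty)}\biggl(\frac{x}{1+t}-\frac{x}{x+t}\biggr)\,d\mu(t),
\end{equation*}
with $b\ge0$ and $\mu$ satisfying \eqref{finite integral1}. Dividing by $x$ gives
\begin{equation*}
\frac{f(x)}{x}=\frac{f(0)}{x}+a+bx+\int_{(0,+\infty)}\biggl(\frac{1}{1+t}-\frac{1}{x+t}\biggr)\,d\mu(t),
\end{equation*}
and letting $x\to+\infty$ is the decisive step: for $x\ge1$ the integrand $\frac{1}{1+t}-\frac{1}{x+t}$ is nonnegative and increases monotonically to $\frac{1}{1+t}$, so by the monotone convergence theorem the integral tends to $\int_{(0,+\infty)}\frac{d\mu(t)}{1+t}\in[0,+\infty]$. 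Since the remaining terms contribute $a+bx$ with $b\ge0$, finiteness of $\of$ forces simultaneously $b=0$ and $\int_{(0,+\infty)}\frac{d\mu(t)}{1+t}<+\infty$, which is exactly \eqref{finite int}. Once this finiteness is in hand, the term $\int\frac{x}{1+t}\,d\mu(t)=x\int\frac{d\mu(t)}{1+t}$ is finite and can be detached as a linear term; setting $\alpha:=a+\int_{(0,+\infty)}\frac{d\mu(t)}{1+t}$ then yields \eqref{special representation}.

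For (ii)$\,\imp\,$(i), each function $\vfi_t(x)=-\frac{x}{x+t}$ is operator convex on $[0,+\infty)$ (as recalled after \eqref{def:vfi_t}) and the affine part $f(0)+\alpha x$ is operator convex, so the superposition \eqref{special representation} is again operator convex, operator convexity being stable under nonnegative integral averages. For the growth condition I would write $\frac{f(x)}{x}=\frac{f(0)}{x}+\alpha-\int_{(0,+\infty)}\frac{d\mu(t)}{x+t}$ and apply dominated convergence, using $\frac{1}{x+t}\le\frac{1}{1+t}$ for $x\ge1$ together with $\int\frac{d\mu(t)}{1+t}<+\infty$, to get $\of=\alpha<+\infty$, which closes the equivalence.

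The main obstacle I anticipate is the measure-theoretic bookkeeping around these limit--integral interchanges: one must verify the monotone (respectively dominated) convergence hypotheses carefully, and, crucially, check that once $\int\frac{d\mu(t)}{1+t}<+\infty$ is known the single convergent integral in \eqref{integral representation1} may legitimately be split into a difference of two separately convergent integrals without any $\infty-\infty$ ambiguity; the elementary estimate $\frac{x}{x+t}\le\frac{x}{1+t}$ for $x\ge1$ is what secures this final point.
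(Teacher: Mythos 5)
Your proposal is correct and follows essentially the same route as the paper: both directions use Theorem~\ref{thm:integral representation}, with monotone convergence applied to $\int\bigl(\frac{1}{1+t}-\frac{1}{x+t}\bigr)\,d\mu(t)$ to extract $b=0$ and \eqref{finite int} from finiteness of $\omega(f)$, and dominated convergence (via $\frac{1}{x+t}\le\frac{1}{1+t}$) for the converse. The only cosmetic difference is that the paper first kills $b$ by noting $\lim_{x\to+\infty}f(x)/x^2=0$ before taking the limit, whereas you absorb that step into the same limiting argument, which is equivalent.
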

\begin{proof}
First, note that if $f$ is convex on $[0,+\infty)$ as a numerical function, then
$\lim_{x\to+\infty}f(x)/x$ exists in $(-\infty,+\infty]$. In fact, by convexity,
$(f(x)-f(1))/(x-1)$ is non-decreasing for $x>1$, so that
$$
\lim_{x\to+\infty}{f(x)\over x}=\lim_{x\to+\infty}{f(x)-f(1)\over x-1}
$$
exists in $(-\infty,+\infty]$. Also, note that condition \eqref{finite int} is necessary for $f(1)$ to
be defined in \eqref{special representation}, and also sufficient to define $f(x)$ by \eqref{special representation} for all $x\in[0,+\infty)$.

(i) $\Rightarrow$ (ii).\enspace
By assumption, $f$ is an operator convex function on $[0,+\infty)$ such that
$\lim_{x\to+\infty}f(x)/x$ is finite, hence $\lim_{x\to+\infty}f(x)/x^2=0$. By Theorem \ref{thm:integral representation},
we have
$$
f(x)=f(0)+ax+\int_{(0,+\infty)}\biggl({x\over1+t}-{x\over x+t}\biggr)\,d\mu(t),
\qquad x\in[0,+\infty),
$$
where $a\in\mathbb{R}$ and $\mu$ is a positive measure on $(0,+\infty)$. We write
$$
{f(x)\over x}
={f(0)\over x}+a+\int_{(0,+\infty)}\biggl({1\over1+t}-{1\over x+t}\biggr)\,d\mu(t).
$$
Since
$$
0<{1\over1+t}-{1\over x+t}\nearrow{1\over1+t}\quad\mbox{as $1<x\nearrow+\infty$},
$$
the monotone convergence theorem yields that
$$
\lim_{x\to+\infty}{f(x)\over x}=a+\int_{(0,+\infty)}{d\mu(t)\over1+t},
$$
which implies \eqref{finite int} and
$$
f(x)=f(0)+\biggl(a+\int_{(0,+\infty)}{d\mu(t)\over1+t}\biggr)x
-\int_{(0,+\infty)}{x\over x+t}\,d\mu(t).
$$
Hence $f$ admits a representation of the form \eqref{special representation}.

(ii) $\Rightarrow$ (i).\enspace
It is obvious that $f$ given in \eqref{special representation} is operator convex on $[0,+\infty)$.  Since
${1\over x+t}\le{1\over1+t}$ for all $x>1$ and all $t\in[0,+\infty)$, the Lebesgue
convergence theorem yields that
\begin{equation*}
\lim_{x\to+\infty}\int_{(0,+\infty)}{d\mu(t)\over x+t}=0
\end{equation*}
and so
\begin{equation*}
{f(x)\over x}={f(0)\over x}+\alpha-\int_{(0,+\infty)}{d\mu(t)\over x+t}
\longrightarrow \alpha\quad\mbox{as $x\to+\infty$}.
\end{equation*}
Hence (i) follows.
\end{proof}

\begin{rem}
Note that the condition $\lim_{x\to+\infty}f(x)/x<+\infty$ puts a strong restriction on
an operator convex function $f$. Important examples for which it is not satisfied include
$f(x)=x\log x$ and $f(x)=x^\alpha$ for $\alpha\in(1,2]$.
\end{rem}

\section{Closing remarks}\label{sec:closing}

Quantum $f$-divergences are a quantum generalization of classical $f$-divergences, which class in the classical case 
contains most of the distinguishability measures that are relevant to classical statistics. 
Although our Corollary
\ref{cor:trace norm} shows that $f$-divergences are less universal in the quantum case, they still provide a very 
efficient tool to obtain monotonicity and convexity properties of several distinguishability measures that are 
relevant to quantum statistics, including the relative entropy, the R\'enyi relative entropies, and the Chernoff and 
Hoeffding distances. 

There are also differences between the classical and the quantum cases in the technical conditions needed to prove the 
monotonicity. For the approach followed here, it is important that the defining function is not only convex but 
operator convex, and the map is not only positive but it is also decomposable in the sense of Remark \ref{rem:decomposability}. It is 
unknown whether the monotonicity can be proved without these assumptions in general, although Corollary \ref{cor:0 
Renyi} and Lemma \ref{lemma:2 Renyi monotonicity} show for instance that positivity of $\map$ might be sufficient in 
some special cases. For measures that have an operational 
interpretation in state discrimination, like the relative entropy, the R\'enyi $\alpha$-relative entropies with 
$\alpha\in(0,1)$, and the Chernoff and Hoeffding distances, the monotonicity holds for any positive trace-preserving map $\map$ such that
$\map^{\otimes n}$ is positive for every $n\in\N$ \cite{Hayashibook,MH}. 
Note that both the set of maps satisfying this latter property and the set of maps that are decomposable in the sense of Remark \ref{rem:decomposability} contain all the 
completely positive trace-preserving maps, but we are not aware of any other explicit relation between these two sets. Moreover, the only example we know for a map $\map$ which 
is not completely positive 
but $\map^{\otimes n}$ is positive for every $n\in\bN$ is the transposition, which is trivial in the sense that it preserves any $f$-divergence (where $f$ does not even need to be 
convex; see Corollary \ref{cor:scaling} and Remark \ref{rem:transposition}).

Quantum $f$-divergences are essentially a special case of Petz' quasi-entropies with $K=I$ 
(see the Introduction) 
with the minor modification of allowing operators that are not strictly positive definite. 
While the monotonicity 
inequality in Theorem \ref{thm:monotonicity for F} can be proved for the 
quasi-entropies with general $K$ quite similarly to the case $K=I$, our 
analysis of the equality case in Theorem \ref{thm:equality} doesn't seem to extend to 
$K\ne I$. A special case has been treated recently in \cite{JR}, where a characterization
for the equality case in the joint convexity of the quasi-entropies 
$S^K_{f_\alpha}(.\|.)$  (see Example \ref{ex: f-div Renyi} for $K=I$) was given for 
arbitrary $K$ and $\alpha\in(0,2)$.
Note that joint convexity is a special case of 
the monotonicity under partial traces (see \cite[Theorem 6]{Petz} or Corollary \ref{cor:joint convexity} of this paper), 
while monotonicity under partial traces can also be proven from the joint convexity 
for $K$'s of 
special type \cite{LR}, which in turn implies the monotonicity under 
completely positive trace-preserving maps by using their Lindblad respresentation \cite{TCR}.
For a particularly elegant recent proof of the joint convexity for general $K$'s, see 
\cite{Effros}.

Various characterizations of the equality in the case $K=I$ have been given before for different types of maps and 
classes of functions, including the equality case for the strong subadditivity of entropy 
and the joint convexity of the R\'enyi relative entropies
\cite{HJPW,JP,JR,OP,Petz2,Petz3,Petzbook,PD138,Ruskai2,Sharma}.
Our Theorem \ref{thm:equality} extends all these results and it seems to be the most general characterization of the 
equality, at least in finite dimension. The relevant part from the point of view of application to quantum error 
correction is that the preservation of some suitable distinguishability measure yields the reversibility of the 
stochastic operation, and the reversal map can be constructed from the original one in a canonical way. There are 
various 
technical conditions imposed in Theorems \ref{thm:equality} and \ref{thm:error correction} that might be possible to 
remove. For instance, it is not clear whether the support condition in \eqref{support condition} is necessary 
or maybe the preservation of 
$S_{\vfi_t}(.\|.)$ for one single $t>0$ is sufficient for reversibility. 
It is also an open question whether the surjectivity condition 
in Theorem \ref{thm:BNPV} can be removed.

\section*{Acknowledgments}
Partial funding was provided by 
the JSPS-HAS Japan-Hungary Joint Project,
the Grant-in-Aid for Scientific Research (C)21540208 (FH), and
the Hungarian Research Grant OTKA T068258 (MM and DP).
The Centre for Quantum Technologies is funded by the
Singapore Ministry of Education and the National Research Foundation as part of the
Research Centres of Excellence program.
Part of this work was done when MM was a Research Fellow at the Erwin Schr\"odinger Institute for Mathematical Physics in 2009 and later when the first three authors participated in the 
Quantum Information Theory program of 
the Mittag-Leffler Institute in 2010.
Discussions with Tomohiro Ogawa and David Reeb (MM) and with Hui Khoon Ng (CB and MM) helped to improve the paper and are gratefully acknowledged here.
The authors are grateful to anonymous referees for their comments, especially for pointing out Reference \cite{Jencova},
and to Lajos Moln\'ar for pointing out an error in the original version of Proposition \ref{prop:continuity}.

\appendix

\section{Commuting operators and the operator H\"older inequality}\label{sec:classical}
\setcounter{equation}{0}

We will need the following two well-known lemmas in this section. The first one is a 
generalization of the so-called log-sum inequality, while the second one is a generalization of Jensen's inequality for the expectation values of self-adjoint operators.

\begin{lemma}\label{lemma:generalized log-sum}
Let $f:\,[0,+\infty)\to\R$ be a convex function. Let $a_i\ge 0,\,b_i>0,\,i=1,\ldots,r$, and define $a:=\sum_{i=1}^r a_i,\s b:=\sum_{i=1}^r b_i$.
Then,
\begin{equation}\label{log-sum}
bf(a/b)\le\sum_{i=1}^r b_if(a_i/b_i).
\end{equation}
Moreover, if $f$ is strictly convex, then equality holds if and only if
$a_i/b_i$ is independent of $i$.
\end{lemma}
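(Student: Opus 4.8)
The plan is to reduce \eqref{log-sum} directly to the finite form of Jensen's inequality by normalizing the weights. First I would set $\lambda_i:=b_i/b$ for $i=1,\ldots,r$; since each $b_i>0$ and $b=\sum_i b_i$, the numbers $\lambda_i$ are strictly positive with $\sum_{i=1}^r\lambda_i=1$. Writing $x_i:=a_i/b_i\in[0,+\infty)$, I observe that
\[
\sum_{i=1}^r\lambda_i x_i=\sum_{i=1}^r\frac{b_i}{b}\cdot\frac{a_i}{b_i}=\frac{1}{b}\sum_{i=1}^r a_i=\frac{a}{b},
\]
so that $a/b$ is exactly the $\lambda$-convex combination of the points $x_i$. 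Convexity of $f$ then gives
\[
f\!\left(\frac{a}{b}\right)=f\!\left(\sum_{i=1}^r\lambda_i x_i\right)\le\sum_{i=1}^r\lambda_i f(x_i)=\frac{1}{b}\sum_{i=1}^r b_i f\!\left(\frac{a_i}{b_i}\right),
\]
and multiplying through by $b>0$ yields \eqref{log-sum}.

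For the finite Jensen inequality itself I would either invoke it as standard, or, to keep the appendix self-contained, prove it by induction on $r$: the case $r=2$ is the definition of convexity, and the inductive step follows by grouping the first $r-1$ terms into a single convex combination of total weight $1-\lambda_r$ (this sub-combination lands in $[0,+\infty)$ since that set is convex) and applying the definition of convexity once more. No additional regularity of $f$ beyond convexity is needed, and in particular $f$ need not be continuous.

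The equality analysis is the only part requiring care, and it is where I expect the (mild) main obstacle to lie. Assuming $f$ strictly convex, I would argue that equality in the inductive step forces the two points being combined to coincide; propagating this through the induction and using that \emph{every} weight $\lambda_i=b_i/b$ is strictly positive shows that equality in \eqref{log-sum} holds precisely when all the ratios $x_i=a_i/b_i$ coincide. Conversely, if $a_i/b_i=c$ for all $i$, then $a/b=c$ as well, and both sides of \eqref{log-sum} reduce to $bf(c)$, so equality holds trivially. The delicate point is merely to track the strict-convexity equality condition correctly through the induction, ensuring that no point carrying zero weight is spuriously forced to coincide with the others; this difficulty does not actually arise here, precisely because the hypothesis $b_i>0$ guarantees all weights are positive.
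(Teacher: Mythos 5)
Your proof is correct and takes essentially the same route as the paper's: both rewrite $a/b$ as the convex combination $\sum_{i=1}^r\frac{b_i}{b}\cdot\frac{a_i}{b_i}$ with strictly positive weights $b_i/b$ summing to $1$, apply Jensen's inequality, and obtain the equality characterization from strict convexity. The only difference is that you spell out the induction behind finite Jensen and the propagation of the equality condition, which the paper dismisses as immediate.
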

\begin{proof}
Convexity of $f$ yields that
\begin{align*}
f(a/b)&=f\bz\sum_{i=1}^r\frac{b_i}{b}\frac{a_i}{b_i}\jz\le
\sum_{i=1}^r\frac{b_i}{b}f\bz\frac{a_i}{b_i}\jz,
\end{align*}
which yields \eqref{log-sum}, and the characterization of equality is immediate from the strict convexity of $f$.
\end{proof}

\begin{lemma}\label{lemma:Jensen}
Let $A$ be a self-adjoint operator and $\rho$ be a density operator on a finite-dimensional 
Hilbert space $\hil$. If $f$ is a convex function on the convex hull of $\spect(A)$ then
\begin{equation}\label{Jensen inequality}
f\bz\Tr A\rho\jz\le\Tr f(A)\rho.
\end{equation}
If $f$ is strictly convex then equality holds in \eqref{Jensen inequality} if and only if
$\rho^0$ is a subprojection of a spectral projection of $A$.
\end{lemma}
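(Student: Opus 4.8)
The plan is to prove Lemma \ref{lemma:Jensen}, the operator Jensen inequality with its equality characterization. First I would reduce to a finite spectral sum. Write the spectral decomposition $A=\sum_{j}\lambda_j E_j$, where $\lambda_j\in\spect(A)$ are the distinct eigenvalues and $E_j$ the corresponding orthogonal spectral projections, so that $f(A)=\sum_j f(\lambda_j)E_j$ by the functional calculus. Setting $p_j:=\Tr E_j\rho$, each $p_j\ge 0$ and $\sum_j p_j=\Tr\rho=1$, so $\{p_j\}$ is a probability distribution, and moreover $\Tr A\rho=\sum_j\lambda_j p_j$ is a convex combination of the $\lambda_j$'s, hence lies in the convex hull of $\spect(A)$ as required for $f$ to be defined there. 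Likewise $\Tr f(A)\rho=\sum_j f(\lambda_j)p_j$.

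With this reduction the inequality \eqref{Jensen inequality} becomes the ordinary (numerical) Jensen inequality
\begin{equation*}
f\Bigl(\sum\nolimits_j\lambda_j p_j\Bigr)\le\sum\nolimits_j f(\lambda_j)p_j,
\end{equation*}
which holds by convexity of $f$. This disposes of the inequality itself with essentially no work beyond the spectral bookkeeping.

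For the equality case, assume $f$ is strictly convex. The key point is to track exactly when numerical Jensen is tight: for strictly convex $f$, equality in $f(\sum_j\lambda_j p_j)\le\sum_j f(\lambda_j)p_j$ forces all $\lambda_j$ with $p_j>0$ to coincide, i.e. the weight $\{p_j\}$ is concentrated on a single eigenvalue $\lambda_{j_0}$. Now $p_j=\Tr E_j\rho=0$ for all $j\ne j_0$ translates, via positivity of $\rho$ and $E_j$, into $E_j\rho^0=0$ for $j\ne j_0$ (since $\Tr E_j\rho=0$ with $E_j,\rho\ge 0$ gives $E_j\rho=0$, hence $E_j$ annihilates the support of $\rho$). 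Consequently $\rho^0\le E_{j_0}$, meaning $\rho^0$ is a subprojection of the spectral projection $E_{j_0}$ of $A$. Conversely, if $\rho^0\le E_{j_0}$ for some eigenvalue $\lambda_{j_0}$, then $p_{j_0}=1$ and the distribution is a point mass, so both sides equal $f(\lambda_{j_0})$ and equality holds.

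The main obstacle, modest in this case, is the bookkeeping around the support condition: one must be careful that $\Tr E_j\rho=0$ genuinely yields $E_j\rho^0=0$ rather than merely a trace identity, which uses that for positive semidefinite $X,Y$ one has $\Tr XY=0$ iff $XY=0$ iff $X^0\perp Y^0$. Once this standard fact is invoked, the strict-convexity equality analysis for the scalar Jensen inequality transfers directly, and the characterization ``$\rho^0$ is a subprojection of a spectral projection of $A$'' follows cleanly in both directions.
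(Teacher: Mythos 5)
Your proof is correct and follows essentially the same route as the paper's: spectral decomposition of $A$, reduction to the scalar Jensen inequality for the probability distribution $\{\Tr P_a\rho\}_{a\in\spect(A)}$, and the strict-convexity argument for the equality case. You in fact supply a detail the paper leaves implicit, namely that $\Tr P_a\rho=0$ for positive semidefinite operators forces $P_a\rho^0=0$ and hence $\rho^0\le P_{a_0}$, which is a welcome addition rather than a deviation.
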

\begin{proof}
Let $A=\sum_a aP_a$ be the spectral decomposition of $A$.
Since $\{\Tr P_a\rho\,:\,a\in\spect(A)\}$ is a probability distribution on $\spect(A)$,
Jensen's inequality yields
$f\bz\Tr A\rho\jz=f\bz\sum_a a\Tr P_a\rho\jz\le\sum_a f(a)\Tr P_a\rho$, and it is obvious 
that equality holds whenever $\Tr P_a\rho=0$ for all but one $a\in\spect(A)$.
On the other hand, if there are more than one $a\in\spect(A)$ such that $\Tr P_a\rho>0$ then 
the above inequality is strict whenever $f$ is strictly convex.
\end{proof}

\begin{prop}\label{prop:classical}
Let $A,B\in\A_{1,+}$ be such that $A$ commutes with $B$ and 
let $\map:\,\A_1\to\A_2$ be a substochastic map such that $\map(A)$ commutes with $\map(B)$ and $\Tr\map(B)=\Tr B$.
For any convex function $f:\,[0,+\infty)\to\bR$,
\begin{equation}\label{classical monotonicity}
\rsr{\map(A)}{\map(B)}{f}\le\rsr{A}{B}{f}.
\end{equation}
If $\supp A\le\supp B$ and $f$ is strictly convex then equality 
holds in \eqref{classical monotonicity} if and only if $\map_B^*(\map(A))=A$.
\end{prop}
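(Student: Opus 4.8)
The plan is to reduce the whole statement to commutative algebras, where $\rsr{\cdot}{\cdot}{f}$ is literally a classical $f$-divergence and the hypothesis of mere (not operator) convexity is exactly what the classical theory needs. Let $\C_1\subset\A_1$ and $\C_2\subset\A_2$ be the abelian $C^*$-subalgebras generated by $\{A,B\}$ and $\{\map(A),\map(B)\}$, which are abelian precisely because of the two commutativity assumptions, and let $E_i$ be the trace-preserving conditional expectation (Hilbert--Schmidt orthogonal projection) onto $\C_i$. Setting $\Phi_0:=E_2\circ\map\circ\iota_1:\C_1\to\C_2$ with $\iota_1:\C_1\hookrightarrow\A_1$ the inclusion, I get $\Phi_0(A)=\map(A)$ and $\Phi_0(B)=\map(B)$ (as $\map(A),\map(B)\in\C_2$), hence $\Tr\Phi_0(B)=\Tr\map(B)$ since $E_2$ preserves the trace. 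Because the $f$-divergences depend on the operators only through \eqref{f-div explicit expression3}, they are unchanged by this restriction, so it suffices to argue for $\Phi_0$. Finally $\Phi_0$ is again substochastic: $\Phi_0^*=E_1\circ\map^*\circ\iota_2$ is a composite of Schwarz contractions ($\map^*$ by hypothesis, $E_1$ being unital $2$-positive, $\iota_2$ being a $*$-homomorphism), and Lemma \ref{lemma:Schwarz adjoint} keeps such maps in the class. From now on I may assume $\A_1,\A_2$ abelian and write $\map$ as a nonnegative matrix $(M_{yx})$; substochasticity reads $\sum_y M_{yx}\le1$, and the condition $\Tr\map(B)=\Tr B$ (in force throughout, as in Theorem \ref{thm:monotonicity for F}) sharpens this to $\sum_y M_{yx}=1$ for every $x\in\supp B$.

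With $a_x,b_x\ge0$ the joint eigenvalues of $A,B$, and $\tilde a_y:=\sum_x M_{yx}a_x$, $\tilde b_y:=\sum_x M_{yx}b_x$, formula \eqref{f-div explicit expression3} becomes $\rsr{A}{B}{f}=\sum_{x:\,b_x>0}b_xf(a_x/b_x)+\of\sum_{x:\,b_x=0}a_x$. For the inequality I would apply the generalized log-sum inequality (Lemma \ref{lemma:generalized log-sum}) fiberwise, $\tilde b_yf(\tilde a_y/\tilde b_y)\le\sum_x M_{yx}b_xf(a_x/b_x)$ for each $y$ with $\tilde b_y>0$, and then sum over $y$; the identity $\sum_y M_{yx}=1$ on $\supp B$ is what collapses the right-hand side back to $\rsr{A}{B}{f}$ (and is genuinely needed, since for a strictly substochastic column the bound would go the wrong way). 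The singular $\of$-terms are treated exactly as in Theorem \ref{thm:monotonicity for F} under its hypotheses $\Tr\map(B)=\Tr B$ and \eqref{monotonicity conditions}: they vanish when $\supp A\le\supp B$, and the general case is recovered from that one via the perturbation $B_\ep:=B+\ep A$, Proposition \ref{prop:continuity} and Corollary \ref{cor:0 Renyi}.

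For the equality I take $\supp A\le\supp B$, $\Tr\map(B)=\Tr B$ and $f$ strictly convex; then Lemma \ref{lemma:support inequality} gives $\Tr\map(A)=\Tr A$ and $\supp\map(A)\le\supp\map(B)$, so all $\of$-terms disappear. If \eqref{classical monotonicity} is an equality, every fiberwise log-sum inequality is saturated, and the equality clause of Lemma \ref{lemma:generalized log-sum} with strict convexity forces $a_x/b_x$ to be constant, equal to $\tilde a_y/\tilde b_y$, on each fiber $\{x:M_{yx}>0,\ b_x>0\}$. Once the ratios are constant, the fiberwise identities together with $\sum_yM_{yx}=1$ give $\rsr{\map(A)}{\map(B)}{g}=\rsr{A}{B}{g}$ for \emph{every} convex $g$, in particular for $f_0(x):=x\log x$. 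As $f_0\in\F$ with $|\supp\mu_{f_0}|=\infty$, the support condition \eqref{support condition} holds trivially, and Theorem \ref{thm:equality} (\ref{item:equality2}$\Rightarrow$\ref{item:special inverse}) yields the reversibility condition $\map_B^*(\map(A))=A$ in $\A_1$. For the converse I would start from $\map_B^*(\map(A))=A$, apply $E_1$, and use that it is a $\C_1$-bimodule map to obtain $(\Phi_0)_B^*(\Phi_0(A))=E_1(\map_B^*(\map(A)))=A$; in coordinates this is the averaging identity $a_x/b_x=\sum_y M_{yx}(\tilde a_y/\tilde b_y)$ with unit column sums on $\supp B$. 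Convexity of $f$ (Jensen, Lemma \ref{lemma:Jensen}) then gives $\rsr{A}{B}{f}\le\rsr{\map(A)}{\map(B)}{f}$, which with the forward inequality of the second paragraph forces equality.

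The step I expect to be the main obstacle is this faithful passage between the genuinely non-commutative condition $\map_B^*(\map(A))=A$ on $\A_1$ and its abelian shadow $(\Phi_0)_B^*(\Phi_0(A))=A$; it hinges on the identity $(\Phi_0)_B^*(\Phi_0(A))=E_1(\map_B^*(\map(A)))$, whose verification through the bimodule property of $E_1$ and the definition \eqref{def:PhistarB} is the one delicate computation. The only other careful points are the bookkeeping of supports and the $\of$-terms in the inequality, where the hypotheses of Theorem \ref{thm:monotonicity for F} are exactly what keep \eqref{classical monotonicity} true for substochastic, as opposed to merely stochastic, maps.
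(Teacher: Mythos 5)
Your core strategy is the paper's own: pass to simultaneously diagonal representations (the paper does this directly, with bases $\{e_x\}$ of $\supp B$, $\{f_y\}$ of $\supp\map(B)$ and the matrix $T_{xy}=\inner{f_y}{\map(\pr{e_x})f_y}$, rather than via conditional expectations onto abelian subalgebras, but the content is identical), apply Lemma \ref{lemma:generalized log-sum} fiberwise, and exploit its strict-convexity clause for the equality analysis. Three differences deserve comment. (1) For the inequality you keep $\Tr\map(B)=\Tr B$ and the hypotheses \eqref{monotonicity conditions} in force, observing that unit column sums are ``genuinely needed''; the paper instead asserts \eqref{classical monotonicity} for an arbitrary substochastic $\map$ and concludes by summing the fiberwise bounds using only $\sum_y T_{xy}\le 1$, a step that is valid only if the summands $B(x)f(A(x)/B(x))$ are nonnegative. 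Your caution is justified: taking $\A_1=\A_2=\iC$, $A=B=1$, $\map(x)=x/2$ and $f(x)=-x$ gives $\rsr{\map(A)}{\map(B)}{f}=-1/2>-1=\rsr{A}{B}{f}$, so the unrestricted inequality is false and your added hypotheses are a necessary correction, not a loss of generality. (Only your phrase ``exactly as in Theorem \ref{thm:monotonicity for F}'' for the sub-case $0\le\of<+\infty$, $\Tr\map(A)<\Tr A$ is inaccurate, since that part of Theorem \ref{thm:monotonicity for F} rests on the integral representation, unavailable for merely convex $f$; the elementary fix is to apply the log-sum bound with the $\of$-convention and discard the lost mass of $A$ using $\of\ge 0$.) (2) In the forward equality direction you go from constant fiber ratios to preservation of every $g$-divergence, specialize to $x\log x\in\F$, and invoke \ref{item:equality2}$\imp$\ref{item:special inverse} of Theorem \ref{thm:equality}; the paper instead computes $\map_B^*(\map(A))$ directly. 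Your route imports the operator-convexity machinery into an appendix whose purpose is to avoid it, but it is actually tighter: the paper's computation only checks the diagonal entries $\map_B^*(\map(A))(x)=A(x)$ and does not argue that the off-diagonal entries of $\map_B^*(\map(A))$ in the basis $\{e_x\}$ vanish (this can be repaired using positivity, since $T_{xy}=0$ forces $\pr{e_x}\map^*(\pr{f_y})=0$). (3) The converse implication, that $\map_B^*(\map(A))=A$ forces equality, is not addressed in the paper's proof at all; your argument via the averaging identity $A(x)/B(x)=\sum_y T_{xy}\map(A)(y)/\map(B)(y)$ and Jensen's inequality is correct, and the bimodule identity $(\Phi_0)_B^*(\Phi_0(A))=E_1(\map_B^*(\map(A)))$ that you flag as delicate does hold. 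One cosmetic repair: the algebras generated by $\{A,B\}$ and by $\{\map(A),\map(B)\}$ can have minimal projections of rank greater than one, so to read substochasticity as $\sum_y M_{yx}\le 1$ you should take maximal abelian subalgebras containing these operators, which is exactly what the paper's choice of bases amounts to.
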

\begin{proof}
Let us consider first the inequality \eqref{classical monotonicity}.
Note that if $\of=+\infty$ and $\supp A\nleq\supp B$ then the RHS of \eqref{classical monotonicity} is $+\infty$, and hence the inequality holds trivially. On the other hand, if $\of<+\infty$ then it is enough to prove that 
$\rsr{\map(A)}{\map(B+\ep I)}{f}\le\rsr{A}{B+\ep I}{f}$ for every $\ep>0$, as taking the limit $\ep\searrow 0$ 
then yields \eqref{classical monotonicity} due to Proposition \ref{prop:continuity}. Hence, for the rest we can assume without loss of generality that $\supp A\le\supp B$.

Since $A$ and $B$ commute, there exists an orthonormal basis $\{e_x\}_{x\in\X}$ in $\supp B$ 
such that $A=\sum_{x\in\X}A(x)\pr{e_x}$ and $B=\sum_{x\in\X}B(x)\pr{e_x}$, where
$A(x):=\inner{e_x}{Ae_x},\,B(x):=\inner{e_x}{Be_x},\,x\in\X$. Similarly,
there exists a basis $\{f_y\}_{y\in\Y}$ in $\supp\map(B)$ 
such that $\map(A)=\sum_{y\in\Y}\map(A)(y)\pr{f_y}$ and $\map(B)=\sum_{y\in\Y}\map(B)(y)\pr{f_y}$, where
$\map(A)(y):=\inner{f_y}{\map(A)f_y}$, $\map(B)(y):=\inner{f_y}{\map(B)f_y}$.
We have
\begin{equation*}
\rsr{A}{B}{f}=\sum_{x}B(x)f\bz\frac{A(x)}{B(x)}\jz,\ds\ds\ds
\rsr{\map(A)}{\map(B)}{f}=\sum_{y}\map(B)(y)f\bz\frac{\map(A)(y)}{\map(B)(y)}\jz.
\end{equation*}
Let $T_{xy}:=\inner{f_y}{\map(\pr{e_x})f_y}$; then 
$\map(A)(y)=\sum_{x\in\X}T_{xy}A(x),\,\map(B)(y)=\sum_{x\in\X}T_{xy}B(x)$, and
Lemma \ref{lemma:generalized log-sum} yields
\begin{align}\label{pointwise inequality}
\map(B)(y)f\bz\frac{\map(A)(y)}{\map(B)(y)}\jz&
\le
\sum_x T_{xy}B(x)f\bz\frac{T_{xy}A(x)}{T_{xy}B(x)}\jz.
\end{align}
Since $\supp \pr{e_x}\le\supp B$, Lemma \ref{lemma:support inequality} yields that $\Tr\map(\pr{e_x})=\Tr\pr{e_x}=1,\,x\in\X$,
and hence $\sum_{y\in\Y}T_{xy}=1,\,x\in\X$. Summing over
$y$ in \eqref{pointwise inequality} yields \eqref{classical monotonicity}.

Obviously, equality holds in 
\eqref{classical monotonicity} if and only if \eqref{pointwise inequality} holds with 
equality for every $y\in\Y$. Assuming that $f$ is strictly convex, we obtain, due to Lemma 
\ref{lemma:generalized log-sum}, that for every $y\in\Y$ there exists a positive constant 
$c(y)$ such that $T_{xy}A(x)=c(y)T_{xy}B(x)$, i.e.,
\begin{equation}\label{pointwise inequality2}
A(x)=c(y)B(x)
\end{equation}
for every $x$ such that $T_{xy}>0$.
Assume that \eqref{pointwise inequality2} holds; then we have
$\map(A)(y)=\sum_x T_{xy}A(x)=\sum_x T_{xy}c(y)B(x)=c(y)\map(B)(y)$ and hence,
\begin{equation*}
\map^*_B(\map(A))(x)=
B(x)\sum_y T_{xy}\frac{\map(A)(y)}{\map(B)(y)}=
B(x)\sum_y T_{xy}\frac{A(x)}{B(x)}=A(x),\ds\ds\ds x\in\X.\qedhere
\end{equation*}
\end{proof}

The following Proposition gives an important special case where the monotonicity inequality 
\eqref{classical monotonicity} holds even though $A$ and $B$ don't commute and $f$ is only 
assumed to be convex.
\begin{prop}\label{prop:pinching}
Let $A,B\in\A_+$ be such that $B\ne 0$, let $B=\sum_{b\in\spect(B)} bQ_b$ be the spectral decomposition of $B$ and let 
$\E_B:\,X\mapsto \sum_{b\in\spect(B)} Q_bXQ_b$ be the pinching defined by $B$. For every convex function 
$f:\,[0,+\infty)\to\bR$,
\begin{equation}\label{pinching inequality}
\rsr{A}{B}{f}\ge\rsr{\E_B(A)}{\E_B(B)}{f}=
\rsr{\E_B(A)}{B}{f}\ge (\Tr B)f\bz\frac{\Tr A}{\Tr B}\jz.
\end{equation}

Moreover, if $\supp A\le\supp B$ and $f$ is strictly convex then the first inequality in 
\eqref{pinching inequality} holds with equality if and only if $A$ commutes with $B$, and the 
second inequality holds with equality if and only if $\E_B(A)$ is a constant multiple of $B$.
In particular, $\rsr{A}{B}{f}=(\Tr B)f\bz\frac{\Tr A}{\Tr B}\jz$ if and only if
$A$ is a constant multiple of $B$.
\end{prop}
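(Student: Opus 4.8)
The plan is to handle the three assertions --- the middle equality, the two inequalities, and then the equality cases --- separately, in each case reducing to the scalar Lemmas~\ref{lemma:Jensen} and~\ref{lemma:generalized log-sum} by exploiting the rigid way in which $\E_B$ interacts with $B$. First I would record the elementary facts that $\E_B$ is trace-preserving, that $\E_B(B)=B$ (since $Q_bBQ_b=bQ_b$), and that $\E_B(A)=\sum_bQ_bAQ_b$ commutes with every $Q_b$ and hence with $B$. The middle equality $\rsr{\E_B(A)}{\E_B(B)}{f}=\rsr{\E_B(A)}{B}{f}$ is then immediate from $\E_B(B)=B$, and needs no assumption on supports.

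For the first inequality I would work from the explicit formula \eqref{f-div explicit expression3}. Writing $Q_0:=I-B^0$ for the kernel projection, the contributions of the kernel to $\rsr{A}{B}{f}$ and to $\rsr{\E_B(A)}{B}{f}$ are both equal to $\of\,\Tr A(I-B^0)$ (for the latter because $\Tr Q_0\E_B(A)Q_0=\Tr AQ_0$), so they cancel and only the blocks with $b>0$ matter. For fixed $b>0$ one computes $\sum_a bf(a/b)\Tr P_aQ_b=b\,\Tr Q_bf(A/b)Q_b$, whereas the commuting pair $(\E_B(A),B)$ contributes $b\,\Tr f(Q_b(A/b)Q_b)$ on that block. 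Thus, block by block, the first inequality reduces to the scalar operator inequality $\Tr f(PCP)\le\Tr Pf(C)P$ for the projection $P=Q_b$ and $C=A/b\ge0$, which I would derive from Lemma~\ref{lemma:Jensen}: diagonalising the compression $PCP$ on $\ran P$ as $\sum_k\mu_k\pr{\phi_k}$, applying \eqref{Jensen inequality} with $\rho=\pr{\phi_k}$ gives $f(\mu_k)=f(\langle\phi_k,C\phi_k\rangle)\le\langle\phi_k,f(C)\phi_k\rangle$, and summing over $k$ yields the claim.

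For the second inequality I would simultaneously diagonalise the commuting pair $(\E_B(A),B)$, obtaining scalars $c_i,d_i\ge0$ with $\rsr{\E_B(A)}{B}{f}=\sum_{d_i>0}d_if(c_i/d_i)+\of\sum_{d_i=0}c_i$. Applying Lemma~\ref{lemma:generalized log-sum} to the indices with $d_i>0$ bounds the first sum below by $(\Tr B)f(a_+/\Tr B)$, where $a_+:=\sum_{d_i>0}c_i$ and $\Tr B=\sum_{d_i>0}d_i$. It then remains to absorb the kernel mass $a_0:=\sum_{d_i=0}c_i=\Tr A-a_+$, for which I would use the convexity bound $f(y)-f(x)\le\of\,(y-x)$ (valid since the difference quotients of a convex $f$ increase to $\of$) to show $\of\,a_0\ge(\Tr B)[\,f(\Tr A/\Tr B)-f(a_+/\Tr B)\,]$; combining the two steps gives $\rsr{\E_B(A)}{B}{f}\ge(\Tr B)f(\Tr A/\Tr B)$.

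The equality analysis would then be read off from the equality clauses of the two lemmas. For the first inequality with $f$ strictly convex, equality forces every eigenvector of each compression $Q_bAQ_b$ to be an eigenvector of $A$ (the equality case of Lemma~\ref{lemma:Jensen}), which is equivalent to $A$ commuting with each $Q_b$, i.e.\ with $B$; conversely, if $A$ commutes with $B$ then $\E_B(A)=A$ and the inequality is trivially an equality. For the second, equality forces $c_i/d_i$ to be constant on $\{d_i>0\}$ (equality in Lemma~\ref{lemma:generalized log-sum}) and, since strict convexity makes the slope bound strict whenever $a_0>0$, also $a_0=0$; together these say precisely that $\E_B(A)=\lambda B$. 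The final ``in particular'' statement is the conjunction of both equalities, which force $A$ to commute with $B$ (so $\E_B(A)=A$) and $A=\E_B(A)=\lambda B$. I expect the main obstacle to be the careful bookkeeping of the kernel $Q_0$ --- showing the $\of$-terms cancel in the first inequality and that the kernel mass is controlled by the asymptotic slope in the second --- together with the degeneracy subtlety in the first equality case, where one must note that the basis-independent totals agree only if the term-by-term Jensen inequalities are all equalities along a spanning eigenbasis of each compression.
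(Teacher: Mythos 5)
Your proof is correct, and its core coincides with the paper's: the first inequality is obtained block-by-block from Lemma \ref{lemma:Jensen} applied to the compressions $Q_bAQ_b$ (the paper uses the normalized spectral projections $\tilde P_{\lambda}^{(b)}/\Tr \tilde P_{\lambda}^{(b)}$ where you use rank-one eigenprojections --- the same computation), and the second inequality and both equality cases rest on Lemma \ref{lemma:generalized log-sum} and the equality clauses of the two lemmas. You differ from the paper in two places. First, the paper does not track the kernel of $B$ directly: it proves everything under the assumption $\supp A\le\supp B$ and then obtains the general inequality \eqref{pinching inequality} by replacing $B$ with $B+\ep I$ (noting $\E_{B+\ep I}=\E_B$) and invoking the continuity result of Proposition \ref{prop:continuity}; your explicit bookkeeping of the $\of\,\Tr A(I-B^0)$ terms avoids that limiting argument, at the cost of having to note separately that when $\of=+\infty$ and $\Tr A(I-B^0)>0$ both sides are $+\infty$ (infinite terms cannot literally ``cancel''; this degenerate case is also why the equality characterization must be read under finiteness, which the paper enforces by carrying out its equality analysis only under $\supp A\le\supp B$). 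Second, for the right-hand inequality the paper simply applies its commuting-case result, Proposition \ref{prop:classical}, to the stochastic map $\map(X):=\Tr X$ from $\A$ to $\bC$, which yields the inequality and the characterization $\E_B(A)=(\Tr A/\Tr B)B$ of equality in one stroke; your direct route through the log-sum lemma needs the additional (correct) observations that $f(y)-f(x)\le\of\,(y-x)$ for convex $f$ and that strict convexity makes this strict for $y>x$, in order to absorb the kernel mass $a_0$ and to rule out $a_0>0$ in the equality case. The trade-off: the paper's proof is shorter given the machinery of Propositions \ref{prop:classical} and \ref{prop:continuity}, while yours is self-contained at the level of the two appendix lemmas and handles general supports, the inequalities, and the equality analysis in a single uniform argument.
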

\begin{proof}
All the assertions are obvious when $A=0$, so for the rest we assume $A\ne 0$.
Assume first that $\supp A\le\supp B$.
For every $b\in\spect(B)$ and $\lambda\in\bR$, let
$P_{\lambda}^{(b)}$ be the spectral projection of $Q_bAQ_b$
corresponding to the singleton $\{\lambda\}$, and let 
$\tilde P_{\lambda}^{(b)}:=Q_bP_{\lambda}^{(b)}Q_b$. Note that 
$\tilde P_{\lambda}^{(b)}=P_{\lambda}^{(b)}$ for every $\lambda\ne 0$,
and $Q_b=\sum_{\lambda}\tilde P_{\lambda}^{(b)}$.
The spectral projection of $\E_B(A)$
corresponding to the singleton $\{\lambda\}$ is $\sum_{b\in\spect(B)}\tilde P_{\lambda}^{(b)}$.
For every $b\in\spect(B)\setminus\{0\}$ and $\lambda\in\bR$, let $\rho_{b,\lambda}$ be a density operator such that $\rho_{b,\lambda}=\tilde P_{\lambda}^{(b)}/\Tr \tilde P_{\lambda}^{(b)}$ whenever $\tilde P_{\lambda}^{(b)}\ne 0$.
By \eqref{f-div explicit expression3}, we have
\begin{align}
&\rsr{\E_B(A)}{\E_B(B)}{f}=
\rsr{\E_B(A)}{B}{f}
=
\sum_{b\in\spect(B)\setminus\{0\}}\sum_{\lambda}bf(\lambda/b)
\Tr \sum_{b'\in\spect(B)}\tilde P_{\lambda}^{(b')}Q_b\nonumber\\
&=
\sum_{b\in\spect(B)\setminus\{0\}}\sum_{\lambda}bf(\lambda/b)\Tr \tilde P_{\lambda}^{(b)}
=
\sum_{b\in\spect(B)\setminus\{0\}}\sum_{\lambda}bf\bz\Tr ((A/b)\rho_{b,\lambda})\jz\Tr \tilde P_{\lambda}^{(b)}\nonumber\\
&\ds\le
\sum_{b\in\spect(B)\setminus\{0\}}\sum_{\lambda}b\Tr f(A/b)
\rho_{b,\lambda}\Tr \tilde P_{\lambda}^{(b)}
=
\sum_{b\in\spect(B)\setminus\{0\}}\sum_{\lambda}b\Tr f(A/b)\tilde P_{\lambda}^{(b)}
\label{convexity inequality}\\
&\ds=
\sum_{b\in\spect(B)\setminus\{0\}}b\Tr f(A/b)Q_b
\ds=
\sum_{b\in\spect(B)\setminus\{0\}}\sum_{a\in\spect(A)}bf(a/b)\Tr P_aQ_b=\rsr{A}{B}{f},\nonumber
\end{align} 
where $A=\sum_a aP_a$ is the spectral decomposition of $A$, and the inequality in 
\eqref{convexity inequality} follows due to Lemma \ref{lemma:Jensen}. 
This yields the first inequality in \eqref{pinching inequality}.
If $A$ commutes with $B$ then $\E_B(A)=A$ and hence the first inequality in 
\eqref{pinching inequality} holds with equality. Conversely, assume that 
the first inequality in \eqref{pinching inequality} holds with equality; then 
the inequality in \eqref{convexity inequality} has to hold with equality as well.
If $f$ is strictly convex then this implies that for every $b\in\spect(B)\setminus\{0\}$ and $\lambda\in\bR$, there exists an $a(b,\lambda)$ such that 
$\tilde P_{\lambda}^{(b)}\le P_{a(b,\lambda)}$, due to Lemma \ref{lemma:Jensen}. In particular, $\tilde P_{\lambda}^{(b)}$ commutes with $A$, and, since $Q_b=\sum_\lambda \tilde P_{\lambda}^{(b)}$, so does also $Q_b$, which finally implies that $B$ commutes with $A$.

Consider now the stochastic map $\map:\,\A\to\bC,\,\map(X):=\Tr X,\,X\in\A$.
Since $\E_B(A)$ and $B$, as well as $\map(\E_B(A))=\Tr A$ and $\map(B)=\Tr B$, commute, the 
second inequality in \eqref{pinching inequality} follows due to Proposition 
\ref{prop:classical}, which also yields that this inequality holds with equality if and only 
if $\E_B(A)=\map_B^*(\map(\E_B(A))=(\Tr A/\Tr B)B$.

Finally, consider the general case where $\supp A\le\supp B$ does not necessarily hold.
For every $\ep>0$, let 
$B_\ep:=B+\ep I$. Note that $\supp A\le\supp B_\ep$ and 
$\E_{B_\ep}=\E_B$ for every $\ep>0$, and hence by the above, 
$\rsr{A}{B_\ep}{f}\ge
\rsr{\E_B(A)}{B_\ep}{f}\ge (\Tr B_\ep)f\bz\frac{\Tr A}{\Tr B_\ep}\jz$ for every $\ep>0$. Taking the limit 
$\ep\searrow 0$ then yields \eqref{pinching inequality}. 
\end{proof}

The first inequality above was proved for the case $f=f_\alpha,\,\alpha>1$, in Section 3.7 
of \cite{Hayashibook}, and we followed essentially the same proof here. 
It was also proved in Section 3.7 of \cite{Hayashibook} that the monotonicity inequality 
\eqref{Renyi monotonicity2} extends for the values $\alpha\in(2,+\infty)$ if 
$\map(A)$ and $\map(B)$ commute.
We conjecture that this holds in more generality, namely that the monotonicity inequality 
$\rsr{\map(A)}{\map(B)}{f}\le\rsr{A}{B}{f}$ holds for every convex $f$ if $A$ and $B$ or $\map(A)$ and $\map(B)$ commute.
The inequality 
$\rsr{A}{B}{f}\ge (\Tr B)f\bz\frac{\Tr A}{\Tr B}\jz$
was given in Theorem 3 of \cite{Petz} for the case where $A$ and $B$ are invertible density 
operators and $f$ is a non-linear operator convex function. Note that the inequality between the first and the last term in \eqref{pinching inequality} is a non-commutative generalization of the generalized log-sum inequality \eqref{log-sum}.

\begin{cor}\label{cor:Holder}
For any positive semidefinite operators $A,B$ on a finite-dimensional Hilbert space $\hil$, we have
\begin{align}\label{Holder1}
\Tr A^{\alpha} B^{1-\alpha}\le (\Tr A)^{\alpha}(\Tr B)^{1-\alpha},\ds\ds\ds \alpha\in[0,1].
\end{align}
If, moreover, $\supp A\le\supp B$ then 
\begin{align}\label{Holder2}
\Tr A^\alpha B^{1-\alpha}\ge (\Tr A)^\alpha(\Tr B)^{1-\alpha},\ds\ds\ds \alpha\in[1,+\infty).
\end{align}
If $\supp A\le\supp B$ then $\Tr A^\alpha B^{1-\alpha}=(\Tr A)^\alpha(\Tr B)^{1-\alpha}$ 
for some $\alpha\in(0,+\infty)\setminus\{1\}$ if and only if $A$ 
is a constant multiple of $B$.
\end{cor}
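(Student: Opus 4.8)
The plan is to read both inequalities off Proposition~\ref{prop:pinching} applied to the power functions $f_\alpha(x):=x^\alpha$, using that $S_{f_\alpha}$ recovers the trace expression $\Tr A^\alpha B^{1-\alpha}$ as computed in Example~\ref{ex: f-div Renyi} (equation~\eqref{Renyi f-div}). First I would dispose of the degenerate cases $A=0$ or $B=0$, where both sides of \eqref{Holder1} and \eqref{Holder2} vanish by inspection; for the remainder I may assume $A,B\ne 0$, so that Proposition~\ref{prop:pinching} is applicable and $\Tr B>0$.

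For \eqref{Holder1} with $\alpha\in(0,1)$ I would apply Proposition~\ref{prop:pinching} to the convex function $f:=-f_\alpha$. Since $\omega(f_\alpha)=\lim_{x\to+\infty}x^{\alpha-1}=0$, the divergence carries no correction term and $\rsr{A}{B}{-f_\alpha}=-\Tr A^\alpha B^{1-\alpha}$ for all positive semidefinite $A,B$, irrespective of their supports. The last inequality in \eqref{pinching inequality} then reads
\begin{equation*}
-\Tr A^\alpha B^{1-\alpha}=\rsr{A}{B}{-f_\alpha}\ge(\Tr B)(-f_\alpha)\bz\frac{\Tr A}{\Tr B}\jz=-(\Tr A)^\alpha(\Tr B)^{1-\alpha},
\end{equation*}
and multiplying by $-1$ yields \eqref{Holder1}. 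The endpoints $\alpha=0,1$ follow directly from $A^0\le I$ and $B^0\le I$.

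For \eqref{Holder2} with $\alpha>1$ and $\supp A\le\supp B$ I would instead apply Proposition~\ref{prop:pinching} to the convex function $f_\alpha$ itself. Here the support hypothesis guarantees $\rsr{A}{B}{f_\alpha}=\Tr A^\alpha B^{1-\alpha}$ (and in particular rules out the value $+\infty$), so the last inequality in \eqref{pinching inequality} becomes exactly $\Tr A^\alpha B^{1-\alpha}\ge(\Tr A)^\alpha(\Tr B)^{1-\alpha}$; the endpoint $\alpha=1$ is the identity $\Tr AB^0=\Tr A$.

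For the equality characterization, the decisive observation is that $\pm f_\alpha$ is strictly convex precisely when $\alpha\ne 1$: a second-derivative computation shows $-f_\alpha$ strictly convex for $\alpha\in(0,1)$ and $f_\alpha$ strictly convex for $\alpha>1$. The final clause of Proposition~\ref{prop:pinching} states that, for strictly convex $f$, the equality $\rsr{A}{B}{f}=(\Tr B)f(\Tr A/\Tr B)$ holds if and only if $A$ is a constant multiple of $B$; under the translations above this is exactly the equality case of \eqref{Holder1} and \eqref{Holder2}. The converse is the one-line check that $A=cB$ gives $\Tr A^\alpha B^{1-\alpha}=c^\alpha\Tr B=(\Tr A)^\alpha(\Tr B)^{1-\alpha}$. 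I do not anticipate a genuine obstacle; the only point demanding care is the bookkeeping of the support convention and of $\omega(f_\alpha)$, since its vanishing for $\alpha\in(0,1)$ is what lets \eqref{Holder1} hold with no support assumption, whereas its divergence for $\alpha>1$ is what forces the hypothesis $\supp A\le\supp B$ in \eqref{Holder2}.
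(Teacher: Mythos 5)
Your proof is correct and follows essentially the same route as the paper: dispose of the trivial cases, apply Proposition~\ref{prop:pinching} to the convex functions $\sgn(\alpha-1)x^\alpha$ (you simply write the two sign cases $\alpha\in(0,1)$ and $\alpha>1$ separately), and invoke their strict convexity together with the final clause of that proposition for the equality characterization. The attention you pay to $\omega(f_\alpha)$ and the support convention matches the bookkeeping implicit in the paper's argument.
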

\begin{proof}
The assertions are trivial when $A$ or $B$ is equal to zero, and hence we assume that both 
of them are non-zero. The inequality in \eqref{Holder1} 
is obvious when $\alpha=0$ or $\alpha=1$, and 
the inequality in \eqref{Holder2} is obvious when $\alpha=1$.
For $\alpha\in(0,+\infty)\setminus\{1\}$, the inequalities in \eqref{Holder1} and
\eqref{Holder2} follow immediately by applying Proposition \ref{prop:pinching} to the functions
$\tilde f_\alpha(x):=\sgn(\alpha-1)x^\alpha$. Since these functions are strictly convex for every $\alpha\in(0,+\infty)\setminus\{1\}$, if equality holds in \eqref{Holder1} or \eqref{Holder2}, and $\supp A\le\supp B$, then $A$ is a constant multiple of $B$, due to Proposition \ref{prop:pinching}. Conversely, the inequalities  \eqref{Holder1} and \eqref{Holder2} obviously hold with equality if $A$ 
is a constant multiple of $B$.
\end{proof}

Let $\hil$ be a finite-dimensional Hilbert space. For every $A\in\B(\hil)$ and $p\in\bR\setminus\{0\}$, let
\begin{equation*}
\norm{A}_p:=\begin{cases}
0,& A=0,\\
(\Tr |A|^p)^{1/p},& A\ne 0,
\end{cases}
\end{equation*}
where $|A|:=\sqrt{A^*A}$.
For $p\in[1,+\infty)$, this is the well-known $p$-norm.
Note that 
\begin{equation*}
\norm{A^*}_p=\norm{A}_p=\norm{|A|}_p
\end{equation*}
for every $A\in\B(\hil)$ and $p\in\bR\setminus\{0\}$.

Corollary \ref{cor:Holder} yields the following inverse H\"older inequality:

\begin{prop}\label{prop:inverse Holder}
Let $p\in(0,1)$ and $q<0$ be such that $1/p+1/q=1$. Let $A,B\in\B(\hil)$ 
for some finite-dimensional Hilbert space $\hil$, and
assume that $\supp|A|\le\supp|B^*|$. Then
\begin{equation}\label{inverse Holder}
\|AB\|_1\ge\|A\|_p\|B\|_q
\end{equation}
Moreover, the equality case occurs in the above inequality if and
only if $|A|^p$ and $|B^*|^q$ are proportional, i.e., $|A|^p=\alpha|B^*|^q$ for some
$\alpha\ge0$.
\end{prop}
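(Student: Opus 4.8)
The plan is to prove \eqref{inverse Holder} by sandwiching $\norm{AB}_1$ between $\Tr|A||B^*|$ and $\norm{A}_p\norm{B}_q$, thereby reducing the operator statement to the scalar reverse trace inequality already recorded in Corollary \ref{cor:Holder}. Concretely, I would establish the chain
\begin{equation*}
\norm{AB}_1\ge \Tr|A||B^*|\ge\norm{A}_p\norm{B}_q,
\end{equation*}
in which the first inequality is a general lower bound for the trace norm of a product, and the second is \eqref{Holder2} applied to the positive operators $|A|^p$ and $|B^*|^q$.

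For the first inequality I would use the duality $\norm{Y}_1=\sup\{|\Tr XY|:\norm{X}\le1\}$ together with the polar decompositions $A=U_A|A|$ and $B=|B^*|U_B$, so that $AB=U_A|A||B^*|U_B$. Testing the supremum against the particular contraction $X_0:=U_B^*U_A^*$ (a product of partial isometries, hence $\norm{X_0}\le1$) gives, after using $U_A^*U_A|A|=|A|$, cyclicity of the trace, and the hypothesis $\supp|A|\le\supp|B^*|$ to absorb the projection $U_BU_B^*=P_{\supp|B^*|}$,
\begin{equation*}
\Tr(X_0AB)=\Tr\bigl(P_{\supp|B^*|}|A||B^*|\bigr)=\Tr|A||B^*|\ge0,
\end{equation*}
whence $\norm{AB}_1\ge|\Tr(X_0AB)|=\Tr|A||B^*|$. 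For the second inequality I would set $\alpha:=1/p\in(1,+\infty)$, so that $1-\alpha=1/q$, and apply \eqref{Holder2} to $|A|^p$ and $|B^*|^q$; the required support condition $\supp|A|^p\le\supp|B^*|^q$ is equivalent to $\supp|A|\le\supp|B^*|$. Since $(|A|^p)^{1/p}=|A|$ and $(|B^*|^q)^{1/q}=|B^*|$ on the relevant supports, this reads $\Tr|A||B^*|\ge(\Tr|A|^p)^{1/p}(\Tr|B^*|^q)^{1/q}=\norm{A}_p\norm{B}_q$, using $\norm{B}_q=\norm{|B^*|}_q$.

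Finally, for the equality case I would argue as follows. If $\norm{AB}_1=\norm{A}_p\norm{B}_q$, then both inequalities in the chain are equalities; in particular $\Tr|A||B^*|=\norm{A}_p\norm{B}_q$ is equality in \eqref{Holder2} for the exponent $\alpha=1/p\in(0,+\infty)\setminus\{1\}$, and the equality clause of Corollary \ref{cor:Holder} forces $|A|^p$ to be a constant multiple of $|B^*|^q$. Conversely, if $|A|^p=\alpha|B^*|^q$ with $\alpha\ge0$, the case $\alpha=0$ is trivial, while for $\alpha>0$ the operators $|A|$ and $|B^*|$ commute and share the same support; writing $C:=|A||B^*|\ge0$, a direct polar-decomposition computation gives $|AB|=U_B^*\,C\,U_B$ and hence $\norm{AB}_1=\Tr C=\Tr|A||B^*|$, which combined with equality in \eqref{Holder2} yields $\norm{AB}_1=\norm{A}_p\norm{B}_q$.

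The main obstacle I anticipate is the careful bookkeeping of supports and partial isometries in the first inequality and in the equality computation, since $A$ and $B$ need not be invertible and all powers are taken on supports. Getting $U_A^*U_A|A|=|A|$ and $P_{\supp|B^*|}|A|=|A|$ exactly right — rather than up to the wrong projection — is precisely where the hypothesis $\supp|A|\le\supp|B^*|$ is used, and the same care is needed to verify $(U_B^*CU_B)^2=(AB)^*(AB)$ so that $U_B^*CU_B$ is genuinely $|AB|$ in the equality case.
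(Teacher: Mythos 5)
Your proposal is correct and follows essentially the same route as the paper's proof: the same chain $\norm{AB}_1\ge\Tr|A|\,|B^*|\ge\norm{A}_p\norm{B}_q$, with the second inequality obtained exactly as in the paper by applying Corollary \ref{cor:Holder} to $|A|^p$ and $|B^*|^q$ with exponent $\alpha=1/p$, and the equality case read off from that corollary's equality clause. The only differences are cosmetic: you prove the first inequality by trace-norm duality against the contraction $U_B^*U_A^*$ where the paper uses unitary polar decompositions and $|\Tr X|\le\norm{X}_1$ (the same standard fact, and note the projection $U_BU_B^*$ is absorbed by $|B^*|$ alone, so the support hypothesis is not actually needed there), and you spell out the converse equality direction that the paper calls immediate.
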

\begin{proof}
The assertion is obvious if $A$ or $B$ is zero, and hence we assume that both of them are
non-zero.
Let $A=U|A|$ and $B^*=V|B^*|$ be the polar decompositions with $U,V$ unitaries. Then
$AB=U|A|\,|B^*|V^*$, and hence $\norm{AB}_1=\norm{|A||B^*|}_{1}$.
Let $\tilde A:=|A|^p,\s\tilde B:=|B^*|^q$ and $\alpha:=1/p$. 
Then $\alpha>1$ and $\supp\tilde A\le\supp\tilde B$ by assumption, and hence 
\begin{align*}
\Tr|A||B^*|&=
\Tr \tilde A^\alpha\tilde B^{1-\alpha}\ge(\Tr\tilde A)^\alpha(\Tr\tilde B)^{1-\alpha}
=
(\Tr|A|^p)^{1/p}(\Tr|B^*|^q)^{1/q}
=
\norm{A}_p\norm{B}_q,
\end{align*}
where the inequality follows due to Corollary \ref{cor:Holder}. 
It is well-known that $|\Tr X|\le\norm{X}_1$ for every 
$X\in\B(\hil)$; indeed, if $X=\sum_i s_i\diad{f_i}{e_i}$ is a singular-value decomposition
then $|\Tr X|=|\sum_i s_i\inner{e_i}{f_i}|\le\sum_i s_i=\Tr|X|=\norm{X}_1$. Hence,
$\Tr|A||B^*|\le\norm{|A||B^*|}_1=\norm{AB}_1$, which completes the proof of the inequality
\eqref{inverse Holder}.
The characterization of the equality case is immediate from Corollary \ref{cor:Holder}.
\end{proof}

\begin{rem}
Our interest in the inverse operator H\"older inequality was motivated by \cite{Hay}. 
The inequality was proved in \cite{Hayashibook2}
for positive semidefinite operators, using the usual H\"older inequality.
An alternative direct proof for the general case and the condition for the equality was
obtained in \cite{Hiaiproof}, based on majorization theory \cite{Bhatia,Hiai}.
\end{rem}

\end{document}